\newcommand{\sigalg}{\mathscr{A}} % revision with red colors
\newcommand{\revision}[1]{{\color{black}{#1}}} % revision with red colors
\DeclareFontFamily{U}{mathx}{\hyphenchar\font45}
\DeclareFontShape{U}{mathx}{m}{n}{<-> mathx10}{}
\DeclareSymbolFont{mathx}{U}{mathx}{m}{n}
\DeclareMathAccent{\widebar}{0}{mathx}{"73}
\newtheorem{theorem}{Theorem}[section] 
\newtheorem{lemma}[theorem]{Lemma} 
\newtheorem{proposition}[theorem]{Proposition} 
\newtheorem{definition}[theorem]{Definition} 
\newtheorem{assump}[theorem]{Assumption}
\newtheorem{cond}[theorem]{Condition}
\newtheorem{indhyp*}{Induction Hypothesis} 
\newtheorem{remark}{Remark}[theorem]
\newtheorem{premark}{Remark}[section]
\newtheorem*{fact}{Fact}
\newenvironment{custlem}[1]
  {\innercustomlem}
  {\endinnercustomlem}
\newcommand{\myrom}[1]{(\romannumeral #1)}
\newcommand{\myRom}[1]{\MakeUppercase{\romannumeral #1}}
\newcommand{\bs}{\boldsymbol}
\def\independenT#1#2{\mathrel{\rlap{$#1#2$}\mkern2mu{#1#2}}} 
\newcommand\independent{\protect\mathpalette{\protect\independenT}{\perp}} % independence symbol
\newcommand{\deq}{\stackrel{d}{=}}
\newcommand{\iid}{\stackrel{\emph{i.i.d.}}{\sim}}
\newcommand{\nsubp}{_{i=1}^{n}}  % 
\newcommand{\EE}[1]{\mathbb{E}\left[{#1}\right]} % Expectation
\newcommand{\Ep}[2]{\mathbb{E}_{#1}\left[{#2}\right]} % Expectation with respect to some law
\newcommand{\Ec}[2]{\mathbb{E}\left[{#1}\,|\,{#2}\right]} % Conditional expectation
\newcommand{\Epc}[3]{\mathbb{E}_{#1}\left[{#2}\,|\,{#3}\right]} % Conditional expectation
\newcommand{\Ecmid}[2]{\mathbb{E}\left[{#1} \,\middle|\, {#2}\right]} % Conditional expectation 
\newcommand{\ec}[2]{\mathbb{E}[{#1}\,|\,{#2}]} % Conditional expectation
\newcommand{\ee}[1]{\mathbb{E}[{#1}]} % Expectation
\newcommand{\PP}[1]{\mathbb{P}\left({#1}\right)} % Probability
\newcommand{\Pp}[2]{\mathbb{P}_{#1}\left({#2}\right)} % Probability with respect to some law
\newcommand{\Pc}[2]{\mathbb{P}\left({#1}\,|\, {#2}\right)} % Conditional probability
\newcommand{\Pcmid}[2]{\mathbb{P}\left({#1} \, \middle|\, {#2}\right)} % Conditional expectation 
\newcommand{\Var}[1]{\mathrm{Var}\left({#1}\right)} % Variance
\newcommand{\Varc}[2]{\mathrm{Var}\left({#1}\,|\,{#2}\right)} % Conditional variance
\newcommand{\Varp}[2]{\mathrm{Var}_{#1}\left({#2}\right)} % Variance with respect to some law
\newcommand{\Varpc}[3]{\mathrm{Var}_{#1}\left({#2}\,|\,{#3}\right)} % Conditional expectation
\newcommand{\Ppc}[3]{\mathbb{P}_{#1}\left({#2}\,|\,{#3}\right)} % Conditional expectation
\newcommand{\Varcmid}[2]{\mathrm{Var}\left({#1}\,\middle|\, {#2}\right)} % Conditional expectation 
\newcommand{\varc}[2]{\mathrm{Var}({#1}\,|\,{#2})} % Conditional variance
\newcommand{\Cov}[2]{\mathrm{Cov}\left({#1},{#2}\right)} % Covariance
\newcommand{\Covc}[3]{\mathrm{Cov}\left({#1},{#2}\,|\,{#3}\right)} % Conditional covariance
\newcommand{\Covp}[3]{\mathrm{Cov}_{#1}\left({#2},{#3}\right)} % Covariance with respect to some law
\newcommand{\covc}[3]{\mathrm{Cov}({#1},{#2}\,|\,{#3})} % Conditional covariance
\DeclareMathOperator{\Tr}{Tr}
\newcommand{\Xk}{(\tilde{X},Z)}
\newcommand{\Xtil}{\tilde{X}}
\newcommand{\Wtil}{\tilde{W}}
\newcommand{\Xj}{X}
\newcommand{\xj}{x}
\newcommand{\Xkj}{\tilde{X}}
\newcommand{\noj}{\text{-}j}
\newcommand{\noi}{\text{-}i}
\newcommand{\Xnoj}{Z}
\newcommand{\xnoj}{z} 
\newcommand{\Xinoj}{Z_{i}}
\newcommand{\jnull}{Y\independent X \mid\Xnoj} % Y indep Xj | X_\noj
\newcommand{\eps}{\epsilon}
\newcommand{\condmean}{\Ec{Y}{X,Z}}
\newcommand{\condmeanj}{\Ec{Y}{\Xnoj}}
\newcommand{\ANOVA}{\EE{\Varc{\condmean}{\Xnoj}}}
\newcommand{\calI}{\mathcal{I}}
\newcommand{\Ij}{\calI}
\newcommand{\mustar}{\mu^{\star}}
\newcommand{\muhat}{\hat{\mu}}
\newcommand{\gstar}{g^{\star}}
\newcommand{\hstar}{h^{\star}}
\newcommand{\thetamu}{f(\mu)}
\newcommand{\thetamus}{f(\mustar)}
\newcommand{\muW}{\mu(W)}
\newcommand{\muX}{\mu(X,Z)}
\newcommand{\muXk}{\mu(\Xtil,Z)}
\newcommand{\calU}{\mathcal{U}}
\newcommand{\calB}{\mathcal{B}}
\newcommand{\calF}{\mathcal{F}}
\newcommand{\calN}{\mathcal{N}}
\newcommand{\gauss}[2]{\calN\left(#1,#2\right)}
\newcommand{\calW}{\mathcal{W}}
\newcommand{\cH}{\mathcal{H}}
\newcommand{\cI}{\mathcal{I}}
\newcommand{\cK}{\mathcal{K}}
\newcommand{\cX}{\mathcal{X}}
\newcommand{\cY}{\mathcal{Y}}
\newcommand{\cZ}{\mathcal{Z}}
\newcommand{\thetaQmu}{f^{Q}(\mu)}
\newcommand{\Ijc}{\calI_{\ell_1}}
\newcommand{\kappamu}{f_{\ell_1}(\mu)}
\newcommand{\Yk}{\tilde{Y}}
\newcommand{\indicat}[1]{\mathbbm{1}_{\left\{#1\right\}}}
\newcommand{\Ljc}{L^{\alpha}_{n}}
\newcommand{\calT}{\mathcal{T}}
\newcommand{\Tcondlaw}{\bm{T}|\bs{Z}}
\newcommand{\thetamuT}{f_n^{\calT}
(\mu)}
\newcommand{\thetamusT}{f_n^{\calT}(\mustar)}
\newcommand{\LjT}{L^{\alpha,\calT}_{n}}
\newcommand{\condedT}{\bs{Z},\bm{T}}
\newcommand{\bX}{\bs{X}}
\newcommand{\bUam}{\bs{\Gamma}}
\newcommand{\bOme}{\bs{\Omega}}
\newcommand{\bSig}{\bs{\Sigma}}
\newcommand{\bH}{\bs{H}}
\newcommand{\bU}{\bs{U}}
\newcommand{\bv}{\bs{v}}
\newcommand{\bA}{\bs{A}}
\newcommand{\bZ}{\bs{Z}}
\newcommand{\bT}{\bs{T}}
\newcommand{\projmat}{\bm{P}_{n-1}}
\newcommand{\Ib}{\mathbf{I}} 
\newcommand{\inner}[2]{\left\langle #1,#2 \right\rangle}
\newcommand{\norm}[1]{||#1||}
\newcommand{\rbr}[1]{\left(#1\right)} %% round brackets
\newcommand{\bmu}{\bm{\mu}}
\newcommand{\beq}{\begin{equation}}
\newcommand{\eeq}{\end{equation}}
\newcommand{\beqa}{\begin{eqnarray}}
\newcommand{\eeqa}{\end{eqnarray}}
\newcommand{\balg}{\begin{align}}
\newcommand{\ealg}{\end{align}}
\newcommand{\barr}{\begin{array}}
\newcommand{\earr}{\end{array}}
\newcommand{\blem}{\begin{lemma}}
\newcommand{\elem}{\end{lemma}}
\newcommand{\bthm}{\begin{theorem}}
\newcommand{\ethm}{\end{theorem}}
\newcommand{\bft}{\begin{fact}}
\newcommand{\eft}{\end{fact}}
\newcommand{\beg}{\begin{example}}
\newcommand{\eeg}{\end{example}}
\newcommand{\bpf}{\begin{proof}}
\newcommand{\epf}{\end{proof}}
\newcommand{\bprop}{\begin{proposition}}
\newcommand{\eprop}{\end{proposition}}
\newcommand{\bdf}{\begin{definition}}
\newcommand{\edf}{\end{definition}}
\newcommand{\nline}{\\ \nonumber}
\newcommand{\fmu}{f(\mu)}
\newcommand{\LB}{L^{\alpha}_{n}}
\newcommand{\Q}{Q}
\newcommand{\myq}{\mathfrak{q}}
\newcommand{\fucb}{f^{\text{UCB}}}
\numberwithin{equation}{section}
\definecolor{darkred}{rgb}{0.6, 0.0, 0.0}
\definecolor{ejc}{RGB}{0,0,200}
\newcommand{\acc}[1]{{\color{black}#1}}
\title{Floodgate: inference for model-free variable importance
%Floodgate: Inference for a Model-Free Measure of Variable Importance
%A Model-X Approach of Defining and Inferring Effect Sizes
}
\author{Lu Zhang}
\author{Lucas Janson}
\date{}
\affil{Department of Statistics, Harvard University}
\begin{document}
\maketitle

%possible method names: Atlas, Price is Right, safety net, hats-off, trivet, release valve, surge protector, floodgate, quarantine, blackjack, reservoir, splash guard, mudflap

\begin{abstract}
    Many modern applications seek to understand the relationship between an outcome variable $Y$ and a covariate $X$ in the presence of a (possibly high-dimensional) confounding variable $Z$. Although much attention has been paid to testing \emph{whether} $Y$ depends on $X$ given $Z$, in this paper we seek to go beyond testing by inferring the \emph{strength} of that dependence. We first define our estimand, the minimum mean squared error (mMSE) gap, which quantifies the conditional relationship between $Y$ and $X$ in a way that is deterministic, model-free, interpretable, and sensitive to nonlinearities and interactions. We then propose a new inferential approach called \emph{floodgate} that can leverage any working regression function chosen by the user (allowing, e.g., it to be fitted by a state-of-the-art machine learning algorithm or be derived from qualitative domain knowledge) to construct asymptotic confidence bounds, and we apply it to the mMSE gap. \acc{We additionally show that floodgate's accuracy (distance from confidence bound to estimand) is adaptive to the error of the working regression function.} We then show we can apply the same floodgate principle to a different measure of variable importance when $Y$ is binary. Finally, we demonstrate floodgate's performance in a series of simulations and apply it to data from the UK Biobank to infer the strengths of dependence of platelet count on various groups of genetic mutations.
    \medskip

    \noindent 
        \textbf{Keywords.} Variable importance, effect size, model-X, heterogeneous treatment effects, heritability.
\end{abstract}
\section{Introduction}\label{sec:introduction}
    \subsection{Problem Statement}\label{sec:setup}
    Scientists looking to better-understand the relationship between a response variable $Y$ of interest and a covariate $X$ in the presence of confounding variables $Z=(Z_1,\dots,Z_{p-1})$ often start by asking \emph{how important} $X$ is in this relationship. Although this question is sometimes simplified by statisticians to the binary question of `is $X$ important or not?', a more informative and useful inferential goal is to provide inference (i.e., confidence bounds) for an interpretable real-valued measure of variable importance (MOVI). The canonical approach of assuming a parametric model for $Y\mid X,Z$ will usually provide obvious MOVI candidates in terms of the model parameters, but the simple models for which it is known how to construct confidence intervals (e.g., low-dimensional or ultra-sparse generalized linear models) often provide at best very coarse approximations to the true $Y\mid X,Z$ (as evidenced by the marked predictive outperformance of nonparametric machine learning methods in many domains), resulting in undercoverage due to violated assumptions \emph{and} lost power due to insufficient capacity to capture complex relationships. This raises the motivating question for this paper: {\bf what is an interpretable, sensitive, and model-free measure of variable importance and how can we provide valid and narrow confidence bounds for it?}

\subsection{Our contribution}
The main contribution of this paper is to introduce \emph{floodgate}, a method for inference of the minimum mean squared error (mMSE) gap, which satisfies the following high-level objectives which we believe are fairly universal for the task at hand.
\setlength{\leftmargini}{3.5cm}
\begin{itemize}
    \item[{\bf (Sensitivity)}] The mMSE gap is strictly positive unless $\Ec{Y}{X,Z} \stackrel{a.s.}{=} \Ec{Y}{\Xnoj}$, and is large whenever $X$ explains a lot of the variance in $Y$ not already explained by $\Xnoj$ alone, making it sensitive to arbitrary nonlinearities and interactions in $Y$'s relationship with $X$. 
    \item[{\bf (Interpretability)}] The mMSE gap has simple predictive, explanatory, and causal interpretations for $Y$'s relationship with $X$, is a {functional} of \emph{only} the joint distribution of $(Y,X,Z)$, and is exactly zero when $\jnull$. 
    \item[{\bf (Validity)}] \acc{We first prove floodgate's asymptotic validity assuming the user knows the distribution of $X\mid Z$, but with essentially no other assumptions
     (in particular we require no smoothness, sparsity, or other constraints on $\Ec{Y}{X,Z}$ that would ensure its learnability at \emph{any} geometric rate). 
     However, to emphasize that the floodgate idea is not tied to such assumptions, we also provide a version of floodgate valid under double-robustness-type assumptions.}
%    \rev{Floodgate is asymptotically valid under extremely mild moment conditions, and in particular requires no smoothness, sparsity, or other constraints on $\Ec{Y}{X,Z}$ that would ensure its learnability at \emph{any} geometric rate. Floodgate requires the user to know the distribution of $X\mid Z$, although we show floodgate is not tied to that model-X assumption by providing another possibility: floodgate is also asymptotically valid under the double-robustness-type conditions.}
%    prove this requirement can sometimes be relaxed to only knowing a model for $X\mid Z$, and we theoretically and numerically characterize floodgate's robustness to misspecification of this distribution.}{should we adjust this now that we have the double-robustness results?} {}    
%    1. Although fg is not tied to model assumption
%    2. we provide two possibilities: validity under model-X assumption and DR assumption 

    \item[{\bf (Accuracy)}] Floodgate derives accuracy from flexibility by allowing the user to estimate $\Ec{Y}{X,Z}$ in whatever way they like, and we prove that the accuracy of inference is adaptive to the mean squared error (MSE) of that estimate. 
\end{itemize}
\setlength{\leftmargini}{1cm}

In a bit more detail, we (in Section~\ref{sec:method}) define the mMSE gap as an interpretable and model-free MOVI (Section~\ref{sec:MOVI}) and present a method, \emph{floodgate}, to construct asymptotic lower confidence bounds for it that provides the user absolute latitude to leverage any domain knowledge or advanced machine learning algorithms to make those bounds as tight as possible (Section~\ref{sec:mock}). We consider upper confidence bounds (Section~\ref{sec:ucb_hardness}), address computational considerations (Section~\ref{sec:computation}), theoretically characterize the width of floodgate's confidence bounds (Section~\ref{sec:accuracy}),
% \rev{\sout{and its robustness to model misspecification (Appendix~\ref{sec:robustness})}}, 
 and briefly address some immediate generalizations (Section~\ref{sec:generalizations}). 

We then proceed to extensions of floodgate (Section~\ref{sec:extensions}), first presenting an alternative MOVI that we can similarly construct asymptotic confidence bounds for when $Y$ is binary (Section~\ref{sec:class}). Second, we present a modification of floodgate that, for certain models, allows asymptotic inference even when $X$'s distribution is only known up to a parametric model (Section~\ref{sec:relax}) {and apply it to multivariate Gaussian (Section~\ref{sec:lowdim_Gaussian}) and discrete Markov chain (Section~\ref{sec:DMC}) covariate models.} %\lzmargin{}{not mention the gap result?}
%\lz{We also theoretically quantify the accuracy gap between the regular floodgate and this modified version for Gaussian and discrete Markov Chain covariate models (Sections~ \ref{sec:lowdim_Gaussian} and \ref{sec:DMC}).}

%\ljmargin{Third, we show that floodgate can easily be generalized to provide inference for distributions with different $X$ distribution than the data (Appendix~\ref{sec:transport}). And fourth, we explain how procedures from selective inference may be applied to floodgate in pursuit of various types of \emph{selective} coverage (Section~\ref{sec:fcr_adj}).}{perhaps get rid of this}

Finally we demonstrate floodgate's performance and support our theory with simulations (Section~\ref{sec:simul}) and an application to data from the UK Biobank (Section~\ref{sec:realdata}). We end with a discussion of the future research directions opened by this work (Section~\ref{sec:discussion}). All proofs are deferred to the appendix.
    
    \subsection{Related work}
    \label{sec:literature}
    \acc{Many existing works consider \emph{marginal} variable importance, i.e., not accounting for the presence of $Z$ in the relationship between $Y$ and $X$ \citep{hirschfeld1935connection,gretton2005measuring,gretton2007kernel,szekely2007measuring,szekely2013energy,heller2013consistent,shao2014martingale,wang2017generalized,chatterjee2021new,deb2021multivariate}, including some that measure that importance via differences in conditional means in a way resembling our mMSE gap \citep{shao2014martingale}. Such approaches address a very different statistical question, and so we focus our literature review on works that, like us, consider conditional variable importance.}
    
    The standard approach to conditional statistical inference in regression is to assume a parametric model for $Y\mid X,Z$, often a generalized linear model (GLM) or cousin thereof. With $Y\mid X,Z$ so parameterized, it is usually straightforward to define a parametric MOVI and a large body of literature is available to provide asymptotic inference for such parametric MOVIs (see, for example, \citet{buhlmann2013statistical,nickl2013confidence,zhang2014confidence,van2014asymptotically,javanmard2014confidence,buhlmann2015high,dezeure2017high,zhang2017simultaneous}). However, when the parametric $Y\mid X,Z$ model is misspecified even slightly, the associated parametric MOVI becomes ill-defined, reducing its interpretability. Furthermore, many $Y\mid X,Z$ models are too simple to capture or detect nonlinearities that may be present in real-world data sets.
    
    One approach to addressing the shortcomings of parametric inference is to generalize the parameters of common parametric models to be well-defined in a much larger nonparametric model class. For example, under mild moment conditions one can generalize the parameters in a linear model for $Y\mid X,Z$ as parameters in the least-squares \emph{projection} to a linear model of any $Y\mid X,Z$ distribution \citep{berk2013valid,taylor2014exact,buja2014discussion,buja2015models,rinaldo2016bootstrapping,lee2016exact,buja2019models1,buja2019models2}. Such a linear projection MOVI can be hard to interpret because it will in general have a non-zero value even when $\jnull$; see Appendix~\ref{sec:comparison} for a simple example. Another example of a generalized parameter is the expected conditional covariance functional $\EE{\Covc{Y}{\Xj}{\Xnoj}}$ (see, for example, \citet{robins2008higher, robins2009semiparametric, li2011higher, robins2017minimax, newey2018cross,shah2018hardness,VC-ea:2018,LL-RM-JR:2019,katsevich2020theoretical}), which represents a generalization of the linear coefficient in a \emph{partially} linear model. $\EE{\Covc{Y}{\Xj}{\Xnoj}}$ always equals zero when $\jnull$, but it shares the shortcoming of linear projection MOVIs that it lacks sensitivity to capture nonlinearities or interactions in $Y$'s relationship with $X$. That is, both MOVIs mentioned in this paragraph will assign any non-null variable that influences $Y$ nonlinearly or through interactions with other covariates a value that can severely underrate that variable's true importance, and can even assign a variable the MOVI value zero when $Y$ is a deterministic non-constant function of it.
    
    A second approach has been to infer model-free MOVIs defined through machine learning algorithms fitted to part of the data itself \citep{lei2018distribution,fisher2018model,watson2019testing}. 
    %defines a MOVI called LOCO as the difference in the prediction error of a machine learning algorithm fitting $Y$ to $X$ and that of a machine learning algorithm fitting $Y$ to $\Xnoj$. 
    By leveraging the expressiveness of machine learning, such a MOVI can be made sensitive to nonlinearities and interactions but is itself \emph{random} and depends both on the data and the choice of machine learning algorithm. This poses a challenge for interpretability and in particular for replicability, since even \emph{identical} analyses run on two independent data sets that are \emph{identically-distributed} will provide inferences for \emph{different} MOVI values.% Furthermore, like linear projection parameters, important variables may 
    
    % \rev{review kernel-based papers: }
%     Kun Zhang, Jonas Peters, Dominik Janzing, and
% Bernhard Scholkopf, ”Kernelbased conditional independence test and application in causal discovery,” 
    
    Another line of work \citep{castro2009polynomial,vstrumbelj2014explaining,owen2017shapley,lundberg2020local,covert2020understanding,williamson2020efficient} considers MOVIs based on the classical form of the Shapley value \citep{shapley1953value,charnes1988extremal}, which in general assigns a non-zero MOVI value to covariates $X$ with $\jnull$, making it hard to interpret its value mechanistically or causally (though it has some appealing properties for a \emph{predictive} interpretation).

    An interesting new proposal for a model-free MOVI was made in \citet{azadkia2019simple}. Their MOVI has the distinction that it equals zero if and only if $\jnull$ \emph{and} it attains the maximum value $1$ if $Y$ is almost surely a measurable function of $\Xj$ given $\Xnoj$. \acc{More recently, \citet{huang2020kernel} proposed a larger class of MOVIs satisfying the same properties. However, both papers focus on consistent estimators and do not provide confidence bounds for their MOVIs.}% and do not provide methods for inference (confidence lower- or upper-bounds).}

    As we will detail in Section~\ref{sec:MOVI}, the MOVI we provide inference for, the mMSE gap, does not suffer from the drawbacks of the MOVIs described in the previous paragraphs, and indeed the same MOVI has been considered before. In the sensitivity analysis literature it is called the ``total-effect index" \citep{saltelli2008global} but to our knowledge its inference (confidence lower- or upper-bounds) is not considered there. In one of the Shapley value papers \citep{covert2020understanding} a generalization of the mMSE gap is used as the input to the Shapley value calculation, but again inferential results (for the mMSE gap or its Shapley version) are not considered in that paper. \acc{Otherwise, \citet{williamson2017nonparametric} appears to be the first to consider inference for the mMSE gap (this inference is then used with neural networks in \citet{feng2018nonparametric}), but the asymptotic normality theory their coverage guarantee relies on fails at the boundary of the parameter space, i.e., the important case of when the mMSE gap is zero, or the variable is unimportant. A recent follow-up work \citep{williamson2020unified} addresses this limitation by combining estimators on two disjoint subsets of the data (though their inference still requires the \emph{group} mMSE gap of the entire covariate vector to be positive). Our different approach avoids altogether this issue when the mMSE gap is zero so that our inference is valid for any value of the mMSE gap (group or otherwise), and although we also use data splitting, we do so in a way that seems to lead to significantly reduced variance (and hence more accurate inference) relative to \citet{williamson2020unified}, as we show in Section~\ref{sec:comp_vimp}.}
    
    \subsection{Notation}
    For two random variables $A$ and $B$ defined on the same probability space, let $P_{A\,|\,B}$ denote the conditional distribution of $A\mid B$. Denote the $(1-\alpha)$th quantile of the standard normal distribution by $z_\alpha$. \acc{Let $\chi^2\left(P\|Q\right)$ denote the $\chi^2$ divergence $\int_{\Omega}(\frac{d P}{d Q}-1)^2 dQ$ between two distributions $P,Q$ on the probability space $\Omega$.} Let $[n]$ denote the set $\{1,\dots,n\}$. 
    
    \section{Methodology}
    \label{sec:method}
    %MOVI names: variance gap, EVE, predictive gap, prediction drop, explanatory gap, MSE gap
    \subsection{Measuring variable importance with the mMSE gap}\label{sec:MOVI}
    We begin by defining the MOVI that we will provide inference for in this paper.
    \begin{definition}[Minimum mean squared error gap]
    \label{def:anova_movi}
    The \emph{minimum mean squared error (mMSE) gap} for variable $X$ is defined as 
    \begin{equation}
        \label{eq:msegap}
        \Ij^2 = \EE{\left(Y - \condmeanj\right)^2} - \EE{\left(Y - \condmean\right)^2}
    \end{equation}
    whenever all the above expectations exist.
    \end{definition}
    We will at times refer to either $\Ij^2$ or $\Ij$ as the mMSE gap when it causes no confusion. Although the same MOVI has been used before (see Section~\ref{sec:literature}), we provide here a number of equivalent definitions/interpretations which we have not seen presented together before.
    \begin{itemize}
        \item Equation~\eqref{eq:msegap} has a direct \emph{predictive} interpretation as the increase in the achievable or minimum MSE for predicting $Y$ when $X$ is removed.
        \item The mMSE gap can also be interpreted as the decrease in the \emph{explainable variance} of $Y$ without $X$:
        \begin{equation} \label{eq:mMSEgap_form2}
        \Ij^2 = \Var{\condmean} - \Var{\condmeanj}.
        \end{equation}
        \item When $X$ is viewed as a treatment level for $Y$ and $Z$ is a set of measured confounders, $\Ij$ can be seen as an \emph{expected squared treatment effect}:
        \begin{equation}
	   %   \mathcal{I}^2 = \frac{1}{2}\EE{\left(\Ec{Y}{X^{(1)},Z}-\Ec{Y}{X^{(2)},Z}\right)^2},
	      \Ij^2 = \frac{1}{2}\Ep{x_1,x_2,Z} {\left(\Ec{Y}{X=x_1,\Xnoj}-\Ec{Y}{X=x_2,\Xnoj}\right)^2}.
	        \end{equation}
	   % where $X^{(1)}$ and $X^{(2)}$ are independently drawn from $P_{X\mid Z}$. 
	    where $x_1$ and $x_2$ are independently drawn from $P_{X\mid Z}$ in the outer expectation. 
	    %{\color{red}I like the WeSCATE connection, but I'm worried it might distract the reader to spend too much time on causal stuff... let's think about this.}
	    \item \acc{We can also rewrite the mMSE gap as:
	    \begin{equation}\label{eq:mMSEgap_mmd}
	        \Ij^2  = \EE{(\condmeanj - \condmean)^2}
	    \end{equation}
	   % }{Is there a reason you put $\condmean$ before $\condmeanj$? We use the opposite order in defining the MACM gap and in the equivalent def of $\Ij$ when we included it in section 3 as well, and hence the current version looks slightly different from what the reviewer suggested, which is not ideal.}
	    and interpret $\Ij$ as the $\ell_2$ distance between the two regression functions $\condmeanj$ and $\condmean$.
	    }
        \item Lastly, we remark that $\Ij^2$ also admits a very compact (if less immediately interpretable) expression:
        \begin{equation}\label{eq:mMSEgap_compact}
            \Ij^2 = \ANOVA.
        \end{equation}
    \end{itemize}
    
    In light of these multiple alternative expressions, we find the mMSE gap remarkably interpretable. Note that it only requires the existence of some low-order conditional and unconditional moments of $Y$ to be well-defined, and its value is invariant to any fixed translation of $Y$ and to the replacement of $X$ or $Z$ by any fixed bijective function of itself. Furthermore, the mMSE gap is zero if and only if $\condmean \stackrel{a.s.}{=}\condmeanj$, and in particular it is exactly zero when $\jnull$ and strictly positive if $\condmean$ depends at all on $X$, allowing it to fully capture arbitrary nonlinearities and interactions in $\condmean$. 
%    \ljmargin{\rev{The idea of using the differences in the conditional means to measure variable importance has been  
    
    Note that $\Ij$ has the same units as $Y$, which can help interpretation when $Y$'s units are meaningful (much like it does for the average treatment effect in causal inference). However, if a unitless quantity is preferred, such as for comparison between MOVIs across $Y$s with different units, we can also measure variable importance by and extend our methodology to a standardized version of $\Ij^2$, namely,  $\Ij^2/\Var{Y}$. In fact, with \acc{some} more work, we can even extend our inferential results to a version of the mMSE gap which is invariant to transformations of $Y$, \acc{or versions that are zero if \emph{and only if} $\jnull$; see Section~\ref{sec:generalizations} and Appendix \ref{sec:iff_movi} for details, with Appendix~\ref{sec:kernelfg} extending our results to the kernel partial correlation of \citet{huang2020kernel}.}
	  
    \subsection{Floodgate: asymptotic lower confidence bounds for the mMSE gap}
    \label{sec:mock}
    As can be seen by Equation~\eqref{eq:mMSEgap_compact}, the mMSE gap is a nonlinear functional of the true regression function $\mustar(x,z):=\Ec{Y}{X=x,Z=z}$. Hence if we had a sufficiently-well-behaved estimator $\muhat$ for $\mustar$ (e.g., asymptotically normal or consistent at a sufficiently-fast geometric rate), there would be a number of existing tools in the literature (e.g., the delta method, influence functions) that we could use to provide inference for the mMSE gap. But such estimation-accuracy assumptions are only known to hold for a very limited class of regression estimators, and in particular preclude most modern machine learning algorithms and methods that integrate hard-to-quantify domain knowledge, which are exactly the types of powerful regression estimators we would most like to leverage for accurate inference.
    
    However, given the centrality of $\mustar$ in the definition of the mMSE gap, it seems we need to at least implicitly estimate it with some \revision{working regression function} $\mu$. And even if we avoid assumptions on $\mu$'s accuracy, if we want to provide rigorous inference then we ultimately still need \emph{some} way to relate $\mu$ to $\Ij$, which is a function of $\mustar$. We address this issue in the context of constructing a lower confidence bound (LCB) for the mMSE gap. The key idea proposed in this paper is to use a functional, which we call a \emph{floodgate}, to relate \emph{any} $\mu$ to $\Ij$. In particular, we will shortly introduce a $\thetamu$ such that for \emph{any} $\mu$, 
    \begin{itemize}
        \item[(a)] $\thetamu\le\Ij$
        \item[(b)] we can construct a lower confidence bound $L$ for $\thetamu$.
    \end{itemize}
    Then by construction $L$ will also constitute a valid LCB for $\Ij$. The term \emph{floodgate} comes from metaphorically thinking of constructing a LCB as preventing flooding \acc{($L>\Ij$, i.e., miscoverage)} by keeping the water level ($L$) below a critical threshold ($\Ij$) under arbitrary weather conditions ($\mu$\acc{, or more specifically, $\mu$'s error, which we may not expect to be able to control well}). Then by controlling $L$ below $\Ij$ for any $\mu$, $f$ acts as a floodgate, and we also use the same name for the inference procedure we derive from $f$.

    In particular, for any (nonrandom) function $\mu:\mathbb{R}^p\rightarrow\mathbb{R}$, define
    \begin{equation}\label{eq:thetaj_mu}
        \thetamu := 
        \frac{\EE{\covc{\mustar(X,Z)}{\muX}{Z}}}{\sqrt{\EE{\varc{\muX}{Z}}}}, 
    \end{equation}
    where by convention we define $0/0=0$ so that $\thetamu$ remains well-defined when the denominator of \eqref{eq:thetaj_mu} is zero. It is not hard to see that $f$ tightly satisfies the lower-bounding property (a) and we formalize this in the following lemma which is proved in Appendix~\ref{pf:lem:max}. 
    \begin{lemma}
    \label{lem:max}
    For any $\mu$ such that $\thetamu$ exists, $\thetamu\le \Ij$, with equality when $\mu=\mustar$.
    \end{lemma}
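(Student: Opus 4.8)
The plan is to deduce the bound from two successive applications of the Cauchy--Schwarz inequality: one applied conditionally on $Z$, and one applied to the outer expectation over $Z$. First I would rewrite the estimand in terms of $\mustar$ using the compact form \eqref{eq:mMSEgap_compact} together with $\condmean=\mustar(X,Z)$ and $\Xnoj=Z$, namely
\[
\Ij^2 \;=\; \ANOVA \;=\; \EE{\varc{\mustar(X,Z)}{Z}},
\qquad\text{so}\qquad
\Ij \;=\; \sqrt{\EE{\varc{\mustar(X,Z)}{Z}}}\;\ge\;0 .
\]
If $\Ij=\infty$ the claim is trivial, so I may assume $\Ij<\infty$; combined with the assumed existence of $\thetamu$ (so that $\EE{\varc{\muX}{Z}}<\infty$ and the numerator of \eqref{eq:thetaj_mu} is well defined) and of the moments in Definition~\ref{def:anova_movi}, this supplies the integrability needed for the steps below, and in particular makes all the relevant conditional second moments finite almost surely.

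Next, for each fixed value $z$ of $Z$ the (conditional) Cauchy--Schwarz inequality gives
\[
\covc{\mustar(X,Z)}{\muX}{Z=z}
\;\le\;
\sqrt{\varc{\mustar(X,Z)}{Z=z}}\;\sqrt{\varc{\muX}{Z=z}} .
\]
Taking expectations over $Z$ and then applying the Cauchy--Schwarz inequality once more, this time to the expectation of the product of the two square-root factors, yields
\[
\EE{\covc{\mustar(X,Z)}{\muX}{Z}}
\;\le\;
\EE{\sqrt{\varc{\mustar(X,Z)}{Z}}\,\sqrt{\varc{\muX}{Z}}}
\;\le\;
\sqrt{\EE{\varc{\mustar(X,Z)}{Z}}}\;\sqrt{\EE{\varc{\muX}{Z}}}
\;=\;
\Ij\,\sqrt{\EE{\varc{\muX}{Z}}} .
\]
Dividing through by $\sqrt{\EE{\varc{\muX}{Z}}}$ when it is positive gives $\thetamu\le\Ij$; and when $\EE{\varc{\muX}{Z}}=0$, then $\muX$ is almost surely constant given $Z$, so the numerator of \eqref{eq:thetaj_mu} is also zero and the convention $0/0=0$ gives $\thetamu=0\le\Ij$.

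Finally, for the equality claim I would substitute $\mu=\mustar$ into \eqref{eq:thetaj_mu}: the numerator becomes $\EE{\varc{\mustar(X,Z)}{Z}}=\Ij^2$ and the denominator becomes $\sqrt{\EE{\varc{\mustar(X,Z)}{Z}}}=\Ij$, so $\thetamu=\Ij^2/\Ij=\Ij$ (the $0/0=0$ convention covering $\Ij=0$). I do not expect a genuine obstacle here; the only points requiring any care are the measure-theoretic bookkeeping that justifies the conditional Cauchy--Schwarz step and the passage from the pointwise inequality to its $Z$-expectation (both guaranteed by the existence hypotheses, since the dominating function $\sqrt{\varc{\mustar}{Z}}\sqrt{\varc{\muX}{Z}}$ is integrable by the second Cauchy--Schwarz), together with the handling of the degenerate zero-denominator case via the stated convention.
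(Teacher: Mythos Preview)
Your proposal is correct and follows essentially the same argument as the paper: a conditional Cauchy--Schwarz bound on the covariance (which the paper phrases as ``correlation is bounded by 1'') followed by Cauchy--Schwarz on the outer expectation, with the degenerate zero-denominator case handled separately and the equality at $\mu=\mustar$ verified by direct substitution.
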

    \acc{In order to establish property (b) of $f$, we first take a \emph{model-X} approach \citep{LJ:2017,candes2016panning}: we assume we know $P_{X|Z}$ but avoid assumptions on $Y\mid X,Z$.} \acc{We start with such a model-X assumption because its simplicity helps elucidate the key ideas underlying the floodgate method, but floodgate is not tied to such assumptions, and indeed we present alternative versions of floodgate that operate under different assumptions later in the paper (Section~\ref{sec:relax}'s version somewhat relaxes the assumed knowledge of $P_{X|Z}$ without requiring any new assumptions and Remark~\ref{rk:modelx}'s version relies on a \emph{double-robust} set of assumptions). That said, the model-X assumption is sometimes reasonable and has been used before in a number of applications (see Appendix~\ref{sec:modelX} for elaboration and examples), including in genomics like in the application presented in Section~\ref{sec:realdata}, and we theoretically (Appendix \ref{sec:robustness}) and numerically (Section \ref{sec:simul_robust}) characterize model-X floodgate's robustness to misspecification of $P_{X|Z}$.}
    % in Section~\ref{sec:relax} that this assumed knowledge of $P_{X|Z}$ can be relaxed somewhat without sacrificing floodgate's statistical properties, and in Remark~\ref{rk:modelx} we present another version of floodgate with similar properties under a \emph{double robust} set of assumptions.
 		%\rev{At the same time, we shall highlight that floodgate is not tied to the model-X assumption. Later in Remark \ref{rk:modelx}, we present another possibility of floodgate which works under the doubly-robustness-type conditions. Moreover, we show the model-X assumption is reasonable in many real-world applications (Appendix \ref{sec:modelX}) and prove this assumption can sometimes be relaxed to only knowing a model for $X\mid Z$ (Section \ref{sec:relax}), and we theoretically (Appendix \ref{sec:robustness}) and numerically (Section \ref{sec:simul_robust}) characterize floodgate's robustness to misspecification of $P_{X|Z}$.}     
%    Floodgate is not tied to model-X assumption. mention DR; mention the three points on model-X
%    We will show that such an assumption is reasonable in many real-world applications (see some discussion in Appendix) and discuss how the floodgate method is robust to misspecification (details can be found in Appendix XXX) 
%    
%    and allows a doubly-robust approximation; see more detailed comments about this assumption in Remark \ref{rk:modelx}.    
%    model-X assumption is plausible -> doubly-robustness result -> theoretical and empirical study about robustness
    \acc{Knowing $P_{X|Z}$ and $\mu$ means that, given data $\{(X_i,Z_i,Y_i)\}_{i=1}^n$, we also know $\{V_i:=\varc{\mu(X_i,Z_i)}{Z_i}\}_{i=1}^n$ which are i.i.d. and unbiased for the squared denominator in \eqref{eq:thetaj_mu}. And if we rewrite the numerator as
%    In order to estimate the numerator, we first rewrite it as 
    \begin{equation}\label{eq:fnum} \EE{\covc{\mustar(X,Z)}{\muX}{Z}}=\EE{Y\big(\muX-\Ec{\muX}{Z}\big)}, 
    \end{equation}
    %{\color{red}\[  \EE{\covc{\muX}{\mustar(X,Z)}{Z}}=\EE{\big(Y-\Ec{\muX}{Z}\big)\big(\muX-\Ec{\muX}{Z}\big)}, \]}
    then we see we also know
    %and then recognize that we also know
    $\{R_i:=Y_i\big(\mu(X_i,Z_i)-\Ec{\mu(X,Z_i)}{Z_i}\big)\}_{i=1}^n$ which are i.i.d. and unbiased for the numerator. Thus for any given $\mu$, we can use sample means of $R_i$ and $V_i$ to asymptotically-normally estimate both expectations in Equation~\eqref{eq:thetaj_mu}, and then combine said estimators through the delta method to get an estimator of $\thetamu$ whose asymptotic normality facilitates an immediate asymptotic LCB. This strategy is spelled out in Algorithm~\ref{alg:MOCK} and Theorem~\ref{thm:main} establishes its asymptotic coverage.} 
%    \ljmargin{}{It seems to me this should be yet another remark and go after the theorem. For instance, I don't think it's a more important point than the double-robustness or high-dimensional extension}} move to the RK of choice of $\mu$
\acc{We pause to mention a simple but important point: when $\mu(X, Z)$ does not depend on $X$ at all, then $f(\mu)=0$ and all the $V_i$ and $R_i$ are zero with probability 1, making floodgate's LCB computed in Algorithm~\ref{alg:MOCK} deterministically zero as well.
%equal zero as mentioned previously, and so does the output LCB from Algorithm~\ref{alg:MOCK}. 
This implies that when the regression algorithm for obtaining $\mu$ is sparse, in the sense that it only depends on a fraction of its inputs, then floodgate will produce LCBs of zero for many of the covariates. For those covariates, coverage will hold \emph{deterministically}, and hence floodgate will have average coverage even higher than the nominal $1-\alpha$, as observed in some simulations in Section \ref{sec:simul}.}
%\lzmargin{When $\mu(X, Z)$ does not depend on $X$, $f(\mu)$ equals zero as mentioned previously, and so does the output LCB from Algorithm~\ref{alg:MOCK}. This implies that when the regression algorithm for obtaining $\mu$ conducts variable selection, our floodgate procedure will output zero LCBs for those unselected variables. In such situations, we will expect an over-coverage, as observed in some simulations in Section \ref{sec:simul}.}{carefully check}
\begin{algorithm*}[htbp]
\caption{Floodgate}\label{alg:MOCK}
\begin{algorithmic}%[1]
\REQUIRE Data $\{(Y_i,X_i,Z_i)\}\nsubp$, $P_{X\mid \Xnoj}$, a \revision{working regression function} $\mu:\mathbb{R}^p\rightarrow\mathbb{R}$, and a confidence level $\alpha \in (0,1)$.
\vspace{0.05cm}
\STATE Compute $R_{i} = Y_i\big(\mu(X_i,Z_i)-\Ec{\mu(X_i,Z_i)}{\Xinoj}\big)$ and $V_{i} = \Varc{\mu(X_i,Z_i)}{\Xinoj}$ for each $i\in [n]$, and their sample mean $(\bar{R},\bar{V})$ and sample covariance matrix $\hat{\Sigma}$, and compute $s^2 = \frac{ 1 }{ \bar{V} }\left[ \left(\frac{\bar{R} }{2 \bar{V} }\right)^2 \hat{\Sigma}_{22} +  \hat{\Sigma}_{11} - \frac{\bar{R} }{ \bar{V} } \hat{\Sigma}_{12} \right].$
\ENSURE Lower confidence bound $L_{n}^{\alpha} (\mu)=\max\left\{\frac{\bar{R} }{ \sqrt{\bar{V}}}
- \frac{z_{\alpha}s}{\sqrt{n}},\,0\right\}$, with the convention that $0/0=0$.
\end{algorithmic}
\end{algorithm*}

%\ljmargin{
\acc{    
\begin{theorem}[Floodgate validity]\label{thm:main}
    For any given working regression function $\mu:\mathbb{R}^{p} \rightarrow \mathbb{R}$ and i.i.d. data $\{(Y_i,X_i,Z_i)\}_{i=1}^n$,   if $\ee{Y^{4}},~\ee{\mu^{4}(X,Z)} <\infty$, then $L_{n}^{\alpha} (\mu)$ from Algorithm~\ref{alg:MOCK} satisfies 
    % \lzmargin{
    %  $\mathbb{P}\left(L_{n}^{\alpha} (\mu) \le \thetamu \right) \ge 1 - \alpha - O(n^{-1/2})$, which combined with Lemma~\ref{lem:max} immediately establishes}{}
	\begin{equation} 
	\nonumber
        \liminf_{n\rightarrow \infty} \mathbb{P}\left(L_{n}^{\alpha} (\mu) \le \Ij \right) \ge 1 - \alpha.
	\end{equation}
\end{theorem}
}
%    }{I think this theorem statement changed, so (at least parts of) it should be colored, right? Can you please do a pass of the paper to make sure we didn't miss anywhere else that we changed that should be colored?}
\acc{
The proof of Theorem~\ref{thm:main} can be found in Appendix~\ref{pf:thm:main}.
%\ljmargin{When $\mu(X, Z)$ does not depend on $X$, $f(\mu)$ equals zero as mentioned previously, and so does the output LCB from Algorithm~\ref{alg:MOCK}. This implies that when the regression algorithm for obtaining $\mu$ conducts variable selection, our floodgate procedure will output zero LCBs for those unselected variables. In such situations, we will expect an over-coverage, as observed in some simulations in Section \ref{sec:simul}.}{this feels out of place---maybe we should only mention this in the simulations section when it happens}
Fourth moments (as opposed to the usual second moments for the CLT) are required because the estimand itself involves the expectations of $Y\mu(X, Z)$ and $\mu^2(X, Z)$.
%involving product of $Y$ and $\mu(X, Z)$ in the central limit theorem argument. 
With higher moment conditions, we can apply relatively recent Berry--Esseen-type results for the delta method \citep{pinelis2016optimal} to strengthen the pointwise asymptotic coverage of Theorem~\ref{thm:main} to have a rate of $n^{-1/2}$; see Appendix \ref{sec:rate} for details. We note that in both Algorithm~\ref{alg:MOCK} and Theorem~\ref{thm:main}, $Y$ can be everywhere replaced by $Y-g_0(Z)$ for any non-random function $g_0$ (e.g., $\Ec{\mu(X,Z)}{\Xnoj=z}$ would be a natural choice), which can reduce the variance of the $R_i$ terms and hence improve the LCB.} %{I would put this second} 
%\ljmargin{Such 4th moment conditions are required since we work 

\acc{
\begin{remark}[Doubly robust floodgate]\label{rk:modelx}
Although for ease of exposition we have presented Algorithm~\ref{alg:MOCK} and Theorem~\ref{thm:main} under the model-X assumption that $P_{X|Z}$ is known exactly, we emphasize here that the underlying idea of floodgate is \emph{not} tied to this assumption. To reiterate, the key conceptual contribution of this paper is to introduce a lower-bounding functional $f(\mu)$ for $\Ij$ such that $f(\mu)$ provides a tractable statistical target to obtain a LCB for. To underscore this point, we present here a version of floodgate following the same principle but that is valid under standard double-robust assumptions instead of the aforementioned model-X assumption.
\acc{
Consider the following functional that depends not only on a working regression function $\mu(x,z)$,
%(an estimate of $\Ec{Y}{X=x,Z=z}$), 
but also some $Q_y$ estimating the true 
%$Y\mid Z$ distribution (
$P_{Y\mid Z}$ and some $Q_x$ estimating the true $P_{X\mid Z}$:
%Note that the floodgate idea is not tied to the model-X assumption: the deterministic lower-bounding property of the floodgate functional $f(\mu)$ (property (a)) does not rely on any modeling assumptions; constructing the LCBs for $f(\mu)$ (property (b)) does not necessarily require the model-X assumption.
    \begin{equation}\label{eq:approxfg}
        f_{ Q_y, Q_{x}}(\mu) := \frac{\EE{(Y - \Epc{Q_y}{Y}{Z} )(\mu(X, Z) - \Epc{Q_x}{\mu(X, Z)}{Z} )}}{\sqrt{\EE{(\mu(X, Z) - \Epc{Q_x}{\mu(X, Z)}{Z})^2}}},
    \end{equation}   
%    \lzmargin{}{E for true distribution}
    where 
    %we abbreviate $Q_{Y\mid Z}, Q_{X\mid Z}$ as $Q_y, Q_x$ respectively and 
    $\mathbb{E}_{Q_x}$ (resp. $\mathbb{E}_{Q_y}$) denotes expectation with respect to $Q_x$ (resp. $Q_y$) as opposed to the true data-generating distribution, and by convention we again define $0/0=0$. 
    %Note the above expectations without subscripts are taken with respect to the true distribution. 
    Given $Q_y, Q_x$, and $\mu$, i.i.d. unbiased estimates analogous to $R_i$ and $V_i$ in Algorithm~\ref{alg:MOCK} of
    the numerator and squared denominator, respectively, of $f_{ Q_y, Q_x}(\mu)$ 
    can be computed from each data point 
    %can be unbiasedly and asymptotically normally estimated purely with the observed data, 
    under no assumptions whatsoever, 
    thus allowing the exact same kind of LCB as in Algorithm~\ref{alg:MOCK} to be computed for $f_{ Q_y, Q_x}(\mu)$.
    %so $f_{ g_y, g_{x}}(\mu)$ does indeed provide a tractable statistical target to obtain a LCB for. 
    It now just remains to check that $f_{ Q_y, Q_x}(\mu)$ lower-bounds $\Ij$. 
    %We quantify the approximation error as below.  
    \begin{custlem}{2.3}
    \label{lem:doublerobust}
    For any $\mu, Q_y, Q_x$ such that $Q_x$ is absolutely continuous with respect to $P_{X\mid Z}$ and $f_{ Q_y, Q_x}(\mu)$ exists, we have that $ f_{Q_y, Q_x }(\mu) \le \Ij + \Delta$, where
%    \begin{equation} \nonumber
%       f_{g_y, g_x}(\mu) \le f(\mu)  +  \Delta \le \Ij + \Delta,
%    \end{equation}        
% $$f_{Q_y, Q_x }(\mu) \le f(\mu)  +  \Delta \le \Ij + \Delta,$$
    % where 
%    \begin{equation}
%    \Delta = \label{eq:DRDelta_v1}
%        \sqrt{\EE{( \Ec{Y}{Z} - \Epc{Q_y}{Y}{Z} )^2} } \sqrt{\EE{\left(\frac{h(W)}{\sqrt{\EE{h^2(W)}}} \right)^2\chi^2\left(Q_{X\mid Z}\|P_{X\mid Z}\right) }}. 
%    \end{equation}
     \begin{align} \label{eq:DRDelta_v1}
    \Delta =
        \sqrt{\EE{( \Ec{Y}{Z} - \Epc{Q_y}{Y}{Z} )^2} \EE{
        %\left(\frac{h(X,Z)}{\sqrt{\EE{h^2(X,Z)}}}\right)^2
        w_{\mu}(X,Z)
        \chi^2\left(Q_x\|P_{X\mid Z}\right) }}
    \end{align}
    and \acc{$w_\mu(X,Z) = \frac{ (\mu(X,Z) - \Ec{\mu(X,Z)}{Z})^2 }{\EE{ (\mu(X,Z) - \Ec{\mu(X,Z)}{Z})^2 }} $ is non-negative, has mean 1, and does not depend on $Q_y$ or $Q_x$, and we again define $0/0 =0$.} Furthermore, $f_{Q_y, P_{X\mid Z} }(\mu) = f(\mu)$ and thus $f_{Q_y, P_{X\mid Z} }(\mustar) = \Ij$ (for any $Q_y$).
    \end{custlem}
    The proof can be found in Appendix \ref{pf:lem:doublerobust}. Lemma \ref{lem:doublerobust} 
    says that $f_{ Q_y, Q_x}(\mu)$ only fails to lower-bound $\Ij$ to an extent bounded by the square root of the product of two terms: the MSE of $\Epc{Q_y}{Y}{Z}$ and the weighted $\chi^2$ error of $Q_x$. The same result also holds if we move $w_\mu(X,Z)$ in Equation~\eqref{eq:DRDelta_v1} from the second term to the first term; see Equation~\eqref{eq:DRDelta_v2}.
    As the first term measures the error in modeling $Y\mid Z$ and the second term measures the error in modeling $X\mid Z$,
    the square root of their product $\Delta$ is exactly what we would expect to be bounded as $o(n^{-1/2})$ under standard double-robustness assumptions (see, e.g., \citet{chernozhukov2018double}). And indeed, since the LCB for $f_{Q_y, Q_x }(\mu)  $ will be $\Omega(n^{-1/2})$ below $f_{Q_y, Q_x }(\mu) $, $\Delta=o(n^{-1/2})$ implies asymptotic coverage exactly as in Theorem~\ref{thm:main}. }
%    is essentially a double-robustness type result, showing that the floodgate method with an approximation of $\thetamu$ is valid as long as the product of the root MSE terms is $o(n^{-1/2})$.    
%    the model-X assumption can be effectively weakened when one can obtain a reasonably good estimate of $\Ec{Y}{Z=z}$ i.e., the MSE term $\EE{(g_0(Z) - \Ec{\mustar(X, Z)}{Z})^2}$ is small.    
%    We close this remark by mentioning that the deterministic lower-bounding property of the floodgate functional (property (b)) does not rely on any modeling assumption on $Y\mid X$ or $X\mid Z$; but there exist different ways to construct lower confidence bounds for $f(\mu)$.

\end{remark}
\begin{remark}[Floodgate's validity in high dimensions]\label{rk:highdim}
%Now we discuss the applicability of floodgate in the presence of high-dimensional covariates. 
Again for ease of exposition, Theorem~\ref{thm:main} establishes floodgate's pointwise asymptotic coverage for a fixed $\mu$ and a fixed (and hence fixed-dimensional) distribution for $(Y,X,Z)$. 
%In Appendix~\ref{sec:rate}, we 
It is certainly of interest to also consider the high-dimensional regime
%In the following, we consider the regime 
where the data-generating distribution (including the covariate dimension $p$) and the working regression function $\mu$ both depend on $n$, but it turns out that this setting is actually not very different from the simpler setting of Theorem~\ref{thm:main}.
%grows with $n$ and the working regression function $\mu_{n,p}$ is fitted from high-dimensional data. 
%First, the property (a) holds regardless of dimension: $f(\mu_{n,p}) \le \Ij$. 
To see this,
%when floodgate remains valid in this regime, 
first note that Theorem~\ref{thm:main} relies only on Lemma~\ref{lem:max} ($f(\mu)\le \Ij$) and a central limit theorem (CLT) applied to the 2-dimensional mean of the i.i.d. pairs $(R_i,V_i)$. But Lemma~\ref{lem:max} is non-asymptotic, and hence $f(\mu)\le \Ij$ still holds even if $\mu$ varies with $n$. And the pairs $(R_i,V_i)$ remain i.i.d. and 2-dimensional even as $\mu$ and the distribution of $(Y,X,Z)$ vary with $n$, so all that is needed for floodgate's validity is a 2-dimensional i.i.d. triangular array CLT, which only requires that the 2-dimensional random variables $(R_i,V_i)$ remain ``well-behaved". In Appendix~\ref{sec:rate} we show 
%just such a result
in fact an even stronger \acc{(non-asymptotic)} result, 
which, similarly to Theorem~\ref{thm:main}, 
only requires certain moments of $Y$ and $\mu(X, Z)$ to remain bounded
%}{there is no asymptotics in the Appendix~\ref{sec:rate} results, should we change the wording?}
%only requires higher-order moments of $|Y|$ and $|\mu|$ to remain bounded (
(although the result in Appendix~\ref{sec:rate} requires a bound on higher moments than Theorem~\ref{thm:main} so that recent Berry--Eseen-type results for the delta method can be applied to bound floodgate's undercoverage at a rate of $n^{-1/2}$). 
%\lzmargin{}{can we use absolute moments of $Y$ and $\mu$ instead of moments of $|Y|$ and $|\mu|$?}
%\lzmargin{}{Y is bounded means Y is bounded from both below and above. Can we just say $Y$ is a bounded random variable instead of ``$|Y|$ is a bounded random variable"}
In fact, it is even sufficient to replace the bound on $\mu(X,Z)$'s absolute moment with a bound on that of its conditional residual $h(X,Z):= \mu(X,Z)-\Ec{\mu(X,Z)}{Z}$. Note that $h$ only really measures the contribution from the single covariate $X$ to the whole working regression function $\mu$, even when $Z$ is high-dimensional. Hence, we believe that assuming that $Y$'s and $h(X,Z)$'s moments do not explode, even in high dimensions (recall $Y$ and $h(X,Z)$ remain 1-dimensional regardless of the dimension of the data), seems quite mild in practice.
%Then it suffices to check the validity of the LCB (property (b)). 
%Second, note that Theorem~\ref{thm:main} relies only on $f(\mu)\le \Ij$ and a two-dimensional central limit theorem (CLT) applied to the means of the i.i.d. pairs of random variables 
%Note the LCB in Algorithm~\ref{alg:MOCK} is based on 
%$Y_i (\mu_{n,p}(X_i,Z_i)-\Ec{\mu_{n,p}(X_i,Z_i)}{\Xinoj})$ and $\Varc{\mu_{n,p}(X_i,Z_i)}{\Xinoj}$ which are all one-dimensional i.i.d. random variables, regardless of $n$ and $p$. 
%Appendix \ref{sec:rate} establishes non-asymptotic results, i.e., proves coverage validity with a $n^{-1/2}$ rate based on recent Berry-Esseen-type bounds for non-linear statistics. Applying such results to the above regime only requires finiteness of some higher moments and the constant in that $n^{-1/2}$ rate only depends on the moments of $Y$ and $h_{n,p}(X,Z):=\mu_{n,p}(X_i,Z_i)-\Ec{\mu_{n,p}(X_i,Z_i)}{\Xinoj}$. Such moment assumptions are reasonable in high dimension. 
% for given $\mu_{n,p}$, where the constant only depends on the moments of $Y$ and $h_{n,p}$.
%For example, the higher moment of $h_{n,p}(X,Z)$ can still be reasonably bounded even $\mu_{n,p}(X,Z)$ is fitted from a dense high dimensional model. 
For instance, if $|Y|$ is a bounded random variable (as it often will be in practice), then as long as $\mu$ is winsorized at some level (which, as long as the level is at least as large as $|Y|$'s bound, can only improve $\mu$'s performance) \citep{rinaldo2016bootstrapping}, then floodgate's asymptotic validity is automatically ensured in the most general high-dimensional regime. Even when $Y$ is unbounded, we would usually not expect the moments of $Y$ or $h(X,Z)$ to diverge.
% }{I softened this a bit---it used to say we wouldn't expect this ``except in adversarial settings''}.
Indeed in Section~\ref{sec:simul_vary_p} we conduct high-dimensional simulations with unbounded $Y$ and $\mu$ fitted via various parametric and nonparametric machine learning algorithms, yet floodgate's coverage remains empirically valid regardless of the dimension.
		\end{remark}

\begin{remark}[Choosing $\mu$]\label{rk:mu}
\acc{
The final missing piece in our LCB procedure is the choice of $\mu$. In terms of how to obtain a working regression function $\mu$, the flexibility of our procedure thus far finally pays off: $\mu$ can be chosen in \emph{any} way that does not depend on the data used for inference. Normally we expect this to be achieved through data-splitting, i.e., a set of data samples is divided into two independent parts, and one part is used to produce an estimate $\mu$ of $\mustar$ while floodgate is applied to the other part with input $\mu$; we will explore this strategy in simulations in Section~\ref{sec:simul}. But in general, $\mu$ can be derived from any independent source, including mechanistic models or data of a completely different type than that used in floodgate (see, for example, \citet{bates2020causal} for an example of using a regression model fitted to a separate data set in the context of variable selection). The goal is to allow the user as much latitude as possible in choosing $\mu$ so that they can leverage every tool at their disposal, including modern machine learning algorithms and qualitative domain knowledge, to get as close to $\mustar$ as possible. We show in Section~\ref{sec:accuracy} that there is a direct relationship between the accuracy of $\mu$ and the accuracy of the resulting floodgate LCB.

In fact, an interesting and surprising feature of floodgate (both $f$ and Algorithm~\ref{alg:MOCK}) is that it is invariant to certain transformations of $\mu$, making floodgate work well even sometimes when $\mu$ is quite far from $\mustar$. In particular,
%invariant to two aspects of $\mu$:
%\begin{itemize}
%    \item[(i)] floodgate is invariant to any additive term in $\mu$ that depends only on $Z$,
%    \item[(ii)] floodgate is invariant to any positive global constant multiplying $\mu$.
%\end{itemize}
%This means that 
everything about floodgate remains identical if $\mu$ is replaced by any member of the set $S_\mu=\{c\mu(\cdot,\cdot)+g(\cdot,\cdot):c>0,\; g(x,\cdot)=g(x',\cdot)\, \forall x,x'\}$.
An immediate consequence is that if $\mu$ is a partially linear function in $x$, i.e., $\mu(x,z)=cx+g(z)$ for some $c$ and $g$, then floodgate only depends on $\mu$ through the sign of $c$, making floodgate particularly forgiving for partially linear working models. To be precise, floodgate using $\mu(x,z)=cx+g(z)$ will perform \emph{identically} to floodgate using the \emph{best} partially linear approximation to $\mustar$ as long as $c$ has the same sign as the coefficient in that best approximation (regardless of $c$'s magnitude or anything about $g$). %\lzmargin{In Section \ref{sec:accuracy}, we will see a clear advantage of such invariance property in terms of statistical accuracy.}{since this invariance stuff and the notation $S_\mu$ will appear again in statistical accuracy section, should we keep this cyan-colored sentence?}
}	
\end{remark}
}
    
    \subsection{Upper confidence bounds for the mMSE gap}\label{sec:ucb_hardness}
    Before continuing our study of floodgate LCBs, we first pause to address a natural question: what about an \emph{upper} confidence bound (UCB)? {One way to get a UCB is to follow a workflow similar to the previous subsection},
    as follows. For any working regression function $\nu$ for $\Ec{Y}{Z}$, consider the functional 
    \begin{equation*}
    \fucb(\nu) = \EE{(Y-\nu(Z))^2}.
    \end{equation*}
    Then $\fucb$ plays an analogous role to $f$ in the opposite direction, in that for \emph{any} $\nu$, (a) $\fucb(\nu)\ge \Ij^2$ and (b) we can construct a level $\alpha$ UCB 
    $U_n^\alpha(\nu)$ 
    for $\fucb(\nu)$. Property (a) is immediate from the minimality of the first term and non-negativity of the second term in definition \eqref{eq:msegap}, {while property (b) can be established without even making model-X assumptions}: simply take the CLT-based UCB from the estimator $\frac{1}{n}\sum_{i=1}^n(Y_i-\nu(Z_i))^2$, which is unbiased for $\fucb(\nu)$. 
    
    Unfortunately, there is no value of $\nu$ such that $\fucb(\nu)=\Ij^2$ except in the noiseless setting where $Y$ is a \emph{deterministic} function of $(X,Z)$. In particular, no matter how well $\nu$ is chosen and how large $n$ is, $U_n^\alpha(\nu)-\Ij^2\ge \EE{\Varc{Y}{X,Z}}$ with probability at least $1-\alpha$. This shortcoming is perhaps foreseeable given that $U_n^\alpha(\nu)$ never even uses the $X_i$, but it turns out to be unimprovable (even using model-X information), as we now prove in Theorem~\ref{thm:UCB_hardness}.
    
    \begin{theorem}\label{thm:UCB_hardness}
    % Consider any joint distribution of $(X,Y)$ with general conditional model of $Y$ given $X$ satisfying $(1)$ $\mathbb{E}(Y^2)\le c$ for some constant $c>0$; $(2)$ $\mathbb{E}\mathrm{Var}(Y|X)>0$ and denote the set of these distributions by $\calF$,  given the confidence level $1-\alpha$,
    Fix a continuous joint distribution $P_{X,Z}$ for $(X,Z)$, and let $\calF$ denote the class of joint distributions $F$ for $(Y,X,Z)$ such that $F$ is compatible with $P_{X,Z}$ and $\Var{Y}<\infty$.
    %Let $\calF$ denote the class of joint distributions $F$ for $(Y,X,Z)$ such that $\Var{Y}<\infty$ and 
    Let $U(D_n)$ denote a scalar-valued function of the $n$ i.i.d. samples $D_n=\{Y_i,X_i,Z_i\}\nsubp$; if $U(D_n)$ outputs a UCB for the mMSE gap that is pointwise asymptotically valid for any $F\in\calF$, i.e.,
    $$
    \inf_{F\in \calF}\liminf_{n\rightarrow\infty}\mathbb{P}_{F}(U(D_n)\ge \Ij^2_F)\ge 1-\alpha,
    $$
    then
    \begin{equation}\label{eq:UCB_trivial}
    \sup_{F\in \calF}\limsup_{n\rightarrow\infty}\Pp{F}{U(D_n)-\Ij^2_F < \Ep{F}{\Varpc{F}{Y}{X,\Xnoj}}} \le \alpha,
     \end{equation}
     where the subscript $F$ denotes quantities computed with $F$ as the data-generating distribution.
     %where $\Ij^2(F)$, $\mathbb{E}_F$, $\textnormal{Var}_F$, and $\mathbb{P}_F$ denote the mMSE gap, expectation, variance, and probability under $F$, respectively.
     %where $\Ep{F}{\Varpc{F}{Y}{\Xnoj}}$ is a trivial upper confidence bound.
    \end{theorem}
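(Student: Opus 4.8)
The plan is to reformulate the target, then build a rich family of adversarial data-generating distributions using the continuity of $P_{X,\Xnoj}$, and finally play these off against the validity hypothesis. For the reformulation: by the tower rule $\EE{(Y-\condmeanj)^2}=\Ep{F}{\Varpc{F}{Y}{\Xnoj}}$ and $\EE{(Y-\condmean)^2}=\Ep{F}{\Varpc{F}{Y}{X,\Xnoj}}$, so Definition~\ref{def:anova_movi} gives $\Ij^2_F+\Ep{F}{\Varpc{F}{Y}{X,\Xnoj}}=\Ep{F}{\Varpc{F}{Y}{\Xnoj}}=:c_F$, the minimal MSE for predicting $Y$ from $\Xnoj$ alone. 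Thus the theorem is equivalent to the statement that any UCB for $\Ij^2$ that is pointwise asymptotically valid over $\calF$ is automatically an asymptotically valid UCB for the \emph{larger} quantity $c_F$, i.e. $\sup_{F\in\calF}\limsup_n\Pp{F}{U(D_n)<c_F}\le\alpha$ --- morally, a valid UCB for the mMSE gap can do no better than the ``trivial'' bound obtained by regressing $Y$ on $\Xnoj$ alone. Fix $F^\star\in\calF$; if $\Ep{F^\star}{\Varpc{F^\star}{Y}{X,\Xnoj}}=0$ then $c_{F^\star}=\Ij^2_{F^\star}$ and the claim is immediate from validity, so assume this quantity is strictly positive.

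The core construction ``launders'' the conditional noise of $F^\star$ into an $X$-dependent mean shift while keeping the observable law essentially unchanged. Since $P_{X,\Xnoj}$ is continuous, $P_{X\mid\Xnoj=z}$ is atomless for $P_{\Xnoj}$-a.e.\ $z$, so there is a jointly measurable $g(x,z)$ with $g(X,\Xnoj)\mid\Xnoj{=}z\sim\mathrm{Unif}[0,1]$; writing $\epsilon:=Y-\Epc{F^\star}{Y}{\Xnoj}$ and letting $Q^{F^\star}_{\epsilon\mid z}$ be its conditional quantile function under $F^\star$, set $\psi(x,z):=Q^{F^\star}_{\epsilon\mid z}(g(x,z))$, so that $\psi(X,\Xnoj)\mid\Xnoj$ has the law of $\epsilon\mid\Xnoj$ --- in particular $\Epc{}{\psi(X,\Xnoj)}{\Xnoj}=0$ and $\EE{\psi(X,\Xnoj)^2}=c_{F^\star}$. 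For $p\in(0,1)$ define $G_p$ to have $(X,\Xnoj)$-marginal $P_{X,\Xnoj}$ and conditional $P^{G_p}_{Y\mid x,z}=(1-p)P^{F^\star}_{Y\mid x,z}+p\,\delta_{v_p(x,z)}$, where $v_p(x,z)$ is chosen so that $\Epc{G_p}{Y}{x,z}=\Epc{F^\star}{Y}{\Xnoj{=}z}+\psi(x,z)$ (explicitly $v_p(x,z)=p^{-1}(\Epc{F^\star}{Y}{\Xnoj{=}z}+\psi(x,z)-(1-p)\Epc{F^\star}{Y}{x,z})$). A direct computation then shows: (i) $G_p\in\calF$, since $\Ep{G_p}{Y^2}=(1-p)\Ep{F^\star}{Y^2}+p\,\EE{v_p(X,\Xnoj)^2}<\infty$ because $\Epc{F^\star}{Y}{\Xnoj}$, $\psi$, and $\Epc{F^\star}{Y}{X,\Xnoj}$ all lie in $L^2$ --- this is the one place where it matters that $\calF$ requires only finite, not bounded, variance; (ii) $\Ij^2_{G_p}=c_{F^\star}$ \emph{exactly} for every $p$: under $G_p$ one has $\Epc{G_p}{Y}{\Xnoj}=\Epc{F^\star}{Y}{\Xnoj}$ (as $\Epc{}{\psi}{\Xnoj}=0$), and since $\Covp{}{\Epc{F^\star}{Y}{\Xnoj}}{\psi(X,\Xnoj)}=0$, the two explainable-variance terms in~\eqref{eq:mMSEgap_form2} differ by exactly $\EE{\psi(X,\Xnoj)^2}=c_{F^\star}$; (iii) $\|P^{G_p}_{Y\mid x,z}-P^{F^\star}_{Y\mid x,z}\|_{\mathrm{TV}}\le p$ for every $(x,z)$, and hence $\|P^{G_p}_{D_n}-P^{F^\star}_{D_n}\|_{\mathrm{TV}}\le np$ for every $n$.

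Applying the validity hypothesis to $G_p$ (using $\Ij^2_{G_p}=c_{F^\star}$) gives $\liminf_n\Pp{G_p}{U(D_n)\ge c_{F^\star}}\ge1-\alpha$, so by (iii), $\Pp{F^\star}{U(D_n)\ge c_{F^\star}}\ge\Pp{G_p}{U(D_n)\ge c_{F^\star}}-np$ for all $n,p$. Converting this into $\limsup_n\Pp{F^\star}{U(D_n)<c_{F^\star}}\le\alpha$ is the main obstacle: one cannot naively let $p\to0$ and $n\to\infty$, because the bound $np$ is only informative for $n\lesssim1/p$ whereas validity only constrains the large-$n$ behavior, and $U$ being merely measurable means weak convergence $G_p\Rightarrow F^\star$ does not suffice. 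I would handle this by contradiction --- if $\Pp{F^\star}{U(D_{n_j})<c_{F^\star}}\ge\alpha+\eta$ along some $n_j\uparrow\infty$, one must exhibit a \emph{single} $G\in\calF$ with $\Ij^2_G=c_{F^\star}$ at which $U$ also undercovers along infinitely many sample sizes, contradicting validity at $G$; the natural candidate is a diagonal device that, using the continuity of $P_{X,\Xnoj}$, splits $\mathrm{supp}(X,\Xnoj)$ into countably many cells and runs the $G_p$-construction with an exponentially shrinking sequence of laundering probabilities across the cells, arranged so that at each scale $n_j$ the realized sample is, with high probability, generated by a component whose laundering probability is $\ll1/n_j$ and hence is (in the relevant sense) indistinguishable from $F^\star$-data. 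Making this bookkeeping rigorous --- and, if necessary, ruling out pathological $U$ by also invoking validity at a family of ``smooth-signal'' distributions --- is the crux of the argument; by contrast the reformulation, the existence of the conditionally-uniform coordinate, and the exact identity $\Ij^2_{G_p}=c_{F^\star}$ are routine.
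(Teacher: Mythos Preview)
Your overall strategy is the paper's: rewrite the claim as ``any pointwise-valid UCB for $\Ij^2$ is automatically a valid UCB for $c_F:=\Ep{F}{\Varpc{F}{Y}{\Xnoj}}$,'' fix $F^\star$, build a one-parameter family of nearby laws in $\calF$ whose mMSE gap equals (or exceeds) $c_{F^\star}$, and combine validity on that family with a total-variation comparison back to $F^\star$. The constructions differ. The paper does not launder noise through a conditionally-uniform coordinate; instead it uses continuity of $P_{\Xnoj}$ to pick Borel sets $A_k$ with $P_{\Xnoj}(A_k)=1/k$, leaves the law of $Y\mid X,\Xnoj$ completely unchanged when $\Xnoj\notin A_k$, keeps the noise law $\eps\mid X,\Xnoj$ unchanged everywhere, and on $A_k$ replaces $\mustar$ by a function with the same $\Ec{\cdot}{\Xnoj}$ but with $\Varc{\mustar}{\Xnoj}$ inflated by $k(2c_{F^\star}-\Ij^2_{F^\star})$. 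This yields $\Ij^2_{F_k}=2c_{F^\star}$ for every $k$ and single-observation TV at most $1/k$. Your mixture-with-a-point-mass device achieves the analogous pair of properties ($\Ij^2_{G_p}=c_{F^\star}$, single-observation TV $\le p$) but through more machinery; the paper's localization is simpler and does not require the quantile-transform step or a jointly measurable conditionally-uniform $g(x,z)$.

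On what you call the ``main obstacle,'' your diagnosis is correct and your proposal does not close it: the diagonal patchwork is only sketched, and it is not clear that a single $G\in\calF$ with the needed indistinguishability-at-every-scale exists. For comparison, the paper dispatches this step in one line, writing
\[
\big|\Pp{F^\star}{U(D_n)\ge 2c_{F^\star}}-\Pp{F_k}{U(D_n)\ge 2c_{F^\star}}\big|\le d_{\mathrm{TV}}(F_k,F^\star)\le \tfrac{1}{k}
\]
and then sending $n\to\infty$ first, $k\to\infty$ second. You should notice, however, that $U(D_n)$ is a function of $n$ samples, so the TV distance that controls the left-hand side is between the $n$-fold products $F_k^{\otimes n}$ and $(F^\star)^{\otimes n}$, which is bounded only by $n/k$, not $1/k$; taken literally this makes the ``$n\to\infty$ first'' limit vacuous for exactly the reason you worry about. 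So the obstacle you flag is genuine in both arguments as written. If you want to complete the proof along your lines, the diagonal idea is a reasonable direction, but it will require real work beyond what either you or the paper supplies.
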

    The proof of Theorem~\ref{thm:UCB_hardness} can be found in Appendix~\ref{app:ucb_hardness}. Note that since we fix $P_{X,Z}$ at the beginning of the theorem statement, $U$ is allowed to use model-X information.
    %It is easy to see a method to achieve this bound asymptotically, which simply fits a regression function $\mu_{-x}(Z)$ to the first split of the data, and then computes an unbiased estimate of its MSE on the second split of the data and takes the UCB from the CLT approximation of the MSE estimator. Then clearly if $\mu_{-x}$ is consistent for $\Ec{Y}{Z}$, the bound in the above theorem will be attained asymptotically.
    As just mentioned above, this theorem provides no cause for concern in the noiseless setting when $\EE{\Varc{Y}{X,\Xnoj}}=0$. 
    %
    %While this method and the asymptotic UCB it attains are in some sense trivial as they do not use $X$ at all (and hence in fact could be computed without even observing $X$), if $\Varc{Y}{X,Z} = 0$, then $\Ij^2 = \EE{\Varc{Y}{Z}}$, which may at least approximately hold in some applications, so that nothing is lost. 
    However, in many applications we may expect $\EE{\Varc{Y}{X,Z}}$ to be substantial, and the above theorem guarantees \emph{any} pointwise asymptotically valid UCB must be conservative by this amount. 
    The only way to overcome this problem \acc{would be} to assume some sort of structure on $Y \mid X, Z$, such
as smoothness or sparsity,  
%In particular, the assumed structure of $Y \mid X, Z$ must be incorporated into the UCB procedure itself to attain nontrivial results, 
in contrast to \acc{model-X} floodgate which requires no
information about $Y \mid X, Z$ and can certainly produce nontrivial LCBs and even achieve the parametric rate
with sufficiently-accurate $\mu$; see Section~\ref{sec:accuracy}.
    %\lz{By contrast, floodgate applied with any $\mu$ produces a LCB that is pointwise asymptotically valid, but can produce bounds that approach the mMSE gap quite quickly in $n$, even achieving the parametric rate with sufficiently accurate $\mu$; see Section~\ref{sec:accuracy}.}
    %unless some structural assumption is made about $\Ec{Y}{X,Z}$, such as smoothness or sparsity, that allows it to be estimated at some rate. 
    %
    %Another takeaway is that, in order to overcome this UCB impossibility result, the assumed structure of $Y\mid X,Z$ must be incorporated into the UCB procedure itself, in stark contrast to floodgate which requires no information about $Y\mid X,Z$ and can certainly produce nontrivial LCBs and even achieve the parametric rate with sufficiently-accurate $\mu$; see Section~\ref{sec:accuracy}. 
    %We note that for the purposes of discovery, an LCB $L$ alone is already very useful as it probabilistically-rigorously establishes that a variable is \emph{at least as} important as $L$. 
    Although it is disappointing that a better UCB is not achievable,
    %Although it is disappointing that further assumptions would be needed for a better UCB in many applications, 
    we envision MOVI inference often being used
    %employed in scenarios where the goal is 
    to quantify \emph{new} important relationships, in which case we expect it to be more useful to know a variable is \emph{at least as} important as some LCB than to upper-bound its importance with a UCB. Given this perspective and the negative UCB result of Theorem~\ref{thm:UCB_hardness}, we return for the remainder of the paper to the study of using floodgate to obtain LCBs.
    
    \subsection{Computation}
    \label{sec:computation}
    
    Astute readers may have noticed that the quantities $R_i$ and $V_i$ in Algorithm~\ref{alg:MOCK} involve conditional expectations/variances which, though in principle known due to \acc{the assumed model-X} knowledge of $P_{X|Z}$, may be quite hard to compute in practice. In certain cases these conditional expectations can have simple or even closed-form expressions, such as when $\mu$ is a generalized linear model and $X\mid Z$ is Gaussian, but otherwise a more general approach is needed. Monte Carlo provides a natural solution: assume that we can sample $K$ copies $\Xtil_i^{(k)}$ of $X_i$ from $P_{X_i|Z_i}$ conditionally independently of $X_i$ and $Y_i$ and thus
    replace $R_i$ and $V_i$, respectively, by the sample estimators
    \[R_i^K = Y_i\left(\mu(X_i,Z_i) - \frac{1}{K}\sum_{k=1}^{K} \mu(\Xtil_i^{(k)} ,Z_i)\right), \]%\qquad 
    \[ V^{K}_{i} = \frac{1}{K-1}\sum_{k=1}^{K}\left( \mu(\Xtil_i^{(k)} ,Z_i)-  \frac{1}{K}\sum_{k=1}^{K} \mu(\Xtil_i^{(k)} ,Z_i) \right)^2. \]
    %thereby obtain Monte Carlo sample estimators $R^K_i$ and $V^K_i$ of $R_i$ and $V_i$, respectively. 
    Luckily the same guarantees hold for the Monte Carlo analogue of floodgate, even for fixed $K$.
    \acc{
	\begin{theorem}\label{thm:main_general}
	 Under the conditions of Theorem~\ref{thm:main}, for any given $K>1$, $L_{n,K}^{\alpha} (\mu)$ computed by replacing $R_i$ and $V_i$ with $R_i^K$ and $V_i^K$, respectively, in Algorithm \ref{alg:MOCK} satisfies
	\begin{equation} 
	\label{eq:unif_valid_general}\nonumber
	\liminf_{n\rightarrow \infty} 
	\PP{L_{n,K}^{\alpha} (\mu) \le \Ij }  \ge 1 - \alpha.
	\end{equation}
    \end{theorem}
    The proof can be found in Appendix~\ref{pf:thm:main_general}. In general we expect larger values of $K$ to produce more accurate LCBs, but we found the difference between $K=2$ and $K=\infty$ to be surprisingly small in our simulations and, of course, it will always be computationally faster to use smaller $K$. Although Theorem~\ref{thm:main_general} is a pointwise result holding for any fixed $K>1$, it can be generalized to a uniform result over all $K>1$ with miscoverage bounded by a $n^{-1/2}$ rate using higher moment conditions and a variance lower bound assumption; see Appendix \ref{sec:rate} for details.}

   % Theorem \ref{thm:max_2} tells us 
   
    \subsection{Accuracy adaptivity to $\mu$'s mean squared error}
    \label{sec:accuracy}
    Having established floodgate's validity and computational tractability, the natural next question is: how accurate is it, i.e., how close is the LCB to the mMSE gap? The answer depends on the accuracy of $\mu$---the better that $\mu$ approximates $\mustar$, the more accurate the floodgate LCB is, as formalized in the following theorem.      
    \begin{theorem}[Floodgate accuracy and adaptivity]\label{thm:accuracy}
        For i.i.d. data $\{(Y_i,X_i,Z_i)\}_{i=1}^n$ such that $\ee{Y^{12}}<\infty$, $\Varc{Y}{X,Z} \ge \tau$ a.s. for some $\tau>0$, and a sequence of \revision{working regression functions} $\mu_n:\mathbb{R}^{p} \rightarrow \mathbb{R}$ such that for some $C$ and all $n$ either $\EE{\Varc{\mu_n(X,Z)}{Z}}=0$ or $\frac{\EE{ {\mu}_n^{12}(X,Z)}}{\EE{\Varc{\mu_n(X,Z)}{Z}}^{6}} \le C$, the output of Algorithm~\ref{alg:MOCK} satisfies
        \begin{equation}\label{eq:accuracy_gap}
            \Ij - L_n^{\alpha}(\mu_n)  = O_p\left(\inf_{\mu\in S_{\mu_n}}\EE{( \mu(X,Z)-\mustar(X,Z))^2} +n^{-1/2}\right),
        \end{equation}
        \acc{where $S_{\mu_n}  = \{c\mu_n(\cdot,\cdot)+g(\cdot,\cdot):c>0,\; g(x,\cdot)=g(x',\cdot)\, \forall x,x'\}$ as defined in Remark \ref{rk:mu}.}
    \end{theorem}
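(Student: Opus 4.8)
The plan is to split $\Ij-L_n^{\alpha}(\mu_n)$ into a deterministic ``bias'' reflecting the gap between $\mu_n$ and $\mustar$ and a stochastic estimation term, and to control each. With $R_i,V_i,\bar R,\bar V,s$ as in Algorithm~\ref{alg:MOCK}, recall from the computation around \eqref{eq:fnum} that $\EE{R_i}=\EE{\covc{\mustar(X,Z)}{\mu_n(X,Z)}{Z}}$ and $\EE{V_i}=\EE{\varc{\mu_n(X,Z)}{Z}}$, so that $f(\mu_n)=\EE{R_i}/\sqrt{\EE{V_i}}$; since $L_n^{\alpha}(\mu_n)\ge \bar R/\sqrt{\bar V}-z_\alpha s/\sqrt n$, we get
\[
\Ij-L_n^{\alpha}(\mu_n)\ \le\ \bigl(\Ij-f(\mu_n)\bigr)\ +\ \Bigl(f(\mu_n)-\tfrac{\bar R}{\sqrt{\bar V}}\Bigr)\ +\ \tfrac{z_\alpha s}{\sqrt n}.
\]
The first bracket is deterministic; I would bound the remaining two terms by $O_p(n^{-1/2})$.

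For the bias $\Ij-f(\mu_n)$, the crux is an exact least-squares identity. Minimizing $\EE{\bigl(c\,\mu_n(X,Z)+g(Z)-\mustar(X,Z)\bigr)^2}$ over scalars $c>0$ and functions $g$ of $Z$ (the inner minimizer is $g=\Ec{\mustar-c\mu_n}{Z}$, the resulting minimizer is $c=\EE{\covc{\mu_n}{\mustar}{Z}}/\EE{\varc{\mu_n}{Z}}$ when positive, else $c\downarrow 0$), and using $\Ij^2=\EE{\varc{\mustar(X,Z)}{Z}}$ from \eqref{eq:mMSEgap_compact}, yields
\[
\inf_{\mu\in S_{\mu_n}}\EE{\bigl(\mu(X,Z)-\mustar(X,Z)\bigr)^2}\ =\ \Ij^2-\bigl(\max(f(\mu_n),0)\bigr)^2 .
\]
This both re-proves Lemma~\ref{lem:max} and, combined with $|f(\mu_n)|\le\Ij$ (conditional Cauchy--Schwarz), gives $\Ij-f(\mu_n)\le \tfrac{2}{\Ij}\inf_{\mu\in S_{\mu_n}}\EE{(\mu-\mustar)^2}$ when $\Ij>0$ (and the bias is $0$ when $\Ij=0$); since $\Ij$ is fixed by the data distribution, the bias term is $O\bigl(\inf_{\mu\in S_{\mu_n}}\EE{(\mu-\mustar)^2}\bigr)$.

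For the estimation term I would first use floodgate's invariance to positive rescaling of $\mu_n$ to assume, without loss, $\EE{\varc{\mu_n(X,Z)}{Z}}=1$ (this rescaling changes neither $L_n^{\alpha}(\mu_n)$, $f(\mu_n)$, nor $S_{\mu_n}$); the excluded degenerate case $\EE{\varc{\mu_n(X,Z)}{Z}}=0$ forces $R_i=V_i=0$, so $L_n^{\alpha}(\mu_n)=0$ and $\inf_{\mu\in S_{\mu_n}}\EE{(\mu-\mustar)^2}=\Ij^2$, whereupon the claim reduces to the deterministic bound $\Ij\le \tfrac1\Ij(\Ij^2+n^{-1/2})$. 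Under the normalization the ratio hypothesis reads $\EE{\mu_n^{12}(X,Z)}\le C$ uniformly in $n$, which with $\EE{Y^{12}}<\infty$ delivers uniform-in-$n$ bounds on all moments of $(R_i,V_i)$ entering $\hat\Sigma$, $s^2$, and the delta-method asymptotic variance $\sigma_n^2=\Var{R_i-\tfrac12 f(\mu_n)V_i}$ of $\sqrt n\,(\bar R/\sqrt{\bar V}-f(\mu_n))$; moreover, because $V_i$ is a function of $Z_i$ alone, $\sigma_n^2\ge \EE{\Varc{R_i}{X,Z}}=\EE{(\mu_n(X,Z)-\Ec{\mu_n(X,Z)}{Z})^2\,\Varc{Y}{X,Z}}\ge \tau>0$, so $\sigma_n^2$ is bounded away from zero as well. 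Hence $\bar V\to1$ in probability, $s=O_p(1)$, and a Berry--Esseen delta-method estimate exactly parallel to the proof of Theorem~\ref{thm:main} gives $f(\mu_n)-\bar R/\sqrt{\bar V}=O_p(n^{-1/2})$ uniformly in $n$, so the estimation term is $O_p(n^{-1/2})$.

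Combining these bounds controls $\Ij-L_n^{\alpha}(\mu_n)$ from above by the right-hand side of \eqref{eq:accuracy_gap}, and the matching lower bound follows since $L_n^{\alpha}(\mu_n)\le (\bar R/\sqrt{\bar V})_+ +O_p(n^{-1/2})$ together with $f(\mu_n)\le\Ij$ gives $\Ij-L_n^{\alpha}(\mu_n)\ge -O_p(n^{-1/2})$, yielding the two-sided $O_p$ statement. I expect the main obstacle to be the uniform-in-$n$ bookkeeping: because $\mu_n$ varies with $n$, every constant entering the CLT/delta-method step --- above all the asymptotic variance and its estimator $s^2$ --- must be shown to stay controlled, and the rescaling normalization together with the ratio moment condition and the $\tau$ lower bound is precisely what makes this go through. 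Once that is in place, the only genuinely new ingredient over Theorem~\ref{thm:main} is the elementary least-squares identity above, which is exactly what converts ``$\mu_n$ is $L^2$-close to $\mustar$ modulo $S_{\mu_n}$'' into ``$L_n^{\alpha}(\mu_n)$ is close to $\Ij$.''
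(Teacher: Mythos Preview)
Your proposal is correct and follows essentially the same route as the paper: the same bias/estimation decomposition $\Ij-L_n^{\alpha}(\mu_n)=(\Ij-f(\mu_n))+(f(\mu_n)-L_n^{\alpha}(\mu_n))$, the same uniform-in-$n$ Berry--Esseen control of the estimation term via the scale normalization $\EE{\varc{\mu_n}{Z}}=1$, the ratio moment hypothesis, and the variance lower bound $\sigma_n^2\ge\EE{h_n^2(W)\Varc{Y}{X,Z}}\ge\tau$. Your treatment of the bias term through the closed-form identity $\inf_{\mu\in S_{\mu_n}}\EE{(\mu-\mustar)^2}=\Ij^2-\bigl(\max(f(\mu_n),0)\bigr)^2$ is a slightly cleaner packaging of what the paper obtains via the auxiliary function $\bar\mu_n$ and a case split on the sign of $\EE{\covc{\mu_n}{\mustar}{Z}}$, but the content and resulting constant $2/\Ij$ are the same.
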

    The proof can be found in Appendix~\ref{pf:thm:accuracy}. The above condition that ``$\EE{\Varc{\mu_n(X,Z)}{Z}}=0$ or $\frac{\EE{ {\mu}_n^{12}(X,Z)}}{\EE{\Varc{\mu_n(X,Z)}{Z}}^{6}} \le C$'' is a scale-free moment condition on $\mu_n$ which says that $\mu_n(X,Z)$ can have no dependence on $Z$ at all or have a non-vanishing conditional variance (given $Z$) relative to its higher moments. \acc{The high-order moments in our assumptions are likely a technical artifact of our proof (which actually proves a somewhat stronger result than stated in the theorem), and could perhaps be relaxed with a different approach.}
    \acc{
    %Existence of such high moments are assumed \ljmargin{to avoid lengthy discussions}{not really clear what you mean by this---what length discussion about triangular arrays would we have to include if we didn't make these higher moment conditions? Can you rephrase or elaborate? I think readers may get confused here since none of our other results in the main text require such higher moments}. 
    As it stands, these assumptions allow us to utilize the Berry--Esseen-type results in Appendix \ref{pf:thm:main_rate} to handle the fact that $\mu_n$ varies with $n$.} %\lzmargin{}{rewrite it by explaining in detail}
     
     We call the left-hand side of Equation~\eqref{eq:accuracy_gap} the \emph{half-width} (by analogy with the \emph{width} that would measure the accuracy of a two-sided confidence interval) and Theorem~\ref{thm:accuracy} shows it is \emph{adaptive} to the accuracy of $\mu_n$ through the MSE of the best element of its equivalence class $S_{\mu_n}$, up to a limit of the parametric or central limit theorem rate of $n^{-1/2}$. 
    So in principle floodgate can achieve $n^{-1/2}$ accuracy if a member of $S_{\mu_n}$ converges very quickly to $\mustar$, but in general floodgate's accuracy decays gracefully with $\mu_n$'s accuracy. \acc{Note} that the infimum in Equation~\eqref{eq:accuracy_gap} means that floodgate is \emph{self-correcting} with respect to $\mu_n$'s conditional mean given $Z$, \acc{as explained in the second paragraph of Remark~\ref{rk:mu}}.%(through invariance (i)) and global scale (through invariance (ii)).  \lzmargin{}{rewrite here to connect to Remark}

    \subsection{Straightforward generalizations}\label{sec:generalizations}
    Before moving onto extensions, we briefly address a few relatively straightforward generalizations of floodgate.
    
    \paragraph*{Extending the mMSE gap}
    The mMSE gap can be very naturally made invariant to the scale of $Y$ and bounded between 0 and 1 by dividing it by $\Var{Y}$. And since $\Var{Y}$ can be easily and asymptotically-normally estimated under weaker conditions than already assumed for floodgate's validity in Theorem~\ref{thm:main_general}, it is straightforward to extend the floodgate procedure and its validity to perform inference on the scale-free version $\Ij^2_{\text{sf}} = \Ij^2/\Var{Y}$. \acc{We also consider two ways of extending the mMSE gap such that the key property of the MOVI in \citet{azadkia2019simple} is satisfied, i.e., the MOVI equals zero if \emph{and only if} $\jnull$. Details about defining the MOVIs and providing inference can be found in Appendices \ref{sec:max_mmse} and \ref{sec:kernelfg}.}

    \paragraph*{Inference for group variable importance}
    In applications where a group of variables share a common interpretation or are too correlated to powerfully distinguish, it is often necessary to infer a measure of \emph{group} importance instead of a MOVI. Luckily, when $X$ is multivariate, the mMSE gap remains perfectly well-defined and interpretable and floodgate (both $f$ and Algorithm~\ref{alg:MOCK}) retain all the same inferential properties. Indeed, we apply floodgate to groups of variables in our genomics application in Section~\ref{sec:realdata}.
    
    \paragraph*{Transporting inference to other covariate distributions} In some applications, the samples we collect may not be uniformly drawn from the population we are interested in studying. For instance, our data may come from a lab experiment with covariates randomized according to one distribution, while our interest lies in inference about a population outside the lab whose covariates follow a different distribution. As long as the samples at hand share a common conditional distribution $Y\mid X,Z$ with the target population, it is relatively straightforward to perform an importance-weighted version of floodgate that provides inference for the target population's mMSE gap. We provide the details in Appendix~\ref{sec:transport}.
    %The mMSE gap (and hence also floodgate's inference) depends on the entire joint distribution of $(X,Y,Z)$, not just the conditional distribution of $Y\mid X,Z$. Hence we may want to conduct inference

    \paragraph*{Adjusting for selection} When inference is required for many variables simultaneously, it is often preferable to focus attention on a subset of variables whose inferences appear particularly interesting. But if we only report the set of LCBs that are, say, farthest from zero, then our coverage guarantees will fail to hold for this set due to selection bias (this is not a defect of floodgate, but a property of nearly every non-selective inferential procedure). One way to address this may be to apply false coverage-statement rate adjustments \citep{benjamini2005false} to floodgate LCBs. The application is straightforward, and floodgate LCBs satisfy the monotone property required by \citet{benjamini2005false}, although they do not in general satisfy the independence or positive regression dependence on a subset (PRDS) condition and hence would require a correction \citep{benjamini2001control} for strict guarantees to hold. We leave a more formal treatment of selection adjustment to future work, but note also some simple ways to perform benign selection. 
    
    First, if selection is performed using $\mu$ and/or independent data, then no adjustment is needed for validity. For instance, if floodgate is run by data-splitting, we could arbitrarily use the first half of the data (which is also used for choosing $\mu$, but not for running floodgate) for selection, including selecting precisely the subset of variables that $\mu$ depends on. In fact, we can even perform a certain type of benign post-hoc data processing based on the floodgate data itself: if the floodgate data are used to construct a \emph{transformation} of the floodgate LCBs such that every transformed LCB either shrinks or remains the same, then the transformed LCBs retain their marginal asymptotic validity. This is because any such transformation, even one depending on the data or LCBs themselves, can only \emph{increase} coverage of each LCB by reducing it or leaving it unchanged; this is related to the screening procedure in \citet{ML-LJ:2020}. This means, for instance, that if a selection procedure is applied to the floodgate data and used to zero out any unselected LCBs, then as long as the zeroed-out LCBs are reported alongside the rest, the marginal validity of all reported LCBs remains intact even though the same data was used to construct the LCBs and to perform the selection that transformed them.
    %{\color{red} summarize changing distribution of X (refer to appendix), selection adjustment stuff (no need if selection based on first half, can do CDR analogue if selection uses second half, otherwise can do FCR), maybe group variable selection?}
    
    \section{Extensions}\label{sec:extensions}
    % Our framework can be extended in several aspects. First, we consider an alternative MOVI to the mMSE gap, which takes a absolute value form instead. In classification setting, we provide a valid inferential procedure for it using similar strategy as in Section~\ref{sec:mock}.
    % % we consider a different MOVI and provide inference in classification setting where the response variable is discrete therefore the conditional model of $Y$ given $X$ is simpler and the loss function is different. 
    % % By using the similar strategy as before, we also define variable importance measure based on predictive performance and provide inferential guarantees. 
    % Second, we extend our main methodology to the setting where only a model for the covariate distribution is known and illustrate with two specific examples.
    % % Third, we show how to transport the inferential results to the new environments or populations where the covariate distribution is different. 
    % Finally, in light of the multiplicity concern often in applied research, we describe how to adjust the multiple lower confidence bounds to guarantee false coverage-statement rate (FCR) control.
    % a multiple adjustment procedure when the multiple confidence lower bounds are constructed or reported for selected with false coverage-statement rate (FCR) control guarantee.
    % the case where multiple confidence bounds are reported after selection and show that the resulting lower confidence bounds are ready to be adjusted for the selected covariates by a false coverage-statement rate (FCR) control procedure.
    \subsection{Beyond the mMSE gap}
    \label{sec:class}
    To demonstrate that the floodgate idea can be used beyond the mMSE gap, we consider the following MOVI.
    % Readers may ask whether there are other appropriate MOVIs alternative to the mMSE gap. For example, we can quickly come up with the following quantity.
    \begin{definition}[Mean absolute conditional mean gap]
    \label{def:class_movi}
    The \emph{mean absolute conditional mean (MACM) gap} for variable $X$ is defined as 
    \begin{equation}
        \label{eq:class_movi}
        \Ijc =  \EE{\left|\condmeanj - \condmean\right|}
    \end{equation}
    whenever all the above expectations exist.
    \end{definition}
    The subscript in $\Ijc$ reflects its similarity to $\Ij^2 =  \EE{(\condmeanj - \condmean)^2}$ 
    % the following expression for $\Ij$,% Recall that \eqref{eq:mMSEgap_compact} directly implies the following expression of the mMSE gap
    % \begin{equation} \nonumber
    %     \label{eq:mMSEgap_form3}
    %     \Ij^2 =  \EE{(\condmeanj - \condmean)^2},
    % \end{equation}
    except with the square replaced by the absolute value (also known as the $\ell_1$ norm). %then $\Ijc$ basically replaces the squared form by the absolute value.
    %(up to a factor of $0.5$).
    % ---------------------
    % Outline:
    % 1. It is nature to ask whether there are other MOVIs with different loss
    % 2. We do not have a general characterization 
    % 3. For example, the absolute one we know how to do it in classification setting
    % 4. And it is nice that we have if and only if
    % ---------------------
    Although we have not found a floodgate function to enable inference for arbitrary $Y$, the remainder of this subsection shows how to perform floodgate inference when $Y$ is binary (coded as $Y\in\{-1,1\}$). %In terms of conducting inference on $\Ijc$, currently we do not have a general answer but are able to provide a valid inferential procedure in classification setting. More specifically, when $Y$ is binary, say $Y\in \calY=\{-1,1\}$, we propose how to produce lower confidence bounds, as described in this section. 
    We note that when $Y$ is binary, $\Ijc$ is zero if and \emph{only if} $\jnull$ holds (the ``if" part holds for non-binary $Y$ as well), since the expected value uniquely determines the distribution of a binary random variable. 
    In particular, for any (nonrandom) function $\mu:\mathbb{R}^p\rightarrow\mathbb{R}$, define
%    \begin{definition}\label{def:kappaj_u}
    \begin{equation}\label{eq: kappaj_u}
    \kappamu = 
    % 2\EE{\mathbbm{1}_{\{Y\cdot (\muXk - \Ec{\muX}{\Xnoj}) < 0\}} - \mathbbm{1}_{\left\{Y\cdot (\muX - \Ec{\muX}{\Xnoj}) < 0\right\}}},
    2\mathbb{P}\big(Y (\muXk - \Ec{\muX}{\Xnoj}) < 0\big)-2\mathbb{P}\big(Y (\muX - \Ec{\muX}{\Xnoj}) < 0\big)
    \end{equation}
    where $\tilde{X}\sim P_{X|Z}$ and is conditionally independent of $X$ and $Y$.
    \begin{lemma}\label{lem:class_max}
    If $|Y|\stackrel{a.s.}{=}1$, then for any $\mu$ such that $\kappamu$ exists, $\kappamu\le \Ijc$, with equality when $\mu=\mustar$.
    \end{lemma}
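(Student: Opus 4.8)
The plan is to rewrite both probabilities in the definition~\eqref{eq: kappaj_u} of $\kappamu$ as expectations of the \emph{sign} of the centered score $s(X,Z) := \muX - \Ec{\muX}{\Xnoj}$ (respectively its analogue $\tilde s := \mu(\Xtil,Z) - \Ec{\muX}{\Xnoj}$ built from the conditional copy $\Xtil$), use $|Y|\stackrel{a.s.}{=}1$ to split the relevant indicator into pieces that pull out the conditional means, and then observe that the two resulting terms telescope into a single expectation bounded by $\Ijc$ via the triangle inequality.

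First I would record the pointwise identity $\one{Yt<0} = \one{t>0}\one{Y=-1} + \one{t<0}\one{Y=1}$, valid for every real $t$ (the event $\{t=0\}$ contributing zero to both sides). Applying it with $t=s(X,Z)$, conditioning on $(X,Z)$, and using $\PP{Y=1\mid X,Z} = (1+\condmean)/2$, the second probability in~\eqref{eq: kappaj_u} becomes $\tfrac12\,\PP{s(X,Z)\neq 0} - \tfrac12\,\EE{\condmean\,\mathrm{sgn}(s(X,Z))}$, where $\mathrm{sgn}(t):=\one{t>0}-\one{t<0}$ and $\condmean = \mustar(X,Z)$. For the first probability, the key point is that $\Xtil$ is conditionally independent of $(X,Y)$ given $Z$, so $\PP{Y=1\mid Z,\Xtil} = \PP{Y=1\mid Z} = (1+\condmeanj)/2$; conditioning on $(Z,\Xtil)$ and running the same computation gives $\tfrac12\,\PP{\tilde s\neq 0} - \tfrac12\,\EE{\condmeanj\,\mathrm{sgn}(\tilde s)}$.

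Next I would invoke $\Xtil\sim P_{X\mid Z}$: conditionally on $Z$ the variables $s(X,Z)$ and $\tilde s$ have the same law, so $\PP{\tilde s\neq 0}=\PP{s(X,Z)\neq 0}$ (these cancel when we form $\kappamu$ as twice the first probability minus twice the second) and $\Ec{\mathrm{sgn}(\tilde s)}{Z} = \Ec{\mathrm{sgn}(s(X,Z))}{Z}$. Since $\condmeanj$ is a function of $Z$ alone, the tower property collapses $\EE{\condmeanj\,\mathrm{sgn}(\tilde s)}$ to $\EE{\condmeanj\,\mathrm{sgn}(s(X,Z))}$, yielding
\[
\kappamu \;=\; \EE{\big(\condmean - \condmeanj\big)\,\mathrm{sgn}\big(s(X,Z)\big)} \;\le\; \EE{\left|\condmean - \condmeanj\right|} \;=\; \Ijc ,
\]
using $|\mathrm{sgn}(\cdot)|\le 1$. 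For the equality claim, when $\mu=\mustar$ we have $\Ec{\muX}{\Xnoj} = \Ec{Y}{\Xnoj} = \condmeanj$, hence $s(X,Z) = \condmean-\condmeanj$ and $\mathrm{sgn}(s(X,Z)) = \mathrm{sgn}(\condmean-\condmeanj)$, turning the inequality into an equality.

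The main obstacle is purely bookkeeping: one must treat the degenerate event $\{s(X,Z)=0\}$ and the $\mathrm{sgn}$-convention identically across the two probabilities so that the $\PP{\cdot\neq 0}$ terms truly cancel, and must invoke $Y\perp\Xtil\mid Z$ and $\Xtil\mid Z\deq X\mid Z$ at exactly the right points. There are no integrability subtleties since $|Y|\le 1$ forces every quantity in sight to be bounded, so once the conditioning is organized correctly the argument is short.
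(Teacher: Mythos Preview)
Your argument is correct. You arrive at the clean identity $\kappamu = \EE{(\condmean-\condmeanj)\,\mathrm{sgn}(s(X,Z))}$ and bound it by $\Ijc$ via $|\mathrm{sgn}|\le 1$, with equality at $\mu=\mustar$ immediate.

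The paper's proof reaches the same pointwise conclusion but by a different device: it introduces a hypothetical \emph{null copy} $\Yk$ of $Y$ (drawn from $P_{Y\mid Z}$, independently of $(X,Y)$) and uses a symmetry/exchangeability argument to rewrite the first probability in $\kappamu$ as $\EE{\one{\Yk\cdot h(W)<0}}$; it then conditions on $W$ and performs a case-by-case maximization over the sign of $h(w)$, obtaining the pointwise maximum $|\condmean-\condmeanj|/2$. Your route is more direct: you avoid $\Yk$ entirely by conditioning on $(Z,\Xtil)$ and invoking $Y\independent\Xtil\mid Z$, and you package the case analysis into the single $\mathrm{sgn}$ identity. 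What the paper's approach buys in exchange for the extra machinery is an explicit description of the full set of maximizers (all $\mu$ with $\mathrm{sgn}(\mu-\Ec{\mu}{Z})=\mathrm{sgn}(\mustar-\Ec{\mustar}{Z})$ a.s.), whereas your argument only verifies that $\mustar$ is one such maximizer---which is all the lemma asserts.
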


Obtaining an LCB for $\kappamu$ is even easier than it was for $\thetamu$ because $\kappamu$ is essentially just one expectation instead of a ratio of expectations, so a straightforward central limit theorem argument suffices; \acc{Algorithm~\ref{alg:class_MOCK} (presented in Appendix \ref{app:class_computation}) formalizes the procedure and Theorem~\ref{thm:class_main} establishes its asymptotic coverage.}
%\begin{algorithm*}[h]
%	\caption{Floodgate for the MACM gap}\label{alg:class_MOCK}
%	\begin{algorithmic}
%	\REQUIRE  Data $\{(Y_i,X_i,Z_i)\}\nsubp$, $P_{X\mid \Xnoj}$, a \revision{working regression function} $\mu:\mathbb{R}^p\rightarrow\mathbb{R}$, and a confidence level $\alpha \in (0,1)$.
%	\STATE Let $U_i = \mu(X_i,Z_i)-\ec{\mu(X_i,Z_i)}{Z_i}$ and % and $\tilde{U}_i = 
%	compute
%	\vspace{-.2cm}\[ R_i = \left\{\begin{array}{rl} \Pc{U_i<0}{\Xinoj}-\indicat{U_i<0} & \text{ if }\;Y_i=1\\
%	\Pc{U_i>0}{\Xinoj}-\indicat{U_i>0} & \text{ if }\;Y_i=-1
%	\end{array}\right. \vspace{-.2cm}\]
%	for $i\in [n]$,
%	and compute its sample mean $\bar{R}$ and sample variance $s^2$.%by $(\bar{R}, s^2)$.
%	\RETURN Lower confidence bound $L_{n}^{\alpha} (\mu)=2 \max\left\{\bar{R} - \frac{z_{\alpha}s}{\sqrt{n}},0\right\}$.
%	\end{algorithmic}
%\end{algorithm*}

	\begin{theorem}[MACM gap floodgate validity]\label{thm:class_main}
	For any given \revision{working regression function} $\mu:\mathbb{R}^{p} \rightarrow \mathbb{R}$ and i.i.d. data $\{(Y_i,X_i,Z_i)\}_{i=1}^n$,
	$L_{n}^{\alpha} (\mu)$ from Algorithm~\ref{alg:class_MOCK} satisfies 
	\begin{equation} 
	\label{eq:class_ptwise_valid}\nonumber
 	\PP{L_{n}^{\alpha} (\mu) \le \Ijc } 
	    \ge 1 - \alpha - O(n^{-1/2}).
	\end{equation}	
    \end{theorem}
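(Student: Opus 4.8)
The plan is to follow the same template as the proof of Theorem~\ref{thm:main} but in a simpler setting: since $\kappamu$ is a single expectation rather than a ratio of expectations, a Studentized central limit theorem replaces the delta method. Throughout, $Y$ is the binary response coded in $\{-1,1\}$ (so that the two cases in Algorithm~\ref{alg:class_MOCK} are exhaustive), which makes the $R_i$ i.i.d.\ and bounded in $[-1,1]$, hence with finite moments of every order. The argument rests on three facts: (i) $2\,\EE{R_1}=\kappamu$; (ii) $\kappamu\le\Ijc$ by Lemma~\ref{lem:class_max}; and (iii) $\Ijc=\EE{|\condmeanj-\condmean|}\ge 0$, the last being what lets us discard the $\max\{\cdot,0\}$ in $L_{n}^{\alpha}(\mu)$.

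The first step is to establish the unbiasedness identity (i). Writing $U_i=\mu(X_i,Z_i)-\Ec{\mu(X_i,Z_i)}{Z_i}$ and splitting $R_i$ by the sign of $Y_i$, the two indicator terms contribute $-\big(\PP{Y_i=1,\,U_i<0}+\PP{Y_i=-1,\,U_i>0}\big)=-\PP{Y_iU_i<0}$. For the two conditional-probability terms, condition on $Z_i$ and use that the auxiliary copy $\Xtil$ of \eqref{eq: kappaj_u} has the same conditional law as $X$ given $Z$ and is conditionally independent of $Y$ given $Z$; this yields $\EE{\indicat{Y_i=1}\Pc{U_i<0}{Z_i}+\indicat{Y_i=-1}\Pc{U_i>0}{Z_i}}=\PP{Y_i\big(\mu(\Xtil,Z_i)-\Ec{\mu(X_i,Z_i)}{Z_i}\big)<0}$. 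Summing the two pieces and comparing with \eqref{eq: kappaj_u} gives $\EE{R_1}=\tfrac12\kappamu$.

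Next I would reduce the coverage claim. Because $\Ijc\ge 0$, the event $\{L_{n}^{\alpha}(\mu)\le\Ijc\}$ equals $\{2(\bar R - z_\alpha s/\sqrt n)\le\Ijc\}$, and by (i)--(ii) this contains $\{\sqrt n(\bar R-\EE{R_1})/s\le z_\alpha\}$ whenever $s>0$. If $\Var{R_1}=0$ then the $R_i$ are a.s.\ constant, so $\bar R=\EE{R_1}$ and $s=0$ a.s.\ and the target event has probability one (the analogue of the degenerate case in Theorem~\ref{thm:main}). When $\sigma^2:=\Var{R_1}>0$, boundedness of the $R_i$ makes all moment requirements automatic, so a Berry--Esseen bound for the Studentized sample mean (of the type used, in a more complicated delta-method form, in \citet{pinelis2016optimal} for Theorem~\ref{thm:main}) gives $\sup_t\big|\PP{\sqrt n(\bar R-\EE{R_1})/s\le t}-\Phi(t)\big|=O(n^{-1/2})$; evaluating at $t=z_\alpha$ and using $\Phi(z_\alpha)=1-\alpha$ finishes the proof (the case $\alpha\ge 1/2$, where $z_\alpha\le 0$, is identical).

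The main obstacle is modest: the real content is the conditioning identity (i), which is routine once one remembers to pass through the auxiliary copy $\Xtil$, and the invocation of the correct Studentized Berry--Esseen rate, which is standard because the $R_i$ are bounded. The only point requiring care is segregating the degenerate case $\Var{R_1}=0$ (equivalently, never dividing by $s=0$); the rest---stripping the $\max$ via $\Ijc\ge 0$ and chaining the inequality through $\kappamu$ using Lemma~\ref{lem:class_max}---is bookkeeping. This is, by design, the easiest of the paper's validity theorems.
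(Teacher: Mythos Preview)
Your proposal is correct and follows essentially the same approach as the paper's proof: establish $\EE{R_1}=\kappamu/2$, chain through Lemma~\ref{lem:class_max} and $\Ijc\ge 0$, dispose of the degenerate case $\Var{R_1}=0$, and otherwise invoke a Berry--Esseen bound for the Studentized mean of bounded i.i.d.\ summands. The only cosmetic difference is that the paper cites the Bentkus--G\"otze Berry--Esseen bound for Student's statistic directly (its Lemma~\ref{lem:berry_t}) rather than framing it as a specialization of the \citet{pinelis2016optimal} delta-method result.
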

Theorem~\ref{thm:class_main} is proved in Appendix~\ref{pf:thm:class_main}, and perhaps its most striking feature is its lack of assumptions, which follows from the boundedness of $\kappamu$ and the $R_i$. 
%\acc{This boundedness also allows us to immediately bound the rate of miscoverage by $O(n^{-1/2})$, since unlike unlike the analogous results for the mMSE gap in Appendix~\ref{sec:rate} which require}
%\rev{\ljmargin{Thanks to such boundedness, we are also able to prove coverage validity with a rate of $n^{-1/2}$, which is directly presented in the above statement; by contrast, the rate versions of Theorems \ref{thm:main} and \ref{thm:main_general} require extra higher moment conditions and are deferred to Appendices.}{I actually don't think we need to explain this, so I would just delete this text}}
Like $f$, $f_{\ell_1}$ is invariant to any transformation of $\mu$ that leaves $\mathrm{sign}(\mu(X,Z)-\ec{\mu(X,Z)}{Z})$ unchanged on a set of probability 1, making its validity immediately uniform over large classes of $\mu$.

Although the boundedness of the $R_i$ streamlines the coverage guarantees, their conditional probabilities make it somewhat more complicated to carry out efficient computation of Algorithm~\ref{alg:class_MOCK}. In particular, the sharp boundary at zero inside the probabilities requires a certain degree of smoothness in $\mu$ and $P$ to be able to estimate the $R_i$ by Monte Carlo samples analogously to Section~\ref{sec:computation}. We give precise sufficient conditions and a proof of their validity in Appendix~\ref{app:class_computation}, and defer study of Algorithm~\ref{alg:class_MOCK}'s accuracy and robustness to future work.
 
    \subsection{Relaxing the assumptions by conditioning}
    \label{sec:relax}
    In this section we show that we can relax the \acc{model-X} assumption that $P_{X|Z}$ be known exactly and apply floodgate when only a \emph{parametric model} is known for $P_{X|Z}$. This is inspired by \citet{DH-LJ:2019} which similarly relaxes the assumptions of model-X knockoffs. We follow the same general principle of conditioning on a sufficient statistic of the parametric model for $P_{X|Z}$, but doing so in floodgate requires a somewhat different approach than \citet{DH-LJ:2019}. \acc{Note that this section's method and assumptions are also distinct from the double robust assumptions in Remark~\ref{rk:modelx}, further emphasizing that the key ideas underlying floodgate are not tied to any particular set of assumptions.}

    The approach we take in this section will involve computations on the entire matrix of observations, i.e., $(\bX,\bZ)\in \mathbb{R}^{n\times p}$ whose rows are the covariate samples $(X_i,Z_i)$ and $\bm{y}\in \mathbb{R}^{n}$ whose entries are the response samples $Y_i$.
    Now suppose that we know a model $F_{X|Z}$ for $P_{X|Z}$ with a sufficient statistic functional for $n$ independent (but not necessarily identically distributed) samples $\bX\mid\bZ$ given by $\calT(\bX,\bZ)$, whose random value we will denote simply by $\bT$. We will assume that $\calT$ is invariant to permutation of the rows of $(\bX,\bZ)$ (as we would expect for any reasonable $\calT$, since these rows are i.i.d.).
    
    The key idea that allows us to perform floodgate inference without knowing the distribution of $\bX\mid \bZ$ is that, by definition of sufficiency, we \emph{do} know the distribution of $\bX\mid\bZ,\bT$. Leveraging this idea requires some adjustment to the floodgate procedure, and we start by defining a conditional analogue of $f$.
    \begin{equation}\label{eq:thetamuT}
        \thetamuT:=
        \frac{\EE{\covc{\mustar(X_i,Z_i)}{\mu(X_i,Z_i)}{\bs{Z},\bs{T}}}}{\sqrt{\EE{\varc{\mu(X_i,Z_i)}{\bs{Z},\bs{T}}}}}, 
        \end{equation}
         again with the convention $0/0=0$. Note that $\thetamuT$ does not depend on the choice of $i$ thanks to $\calT$'s permutation invariance, but it \emph{does} depend on the sample size $n$. Nevertheless, it follows immediately from the proof of Lemma~\ref{lem:max} that $\thetamuT \le \thetamusT$ for any nonrandom $\mu$. On the other hand, $\thetamusT\neq\Ij$, but instead a different relationship that is nearly as useful holds:
    $$ 
     \thetamusT \le \thetamus = \Ij,
    $$ 
    due to the monotonicity of conditional variance.
    
    With floodgate property (a) ($\thetamuT\le\Ij$) established, we now turn to property (b): the ability to construct a LCB for $\thetamuT$. In an analogous way as for $\thetamu$, we can compute $n$ unbiased estimators of the numerator and the squared denominator, but these estimators are no longer i.i.d. because they are linked through $\bT$, so we cannot immediately apply the central limit theorem or delta method as we did in Section~\ref{sec:mock}. Our workaround is to split the data into \acc{$n_2$} \emph{batches} \acc{of size $n_1$} and only condition on the sufficient statistic within each batch. This way, there is still independence between batches and we can apply the central limit theorem and delta method across batches. \acc{This strategy is spelled out in Algorithm \ref{alg:batch_cond} (see Appendix \ref{sec:relax_details} for details) and Theorem~\ref{thm:batch_cond} establishes its asymptotic coverage. We call this procedure \emph{co-sufficient} floodgate because the term ``co-sufficiency" describes sampling conditioned on a sufficient statistic \citep{stephens2012goodness}.}
   % \ljmargin{\acc{$n_2$} \emph{batches} \acc{of size $n_1$}}{$n_1,n_2$ are referred to later but used to only be introduced in the Alg environment}  
%      \begin{algorithm*}[h!]
%			\caption{Co-sufficient floodgate} \label{alg:batch_cond}
%		\begin{algorithmic}[1]
%			\REQUIRE The inputs of Algorithm \ref{alg:MOCK}, a sufficient statistic functional $\calT$, and a batch size $n_{2}$.
%			%a batching rule (i.e. $n=n_{1} n_{2}$, with $n_{1}$ batches, each of size $n_{2}$).
%			\STATE Let $n_1=n/n_2$ and for $m\in[n_1]$, denote $(\bX_m,\bZ_m) = \{X_i,Z_i\}_{i=(m-1)n_2+1}^{mn_2}$, and let $\bT_m=\calT(\bX_m,\bZ_m)$.
%      	            \STATE For $m\in[n_1]$, compute
%      	          \begin{align*}
%      	          R_m &= \frac{1}{n_2}\sum_{i=(m-1)n_2+1}^{mn_2} Y_i\, (\mu(X_i,Z_i) - \Ec{\mu(X_i,Z_i)}{\bs{Z}_m, \bT_m} ),\\
%      	          V_m &= \frac{1}{n_2}{\sum_{i=(m-1)n_2+1}^{mn_2} \Varc{\mu(X_i,Z_i)}{\bs{Z}_m,\bT_m}},
%      	          \end{align*}
%		  their sample mean $(\bar{R},\bar{V})$, their sample covariance matrix $\hat{\Sigma}$, and $s^2 = \frac{ 1 }{ \bar{V} }\left[ \left(\frac{\bar{R} }{2 \bar{V} }\right)^2 \hat{\Sigma}_{22} +  \hat{\Sigma}_{11} - \frac{\bar{R} }{ \bar{V} } \hat{\Sigma}_{12} \right].$
%		    \RETURN Lower confidence bound $\LjT (\mu)=\max\left\{\frac{\bar{R} }{ \sqrt{\bar{V}}}
%        - \frac{z_{\alpha}s}{\sqrt{n_1}},\,0\right\}$, with the convention that $0/0=0$. 
%		\end{algorithmic}
%	\end{algorithm*}
    \acc{
    \begin{theorem}[Co-sufficient floodgate validity]\label{thm:batch_cond}
    For any given working regression function $\mu:\mathbb{R}^{p} \rightarrow \mathbb{R}$, i.i.d. data $\{(X_i,Z_i,Y_i)\}_{i=1}^n$, and permutation-invariant sufficient statistic functional $\calT$, if $\ee{Y^4}<\infty$ and $\ee{\mu^4(X,Z)} <\infty$, then $\LjT(\mu)$ from Algorithm~\ref{alg:batch_cond} satisfies 
	\begin{equation} 
	\nonumber
        \liminf_{n\rightarrow \infty}\mathbb{P}\left(\LjT(\mu) \le \Ij \right) \ge 1 - \alpha.
	\end{equation}
    \end{theorem}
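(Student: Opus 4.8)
The plan is to establish floodgate's two defining properties for the \emph{batch-conditional} functional targeted by Algorithm~\ref{alg:batch_cond}, following the same skeleton as the proof of Theorem~\ref{thm:main} but with the i.i.d.\ structure now coming from disjoint \emph{batches} rather than individual observations. Writing $f_{n_2}^\calT(\mu)$ for the functional \eqref{eq:thetamuT} with the sufficient statistic computed on a block of $n_2$ samples, the lower-bounding property (a), i.e.\ $f_{n_2}^\calT(\mu)\le\Ij$, is essentially recorded in the text: by the conditional Cauchy--Schwarz argument from the proof of Lemma~\ref{lem:max} (applied conditionally on $(\bs{Z}_1,\bs{T}_1)$) and then monotonicity of expected conditional variance under additional conditioning (since $\sigma(Z_1)\subseteq\sigma(\bs{Z}_1,\bs{T}_1)$),
\[
f_{n_2}^\calT(\mu)\;\le\; f_{n_2}^\calT(\mustar)\;=\;\sqrt{\EE{\varc{\mustar(X_1,Z_1)}{\bs{Z}_1,\bs{T}_1}}}\;\le\;\sqrt{\EE{\varc{\mustar(X_1,Z_1)}{Z_1}}}\;=\;\Ij,
\]
the last equality being \eqref{eq:mMSEgap_compact}. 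So it suffices to show that Algorithm~\ref{alg:batch_cond} outputs an asymptotically valid level-$\alpha$ LCB for $f_{n_2}^\calT(\mu)$ --- which is property (b).

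For property (b), note that the $n_1$ batches are disjoint blocks of an i.i.d.\ sample, so $\{(R_m,V_m)\}_{m=1}^{n_1}$ are i.i.d., with finite covariance matrix under the stated moment hypotheses. The crux is the identity $g(\EE{R_m},\EE{V_m})=f_{n_2}^\calT(\mu)$ for $g(r,v)=r/\sqrt v$, i.e.\ that $\EE{R_m}$ and $\EE{V_m}$ are the numerator and the squared denominator of $f_{n_2}^\calT(\mu)$. For $V_m$ this is just linearity of expectation plus the permutation-invariance of $\calT$ (so $\Varc{\mu(X_i,Z_i)}{\bs{Z}_m,\bs{T}_m}$ has the same expectation for every $i$ in the block). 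For $R_m$ the key is an ``un-conditioning'' step: because $(X_k,Z_k,Y_k)$ are i.i.d.\ and $\bs{T}_m=\calT(\bs{X}_m,\bs{Z}_m)$ is measurable with respect to the block's covariates, $\Ec{Y_i}{X_i,Z_i,\bs{Z}_m,\bs{T}_m}=\mustar(X_i,Z_i)$; taking conditional expectation given $(\bs{Z}_m,\bs{T}_m)$ and using the tower property then gives $\Ec{Y_i\big(\mu(X_i,Z_i)-\Ec{\mu(X_i,Z_i)}{\bs{Z}_m,\bs{T}_m}\big)}{\bs{Z}_m,\bs{T}_m}=\covc{\mustar(X_i,Z_i)}{\mu(X_i,Z_i)}{\bs{Z}_m,\bs{T}_m}$, and averaging over $i$ in the block and taking expectations identifies $\EE{R_m}$ with the numerator of $f_{n_2}^\calT(\mu)$.

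With the mean identity in hand, the remainder mirrors the proof of Theorem~\ref{thm:main}, with $n_1\to\infty$ in the role of the sample size: apply the bivariate CLT to $\sqrt{n_1}\big((\bar R,\bar V)-(\EE{R_m},\EE{V_m})\big)$ and the delta method to $g$ (whose gradient at $(\EE{R_m},\EE{V_m})$, when $\EE{V_m}>0$, produces exactly the asymptotic variance that $s^2$ estimates consistently by the weak law for sample covariances), concluding that $\sqrt{n_1}\big(\bar R/\sqrt{\bar V}-f_{n_2}^\calT(\mu)\big)/s\Rightarrow\calN(0,1)$ and hence $\PP{\bar R/\sqrt{\bar V}-z_\alpha s/\sqrt{n_1}\le f_{n_2}^\calT(\mu)}\ge 1-\alpha-o(1)$. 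Since $\Ij\ge 0$, on $\{\LjT(\mu)>\Ij\}$ the outer $\max$ is inactive, so $\bar R/\sqrt{\bar V}-z_\alpha s/\sqrt{n_1}>\Ij\ge f_{n_2}^\calT(\mu)$ by property (a), giving $\PP{\LjT(\mu)>\Ij}\le\alpha+o(1)$. The degenerate case $\EE{V_m}=0$ is handled directly --- then $V_m=0$ a.s., $\bar V=0$, and the $0/0$ convention forces $\LjT(\mu)=0\le\Ij$.

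I expect the main obstacle to be the bookkeeping in the un-conditioning identity together with the point that the estimand is the \emph{batch-level} functional $f_{n_2}^\calT(\mu)$ (not $f_n^\calT(\mu)$): one must verify carefully that conditioning on the within-batch sufficient statistic neither breaks the mean identity nor the inequality $f_{n_2}^\calT(\mu)\le\Ij$, and that permutation-invariance of $\calT$ makes $f_{n_2}^\calT(\mu)$ independent of the choice of index $i$ within a batch. The weaker $o(1)$ (rather than $n^{-1/2}$) rate is simply the price of assuming only low-order moments, which precludes the Berry--Esseen-type control of the delta method used in Theorem~\ref{thm:main}.
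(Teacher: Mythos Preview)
Your proposal is correct and follows essentially the same approach as the paper: reduce to bounding $\PP{\LjT(\mu)\le f_{n_2}^{\calT}(\mu)}$ via the deterministic chain $f_{n_2}^{\calT}(\mu)\le f_{n_2}^{\calT}(\mustar)\le\Ij$, then exploit the i.i.d.\ structure of the batch-level pairs $(R_m,V_m)$ together with the bivariate CLT and delta method, with $s^2$ a consistent plug-in for the asymptotic variance. The paper organizes the argument around the simplified identity $f_{n_2}^{\calT}(\mu)=\EE{Y_i h^{\calT}(W_i)}/\sqrt{\EE{(h^{\calT}(W_i))^2}}$ (with $h^{\calT}(W_i)=\mu(X_i,Z_i)-\Ec{\mu(X_i,Z_i)}{\bs{Z}_m,\bT_m}$) and a brief four-case split on whether $\Var{R_m}$ and $\Var{V_m}$ vanish, but this is the same computation as your ``un-conditioning'' step and your degenerate-case remark.
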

    The proof can be found in Appendix~\ref{pf:thm:batch_cond}. }
%    the weaker moment conditions than Theorem~\ref{thm:main} correspond to the weaker $o(1)$ term, and we defer to future work strengthening it to $O(n^{-1/2})$ following similar techniques as earlier results in the paper. 
    Regarding computation, as in Section~\ref{sec:computation}, we can replace the conditional expectations in \acc{Algorithm~\ref{alg:batch_cond} with Monte Carlo estimates};
    %the expressions for $R_m$ and $V_m$ with Monte Carlo estimates based on resampling $\bX_m\mid \bZ_m,\bT_m$ conditionally independently of $\bX$ and $\bm{y}$;
%    }{this used to refer to $R_m$ and $V_m$, which no longer appear in the main text}
    see Appendix~\ref{sec:mccf} for details. For a given $\mu$, we may worry that co-sufficient floodgate loses some accuracy relative to regular floodgate due to the gap between $\thetamu$ and $\thetamuT$, but in fact this gap is typically $O(n_2^{-1})$ for fixed-dimensional parametric models. {We quantify this gap for multivariate} Gaussian and discrete Markov chain covariate models in the following two subsections{, showing that, at least in these two cases, co-sufficient floodgate relaxes the assumptions of \acc{model-X} floodgate with only a minimal loss in accuracy}.

	\subsubsection{Low-dimensional multivariate Gaussian model}
	\label{sec:lowdim_Gaussian}
	
	In this section we let $\calB_{m}=\{(m-1)n_2+1,\dots,mn_2\}$.
	  \begin{proposition}\label{prop:gap_rate_gaussian}
	  Suppose samples $\{X,Z\}\nsubp$ are i.i.d. multivariate Gaussian parameterized as $X_{i}\mid\Xinoj \sim \gauss{(1,Z_{i})\gamma}{\sigma^{2}}$ for some $\gamma\in \mathbb{R}^{p}$ and $\sigma^2>0$, and $\Xinoj \sim \gauss{\bv_0}{\bSig_0}$. Assume  $\sigma^{2}$ is known and the batch size $n_{2}$ satisfies $n_{2}>p+2$. {Let} $\calT$ be the following sufficient statistic functional
	  $$
	  \bT_{m}:=\calT(\bs{X}_m, \bs{Z}_m)=\left(\sum_{i\in\calB_{m}}X_{i}, \sum_{i\in\calB_{m}}X_{i}Z_{i}\right).
	  $$
	  Then if $\EE{\mu^4(X,Z)},\EE{(\mustar)^4(X,Z)} <\infty$, we have
	  \begin{equation}\label{eq:gap_rate_gaussian}
     \thetamu - \thetamuT = O\left(\frac{p}{n_{2} - p -2}\right).
	  \end{equation}
      \end{proposition}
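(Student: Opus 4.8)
The plan is to exploit the explicit form of the Gaussian conditional laws and reduce the gap $\thetamu - \thetamuT$ to a comparison of two conditional variances of $\mu(X_i,Z_i)$: one conditioning only on $Z_i$, and one conditioning on $(\bs{Z}_m,\bT_m)$. First I would note that by definition of $\thetamuT$ in \eqref{eq:thetamuT} and $\thetamu$ in \eqref{eq:thetaj_mu}, and since both numerators are equal to $\EE{Y(\mu(X_i,Z_i)-\Ec{\mu(X_i,Z_i)}{\cdot})}$ where the inner conditioning set only affects things through a term orthogonal to $Y$ — actually more carefully, using the tower property and the fact that $\mustar(X_i,Z_i)=\Ec{Y}{X_i,Z_i}$, both numerators can be rewritten as $\EE{\covc{\mustar(X_i,Z_i)}{\mu(X_i,Z_i)}{\mathcal{C}}}$ for the respective conditioning $\sigma$-algebras $\mathcal{C}$. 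The key structural fact is that, conditionally on $(\bs{Z}_m,\bT_m)$, the vector $(X_i)_{i\in\calB_m}$ is exchangeable with a known distribution, and in fact $X_i\mid \bs{Z}_m,\bT_m$ is again Gaussian (a conditional-on-sufficient-statistic Gaussian, i.e., the ``co-sufficient'' distribution). This makes both conditional covariances computable as inner products in an appropriate Gaussian space.

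The main computation I would carry out is to identify $\Varc{\mu(X_i,Z_i)}{\bs{Z}_m,\bT_m}$ and $\Varc{\mu(X_i,Z_i)}{Z_i}$ and show their ratio is $1 - O(p/(n_2-p-2))$ uniformly. The cleanest route: conditionally on $Z_i$, $X_i = (1,Z_i)\gamma + \sigma\epsilon_i$ with $\epsilon_i\sim\gauss{0}{1}$; conditionally on $(\bs{Z}_m,\bT_m)$, one shows $X_i = (1,Z_i)\hat\gamma_m + \sigma\sqrt{1-h_{ii}}\,\tilde\epsilon_i$ where $\hat\gamma_m$ is the (known) least-squares-type estimate of $\gamma$ within batch $m$ computed from $\bT_m$, $h_{ii}$ is the $i$th leverage of the batch design matrix $[\mathbf{1}\ \bs{Z}_m]$, and $\tilde\epsilon_i$ has a (non-Gaussian but explicitly known) distribution with mean $0$ and variance $1$. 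Since $\sum_{i\in\calB_m}h_{ii}=p$ and $n_2>p+2$, we get $1-h_{ii} = 1 - O(p/n_2)$ on average, and more precisely the relevant ratio of expected conditional variances differs from $1$ by $O(p/(n_2-p-2))$ — the denominator $n_2-p-2$ (rather than $n_2-p$) arising because the co-sufficient law of the within-batch residual sum of squares is an inverse-Wishart/inverse-chi-square type object requiring $n_2-p-2$ degrees of freedom for a finite second moment, which is where the finite-fourth-moment hypotheses on $\mu$ and $\mustar$ get used.

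Concretely, I would write
\begin{equation}\label{eq:ratio_decomp}
\thetamu - \thetamuT = \frac{\EE{\covc{\mustar}{\mu}{Z_i}}}{\sqrt{\EE{\varc{\mu}{Z_i}}}} - \frac{\EE{\covc{\mustar}{\mu}{\bs{Z}_m,\bT_m}}}{\sqrt{\EE{\varc{\mu}{\bs{Z}_m,\bT_m}}}},
\end{equation}
then bound each conditional covariance by Cauchy--Schwarz in terms of the corresponding conditional variances of $\mu$ and of $\mustar$, use the tower property to relate $\Ec{\varc{\mu}{Z_i}}{\bs{Z}_m,\bT_m}$ to $\varc{\mu}{\bs{Z}_m,\bT_m}$ plus a nonnegative ``between'' term, and track that the whole discrepancy is controlled by $\EE{\big|1 - \varc{\mu}{\bs{Z}_m,\bT_m}/\varc{\mu}{Z_i}\big|}$ and the analogous quantity for $\mustar$, each of which is $O(p/(n_2-p-2))$ by the leverage bound above. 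The main obstacle I anticipate is handling the non-Gaussianity of the co-sufficient residual $\tilde\epsilon_i$ cleanly — one cannot just quote a Gaussian identity — so one needs either an explicit description of the conditional law of $X_i\mid\bs{Z}_m,\bT_m$ (a projected/conditioned Gaussian, whose density on the affine slice is again proportional to a Gaussian restricted to the sufficiency constraint) or a slicker argument: condition further on $\{X_j\}_{j\in\calB_m, j\ne i}$, under which $X_i$ given $(\bs{Z}_m, \bT_m, \{X_j\}_{j\ne i})$ is \emph{deterministic}, and build up the variance by a careful exchangeability/leave-one-out decomposition. Either way, the fourth-moment assumptions are exactly what make the error terms integrable, and the final bound \eqref{eq:gap_rate_gaussian} follows by collecting the $O(p/(n_2-p-2))$ estimates through \eqref{eq:ratio_decomp}.
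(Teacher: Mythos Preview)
Your decomposition of $\thetamu-\thetamuT$ as a difference of ratios and the reduction to comparing $\varc{\mu}{Z_i}$ with $\varc{\mu}{\bs{Z}_m,\bT_m}$ matches the paper's first step (Lemma~\ref{lem:gap_term_general}), and you correctly identify that the leverage $h_{ii}$ of the batch design matrix governs the ratio of the \emph{underlying Gaussian} variances, $\Varc{X_i}{\bs{Z}_m,\bT_m}/\Varc{X_i}{Z_i}=1-h_{ii}$. But the central step---passing from this to a comparable bound on the ratio $\varc{\mu(X_i,Z_i)}{\bs{Z}_m,\bT_m}/\varc{\mu(X_i,Z_i)}{Z_i}$ for a \emph{general nonlinear} $\mu$---does not follow from the leverage bound, and your proposal supplies no mechanism for it. The paper bridges this gap with the variational characterization of the $\chi^2$ divergence, $(\mathbb{E}_{F_{\bT}}[\mu]-\mathbb{E}_F[\mu])^2/\mathrm{Var}_F(\mu)\le\chi^2(F_{\bT}\,\|\,F)$, which holds for \emph{every} $\mu$ and reduces the problem to computing a $\chi^2$ divergence between two explicit Gaussians; that divergence is then bounded by $O(h_{ii})$ on a high-probability event, and $\Ec{h_{ii}}{Z_i}$ is evaluated via an inverse-Wishart calculation on the batch Gram matrix, yielding the $p/(n_2-p-2)$ rate. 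Without a device of this kind, your argument would go through only for $\mu$ linear in $X$.

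Two smaller points. First, the co-sufficient law $X_i\mid\bs{Z}_m,\bT_m$ \emph{is} Gaussian here, namely $\gauss{e_i^\top\bH\bs{X}_m}{\sigma^2(1-h_{ii})}$: since $\sigma^2$ is assumed known, $\bT_m=\bU^\top\bs{X}_m$ is linear in $\bs{X}_m$, and conditioning a Gaussian vector on a linear functional returns a Gaussian---so your anticipated ``non-Gaussian $\tilde\epsilon_i$'' obstacle does not arise. Second, the $n_2-p-2$ denominator does stem from an inverse-Wishart object, but it is the inverse Gram matrix of the within-batch $Z_i$'s appearing in the expected leverage $\Ec{h_{ii}}{Z_i}$, not a residual-sum-of-squares; the fourth-moment hypotheses on $\mu$ and $\mustar$ enter at the final Cauchy--Schwarz step pairing the leverage term with $\Varp{F}{\mu}$.
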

        The proof can be found in Appendix~\ref{pf:prop:gap_rate_gaussian}. Note the condition $n_{2}>p+2$ is not surprising as when the sample size is smaller than $p$, the sufficient statistic functional is degenerate, resulting in a zero value of $\thetamuT$. The bound in \eqref{eq:gap_rate_gaussian} allows $p$ to grow with $n$ in general, but when $p$ is fixed, it gives the rate of $O(n_2^{-1})$, as mentioned earlier in Section~\ref{sec:relax}.

	 \subsubsection{Discrete Markov chains}
	 \label{sec:DMC}
	 
	 To present our second example model, we define some new notation.
% 	 The second example is one of the discrete Graphical models, the Markov chains. Same with most discrete models, it assumes sort of local independence, formally characterized as local Markov property. The construction of knockoff by conditioning for these models is essentially taking advantage of this structure. We investigate this example under our framework, showing that the negligible accuracy loss from conditioning is benefiting from this structure, too. 
	 Consider a random variable $W$ following a discrete Markov chain with $K$ states with $X=W_j$, $Z=W_{\noj}$, then the model parameters include the initial probability vector $\pi^{(1)}\in \mathbb{R}^{K}$ with $\pi^{(1)}_{k} = \PP{W_{1}=k}$ and the transition probability matrix $\Pi^{(j)} \in \mathbb{R}^{K\times K}$ (between $W_{j-1}$ and $X = W_{j}$) with $\Pi^{(j)}_{k,k'} = \Pc{W_{j} = k'}{W_{j-1}=k} $. Further denoting $q({k,k_{1},k_{2}}) = \Pc{W_j=k}{W_{j-1}=k_1, W_{j+1} =k_2}$, we have
    \[
    q({k,k_{1},k_{2}}) = \frac{  \Pi^{(j)}_{k_1,k} \Pi^{(j+1)}_{k,k_2}  }{\sum_{k=1}^{K} \Pi^{(j)}_{k_1,k} \Pi^{(j+1)}_{k,k_2}},
    \]
    so that the conditional distribution of $\bm{X}_m\mid \bm{Z}_m$ can be compactly written down as
    \begin{equation}\label{eq:cond_dist_DMC}
        \Pc{ \bm{X}_m}{\bm{Z}_m} =  %\prod_{k=1}^{K}
        \prod_{k,k_{1},k_{2} \in [K]} 
        (q({k,k_{1},k_{2}}))^{N(k,k_{1},k_{2})},
    \end{equation}
    where $N(k,k_{1},k_{2}) = \sum_{i\in \calB_m} \indicat{X_{i}=k,W_{i,j-1} =k_{1},W_{i,j+1}=k_{1} }$.
    Thus we finally conclude that $\{N(k,k_{1},k_{2})\}_{(k,k_{1},k_{2}\in[K] )}$ is sufficient, and we proceed with this sufficient statistic. 
    %Throughout this section, we will work with this sufficient statistics. 
    % The proposition below quantifies the gap between $\thetamu$ and $\thetamuT$.
% 	 Without loss of generality, assume the transition probabilities $\{ \Pi^{(j)}_{k_1,k}, \Pi^{(j+1)}_{k,k_2} \}_{k,k_{1},k_{2}\in [K]}$ are nonzero (since we can always exclude the states with zero probability)
% 	 Following the derivations in Proposition \ref{prop:gap_rate_gaussian} and \ref{prop:cov_rate_gaussian}, it suffices to consider the $\chi^2$ divergence between $F$ and $G$.
	 \begin{proposition}\label{prop:gap_rate_DMC}
        Consider the above discrete Markov chain model and define the sufficient statistic functional $\calT$ as
        $$
        \bT_{m}=\calT(\bs{X}_m, \bs{Z}_m)=\{N(k,k_{1},k_{2})\}_{(k,k_{1},k_{2}\in[K] )}.
        $$
        Then if for variable $X=W_j$, $K^2\min\{ 
        \PP{W_{j-1}=k_{1},W_{j+1}=k_{2}}\}_{k_{1},k_{2}\in [K]}
        \} \ge q_{0}>0$ holds and
        % where $N(k,k_{1},k_{2}) = \sum_{i\in\calB_{m}} \indicat{X_{i}=k,X_{i,j-1} =k_{1},X_{i,j+1}=k_{1} }$, 
        % the moment condition
        $\EE{(\mustar)^2(X,Z)}$, $\EE{\mu^2(X,Z)}<\infty$, we have
        \[
       	 \thetamu - \thetamuT  = O\left(\frac{K^3}{n_{2}}\right).
        \]
	 \end{proposition}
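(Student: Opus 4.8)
The plan is to bound the numerator and squared‑denominator gaps of the two ratios $\thetamu=a/b$ and $\thetamuT=\tilde a/\tilde b$ separately and then combine, where $a=\EE{\covc{\mustar(X,Z)}{\mu(X,Z)}{Z}}$, $b^2=\EE{\varc{\mu(X,Z)}{Z}}$, and $\tilde a,\tilde b^2$ are their analogues with $Z$ replaced by the within‑batch pair $(\bs{Z},\bs{T})$ (by independence across batches, only the batch containing sample $i$ matters). If $b=0$ then $\mu(X,Z)$ is a.s.\ a function of $Z$, which forces $\tilde b=0$, so both ratios vanish by the $0/0=0$ convention and there is nothing to prove; assume $b>0$. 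Writing $g_i=\Ec{\mu(X_i,Z_i)}{Z_i}$ and $\hat g_i=\Ec{\mu(X_i,Z_i)}{\bs{Z},\bs{T}}$, the first step is to establish the identities
\begin{equation}\nonumber
a-\tilde a=\EE{Y_i(\hat g_i-g_i)},\qquad b^2-\tilde b^2=\EE{(\hat g_i-g_i)^2},
\end{equation}
the first by writing both numerators in the ``$\EE{Y(\mu-\textrm{conditional mean})}$'' form of \eqref{eq:fnum} together with the fact that $Y_i$ is conditionally independent of $(\bs{Z},\bs{T})$ given $(X_i,Z_i)$ (so $\Ec{\mustar(X_i,Z_i)}{\bs{Z},\bs{T}}=\Ec{Y_i}{\bs{Z},\bs{T}}$), and the second by the law of total variance together with $\Ec{\hat g_i}{Z_i}=g_i$.

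The core of the argument is to evaluate these two expectations using the explicit within‑batch conditional law. Since the sufficient statistic is the cell‑count table $\{N(k,k_1,k_2)\}$, conditionally on $\bs{Z}_m$ the covariates in the cell $(k_1,k_2):=(W_{i,j-1},W_{i,j+1})$ are i.i.d.\ draws from $q(\cdot,k_1,k_2)$, and conditioning additionally on $\bs{T}_m$ makes the $X$'s in each cell a uniformly random arrangement of that cell's multiset; in particular the law of $X_i$ given $(\bs{Z}_m,\bs{T}_m)$ is the cell empirical distribution $\hat p:=N(\cdot,k_1,k_2)/n_{k_1,k_2}$, which is $\bs{T}$‑measurable and unbiased for $q(\cdot,k_1,k_2)$. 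A short multinomial covariance computation then gives, with $n:=n_{k_1,k_2}$,
\begin{equation}\nonumber
\Ec{(\hat g_i-g_i)^2}{\bs{Z}_m}=\tfrac1n\,\varc{\mu(X,Z)}{Z=Z_i},\qquad \Ec{Y_i(\hat g_i-g_i)}{\bs{Z}_m}=\tfrac1n\,\covc{\mu(X,Z)}{\mustar(X,Z)}{Z=Z_i},
\end{equation}
where for the second I would first replace $Y_i$ by $\mustar(X_i,Z_i)$ (valid given $\bs{Z}_m$) and then note that the resulting cross term $\sum_k\mu(k,Z_i)\,q(k,k_1,k_2)\big(\mustar(k,Z_i)-\Ec{\mustar(X,Z)}{Z=Z_i}\big)$ telescopes to the stated conditional covariance. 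Taking expectations over $\bs{Z}_m$, the last ingredient is control of $\EE{1/n_{k_1,k_2}}$: conditioned on sample $i$ lying in cell $(k_1,k_2)$, $n_{k_1,k_2}=1+\textrm{Bin}(n_2-1,p_{k_1,k_2})$ with $p_{k_1,k_2}:=\PP{W_{j-1}=k_1,W_{j+1}=k_2}$ and independent of the other coordinates of $Z_i$, and $\EE{(1+\textrm{Bin}(m,p))^{-1}}\le\frac1{(m+1)p}$, so the cell probability cancels. Summing over the $K^2$ cells and using the hypothesis $p_{k_1,k_2}\ge q_0/K^2$ to pass from a sum of cell‑conditional expectations to $\tfrac{K^2}{q_0}\EE{\,\cdot\,}$ yields $b^2-\tilde b^2\le\frac{K^2 b^2}{q_0 n_2}$ and $|a-\tilde a|\le\frac{K^2}{q_0 n_2}\EE{|\covc{\mu}{\mustar}{Z}|}$, with $\EE{|\covc{\mu}{\mustar}{Z}|}\le\sqrt{\EE{\varc{\mu}{Z}}\,\EE{\varc{\mustar}{Z}}}<\infty$ by Cauchy--Schwarz and the moment hypotheses.

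Finally I would combine the two gaps. Cauchy--Schwarz on the numerators (with $0/0=0$) gives $|\thetamu|\le\Ij$ and $|\thetamuT|\le\Ij$, so when $n_2\ge 2K^2/q_0$ we have $\tilde b^2\ge b^2/2$, and the elementary bound $\big|\tfrac ab-\tfrac{\tilde a}{\tilde b}\big|\le\tfrac{|a-\tilde a|}{\tilde b}+\tfrac{|a|\,|b-\tilde b|}{b\tilde b}$ with $|b-\tilde b|=\tfrac{b^2-\tilde b^2}{b+\tilde b}$ and $|a|\le b\Ij$ yields $|\thetamu-\thetamuT|=O(K^2/n_2)$ for a constant depending only on $q_0$, $\Ij$, and $b^2=\EE{\varc{\mu}{Z}}$; in the complementary regime $n_2<2K^2/q_0$ the stated $O(K^3/n_2)$ bound exceeds a fixed multiple of $\Ij\ge|\thetamu-\thetamuT|/2$ and so holds trivially. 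This proves the proposition (in fact with $K^2$ in place of $K^3$). The step I expect to be the main obstacle is the second one: because $X_i$ itself enters the cell count, $\hat g_i$ and $Y_i$ are correlated, so $\hat g_i-g_i$ cannot be treated as mean‑zero noise independent of $Y_i$, and the bookkeeping that collapses the cross term into the clean $\tfrac1n\covc{\mu}{\mustar}{Z=Z_i}$ (keeping the $K$‑dependence as small as possible), together with the uniform control of $\EE{1/n_{k_1,k_2}}$ over cells of unequal but lower‑bounded probability, is where the care is needed.
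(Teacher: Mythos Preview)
Your proposal is correct and in fact sharper than the paper's argument. Both proofs share the same skeleton: express $a-\tilde a$ and $b^2-\tilde b^2$ in terms of the single ``excess conditioning'' variable $\hat g_i-g_i$ (this is the content of the paper's Lemma~\ref{lem:gap_term_general}, which your first step reproduces), then bound $\EE{(\hat g_i-g_i)^2}$ and combine via an elementary ratio inequality together with $\EE{(1+\mathrm{Bin}(n_2-1,p))^{-1}}\le 1/(n_2 p)$. The divergence is only in how $\EE{(\hat g_i-g_i)^2}$ is controlled. The paper bounds it through the variational representation of the $\chi^2$-divergence, $\Ec{(\hat g_i-g_i)^2}{\bs Z}\le\varc{\mu}{Z_i}\cdot\Ec{\chi^2(F_{\bT}\|F)}{\bs Z}$, and then computes $\Ec{\chi^2(F_{\bT}\|F)}{\bs Z}=(K-1)/n_{k_1,k_2}$ for the multinomial, picking up an extra factor of $K$. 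You instead compute the multinomial sample-mean variance directly as $\frac{1}{n_{k_1,k_2}}\varc{\mu}{Z_i}$, which is exact and avoids that factor---hence your $O(K^2/n_2)$ versus the paper's $O(K^3/n_2)$. Your numerator treatment is likewise exact (the identity $\Ec{\mustar(X_i,Z_i)(\hat g_i-g_i)}{\bs Z_m}=\tfrac{1}{n_{k_1,k_2}}\covc{\mu}{\mustar}{Z_i}$, though ``telescopes'' is not quite the right word), whereas the paper uses Cauchy--Schwarz to get $|a-\tilde a|\le\sqrt{\mathrm{II}(\mu)\,\mathrm{II}(\mustar)}$. Your route is more elementary (no $\chi^2$ machinery) and tighter in $K$; the paper's route has the advantage of reusing the same $\chi^2$ template as the Gaussian case (Proposition~\ref{prop:gap_rate_gaussian}), so the two propositions share one proof structure. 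The paper itself remarks that the $K^3$ is likely improvable, attributing the slack to the non-minimal sufficient statistic; your argument shows the improvement to $K^2$ is available with the same statistic, simply by bypassing the $\chi^2$ step.
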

	  The proof can be found in Appendix~\ref{pf:prop:gap_rate_DMC}. Note that $\calT$ here is not minimal sufficient and the above rate is cubic in $K$. The non-minimal sufficient statistic is adopted for the discrete Markov chain model in this paper since it is easier to work with and gives the desired rate in $n_2$, but we expect the rate in $K$ could be improved by using the minimal sufficient statistic. Again, $K$ is allowed to grow with $n$ in general, but when it is fixed we get a rate of $O(n_2^{-1})$, as mentioned earlier in Section~\ref{sec:relax}.     

    \section{Simulations}
    \label{sec:simul}
    Source code for conducting 
    %floodgate in 
    our simulation studies can be found at \url{https://github.com/LuZhangH/floodgate}.
    \subsection{Setup}\label{sec:simul_setup}
   {In the following subsections,
    %of this section, 
    we conduct simulation studies to complement the main theoretical claims of the paper. We study the effects of the sample-splitting proportion (Section~\ref{sec:simul_split}), covariate dimension (Section~\ref{sec:simul_vary_p}), and model misspecification (Section~\ref{sec:simul_robust}) on floodgate. Additional simulation studies on the effect of covariate dependence and sample size can be found in Appendix \ref{app:sim:cover}.}
    % covariate dependence (Section~\ref{sec:simul_corr}), sample size (Section~\ref{sec:simul_vary_n}),
    In Section~\ref{sec:comp_vimp}, we numerically compare floodgate with the method proposed in \citet{williamson2020unified}. We also study the extensions to floodgate for the MACM gap (Section~\ref{sec:simul_binary}) and co-sufficient floodgate (Section~\ref{sec:simul_relax}).
    Each simulation study generates a set of covariates and performs floodgate inference on each in turn (i.e., treating each covariate as $X$ and the rest as $Z$) before averaging its results (either coverage or half-width) over the covariates. 
    
    This paragraph describes the simulation setup for all but the simulation of Section~\ref{sec:comp_vimp}. The covariates are sampled from a Gaussian autoregressive model of order 1 (AR(1)) with autocorrelation 0.3, except in Section~\ref{sec:simul_corr} where this value is varied over. The conditional distribution of $Y\mid X,Z$ is given by $\mustar(X,Z)$ plus standard Gaussian noise, and in each subsection we perform experiments with both a linear and a highly nonlinear model. The linear model is sparse with non-zero coefficients' locations independently uniformly drawn from among the covariates, and the non-zero coefficients' values having uniform random signs and identical magnitudes (5, unless stated otherwise) divided by $\sqrt{n}$. The nonlinear model combines zero'th-, first-, and second-order interactions between nonlinear (mostly trigonometric and polynomial) transformations of elementwise functions of a subset of covariates, and then multiplies this entire function by an amplitude (50, unless stated otherwise) divided by $\sqrt{n}$; see Appendix~\ref{app:sim:nonlinear} for details. Both models use $n=1100$, $p=1000$, and a sparsity of 30 unless stated otherwise.
    
    In our implementations of floodgate, we split the sample into two equal parts (justified by the results of Section~\ref{sec:simul_split}) and use the first half to fit $\mu$. In most of the simulations, we consider four fitting algorithms (two linear, two nonlinear): the LASSO \citep{tibshirani1996regression}, Ridge regression, Sparse Additive Models (SAM; \citep{ravikumar2009sparse}), and Random Forests \citep{breiman2001random}; when the response is binary there are two additional fitting algorithms: logistic regression with an L1 penalty and an L2 penalty; see Appendix~\ref{app:sim:algorithms} for implementation details of these algorithms. The Monte Carlo version of floodgate from Section~\ref{sec:computation} is not needed for the linear methods, and for the nonlinear methods, $K=500$ is used.
    
    % \ljc{first half of this paragraph should be reworded}
    {Given the novelty of considering inference for the mMSE gap, it is challenging to compare floodgate to alternatives except in special cases. For instance, in {low-dimensional} Gaussian linear models the mMSE gap is a simple function of the coefficient and thus ordinary least squares (OLS) inference can be compared to floodgate; see Appendix~\ref{app:sim:ols} for details of how it is made comparable. Thus, in the {low-dimensional} linear-$\mustar$ simulations of Sections~\ref{sec:simul_vary_p} and \ref{sec:simul_corr}, we compare floodgate's inference to that of OLS, which acts as a sort of oracle since its inference relies on very strong knowledge of $Y\mid X,Z$ which floodgate does not rely on, and OLS is not valid without that knowledge (and does not apply in high dimensions). Another example is when we can assume the group mMSE gap of all of $(X,Z)$ is bounded away from zero, in which case the method of \citet{williamson2020unified} applies, so in Section~\ref{sec:comp_vimp} we compare their method with floodgate in such a setting.} 
    
    \acc{\begin{premark}[Floodgate's connection to conditional independence testing]
    %Lastly, we comment on floodgate's connection to conditional independence testing. 
    Recall that $\jnull$ implies $\Ij=0$, and hence rejecting $\jnull$ when $L_{n}^{\alpha} (\mu)>0$ constitutes an asymptotically valid level-$\alpha$ conditional independence test (which could then be combined with a multiple testing procedure to perform variable selection). However, floodgate was explicitly designed to solve the harder problem of quantifying strength of dependence, as opposed to the conditional independence problem of whether any dependence exists at all. Due to the methodological constraints imposed by the more challenging nature of our problem, especially the need for data splitting, we do not expect this test derived from floodgate to be competitive with (and hence do not compare with) the many excellent conditional independence tests available in the literature (see, e.g., \citet{candes2016panning, DH-LJ:2019, berrett2019permutation, ML-LJ:2020, barber2020testing, tansey2022holdout,fukumizu2008kernel,zhang2011kernel,wang2015conditional,shah2018hardness, park2020measure,huang2020kernel}).
    \end{premark}
    }

    We always take the significance level $\alpha=0.05$, and all results are averaged over $64$ independent replicates unless stated otherwise (although in most cases each plotted point is averaged over multiple covariates per replicate as well, since we apply floodgate to each covariate in turn in each replicate).

    \subsection{Effect of sample splitting proportion}
    \label{sec:simul_split}
    % As mentioned in Section~\ref{sec:computation}, practically we can do sample splitting to obtain a regression function estimator $\muhat$ from the first part of the data then plug $\muhat$ into Algorithm \ref{alg:MOCK} to produce lower confidence bounds using the second part of the data. The splitting proportion (i.e. $n_1/n$, where $n_{1}$ is the number of samples used for fitting $\muhat$ and $n_2=n-n_1$ samples are directly used in Algorithm \ref{alg:MOCK}) will play a impact on the performance of our method. In Figure $\ref{fig:split}$, splitting proportions ranging from $5\%$ to $95\%$ with $5\%$ as the step size are considered. As quantified by the statistical accuracy result in Section~\ref{sec:accuracy}, the precision is determined by two parts: $(1)$. the gap between $\thetamuh$ and $\cal{I}_j$, which relates to the mean squared prediction error of $\muhat$; $(2)$. the usual length of confidence intervals based on central limit theorem, which is of order $O_p(1/n_2)$ in our sample splitting context. Intuitively, with a fixed budge of total available samples, the more data are used for model fitting, we are more likely to do good in terms of the first part. At the same time, $n_2$ will be smaller, thus resulting a loss of precision in the second part.
    %%Lucas' version of previous paragraph:
    As mentioned in Section~\ref{sec:mock}, we can split a fixed sample size $n$ into a first part of size $n_e$ for estimating $\mustar$ and use the remaining $n-n_e$ samples for floodgate inference via Algorithm~\ref{alg:MOCK}. The choice of $n_e$ represents a tradeoff between higher accuracy in estimating $\mustar$ (larger $n_e$) and having more samples available for inference (smaller $n_e$).  
    
    {In Figure \ref{fig:split}, we vary the sample splitting proportion and plot the average half-widths of floodgate LCBs of non-null covariates under distributions with the linear and the nonlinear $\mustar$ described in Section~\ref{sec:simul_setup}. Corresponding coverage plots and additional plots with different simulation parameters can be found in Appendix~\ref{app:sim:cover}. 
    Our main takeaway from these plots is that, while the optimal choice of splitting proportion varies between distributions and algorithms, the choice of 0.5 seems to frequently achieve a half-width close to the optimum. 
    %As one would expect, however, as the signal or sample size grows, there are diminishing returns to $n_e$, and the optimal sample split for some algorithms moves to the left. 
    Acknowledging that in some circumstances a more informed choice than 0.5 can be made, we nevertheless choose 0.5 as the default splitting proportion throughout the rest of our simulations.}
    
    \begin{figure}[tb]
        \centering
        \includegraphics[width = 1\linewidth]{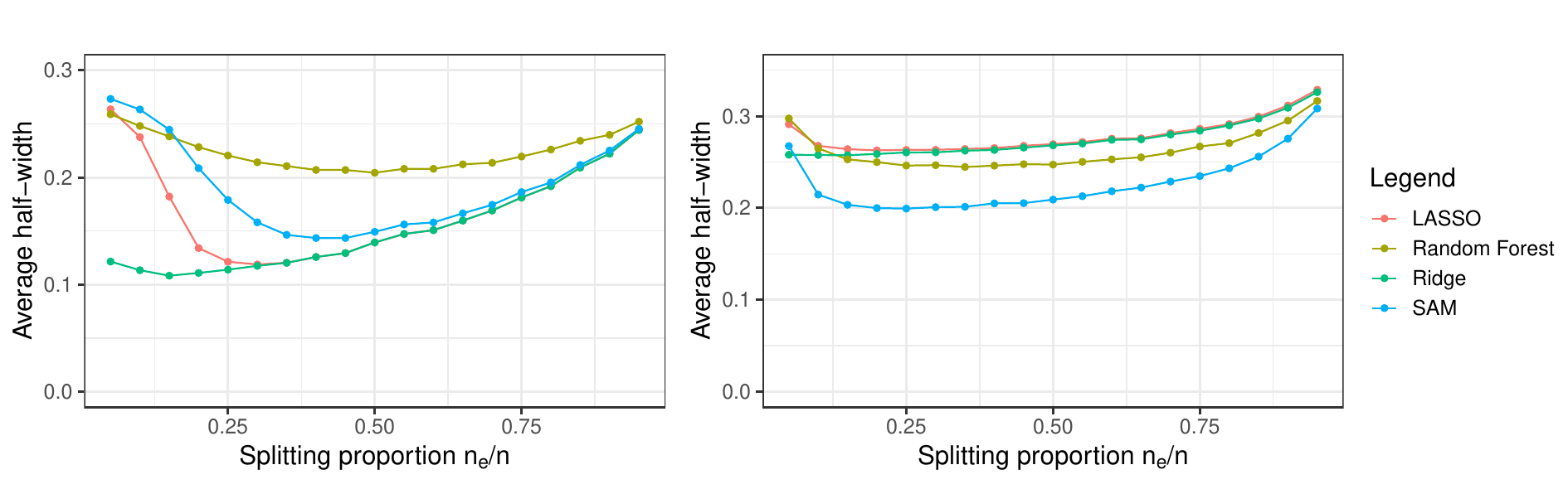}
        \caption{{Average half-widths for the linear-$\mustar$ (left) and nonlinear-$\mustar$ (right) simulations of Section~\ref{sec:simul_split}. The coefficient amplitude is chosen to be 10 for the left panel and the sample size $n$ equals 3000 in the right panel; see Section~\ref{sec:simul_setup} for remaining details. Standard errors are below 0.005 (left) and 0.006 (right).}}
%        Half width plots: design matrix with $n$ i.i.d. rows from an AR(1) model with autocorrelation $0.3$; $Y|X \sim \calN(X\beta,1)$, where $\beta$ has non-zero entries with random signs and equal absolute values (which will be the amplitude value divided by $\sqrt{n}$); there are $30$ non-null variables; $p = 1000$, $n = 1100$; number of null copies $K = 500$ for nonlinear fitting algorithms.

        \label{fig:split}
    \end{figure}

    In addition to displaying the dynamics of sample splitting proportion, these plots also demonstrate two other phenomena. First, the linear algorithms (LASSO and Ridge) dominate when $\mustar$ is linear, and the nonlinear algorithms (SAM and Random Forest) dominate when $\mustar$ is nonlinear. Second, Ridge has smaller half-width than LASSO for all sample splitting proportions, which can be explained by floodgate's invariance to (partially-)linear $\mu$: all that matters is getting the sign of the coefficient right, and setting a coefficient to zero guarantees a zero LCB. So the LASSO suffers from being a sparse estimator, although in practice we may still prefer it because of the corresponding computational savings of only having to run floodgate on a subset of covariates.

     \subsection{Effect of covariate dimension}
      \label{sec:simul_vary_p}
    {To understand the dependence of dimension on floodgate, we perform simulations varying the dimension. In particular, in the first panel of Figure \ref{fig:vary_p_comb}, we vary the covariate dimension and plot the average half-widths of floodgate LCBs of non-null covariates when $\mustar$ is linear. This setting enables comparison with OLS because it is linear and low-dimensional, so we also include a curve for OLS. 
    % The second panel of Figure~\ref{fig:vary_p_comb} is similar except with a smaller $n_e$ that is favorable for the linear algorithms in floodgate. 
    
    The main takeaway is that floodgate's accuracy is relatively unaffected by dimension, and although for very low dimensions (where OLS is known to be essentially optimal) it is less accurate than OLS, for a good choice of $n_e$ floodgate's half-widths are at most about 50\% larger than OLS's and actually narrower than OLS's when $p\approx n/2$. A similar message is found with nonlinear $\mustar$ in the second panel of Figure~\ref{fig:vary_p_comb}, except OLS no longer applies and in this case the nonlinear algorithms outperform the linear ones in floodgate. Coverage plots corresponding to Figure \ref{fig:vary_p_comb} and additional plots with different simulation parameters can be found in Appendix~\ref{app:sim:cover}.}
    % The coverage never falls below nominal, and these plots can be found in Appendix~\ref{app:sim:cover}. 
    \begin{figure}[tb]
        \centering
        \includegraphics[width = 1\linewidth]{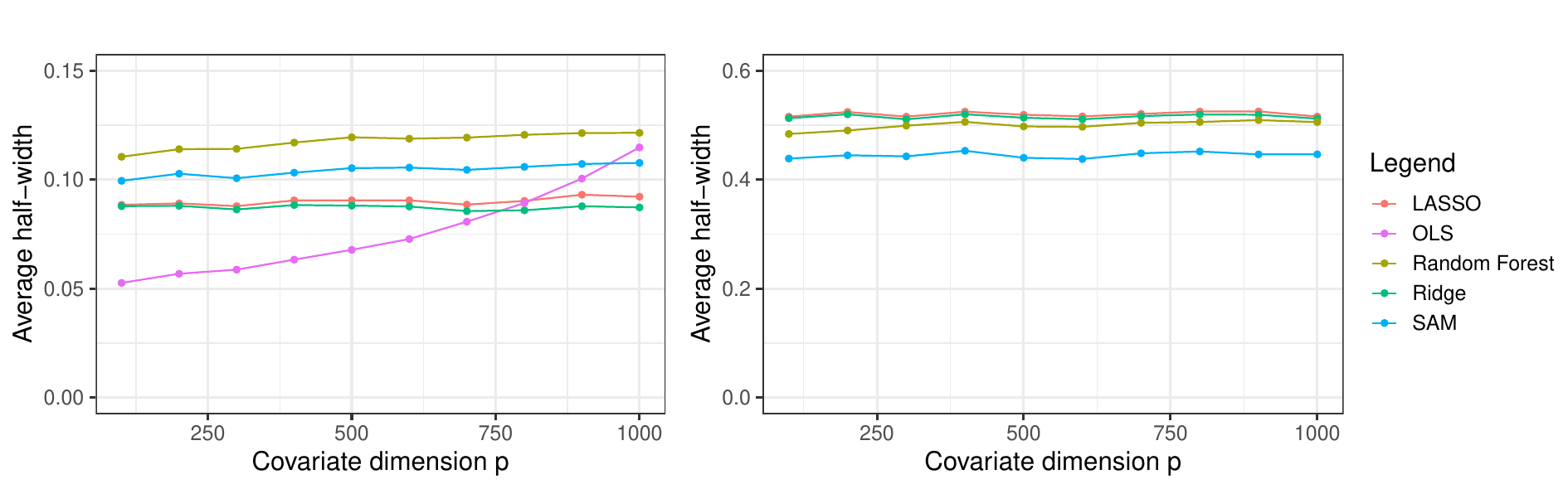}
        \caption{{Average half-widths for the linear-$\mustar$ (left) and nonlinear-$\mustar$ (right) simulations of Section~\ref{sec:simul_vary_p}. OLS is run on the full sample. $p$ is varied on the x-axis; see Section~\ref{sec:simul_setup} for remaining details. Standard errors are below 0.002 (left) and 0.008 (right).}
        % Half width plots: design matrix with $n$ i.i.d. rows from an AR(1) model with autocorrelation $0.3$; $Y|X \sim \calN(X\beta,1)$, where $\beta$ has non-zero entries with random signs and equal absolute values (which will be the amplitude value divided by $\sqrt{n}$); amplitude equals $5$; there are $30$ non-null variables; $n = 1100$; number of null copies $K = 500$ for nonlinear fitting algorithms.
        }
        \label{fig:vary_p_comb}
    \end{figure}       
    
\subsection{Comparison with \citet{williamson2020unified}}
    \label{sec:comp_vimp}
    {Although \citet{williamson2020unified}'s method {(which we refer to as W20b)} is only valid when the group mMSE gap of all the covariates is bounded away from zero, we can compare it with floodgate in that setting. 
    %In the following, method is called W20b. 
    We use W20b according to that paper's instructions for ensuring validity for any value of $\Ij$ {(as long as the group mMSE gap for all the variables put together is bounded away from zero), which seems most comparable to floodgate. That is,} we implement the sample-split and cross-fitted version using the default function \texttt{vimp\_rsquared} in the W20b authors' R package \texttt{vimp} (version 2.1.0). Since W20b gives confidence intervals for $\Ij^2/\Var{Y}$, we transform its inference into a $1-\alpha$ coverage LCB for $\Ij$ by taking the lower bound from its $1-2\alpha$ confidence interval, multiplying it by $\Var{Y}$, and then taking the square root.}
    % Since W20b gives confidence intervals for $\Ij^2/\Var{Y}$, we transform its inference into a $1-\alpha$ coverage LCB for $\Ij$ by taking the lower bound from its $1-2\alpha$ confidence interval, multiplying it by $\Var{Y}$, and then taking the square root. %In Figures \ref{fig:comp_freq} and \ref{fig:comp_p}, we show the coverage and average LCB values of both methods. By plotting the average LCB values it is easy to compare the informativeness of the two LCBs relative to the actual value of $\Ij$. 
    % Note that the theory underpinning W20b implies it should have coverage \emph{exactly} $1-\alpha$ asymptotically, as opposed to floodgate's guarantee of just having asymptotic coverage \emph{at least} $1-\alpha$. Nevertheless, in the settings we consider in this subsection, we find W20b's coverage is sometimes below $1-\alpha$ and often far above it (while floodgate's remains at or above $1-\alpha$ throughout, as expected). And in all settings, even when W20b's coverage is below that of floodgate, floodgate's LCB is considerably farther from zero than W20b's, thus floodgate provides more-informative inference than W20b in these simulations.
Our simulation example uses a sine function of varying frequency for $\mustar$. In particular, $p=2$, the covariates $(X,Z)\in\mathbb{R}^2$ are i.i.d. uniformly distributed on $(-1,1)$, and $Y$ equals $A(\lambda)\sin(\lambda X)$ plus standard Gaussian noise, where
$\lambda>0$ controls the frequency and $A(\lambda)$ is chosen so that $\Ij=0.5$ regardless of $\lambda$ {(thus ensuring the group mMSE gap of $(X,Z)$ is always bounded away from zero, as required by W20b)}. 
%where $A>0$ is the amplitude and $\lambda$ is the frequency of the sine function. Hence $W_1$ is non-null and $W_2$ is null. 
Both floodgate and W20b must internally fit an estimate of $\mustar$, and for both methods we use locally-constant loess smoothing with tuning parameters selected by 5-fold cross-validation, following a different two-dimensional simulation example from \citet{williamson2017nonparametric}.

\begin{figure}
\centering
        \includegraphics[width = 0.54\linewidth]{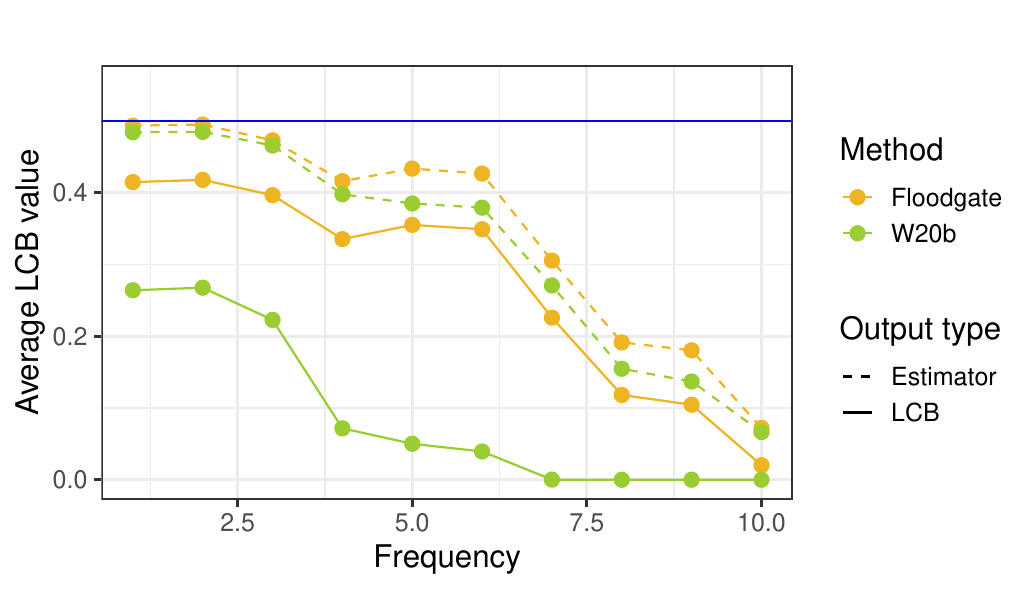}
        \vspace{0.25cm}
        % \vspace{-0.45cm}
        \caption{
        Average LCB values (solid lines) for floodgate and W20b in the sine function simulation of Section~\ref{sec:comp_vimp}. The frequency $\lambda$ is varied on the x-axis, and the solid blue line in the plot shows the true value of $\Ij$. The dashed lines correspond to the average estimator values of $\Ij$.
        %, while the frequency value $\lambda$ is varied on the x-axis. 
        The results are averaged over 640 independent replicates, and the standard errors are below 0.01.
             %(left) and 0.01 (right).
        }
        \label{fig:comp_freq}
\end{figure}

{The solid curves in Figure \ref{fig:comp_freq} show the average LCBs of the two methods applied to the non-null variable $X$} as $\lambda$ varies. %we vary the frequency value $\lambda$ from 1 to 10 and fix the signal-to-noise ratio to be $0.2$ (thus $\Ij=0.5$) by setting the amplitude value $A$ to be %$0.5/\sqrt{0.5 - \sin(2\lambda)/(4\lambda)}$. 
%The corresponding coverage plots for both the null and the non-null variables are deferred to Appendix \ref{app:sim:cover}. 
Larger $\lambda$ corresponds to less-smooth $\Ec{Y}{X,Z}$ and hence a more challenging estimation problem (for both methods), and both methods become generally more conservative and less accurate as $\lambda$ grows {(both methods achieve at or above nominal coverage throughout this simulation; see Appendix~\ref{app:sim:cover} for the coverage plot)}. Yet floodgate's LCB provides consistently and considerably more accurate inference over the entire range of $\lambda$. To better understand this performance difference, we additionally plot as dashed curves the average of the asymptotically normal estimators of $\Ij$ each method uses for inference. We see from the plot that the two estimators have similar bias, but the gap between the LCB and the estimator is much smaller for floodgate, reflecting a smaller variance. This is likely due to the form of W20b's estimator, which is the difference of two asymptotically normal test statistics, one computed on each half of the split data. Heuristically, one would expect this to lead to higher variance than an estimator computed on (and hence whose variance comes only from) one half of the data, like floodgate's.
% We find this sample splitting
%In Figure \ref{fig:comp_freq}, by checking the dashed lines corresponding to the average estimators, we find both methods have similar performances in producing the estimators of the
This general picture is reinforced by a {higher-dimensional simulation} given in Appendix \ref{app:sim:cover}.

    \subsection{Robustness}
    \label{sec:simul_robust}  
    In order to study the robustness of floodgate to misspecification of $P_{X|Z}$, we consider a scenario we expect to arise in practice: a data analyst does not know $P_{X|Z}$ exactly, so instead they estimate it using the data they have, and then treat the estimate as the ``known" $P_{X|Z}$ and proceed with floodgate. Note that if the analyst splits the data and uses the same subset for estimating $\mu$ and for estimating $P_{X|Z}$, then Theorem~\ref{thm:robust_no_adj} applies, but if they use \emph{all} of their data to estimate $P_{X|Z}$, then our theory does not apply. Also note we are not studying the performance of co-sufficient floodgate in this subsection.

    Note that if the analyst splits the data and uses the same subset for estimating $\mu$ and for estimating $P_{X|Z}$, then Theorem~\ref{thm:robust_no_adj} applies, but if they use \emph{all} of their data to estimate $P_{X|Z}$, then our theory does not apply. Also note we are not studying the performance of co-sufficient floodgate in this subsection.
    
    {Figure \ref{fig:robust_linear} varies how much in-sample data is used in $P_{X|Z}$-estimation and shows the coverage of floodgate for null and non-null variables in a linear setting. {The estimation procedure is to fit the graphical LASSO (GLASSO) with $3$-fold cross-validation to a subset of the in-sample data and treat $P_{X|Z}$ as conditionally Gaussian with covariance matrix given by the GLASSO estimate.} Since $n=1100$ in all these simulations and the sample splitting proportion is 0.5, when the x-axis value passes 550 is when {the $P_{X|Z}$-estimation and inference sets start to overlap, and at the value 1100, all of the data is being used to estimate $P_{X|Z}$, including the half used for inference (violating Theorem~\ref{thm:robust_no_adj}'s assumptions)}. Nevertheless, we see the coverage is consistently quite high, \acc{only dropping slightly from that with true $P_{X\mid Z}$ for very low estimation sample sizes (i.e., very bad estimates of the covariance matrix). Note that some $\mu$-fitting algorithms in Figure \ref{fig:robust_linear} have higher-than-nominal coverage; this is largely because the floodgate procedure will deterministically output a zero LCB (and hence have 100\% coverage) when $\mu(x, z)$ does not depend on $x$. This happens for many covariates when $\mu$ is fitted via a sparse regression such as the LASSO and SAM (short for Sparse Additive Models), but also for our version of Random Forests which we effectively sparsify for computational reasons (see Appendix~\ref{app:sim:algorithms} for details). Figures \ref{fig:binary} and \ref{fig:sine_relax2} show similar overcoverage for the same reason.}
%    As a result, we will observe over-coverages in the corresponding simulation figures, as indicated in Section \ref{sec:mock}.
%    }    

    Average half-width plots corresponding to Figure~\ref{fig:robust_linear} can be found in Appendix~\ref{app:sim:cover}. In additional to the linear setting in Figure \ref{fig:robust_linear}, we also observe robust empirical coverage of floodgate when the conditional model of $Y$ is nonlinear; see Appendix \ref{app:sim:cover} for details.} 
    
    %  As we have quantitatively conducted robustness analysis in Section~\ref{sec:robustness}, we also empirically demonstrate the robustness of our method. In practice, the distribution of $X$ has to be estimated from available data. In this section, we consider two different data sources: either we can estimate from unlabelled data or reuse the labelled data for modeling $X$. As the generating model of $X$ is chosen to be multivariate Gaussian, we can either plug in the empirical covariance matrix as use it as the distribution of $X$ when $n>p$, or use the graphical lasso to estimate the covariance structure. 
     
    \begin{figure}[tb]
        \centering
        \includegraphics[width =1\linewidth]{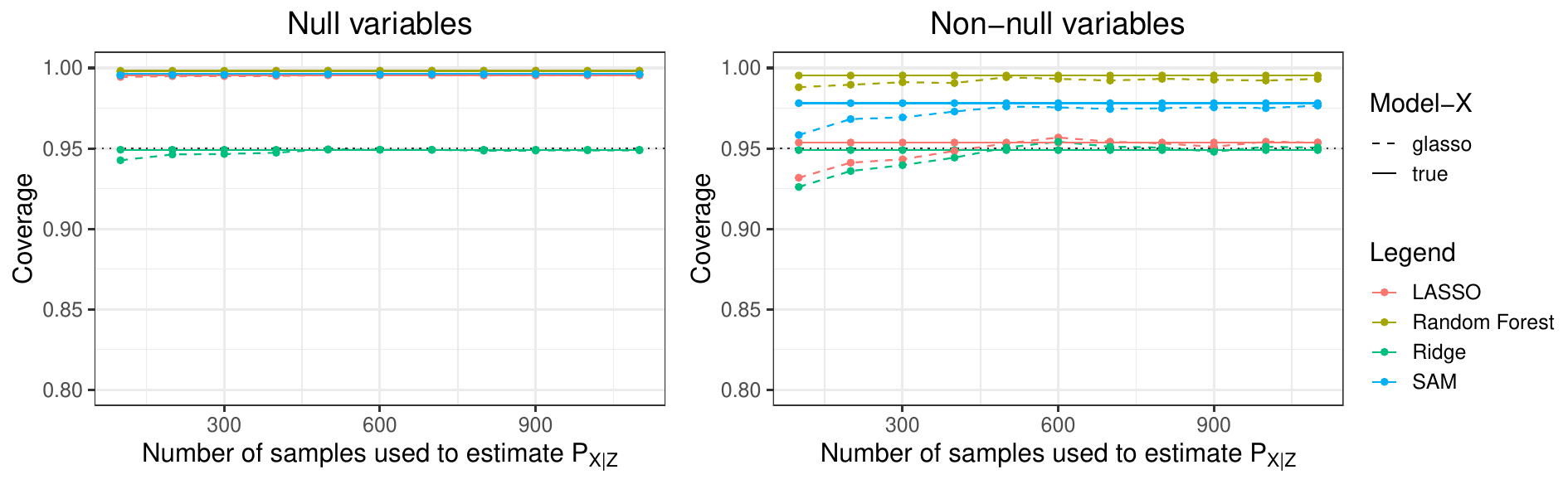}
        \caption{Coverage of null (left) and non-null (right) covariates when the covariate distribution is estimated in-sample for the linear-$\mustar$ simulations of Section~\ref{sec:simul_robust}. See Section~\ref{sec:simul_setup} for remaining details. Standard errors are below 0.001 (left) and 0.008 (right).
        % Coverage plots for null and non-null variables, with the sample size for model-X part varying; design matrix with $n$ i.i.d. rows from an AR(1) model with the auto-correlation coefficient being $0.3$; $Y|X \sim \calN(X\beta,1)$, where $\beta$ has non-zero entries with random signs and equal absolute values (which will be the amplitude value divided by $\sqrt{n}$); there are $30$ non-null variables; amplitude equals $5$ ; $n = 1100$; number of null copies $K = 500$ for nonlinear fitting algorithms. 
        }
        \label{fig:robust_linear}
    \end{figure}

    \subsection{Floodgate for the MACM gap}
    \label{sec:simul_binary}
%    \subsubsection{Simulations for MACM gap}
%\ljmargin{Figure~\ref{fig:binary} shows that the coverage of floodgate applied to the MACM gap as described in Algorithm~\ref{alg:class_MOCK} in Section~\ref{sec:class} has consistent coverage over a range of algorithms for fitting $\mu$, and we see the dynamics of the average half-width as the explained variance proportion in $P_{Y|X,Z}$ increases.}{Lu please fill in details of the simulation and caption of the figure. I think this should go in the main paper at the end of Section 4.}

Here we study the empirical performance of floodgate applied to the MACM gap as described in Section~\ref{sec:class}. Conditional on the covariates, the binary response is generated from a logistic regression with $\frac{\log(\Pc{Y=1}{X,Z})}{\log(\Pc{Y=-1}{X,Z})}$
%natural parameter 
given by
%$2~\text{Bernoulli}({e^{\eta}}/{(1+e^{\eta})}) -1$ with $\eta$ being similarly defined as 
the linear $\mustar(X,Z)$ in Section~\ref{sec:simul_setup}. We set the sample size $n=1000$, and the remaining simulation parameters to be the values described in Section~\ref{sec:simul_setup}. Figure~\ref{fig:binary} shows that floodgate has consistent coverage over a range of algorithms for fitting $\mu$, and we see the dynamics of the average half-width as the explained variance proportion in $P_{Y|X,Z}$ increases. Note that $R_i$ in Algorithm \ref{alg:class_MOCK} needs to in general be estimated by Monte Carlo samples (see Appendix \ref{app:class_computation} for details) and in Figure~\ref{fig:binary}, we set $K = 100$ and $M = 400$ whenever the Monte Carlo version is used. 

      \begin{figure}[tb]
        \centering
        \includegraphics[width =1\linewidth]{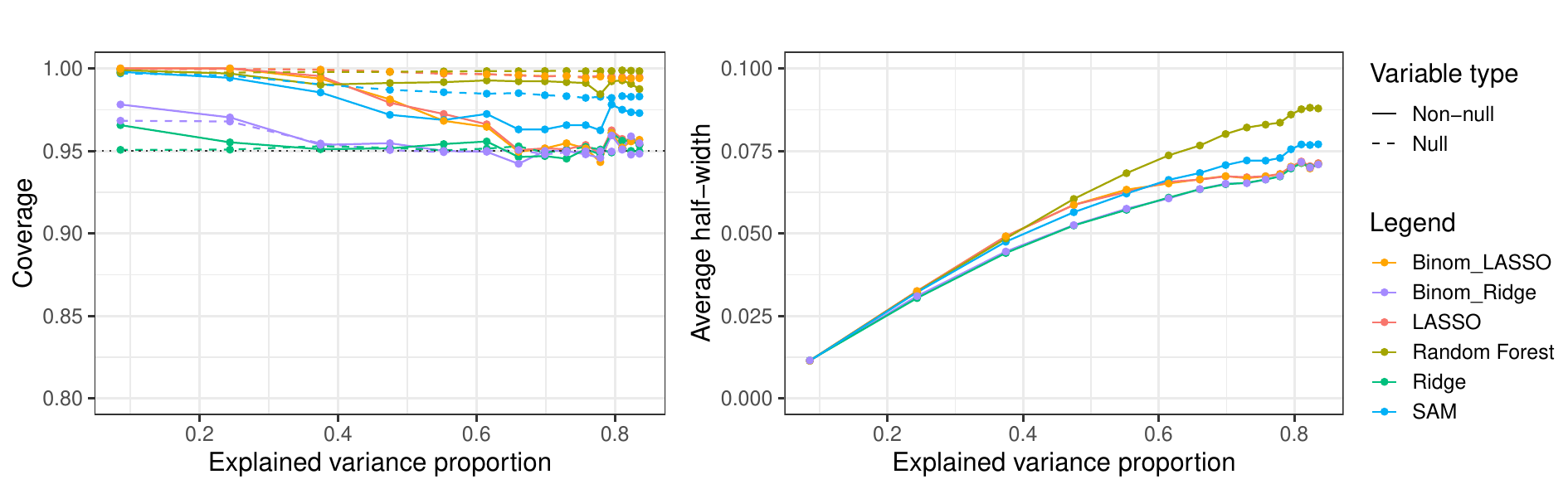}
        \caption{Coverage (left) and average half-widths (right) for the binary response simulations of Section~\ref{sec:simul_binary}. The explained variance proportion is varied over the x-axis. See Section~\ref{sec:simul_setup} and \ref{sec:simul_binary} for remaining details. Standard errors are below 0.006 (left) and 0.001 (right).
        % Coverage plots for null and non-null variables, with the sample size for model-X part varying; design matrix with $n$ i.i.d. rows from an AR(1) model with the auto-correlation coefficient being $0.3$; $Y|X \sim \calN(X\beta,1)$, where $\beta$ has non-zero entries with random signs and equal absolute values (which will be the amplitude value divided by $\sqrt{n}$); there are $30$ non-null variables; amplitude equals $5$ ; $n = 1100$; number of null copies $K = 500$ for nonlinear fitting algorithms. Coverage of null (left) and non-null (right) covariates when the covariate distribution is estimated in-sample for the linear-$\mustar$ simulations of Section~\ref{sec:simul_robust}. See Section~\ref{sec:simul_setup} for remaining details.
        }
        \label{fig:binary}
    \end{figure}
    
     \subsection{Co-sufficient floodgate}
     \label{sec:simul_relax}
     Finally, we study the empirical performance of co-sufficient floodgate as described in Section~\ref{sec:relax} as compared to the original floodgate method which is given full knowledge of $P_{X|Z}$.
    %  , co-sufficient floodgate relaxes the assumption to only knowing $X$'s distribution up to a parametric model. Here we demonstrate its empirical performance in contrast to the original floodgate procedure. 
     We set the covariate dimension $p=50$, the number of Monte Carlo samples $K=100$, and the amplitude value for nonlinear-$\mustar$ to $30$. The remaining simulation parameters are set to the values described in Section~\ref{sec:simul_setup}. Co-sufficient floodgate and the original floodgate procedure use the same \revision{working regression function}, fitted from $n_e = 500$ samples, and use the same number of samples $n-n_e$ for inference. 
     %As for the number of inference samples, both methods have the same budget. 
     %But co-sufficient floodgate spends them in the form of batches. 
     The batch size $n_{2}$ for co-sufficient floodgate is $300$ and we vary the number of batches $n_1 = (n-n_e)/n_2$ on the $x$-axes.
     %of Figures \ref{fig:relax} and \ref{fig:sine_relax}. 
     Co-sufficient floodgate is given the conditional variance of the Gaussian distribution of $X\mid Z$, but not its conditional mean, parameterized by a $(p-1)$-dimensional coefficient vector multiplying $Z$.
    %  Each time when focusing on one covariate denoted by $X$, with the others by $Z$, the conditional distribution of $X\mid Z$ is Gaussian, we only assume knowing the variance of that Gaussian distribution when running co-sufficient floodgate. 
     Figure \ref{fig:sine_relax2} shows that co-sufficient floodgate has satisfying coverage even when the number of batches is small, and has average half-width quite close to the original floodgate procedure which is given the conditional mean of $X\mid Z$ exactly. {In additional to the nonlinear setting in Figure \ref{fig:sine_relax2}, simulations for a linear $\mustar$ lead to similar conclusions; see Appendix \ref{app:sim:cover}.}
     
    %     \begin{figure}[tb]
    %     \centering
    %     \includegraphics[width =1\linewidth]{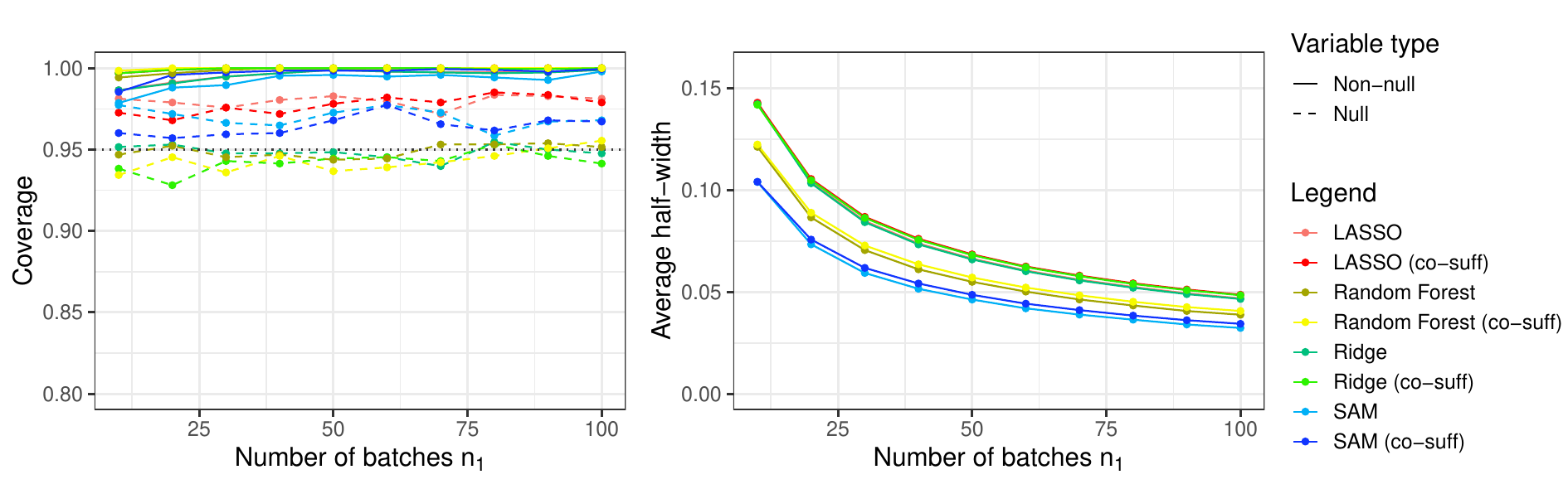}
    %     \caption{Coverage (left) and average half-widths (right) for co-sufficient floodgate against the original floodgate in the nonlinear-$\mustar$ simulations. The number of batches $n_{1}$ is varied over the x-axis. See Section~\ref{sec:simul_setup} and \ref{sec:simul_relax} for remaining details. Standard errors are all below 0.008.
    %     % Coverage plots for null and non-null variables, with the sample size for model-X part varying. Design matrix with $n$ i.i.d. rows follows the Gaussian copula model in  \ref{sec:nonlinear_setup} with the auto-correlation coefficient being $0.3$; $Y|X$ follows the nonlinear model described in \ref{sec:nonlinear_setup} with amplitude being $50$ and $30$ non-null variables; $n = 1100, p = 1000$; number of null copies $K = 500$ for nonlinear fitting algorithms.
    %     }
    %     \label{fig:sine_relax}

    % \end{figure}
    
            \begin{figure}[tb]
        \centering
        \includegraphics[width =1\linewidth]{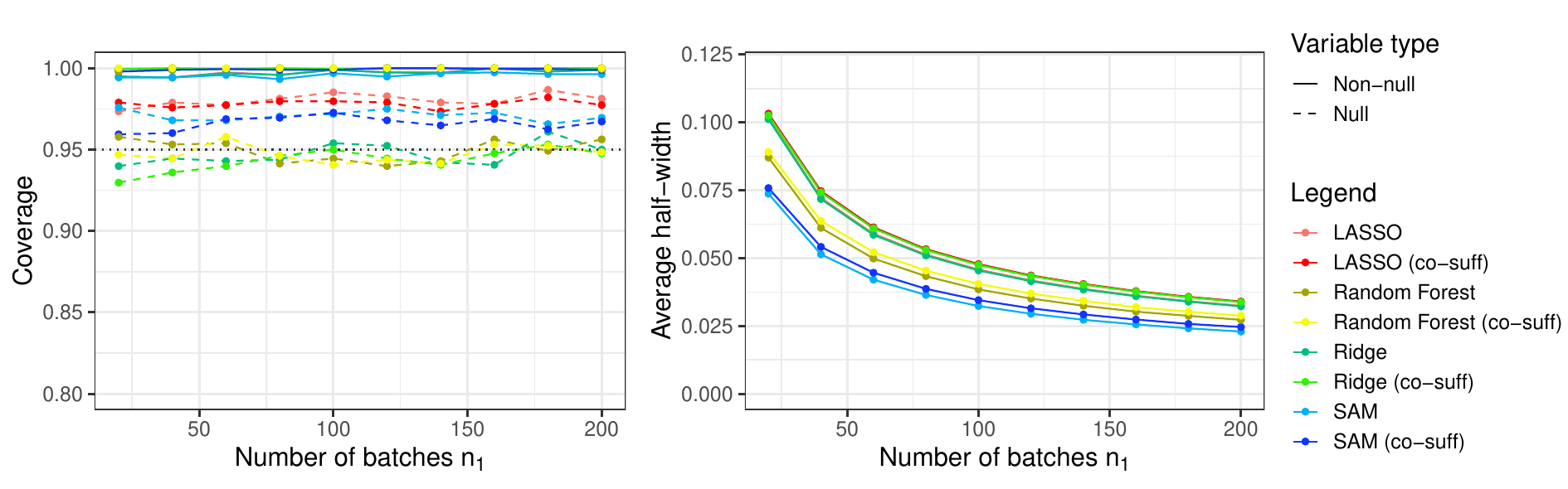}
        \caption{Coverage (left) and average half-widths (right) for co-sufficient floodgate and original floodgate in the nonlinear-$\mustar$ simulations. The number of batches $n_{1}$ is varied over the x-axis. See Section~\ref{sec:simul_setup} and \ref{sec:simul_relax} for remaining details. Standard errors are below 0.009 (left) and 0.002 (right).
        % Coverage plots for null and non-null variables, with the sample size for model-X part varying. Design matrix with $n$ i.i.d. rows follows the Gaussian copula model in  \ref{sec:nonlinear_setup} with the auto-correlation coefficient being $0.3$; $Y|X$ follows the nonlinear model described in \ref{sec:nonlinear_setup} with amplitude being $50$ and $30$ non-null variables; $n = 1100, p = 1000$; number of null copies $K = 500$ for nonlinear fitting algorithms.
        }
        \label{fig:sine_relax2}

    \end{figure}

    \section{Application to genomic study of platelet count}\label{sec:realdata}
     %To link this work with genomics, we first define some notation. Let $G = (G_1,\dots,G_p)$ denote a genotype, where $G_j$ represents the $j$th single nucleotide polymorphism (SNP), let $Y$ be a trait of interest such as platelet count, and let $E=(E_1,\dots,E_q)$ denote a set of non-genetic factors such as age.
    The study of genetic \emph{heritability} is the study of how much variance in a trait can be explained by genetics. Precise definitions vary based on modeling assumptions \citep{zuk2012mystery}, but the fundamental concept is intuitive and central to genomics; indeed the goal of genome-wide association studies (GWAS) is often precisely to identify single nucleotide polymorphisms (SNPs) or loci that explain the most variance in a trait. To connect heritability with the present paper, suppose $Y$ denotes a trait, $X$ denotes a SNP or group of SNPs, and $Z$ denotes all the remaining SNPs not included in $X$. Then {as can be seen in Equation~\eqref{eq:mMSEgap_form2},} the mMSE gap $\Ij^2$ \emph{exactly} measures the variance in $Y$ that is attributable to $X$. Thinking of $\Ij^2$ as a sort of \emph{conditional} heritability also makes it easy to include non-genetic factors such as age in $Z$, since such factors may influence $Y$ but not be of {direct} interest to geneticists. Thus $\Ij^2$ can capture both gene-gene and gene-environment interactions. %\ljmargin{Formally, }{needed? how to phrase?}
    
    Having established $\Ij^2$ as a quantity of interest, we proceed to infer it for blocks of SNPs at various resolutions of the human genome by applying floodgate to a platelet GWAS from the UK Biobank. Our analysis builds on the work of \citet{sesia2020multi}, which carefully applied model-X knockoffs to the same data to perform multi-resolution \emph{selection} of important SNPs, and in doing so require, like floodgate, a model for the SNPs $X,Z$ and a working regression function, both of which we reuse in our own analysis. In particular, we follow the literature on genotype/haplotype modeling \citep{stephens2001new, zhang2002dynamic, li2003modeling, scheet2006fast, sesia2019gene, sesia2020controlling, sesia2020multi} and model the SNPs as following a  hidden Markov model, and use the cross-validated Lasso as the algorithm to fit our working regression function $\mu$. Although we use a linear $\mu$ to match the existing analysis in \citet{sesia2020multi}, we remind the reader that one is in general free to use any $\mu$ with floodgate, and we hope that domain experts applying floodgate in the future to GWAS data can tailor $\mu$ to be even more powerful. The output of the analysis in \citet{sesia2020multi} is a so-called ``Chicago plot", which plots stacked blocks of selected SNPs at a range of block resolutions. The height of the Chicago plot at a given location on the genome reflects the resolution at which the SNP at that location was rejected, with a greater height corresponding to a smaller block of SNPs being rejected. However, since the Chicago plot is derived from a pure selection method, it contains no information about the \emph{strength} of the relationship between the trait and any of the blocks of SNPs. Floodgate enables us to construct a \emph{colored} Chicago plot by computing an LCB for each selected block of SNPs and reporting an LCB of zero (without computation) for all unselected blocks of SNPs; see Appendix~\ref{app:realdata} for implementation details.
    
    \begin{figure}[tb]
        \centering
         \includegraphics[width=1\linewidth]{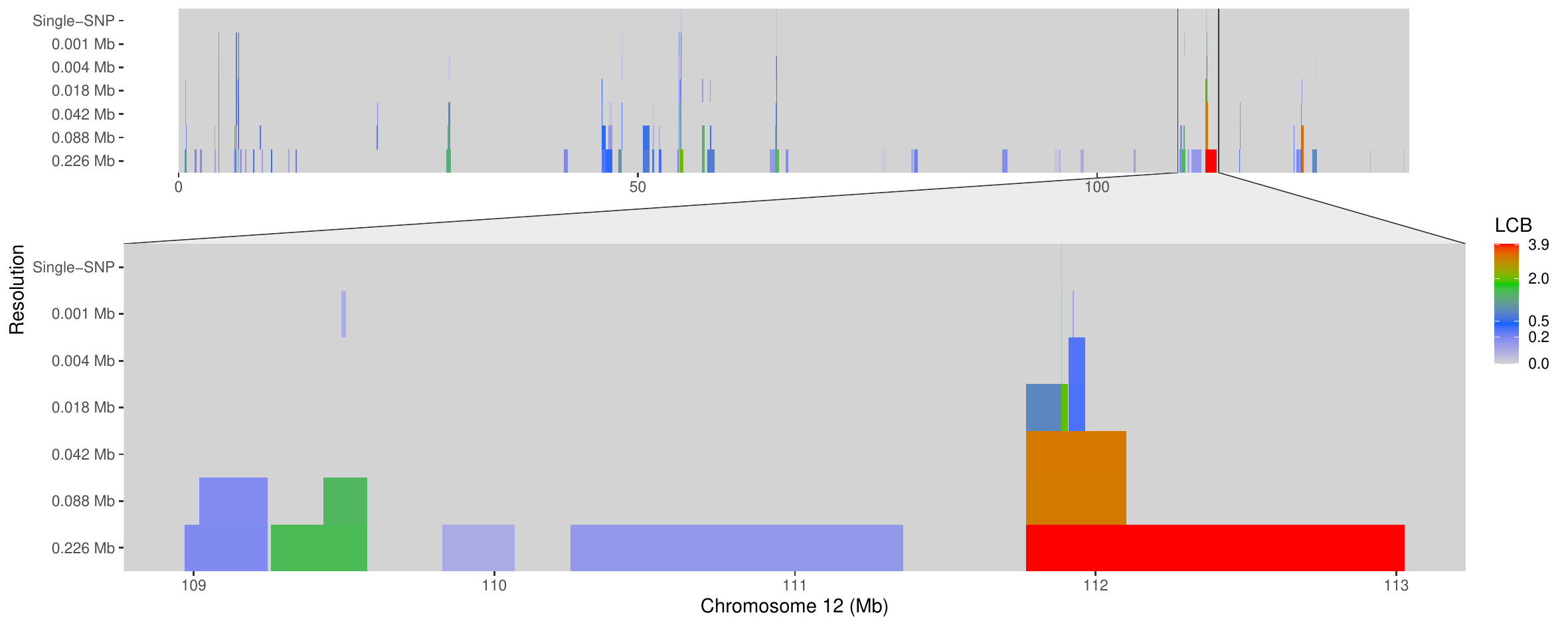}
        \caption{Colored Chicago plot analogous to Figure 1a of \citet{sesia2020multi}. The color of each point represents the floodgate LCB for the block that contains the SNP at the location indicated on the x-axis at the resolution (measured by average block width) indicated on the y-axis (note some blocks appearing in the original Chicago plot have an LCB of zero and hence are colored grey). The second panel zooms into the region of the first panel containing the largest floodgate LCB.
        % {\color{red}get rid of plot titles, make colors non-transparent, make all text bigger, remove dead space to the left of 0 and beyond the right end point as well}
        }
        \label{fig:shanghai}
    \end{figure}
    
    In particular, Figure~\ref{fig:shanghai} is a colored version of Figure 1a of \citet{sesia2020multi}, which displayed the genomic regions on chromosome 12 that those authors found to be related to platelet count in the UK Biobank data. Our colored figure shows how informative floodgate LCBs can be over and beyond a pure selection method, as it shows the signal is far from being spread evenly over the SNPs selected by \citet{sesia2020multi}.
    %a small fraction of selected blocks show evidence of considerably more signal than the rest. 
    This information is crucial for the \emph{prioritization} of selected regions, as without color the Chicago plot does not give any indication which of the selected SNPs the data indicates are most important (we note that the height of the tallest selected block at a SNP need \emph{not} correspond to its importance, and indeed there are many pairs of locations in the figure such that one has a taller block in the original Chicago plot but the other has a brighter color in Figure~\ref{fig:shanghai}).
    % color floodgate coloring shows there is at best a weak correspondence between strong colors and tall ``towers").

    \section{Discussion}\label{sec:discussion}
    Floodgate is a powerful and flexible framework for rigorously inferring the strength of the conditional relationship between $Y$ and $X$. We prove results about floodgate's validity, accuracy, and robustness and address a number of extensions/generalizations, but a number of questions remain for future work and we highlight two here:
    \begin{itemize}
    \item Floodgate relies on a \revision{working regression function} that is not estimated from the same data used for inference, which usually will require data splitting. It would be desirable, both from an accuracy standpoint and a derandomization standpoint, to remove the need for data splitting or at least find a way for samples in one or both splits to be recycled between regression estimation and inference.
    \item The floodgate framework is applied here to the mMSE gap and the MACM gap, but more generally it constitutes a new tool for flexible inference of nonparametric functionals, and we expect it can find use for inferring other MOVIs. The main challenge for its application is the identification of an appropriate floodgate functional, and it would be of interest to better understand principles or even heuristics for finding such functionals for a given MOVI. Indeed we make no claim that the functionals proposed in this paper are unique for their respective MOVIs, and there may be others that lead to better floodgate procedures.
%    \item \cyancom{As mentioned in Section~\ref{sec:method}, there is a straightforward modification to floodgate, which can improve the inference in terms of the variance. It is of great interest how this and possibly other modifications to the method can promote robustness.}
    % \item Lastly, we expect floodgate to be a powerful tool for applications, as it enables the user to incorporate every bit of domain expertise and regression technology into improved accuracy. Given its model-X assumptions, floodgate may be most appealing at least at first for applications with available unlabeled data or other information about Extending co-sufficient floodgate from Section~\ref{sec:relax} beyond the two models considered there would further enhance floodgate's applicability.
    \end{itemize}
    % Future directions:
    
    % Get rid of data splitting
    
    % Other floodgate functionals for $\Ij$ (and perhaps better understanding of how to generate floodgate functionals and/or how to choose among them)
    
    % Floodgate for other estimands, e.g., quantities in terms of quantile regression.
    
    % Formal treatment of selective inference issues
    
    % Extending conditional floodgate to other models for $X\mid Z$
    
    % Other domain applications of floodgate
    
    % Other ways to condition on more to induce robustness or address confounding

    \section*{Acknowledgements}
    L.Z. is partially supported by the NSF-Simons Center for Mathematical and Statistical Analysis of Biology at Harvard, award number \#1764269 and the Harvard Quantitative Biology Initiative. L.J. is partially supported by the William F. Milton Fund. We would like to thank the Neale Lab at the Broad Institute of MIT and Harvard for including us in their application to the UK Biobank Resource (application 31063), Sam Bryant for the access to on-premise data files, Matteo Sesia, Eugene Katsevich, Asher Spector, Benjamin Spector, Masahiro Kanai and Nikolas Baya for the help with the genomics application, and Dongming Huang for helpful discussions.
    
\bibliographystyle{apalike}
\bibliography{mybib}

    \appendix
\newpage
    \section{Proofs for main text}
     Throughout the proofs, we will abbreviate $(X,Z)=W, ~(\Xtil,Z)=\Wtil$ for simplicity and write $w=(x,z)$. And  $\gstar,g:\mathbb{R}^{p-1}\rightarrow \mathbb{R};~\hstar,h:\mathbb{R}^{p}\rightarrow \mathbb{R}$ are defined as below:
    \begin{eqnarray}\label{eq:def_ggstar}
    \gstar(\xnoj) = \ec{\mustar(W)}{\Xnoj =\xnoj},~
    g(\xnoj) = \ec{\muW}{\Xnoj=\xnoj},\\
    \hstar(w) = \mustar(w) -\gstar(\xnoj),~
    h(w) = \mu(w) - g(\xnoj)\label{eq:def_hhstar}.
    \end{eqnarray}
    And we can further decompose $Y$:
    \begin{equation} \label{eq:4_lem:max}
        Y = \condmean + \eps(Y,X,Z) = \mustar(W) + \eps(Y,W) = \gstar(\Xnoj) + \hstar(W) + \eps(Y,W).
    \end{equation}
    \acc{Let $L_2(\Omega,\calF,P)$ denote the vector space of real-valued random variables with finite second moments, which is a Hilbert space, and define its subspace $L_2(W):=L_2(\Omega,\sigalg(W),P)$, where $\sigalg(W)$ is the sub $\sigma$-algebra generated by $W=(X,Z)$. ($L_2(\Xnoj):=L_2(\Omega,\sigalg(\Xnoj),P)$ is defined analogously).} \acc{Then $\mustar(W)$ and $\gstar(\Xnoj)$ can be interpreted as the projections of $Y$ onto the subspaces $L_2(W)$ and $L_2(\Xnoj)$, respectively. $Y$ and $\mustar(W)$ admit the orthogonal decompositions $Y = \mustar(W) + \eps(Y,W)$ and $\mustar(W) = \gstar(\Xnoj) + \hstar(W)$, respectively.  Similarly note the projection of $\mu(W)$ onto $L_2(\Xnoj)$ and the decomposition $\mu(W) = g(Z) + h(W)$.
    % we remark that $\gstar$ is the further projection of $\condmean$ onto the subspace $L_2(\Xnoj)$ and $\hstar$ is the orthogonal complement of $\gstar$. {\color{red} last sentence should be integrated with definitions of g's and h's.} 
    We remark these imply the following facts:
    \begin{align}\label{eq:5_lem:max}
    \begin{split}
           \ec{\eps(Y,W)}{W}=0,~\EE{\eps(Y,W) \lambda(W)}=0, \\
         \ec{\hstar(W)}{\Xnoj}=0,~\EE{\hstar(W) \gamma(Z)}=0, \ec{h(W)}{\Xnoj}=0, ~\EE{h(W) \gamma(Z)}=0.
    \end{split}
    \end{align}
    for any function $\lambda(w)$ and any function $\gamma(z)$.} 
    % \begin{equation}
    %  \ec{\eps(Y,W)}{W}=0,~\EE{\eps(Y,W) \lambda(W)}=0,~\ec{\hstar(W)}{\Xnoj}=0,~\EE{\hstar(W) \gamma(Z)}=0,
    % \end{equation}
    % Similarly considering $\mu(W) = g(Z) + h(W)$, we have 
    % \begin{equation}\label{eq:h_proj_fact}
    %     \ec{h(W)}{\Xnoj}=0, ~\EE{h(W) \gamma(Z)}=0
    % \end{equation}
    % for any $\gamma(z)$.
\acc{Thus we can rewrite the denominator of $\thetamu$ by noticing the equivalence below:
    \begin{equation}\label{eq:h2W}
    	\EE{\varc{\muX}{Z}} = \EE{\varc{h(W)}{Z}}  = \EE{\Ec{h^2(W)}{Z}} = \EE{h^2(W)}.
    \end{equation}
    As for the numerator of $\thetamu$, \eqref{eq:fnum} mentions the rewritten expression. Here we formally derive the following equivalent expressions of $\thetamu$,
    \begin{eqnarray}
        \thetamu \nonumber
        &:=& 
        \frac{\EE{\covc{\mustar(X,Z)}{\muX}{Z}}}{\sqrt{\EE{\varc{\muX}{Z}}}} \\ \nonumber
        &=& 
        \frac{\EE{\covc{\hstar(W)}{h(W)}{Z}}}{\sqrt{\EE{h^2(W)}}} \\ \label{eq:fmu_simplified}
        &=&\frac{\EE{\hstar(W)h(W)} }{\sqrt{\EE{h^2(W)}}} \\ \nonumber
        &=&\frac{\EE{Yh(W)} }{\sqrt{\EE{h^2(W)}}} - \frac{\EE{\eps(Y,W)h(W)} }{\sqrt{\EE{h^2(W)}}} - \frac{\EE{\gstar(Z)h(W)} }{\sqrt{\EE{h^2(W)}}} \\ 
        &=&\frac{\EE{Yh(W)} }{\sqrt{\EE{h^2(W)}}}  \label{eq:equiv_fmu}
    %  \EE{\covc{\mustar(X,Z)}{\muX}{Z}}=\EE{Y\big(\muX-\Ec{\muX}{Z}\big)}, 
    \end{eqnarray}
    where the second equality is by \eqref{eq:h2W} and the definitions of $\hstar(W),h(W)$}, the third equality holds by the total law of conditional expectation and \eqref{eq:5_lem:max}, the fourth equality comes from \eqref{eq:4_lem:max}, and the last equality holds due to \eqref{eq:5_lem:max} and the total law of conditional expectation. As \eqref{eq:equiv_fmu} is very concise, we will work with this expression of $\thetamu$ throughout the following proof. Also note we have an equivalent expression of $\Ij$.
     \begin{equation}\label{eq:6_lem:max}
    \sqrt{\EE{(\hstar)^2(W)}} = \sqrt{\EE{\Ecmid{\left(\mustar(W) -  \Ec{\mustar(W)}{\Xnoj}\right)^2}{\Xnoj}}} = \sqrt{\ANOVA} = \Ij.
    \end{equation}
    
          \acc{
          Note that the proofs of Theorems \ref{thm:main} and \ref{thm:main_general} only require moment conditions on $h(W)$, which will hold under the corresponding moment conditions on $\mu(X, Z)$. This can be seen from the following example where the finiteness of $\EE{\mu^r(W)}$ implies that of $\EE{h^r(W)}$ for some positive integer $r$:
%      We also note that 
%      Recall $h(W) = \mu(W) - \Ec{\mu(W)}{\Xnoj}$ as defined in \eqref{eq:def_hhstar}. We have
\begin{align}
\label{eq:element_mmt_bound}
\begin{split}
	\EE{h^r(W)} 
	&= \EE{ (\mu(W) - \Ec{\mu(W)}{\Xnoj} )^r}\\ 
	&\le  2^{r-1}(\EE{\mu^r(W)}+ \EE{(\Ec{\mu(W)}{\Xnoj})^r}) \\ 
	&\le  2^{r-1}(\EE{\mu^r(W)}+ \EE{\Ec{\mu^r(W)}{\Xnoj}}) =  2^r \EE{\mu^r(X, Z)},
\end{split}
\end{align}
where the first inequality holds due to the $C_r$ inequality (which states that $\EE{|X+Y|^r}\le C_r (\EE{|X|^r} + \EE{|Y|^r})$ with $C_r = 1$ for $0 < r \le 1$ and $C_r =2^{r-1}$ for $r\ge 1$), the second inequality holds by Jensen's inequality, and the last equality holds due to the tower property of conditional expectation. 
%Thus $\EE{\mu^4(X, Z)} < \infty$ implies $\EE{h^4(W)} < \infty$. 

   In the proofs of Theorems \ref{thm:main} and \ref{thm:main_general}, we will use a key fact to simplify exposition: when $\EE{h^2(W)}> 0$, $\EE{h^2(W)}=1$ can be assumed without loss of generality. This is because \eqref{eq:equiv_fmu} says $f(\mu)=\frac{\EE{Yh(W)} }{\sqrt{\EE{h^2(W)}}}$ and $R_i$, $V_i$ in Algorithm \ref{alg:MOCK} can be rewritten as
    	\begin{eqnarray*}
    		R_i &=& Y_i (\mu(X_i, Z_i) - \Ec{\mu(X_i, Z_i}{Z_i}) = Y_i h(W_i), \\
    		V_i &=& \Varc{\mu(X_i,Z_i)}{\Xinoj} = \Varc{h(X_i,Z_i)}{\Xinoj} 
    	\end{eqnarray*}
    	by definition of $h$. Regarding Theorem \ref{thm:main_general}, $R_i^K$, $V_i^K$ can be rewritten as 
    	\begin{align}\nonumber
	    \begin{split}
	    R_i^K &= Y_i\left(h(W_i) - \frac{1}{K}\sum_{k=1}^{K} h(X_i^{(k)} ,Z_i)\right),\\
	    V^{K}_{i} &= \frac{1}{K-1}\sum_{k=1}^{K}\left( h(X_i^{(k)} ,Z_i)-  \frac{1}{K}\sum_{k=1}^{K} h(X_i^{(k)} ,Z_i) \right)^2
	   	\end{split}
	   	\end{align}
    	due to \eqref{eq:RiK_in_h}, \eqref{eq:ViK_in_h}. It is immediate that the floodgate procedure is invariant to positive scaling thus we assume $\EE{h^2(W)}=1$ without loss of generality.}

    \subsection{Proofs in Section~\ref{sec:mock}}
    \subsubsection{Lemma \ref{lem:max}}\label{pf:lem:max}
      \begin{proof}[Proof of Lemma \ref{lem:max}]
      When $\EE{\varc{\muX}{Z}}=0$, the numerator must also be zero, and hence the ratio is 0 by convention and $\thetamu\le \Ij$. Now assuming $\EE{\varc{\muX}{Z}}>0$,
      \begin{align*}
          \thetamu &= \frac{\EE{\covc{\muX}{\mustar(X,Z)}{Z}}}{\sqrt{\EE{\varc{\muX}{Z}}}} \\
          &= \frac{\EE{\sqrt{\varc{\muX}{Z}}\sqrt{\varc{\mustar(X,Z)}{Z}}\mathrm{Cor}\left(\muX, \mustar(X,Z)\,|\,Z\right)}}{\sqrt{\EE{\varc{\muX}{Z}}}} \\
          &\le \frac{\EE{\sqrt{\varc{\muX}{Z}}\sqrt{\varc{\mustar(X,Z)}{Z}}}}{\sqrt{\EE{\varc{\muX}{Z}}}} \\
          &\le \frac{\sqrt{\EE{\varc{\muX}{Z}}}\sqrt{\EE{\varc{\mustar(X,Z)}{Z}}}}{\sqrt{\EE{\varc{\muX}{Z}}}} = \Ij,
      \end{align*}
      where the first inequality uses the fact that correlation is bounded by 1, and the second inequality uses Cauchy--Schwarz. Finally, it is immediate that $f(\mustar) = \Ij$.
\end{proof}

%\begin{proof}{TBD for triangular array}
%%  For given $\mu_n$, showing the above is quite straightforward: in the proof of Theorem \ref{thm:main}, we establish the asymptotic normality of $T$; we also show $s$ converges in probability to $\tilde{\sigma}_0$ (which is the variance of the asymptotic normal distribution, as defined in \eqref{eq:sigma_tilde_def}). For a sequence of working regression functions $\mu_n$, we need to deal with the following triangular array  
%
%  Theorem \ref{thm:main} is proved based on the CLT and the delta method. In this proof, we have a sequence of working regression functions $\mu_n$ and need to deal with the following triangular array  
%    \begin{equation}
%    	R_{ni} = Y_ih_n(X_i,Z_i), V_{ni} = \Varc{h_n(X_i, Z_i)}{Z_i}, \quad i \in [n]
%    \end{equation}	
%    where $h_n(X_i, Z_i) = (\mu_n(X_i,Z_i) - \Ec{\mu_n(X_i, Z_i)}{Z_i}$. Therefore, we will utilize the triangular array CLT and a extended version of the delta method where the function can depend on $n$. Recall the proof of Theorem \ref{thm:main} considers $4$ different cases then deals with them separately. To avoid length proof, we will focus on the most complicated case where $\Var{Yh(W)}>0$ and $ \Var{\Varc{h(X)}{\Xnoj}}>0$ and omit the similar derivations for Cases \myRom{1}, \myRom{2} and \myRom{3}.
%     
%	Consider the i.i.d. triangular array of two-dimensional 
\acc{
\subsubsection{Theorem \ref{thm:main}}
\begin{proof}[Proof of Theorem \ref{thm:main}]
\label{pf:thm:main}
%Recall $h(W) = \mu(W) - \Ec{\mu(W)}{\Xnoj}$ as defined in \eqref{eq:def_hhstar}. We have
%\begin{align}\label{eq:element_mmt_bound}
%\begin{split}
%	\EE{h^4(W)} 
%	&= \EE{ (\mu(W) - \Ec{\mu(W)}{\Xnoj} )^4}\\ 
%	&\le  2^{4-1}(\EE{\mu^4(W)}+ \EE{(\Ec{\mu(W)}{\Xnoj})^4}) \\ 
%	&\le  2^{4-1}(\EE{\mu^4(W)}+ \EE{\Ec{\mu^4(W)}{\Xnoj}}) =  16~ \EE{\mu^4(X, Z)},
%\end{split}
%\end{align}
%where the first inequality holds due to the $C_r$ inequality, the second inequality holds by Jensen's inequality, and the last equality holds due to the tower property of conditional expectation. 
Due to \eqref{eq:element_mmt_bound}, $\EE{\mu^4(X, Z)} < \infty$ implies $\EE{h^4(W)} < \infty$. 
In the following proof, we will only assume the weaker moment conditions $\EE{Y^4}, \EE{h^4(W)} < \infty$. Under such moment conditions, we also have $\EE{Yh(W)} \le \sqrt{\EE{Y^2}}\sqrt{\EE{h^2(W)}}$ and $\EE{h^2(W)} < \infty$  since the finiteness of higher moments implies that of lower moments. 

	When $\mu(X,Z) \in \sigalg(Z)$, i.e., $\EE{\varc{\muX}{Z}}=0$, we immediately have coverage since $L_{n}^{\alpha} (\mu) =0$ by construction and $\Ij \ge 0$ by its definition. Regarding the case where $\EE{\varc{\muX}{Z}}  \neq 0$, we have $\EE{h^2(W)} = \EE{\varc{\muX}{Z}} > 0$ due to \eqref{eq:h2W}. Based on the discussions in the part after \eqref{eq:element_mmt_bound}, we can assume $\EE{h^2(W)}=1$ without loss of generality.
%	\begin{eqnarray}\nonumber
%		0 < \EE{\Varc{\mu (X,Z)}{\Xnoj}} 
%		&=& \EE{\Varc{h (W)}{\Xnoj}} \\ \nonumber
%		&=& \EE{\Ec{h^2(W)}{\Xnoj} -  (\Ec{h(W)}{\Xnoj})^2} \\ 
%		&=& \EE{h^2(W)}
%		\label{eq:Eh2W}
%	\end{eqnarray}
%	 where the first equality holds by the definition of $h(W)=h(X,Z)$, the second equality comes from the expansion of $\Varc{h (W)}{\Xnoj}$, and the last equality holds as a result of $\Ec{h(W)}{\Xnoj}=0$ in \eqref{eq:5_lem:max} and the tower property of conditional expectation. 	 
%	\rev{ 
%	 When $\EE{h^2(W)}> 0$, we can assume $\EE{h^2(W)}=1$ without loss of generality. This is because $f(\mu)=\frac{\EE{Yh(W)} }{\sqrt{\EE{h^2(W)}}}$ and $R_i = Y_i (\mu )$
%	 Therefore, we assume $\EE{h^2(W)}=1$ for the following proof without loss of generality (since floodgate is invariant to positive scaling).
%	 }
%	

    Recall in Algorithm \ref{alg:MOCK}, we denote $R_{i} = Y_i\big(\mu(X_i,Z_i)-\Ec{\mu(X_i,Z_i)}{\Xinoj}\big)$ and $V_{i} = \Varc{\mu(X_i,Z_i)}{\Xinoj}$ for each $i\in [n]$, and compute their sample mean $(\bar{R},\bar{V})$ and sample covariance matrix $\hat{\Sigma}$. The LCB is constructed as
    $$
             L_{n}^{\alpha} (\mu)=\max\left\{ \frac{\bar{R}}{ \sqrt{\bar{V}  }} -  \frac{z_{\alpha}s}{\sqrt{n}}   ,0\right\},~~\text{where}~~  s^2 = \frac{ 1 }{ \bar{V} }\left[ \left(\frac{\bar{R} }{2 \bar{V} }\right)^2 \hat{\Sigma}_{22} +  \hat{\Sigma}_{11} - \frac{\bar{R} }{ \bar{V} } \hat{\Sigma}_{12} \right].     
     $$ 
     And we have 
     $$
       \{{L_{n}^{\alpha} (\mu) \le \Ij}\} =   \left\{\frac{\bar{R}}{ \sqrt{\bar{V}  }} -  \frac{z_{\alpha}s}{\sqrt{n}}    \le \Ij\right\} \supset \left\{\frac{\bar{R}}{ \sqrt{\bar{V}  }} -  \frac{z_{\alpha}s}{\sqrt{n}}    \le \fmu \right\},
     $$
     where the first equality holds since $\Ij \ge 0$ and the subset relation holds due to Lemma \ref{lem:max}. Hence it suffices to show that 
        \begin{equation}
            \label{eq:valid_atmu}
        	   % \inf_{P\in \calP,~ \mu \in \calU}
        	    \PP{\frac{\bar{R}}{ \sqrt{\bar{V}  }} -  \frac{z_{\alpha}s}{\sqrt{n}}    \le \thetamu} \ge 1 - \alpha - o(1).
        	\end{equation}
       We will utilize the central limit theorem (CLT) and the delta method to prove the above result. Now we consider four different cases. 
        	\begin{enumerate}[(I)]
        	    \item $\Var{Yh(W)}=0$ and $ \Var{\Varc{h(W)}{\Xnoj}}=0$.
        	    \item $\Var{Yh(W)}>0$ and $ \Var{\Varc{h(W)}{\Xnoj}}=0$.
        	    \item $\Var{Yh(W)}=0$ and $ \Var{\Varc{h(W)}{\Xnoj}}>0$.
        	    \item $\Var{Yh(W)}>0$ and $ \Var{\Varc{h(W)}{\Xnoj}}>0$.
        	\end{enumerate}
   Note that assuming $\EE{Y^4}$ and $\EE{h^4(W)} < \infty$ ensures all the above variances exist; the bounding strategy is the same as \eqref{eq:element_mmt_bound}, thus we omit the proof. When $\Var{Yh(W)}=0$ and $\Var{\Varc{h(W)}{\Xnoj}}=0$, respectively, we have the following facts.
  \begin{align}\label{eq:fact1}
  	\Var{Yh(W)}=0 ~\Rightarrow~ R_{i} = {\EE{Yh(W)}}, \forall~ i \in [n], \bar{R} = {\EE{Yh(W)}},~ \hat{\Sigma}_{11} =\hat{\Sigma}_{12} =0,\\ \label{eq:fact2}
  	\Var{\Varc{h(W)}{\Xnoj}}=0 ~\Rightarrow~ V_{i} = \EE{h^2(W)}, \forall~ i \in [n],\bar{V} = {\EE{h^2(W)}},~ \hat{\Sigma}_{22} =\hat{\Sigma}_{12} =0.
  \end{align} 
   
    Case (\myRom{1}): due to \eqref{eq:fact1} and \eqref{eq:fact2}, we simply have $\frac{\bar{R}}{ \sqrt{\bar{V}  }} = \EE{Yh(W)}/\sqrt{\EE{h^2(W)}} = \thetamu$ and $s = 0$, thus \eqref{eq:valid_atmu} holds.
    
    Case (\myRom{2}): due to \eqref{eq:fact2}, $s^2=\hat{\Sigma}_{11}/\bar{V} = \hat{\Sigma}_{11}/\EE{h^2(W)}$, hence we have the following equivalence
    $$
     \left\{\frac{\bar{R}}{ \sqrt{\bar{V}  }} -  \frac{z_{\alpha}s}{\sqrt{n}}    \le \fmu \right\} = \left\{
             \bar{R} - \frac{z_{\alpha}(\hat{\Sigma}_{11})^{1/2}}{\sqrt{n}} \le  \EE{Yh(W)}  
             \right\}.
    $$
    Thus the problem is reduced to showing that
 \begin{equation} 
    \label{eq:case2_coverage}
         \PP{   \bar{R} - \frac{z_{\alpha}(\hat{\Sigma}_{11})^{1/2}}{\sqrt{n}} \le  \EE{Yh(W)} } \ge 1 - \alpha - o(1).
    \end{equation}
        Notice $\bar{R}$ is simply the sample mean estimator of the quantity $\EE{Yh(W)}$ and $\hat{\Sigma}_{11}$ is the corresponding sample variance. \eqref{eq:case2_coverage} is an immediate result of the central limit theorem and Slutsky's theorem. 
              
        Case (\myRom{3}): due to \eqref{eq:fact1}, we have 
            $$
            \frac{\bar{R}}{ \sqrt{\bar{V}  }} -  \frac{z_{\alpha}s}{\sqrt{n}} =\frac{\EE{Yh(W)}}{ \sqrt{\bar{V}  }} -  \frac{z_{\alpha}s}{\sqrt{n}},~~\text{where}~~ s^2 = \frac{1}{\bar{V}}\left(\frac{\EE{Yh(W)}  }{2\bar{V}}\right)^2 \hat{\Sigma}_{22}.
             $$
        	$\frac{\EE{Yh(W)}}{ \sqrt{\bar{V} }}$ is a nonlinear function of the moment estimators. We will use the delta method to establish the asymptotic normality result. In case (\myRom{4}), $\EE{Yh(W)}$ is further replaced by its moment estimator, and we are dealing with a bit more complicated nonlinear statistic than $1/{ \sqrt{\bar{V} }}$. Hence we focus on case (\myRom{4}) and omit the very similar proof for case (\myRom{3}). 
        
        Case (\myRom{4}): since $\Var{Yh(W)}>0$ and $ \Var{\Varc{h(W)}{\Xnoj}}>0$, 
        we have as $n\rightarrow \infty$,
        \begin{equation}\label{eq:2dclt}
      \sqrt{n} \left( 
        \begin{array}{c}
\bar{R} -\EE{Yh(W)} \\
\bar{V}-\EE{h^2(W)}
\end{array} \right) \stackrel{d}{\rightarrow} \gauss{\bm{0}}{ \Sigma}
           \end{equation}
        by the multivariate central limit theorem, where the covariance matrix of the random vector $(R_i, V_i) \in \mathbb{R}^2$ is denoted by
         $ \Sigma $ with
            \begin{equation}\nonumber
	   \left(
	    \begin{array}{cc}
		\Sigma_{11}   &  \Sigma_{12} \\ 
	   \Sigma_{21}    &   \Sigma_{22}
	    \end{array} 
	    \right) = \left(
	    \begin{array}{cc}
	   \Var{Yh(W)}      &  \Cov{Yh(W)}{ \Varc{h(W)}{\Xnoj} } \\ 
	   \Cov{Yh(W)}{ \Varc{h(W)}{\Xnoj} }      &  \Var{\Varc{h(W)}{\Xnoj}  }
	    \end{array} 
	    \right). 
	  \end{equation}
	$\EE{Y^4},\EE{h^4(W)} < \infty$ ensures the finiteness of $\Sigma_{11},\Sigma_{12}, \Sigma_{22}$. Denote 
	\begin{equation}\label{eq:tsigma0_def}
		\tilde{\sigma}_0^2=  \frac{ 1 }{ \EE{h^2(W)}  }\left[ \left(\frac{\EE{Yh(W)}}{2 \EE{h^2(W)}}\right)^2 {\Sigma}_{22} +  {\Sigma}_{11} - \frac{\EE{Yh(W)}}{ \EE{h^2(W)} } {\Sigma}_{12} \right],
	 \end{equation}
	and we will show $\tilde{\sigma}_0 > 0$ over the course of derivations from \eqref{eq:tsigma0_AB} to the end of the proof. Now consider
	\begin{equation}\label{eq:H_def}
	\left(\frac{\bar{R}}{ \sqrt{\bar{V}}} - f(\mu)\right)/s = \left(\frac{\bar{R}}{ \sqrt{\bar{V}}} - f(\mu)\right)/\tilde{\sigma}_0\cdot \frac{\tilde{\sigma}_0}{s} :=\frac{H(\bar{R}, \bar{V}) - \fmu}{\tilde{\sigma_0}}\cdot \left(\frac{s}{\tilde{\sigma_0}}\right)^{-1},
	\end{equation}
      where $H(x_1, x_2):\mathbb{R}^2 \rightarrow \mathbb{R} $ is defined as $H(x_1, x_2) = x_1/\sqrt{x_2}$ for $x_2 > 0$ 	and its gradient equals $\nabla H(x_1, x_2) = (\frac{\partial H}{\partial x_1}, \frac{\partial H}{\partial x_2}) = \frac{1}{\sqrt{x_2}}(1, -\frac{x_1}{2x_2})$. Let $\theta = (\EE{Yh(W)}, \EE{h^2(W)})$, then 
    \begin{equation}\label{eq:grad_def}
      \nabla H(\theta) = \frac{1}{\sqrt{\EE{h^2(W)}}}\left(1,-\frac{\EE{Yh(W)}}{2 \EE{h^2(W)}}
	\right),
	\end{equation} 
	and we obtain 
	\begin{equation}\label{eq:tsigma0_rewrite}
		\Var{\nabla H(\theta)^\top \left( 
        \begin{array}{c}
\sqrt{n} \bar{R}  \\
\sqrt{n} \bar{V}
\end{array} \right) } =
	\Var{\nabla H(\theta)^\top \left( 
        \begin{array}{c}
R_i  \\
V_i
\end{array} \right) }= 	\nabla H(\theta)^\top~ \Sigma ~\nabla H(\theta) =  \tilde{\sigma}_0^2
	\end{equation}
	 where the second equality holds by the definition of $\Sigma$ and the last equality holds by elementary calculation. Therefore, by applying the multivariate delta method to \eqref{eq:2dclt}, we have $\sqrt{n}(H(\bar{R}, \bar{V}) - H(\theta)) \stackrel{d}{\rightarrow}  \gauss{0}{ 		\nabla H(\theta)^\top~ \Sigma ~\nabla H(\theta) }$, i.e.,
	  \begin{equation}\label{eq:deltaclt}
	 \sqrt{n} (H(\bar{R}, \bar{V}) - \fmu)/\tilde{\sigma}_0 \stackrel{d}{\rightarrow}  \gauss{0}{ 1}.
	  \end{equation}
	Replacing the means, variances and covariances in $	\tilde{\sigma}_0^2$ by their moment estimators, we obtain
	  $$
	  \frac{ 1 }{ \bar{V} }\left[ \left(\frac{\bar{R} }{2 \bar{V} }\right)^2 \hat{\Sigma}_{22} +  \hat{\Sigma}_{11} - \frac{\bar{R} }{ \bar{V} } \hat{\Sigma}_{12} \right],    
	  $$
	 which equals $s^2$ by its definition. Due to the finiteness of $\EE{Yh(W)}, \EE{h^2(W)},\Sigma_{11},\Sigma_{12},\Sigma_{22}$, we have
	 $$
	 (\bar{R}, \bar{V}, \hat{\Sigma}_{11}, \hat{\Sigma}_{12}, \hat{\Sigma}_{22}) \stackrel{p}{\rightarrow} (\EE{Yh(W)}, \EE{h^2(W)}, {\Sigma}_{11}, {\Sigma}_{12}, {\Sigma}_{22})
	 $$
	 by the law of large numbers. Then by the continuous mapping theorem, we have
	 $ s \stackrel{p}{\rightarrow} \tilde{\sigma}_0$ as $n\rightarrow \infty$. Combining this with \eqref{eq:H_def} and \eqref{eq:deltaclt}, we have 
	 $$
	 \sqrt{n}\left(\frac{\bar{R}}{ \sqrt{\bar{V}}} - f(\mu)\right)/s \stackrel{d}{\rightarrow}  \gauss{0}{ 1},
	 $$
	as $n\rightarrow \infty$, which establishes \eqref{eq:valid_atmu}. 
	 
	 Now we will verify the positiveness of $\tilde{\sigma}_0$. Recall $\EE{h^2(W)} = 1$ as assumed without loss of generality; we rewrite $\tilde{\sigma}_0^2 $ 
	 	 \begin{eqnarray} \label{eq:tsigma0_AB}
	 	 	 \tilde{\sigma}_0^2 
	 &=& \Var{\nabla H(\theta)^\top \left( 
        \begin{array}{c}
R_i -\EE{Yh(W)} \\
V_i-\EE{h^2(W)}
\end{array} \right) }\\ \nonumber
 &=& \Var{\left(1, -\frac{\EE{Yh(W)}}{2}\right)^\top \left( 
        \begin{array}{c}
R_i-\EE{Yh(W)} \\
V_i-\EE{h^2(W)}
\end{array} \right)} \\ \label{eq:easy_tsigma0}
&=& \EE{(R_i-\EE{Yh(W)}  - 0.5~{\EE{Yh(W)}}(V_i - 1) )^2} \\ \nonumber
&:=& \EE{(A+B)^2}	
	 \end{eqnarray}
	 where the first equality holds due to \eqref{eq:tsigma0_rewrite} and the basic property of variance, the second equality holds due to \eqref{eq:grad_def}, and the last equality is by rearranging and the terms $A,B$ are defined as below:
	   \begin{eqnarray} \label{eq:thm1_A_Def}
	  A &:=& R_i -\Ec{Y_i h(W_i)}{\Xinoj} = Y_i h(W_i) - \Ec{Y_i h(W_i)}{\Xinoj},\\  \label{eq:thm1_B_Def}
	  B &:=& \Ec{Y_i h(W_i)}{\Xinoj} -  \EE{Yh(W)}  - 0.5~ \EE{Yh(W)}( \Varc{h(W_i)}{\Xinoj} -1).
	   \end{eqnarray}
	  Now we can expand \eqref{eq:tsigma0_AB} as
	  \begin{eqnarray} \nonumber
	   \tilde{\sigma}_0^2 = \EE{(A+B)^2 }  
	   &=&  \EE{\Ec{ (A + B)^2}{\Xinoj} }\\ \nonumber
	   &=&  \EE{ \Ec{ A^2}{\Xinoj} - 2 B~\Ec{A}{\Xinoj} + B^2 }\\ \nonumber
	   &=&  \EE{ \Ec{ A^2}{\Xinoj} + B^2 } \\ \label{eq:L2V_bound_v1}
	   &\ge&  \EE{\Varc{Yh(W)}{\Xnoj}},
	  \end{eqnarray}
where the first equality comes from the tower property of conditional expectation, the second equality holds since $B\in \sigalg(\Xinoj)$ and the third equality holds due to $\Ec{A}{\Xinoj}=0$. 

 Since \eqref{eq:L2V_bound_v1} gives a lower bound for $\tilde{\sigma}_0^2$, we are done when $\EE{\Varc{Yh(W)}{\Xnoj}} >0$. Otherwise, we assume $\EE{\Varc{Yh(W)}{\Xnoj}} =0$, then $\tilde{\sigma}_0^2   = 	\nabla H(\theta)^\top~ \Sigma ~\nabla H(\theta)  =0$ implies the degeneracy of $\Sigma$ since the vector $\nabla H(\theta) = (1, -0.5~\EE{Yh(W)})$ is nonzero. It suffices to show it is impossible to have $\Sigma$ degenerate when $\EE{\Varc{Yh(W)}{\Xnoj}} =0$. According to the definition of $\Sigma$, we have that $Yh(W)$ is a linear function of $\Varc{h(W)}{\Xnoj}$ in the degenerate case. This means $ Yh(W) = c \Varc{h(W)}{\Xnoj} +d $ for some constants $c,d$. Then we obtain 
	  $$
	  \Varc{Yh(W)}{\Xnoj} = \Varc{c \Varc{h(W)}{\Xnoj} +d   }{\Xnoj} = c^2 \Var{\Varc{h(W)}{\Xnoj}  } > 0,
	  $$
	  since we are dealing with case (\myRom{4}) where $ \Var{\Varc{h(W)}{\Xnoj}}>0$ and $\Var{Yh(W)}>0$ (thus $c^2 >0$). The above result contradicts the assumption $\EE{\Varc{Yh(W)}{\Xnoj}} =0$. This finishes showing the positiveness of $\tilde{\sigma}_0$, 	  
\end{proof}
}

        \acc{
         \subsubsection{Lemma \ref{lem:doublerobust}}\label{pf:lem:doublerobust}  
    \begin{proof}[Proof of Lemma \ref{lem:doublerobust}]
      \acc{Recall the notations $g(z) = \Ec{\mu(X, Z)}{Z=z}$ and $h(w)=h(x,z)=\mu(x,z)- g(z)$ introduced in \eqref{eq:def_ggstar} and \eqref{eq:def_hhstar}. When $Q_x = P_{X\mid Z}$, we immediately have $\mu(X, Z) - \Epc{Q_x}{\mu(X, Z)}{Z} = \mu(X,Z) -  \Ec{\mu(X, Z)}{Z} = h(W)$, thus
      $$
      f_{ Q_y, Q_{x}}(\mu) = \frac{\EE{(Y - \Epc{Q_y}{Y}{Z} )h(W) }}{\sqrt{\EE{ h^2(W)}}} = \frac{\EE{Yh(W) }}{\sqrt{\EE{ h^2(W)}}} = f(\mu)
      $$ 
      where the second equality holds since $\EE{\Epc{Q_y}{Y}{Z} h(W)} = 0$ by \eqref{eq:5_lem:max} and the last equality holds by \eqref{eq:equiv_fmu}. Hence $ f_{ Q_y, P_{X\mid Z} }(\mu) =  f(\mu)$ is proved. For convenience, we also use the following notations throughout this proof: $ P_x := P_{X\mid Z}$, $g_y(Z):= \Epc{Q_y}{Y}{Z}$, $g_x(Z):= \Epc{Q_x}{\mu(X,Z)}{Z}$. Thus we rewrite $f_{ Q_y, Q_{x}}(\mu)$ in \eqref{eq:approxfg} as
      \begin{equation}\label{eq:db_f_equi}
       f_{ Q_y, Q_{x}}(\mu) = \frac{\EE{(Y - g_y(Z) )(\mu(X, Z) -g_x(Z) )}}{\sqrt{\EE{(\mu(X, Z) - g_x(Z) )^2}}} \le \frac{\sqrt{\EE{(Y - g_y(Z) )^2}}\sqrt{\EE{(\mu(X, Z) -g_x(Z) )^2}}}{\sqrt{\EE{(\mu(X, Z) - g_x(Z) )^2}}},
       \end{equation} 
       where the inequality holds by the Cauchy--Schwarz inequality.
%          \begin{equation}\label{eq:approxfg}
%        f_{ Q_y, Q_{x}}(\mu) := \frac{\EE{(Y - \Epc{Q_y}{Y}{Z} )(\mu(X, Z) - \Epc{Q_x}{\mu(X, Z)}{Z} )}}{\sqrt{\EE{(\mu(X, Z) - \Epc{Q_x}{\mu(X, Z)}{Z})^2}}},
%    \end{equation}     
%     \ljmargin{When $\mu(X, Z) - g_{x}(Z) = 0$}{throughout this proof you still use the old notation $g_0$ and $g_\mu$, please update it to the current notation of $g_y$ and $g_x$},
     If $\EE{(\mu(X, Z) - g_x(Z) )^2} = 0$, $ f_{ Q_y, Q_{x}}(\mu) $ is $0/0=0$ by convention and thus $f_{g_y, g_x}(\mu) \le \Ij + \Delta$ automatically holds due to the non-negativeness of $\Delta$ and $\Ij$. 
    %  When $\mu(X, Z) - g_{x}(Z) != 0$. 
    %  Otherwise, we assume $\EE{(\mu(X, Z) - g_{x}(Z))^2}=1$ since rescaling $\mu(X, Z) - g_{x}(Z)$ by $\sqrt{\EE{(\mu(X, Z) - g_{x}(Z))^2}}$ does not change $f_{g_y, g_x}(\mu)$ neither the corresponding inferential results. 
     Otherwise, we notice that 
     \begin{align}\nonumber
      \EE{(\mu(X, Z) - g_{x}(Z))^2} &= \EE{(\mu(X, Z) - \Ec{\mu(X, Z)}{Z} +  \Ec{\mu(X, Z)}{Z}  - g_{x}(Z))^2} \\ \nonumber
      & = \EE{(\mu(X, Z) - g(Z))^2} + \EE{(g(Z)  - g_{x}(Z))^2} \\ \label{eq:denom_bd}
      & \ge \EE{h^2(W)},
     \end{align}
     where the first equality holds due to rearranging, the second equality holds since 
     \begin{align} \nonumber
      ~ & \EE{(\mu(X, Z) - \Ec{\mu(X, Z)}{Z})(\Ec{\mu(X, Z)}{Z}  - g_{x}(Z))} = \EE{h(W)(g(Z)  - g_{x}(Z)) } = 0 
     \end{align}
     by \eqref{eq:5_lem:max}, and the last inequality holds due to the definition of $h(w)$ and the non-negativeness of $\EE{(g(Z)  - g_{x}(Z))^2}$. We note that $ f_{ Q_y, Q_{x}}(\mu) \le 0 \le \Ij + \Delta$ when the numerator $\EE{(Y - g_y(Z))(\mu(X, Z) - g_{x}(Z))} \le 0$. Thus it remains to deal with the case where $\EE{(Y - g_y(Z))(\mu(X, Z) - g_{x}(Z))} > 0$. Now we expand $ f_{ Q_y, Q_{x}}(\mu)$ and bound it as below:
    \begin{align}\nonumber
               f_{ Q_y, Q_{x}}(\mu) 
              &= ~\frac{\EE{(Y - g_y(Z))(\mu(X, Z) - g_{x}(Z))}}{\sqrt{\EE{(\mu(X, Z) - g_{x}(Z))^2}}}\\ \nonumber
              &= ~\frac{\EE{(Y - g_y(Z))(\mu(X, Z) - g(Z))}}{\sqrt{\EE{(\mu(X, Z) - g_{x}(Z))^2}}} 
              +
              \frac{\EE{(Y - g_y(Z))(g(Z) - g_{x}(Z))}}{\sqrt{\EE{(\mu(X, Z) - g_{x}(Z))^2}}} \\ \nonumber
              & \le ~ \frac{\EE{(Y - g_y(Z)) (\mu(X, Z) -  g(Z) )}}{\sqrt{ \EE{h^2(W)}}} +  
              \frac{\EE{( Y - g_y(Z))(g(Z) - g_{x}(Z))}}{\sqrt{   \EE{(\mu(X, Z) - g_{x}(Z))^2} }} \\ \nonumber
              &= ~ \frac{\EE{Yh(W) )}}{\sqrt{ \EE{h^2(W)}}} + \frac{\EE{( \epsilon(Y, W) + \hstar(W) + \gstar(Z) - g_y(Z))(g(Z) - g_{x}(Z))}}{\sqrt{ \EE{(\mu(X, Z) - g_{x}(Z))^2}  }} \\ \nonumber
              & = ~ f(\mu) + \frac{\EE{( \gstar(Z) - g_y(Z))(g(Z) - g_{x}(Z))}}{{\sqrt{ \EE{(\mu(X, Z) - g_{x}(Z))^2} }}}, \\ 
              &\le ~  \Ij+ \frac{\EE{ | \gstar(Z) - g_y(Z)|\cdot |g(Z) - g_{x}(Z)|}}{{ \sqrt{ \EE{h^2(W)} } }}
              \label{eq:db_2f}
            %   \\
            %  &:= \mathrm{II}_1 + \mathrm{II}_2, 
            %  \label{eq:db_II_decom}
            % & = f(\mu) +  
            %   \frac{\EE{( \epsilon(Y, W) + \hstar(W) + \gstar(Z) - g_y(Z))(g(Z) - g_{x}(Z))}}{\sqrt{ \EE{\varc{\mu(X,Z)}{Z}}  }}  \\ \nonumber
            %  & = f(\mu) +  \frac{\EE{( \gstar(Z) - g_y(Z))(g(Z) - g_{x}(Z))}}{\sqrt{\EE{h^2(W)} }}
    \end{align}
     where the first equality comes from \eqref{eq:db_f_equi}, the second equality is by rearranging, the first inequality holds due to $\EE{(Y - g_y(Z))(\mu(X, Z) - g_{x}(Z))} > 0$ and \eqref{eq:denom_bd}, the third equality holds since $\EE{g_y(Z)(\mu(X, Z) -  g(Z)) } =\EE{g_y(Z) h(W) } =0$ by \eqref{eq:5_lem:max} and we expand $Y$ as in \eqref{eq:4_lem:max}, the last equality holds by \eqref{eq:5_lem:max}, \eqref{eq:h2W} and \eqref{eq:equiv_fmu}, and the last inequality holds due to Lemma \ref{lem:max}, $\EE{ | \gstar(Z) - g_y(Z)|\cdot |g(Z) - g_{x}(Z)|} > 0$ and \eqref{eq:denom_bd}.
%      we can rewrite $\mathrm{II}_1$ as
%      \begin{equation}
%      \label{eq:db_II1_bound}
%       \mathrm{II}_1 =  \frac{\EE{Y(\mu(X, Z) -  g(Z) )}}{\sqrt{\EE{\varc{\mu(X,Z)}{Z}}}} = 
%       =f(\mu)
%         \end{equation}
%   where the second equality holds due to  \eqref{eq:equiv_fmu}
%      \eqref{eq:5_lem:max}. Regarding $\mathrm{II}_2$, we expand $Y$ as in \eqref{eq:4_lem:max} and apply \eqref{eq:5_lem:max} and \eqref{eq:h2W}, thus obtain
%      \begin{align}\nonumber
%           \mathrm{II}_2 
%           &~= \frac{\EE{( \epsilon(Y, W) + \hstar(W) + \gstar(Z) - g_y(Z))(g(Z) - g_{x}(Z))}}{\sqrt{ \EE{\varc{\mu(X,Z)}{Z}}  }} \\
%           &~= \frac{\EE{( \gstar(Z) - g_y(Z))(g(Z) - g_{x}(Z))}}{\sqrt{\EE{h^2(W)} }}.
%           \label{eq:db_II2_bound}
%      \end{align}
    %  Combining \eqref{eq:db_II_decom}, \eqref{eq:db_II1_bound} with \eqref{eq:db_II2_bound} results 
    %  \begin{equation}\label{eq:db_2f}
    % f_{ Q_y, Q_{x}}(\mu)  \le f(\mu) + \frac{\EE{( \gstar(Z) - g_y(Z))(g(Z) - g_{x}(Z))}}{\sqrt{\EE{h^2(W)} }}.
    %  \end{equation}
     In the following, we bound $\EE{ | \gstar(Z) - g_y(Z)|\cdot |g(Z) - g_{x}(Z)|}$. Since we denote $g_x(z) =\Epc{Q_x}{\mu(X,Z)}{Z=z} $ with $Q_x$ being the estimate of the true conditional distribution of $X$ given Z (i.e., $P_{X\mid Z}$, abbreviated as $P_x$), we can rewrite $|g(Z) - g_x(Z)|$ then bound it as:
     \begin{eqnarray} \nonumber
     |g(Z) - g_x(Z)| 
     &=& \left|\Epc{P_x}{\mu(X,Z)}{Z} - \Epc{Q_x}{\mu(X,Z)}{Z} \right|\\ \nonumber
     &=& \left| \Epc{P_x}{h(W) + g(Z)}{Z}   - \Epc{Q_x}{h(W) + g(Z)}{Z}\right| \\ \nonumber
     &=&  \left|\Epc{P_x}{h(W)}{Z}   - \Epc{Q_x}{h(W)}{Z}\right| \\ \nonumber
     &=&  \left|\int h(x,Z)(1- \delta(x,Z))d P_{X\mid Z}(x\mid Z)\right| \\
     &=& |\Epc{P_x}{h(W)(1- \delta(W))}{Z}| \le \sqrt{\Epc{P_x}{h^2(W)}{Z}}\sqrt{\chi^2\left(Q_x\|P_{X\mid Z}\right)},
     \label{eq:g_diff_bound}
     \end{eqnarray}
     where the second equality holds due to \eqref{eq:def_hhstar}, the third equality holds since $g(Z) \in \sigalg(Z)$, the fourth equality holds since $Q_x$ is absolutely continuous with respect to $P_{X\mid Z}$ and we denote $\delta(x,Z):=\frac{dQ_x(x\mid Z)}{dP_{X\mid Z}(x\mid Z)}$ and rewrite the third line in the form of integral, and the last inequality holds by the Cauchy--Schwarz inequality and the definition of the $\chi^2$ divergence. Hence replacing the term $| g(Z) - g_x(Z) |$ in \eqref{eq:db_2f} by its upper bound in \eqref{eq:g_diff_bound}
     produces the following
     \begin{equation}\label{eq:DRf_bound}
      f_{ Q_y, Q_{x}}(\mu)   \le \Ij + \frac{\EE{|\gstar(Z) - g_y(Z)| \sqrt{\Epc{P_x}{h^2(W)}{Z}}\sqrt{\chi^2\left(Q_x\|P_{X\mid Z}\right)}}}{\sqrt{\EE{h^2(W)} }}.
     \end{equation}
     Now we will bound $\mathrm{III}:=\EE{ |\gstar(Z) - g_y(Z)| \sqrt{\Epc{P_x}{h^2(W)}{Z}}\sqrt{\chi^2\left(Q_x\|P_{X\mid Z}\right)}}$ in three different versions. 
     
     Firstly, we apply the Cauchy--Schwarz inequality to $\sqrt{\Epc{P_x}{h^2(W)}{Z}}\sqrt{\chi^2\left(Q_x\|P_{X\mid Z}\right)}$ and $|\gstar(Z) - g_y(Z)|$, producing
     \begin{eqnarray}\nonumber
     \mathrm{III}
     &=& \EE{| \gstar(Z) - g_y(Z)| \sqrt{\Epc{P_x}{h^2(W)}{Z}}\sqrt{\chi^2\left(Q_x\|P_{X\mid Z}\right)}}\\ \nonumber
     &\le & \sqrt{\EE{( \gstar(Z) - g_y(Z) )^2} } \sqrt{\EE{ \Epc{P_x}{h^2(W)}{Z} \chi^2\left(Q_x\|P_{X\mid Z}\right)}} \\ \nonumber
     & = & \sqrt{\EE{( \gstar(Z) - g_y(Z) )^2} } \sqrt{\EE{ \Epc{P_x}{h^2(W) \chi^2\left(Q_x\|P_{X\mid Z}\right) }{Z}}}\\
     &=&  \sqrt{\EE{( \gstar(Z) - g_y(Z) )^2} } \sqrt{\EE{h^2(W) \chi^2\left(Q_x\|P_{X\mid Z}\right) }},
     \label{eq:DRbound_v1}
     \end{eqnarray}
     where the second equality holds since $\chi^2\left(Q_x\|P_{X\mid Z}\right) \in \sigalg(Z)$, and the last equality holds due to the notation $P_x = P_{X\mid Z}$ and the law of total expectation. Noting the definition of $\mathrm{III}$ and combining \eqref{eq:DRf_bound} with \eqref{eq:DRbound_v1} yields 
     $$
      f_{ Q_y, Q_{x}}(\mu) \le \Ij + \sqrt{\EE{( \gstar(Z) - g_y(Z) )^2} } \sqrt{\EE{\left(\frac{h(W)}{\sqrt{\EE{h^2(W)}}} \right)^2\chi^2\left(Q_x\|P_{X\mid Z}\right) }}.
     $$
    Recalling the notations: 
    \begin{equation}\label{eq:recall_notation}
        \gstar(Z) = \Ec{Y}{Z},\quad g_y(Z) = \Epc{Q_y}{Y}{Z},\quad h(W) = \mu(X, Z) -  \Ec{\mu(X, Z)}{Z},
    \end{equation}
    and $w_\mu(X,Z) = \frac{ (\mu(X,Z) - \Ec{\mu(X,Z)}{Z})^2 }{\EE{ (\mu(X,Z) - \Ec{\mu(X,Z)}{Z})^2 }} $, \eqref{eq:DRDelta_v1} is thus established. 
    
    Secondly, we apply the Cauchy--Schwarz inequality to $\sqrt{\chi^2\left(Q_x\|P_{X\mid Z}\right)}$ and $|\gstar(Z) - g_y(Z)|\sqrt{\Epc{P_x}{h^2(W)}{Z}}$ in $\mathrm{III}$, producing
          \begin{eqnarray}\nonumber
     \mathrm{III}
     &=& \EE{|\gstar(Z) - g_y(Z)| \sqrt{\Epc{P_x}{h^2(W)}{Z}}\sqrt{\chi^2\left(Q_x\|P_{X\mid Z}\right)}}\\ \nonumber
     &\le & \sqrt{\EE{ \chi^2\left(Q_x\|P_{X\mid Z}\right) } } \sqrt{\EE{ \Epc{P_x}{h^2(W)}{Z} ( \gstar(Z) - g_y(Z))^2 }} \\ \nonumber
     & = & \sqrt{\EE{ \chi^2\left(Q_x\|P_{X\mid Z}\right) } } \sqrt{\EE{ \Epc{P_x}{h^2(W) ( \gstar(Z) - g_y(Z))^2 }{Z}}}\\
     &=&  \sqrt{\EE{ \chi^2\left(Q_x\|P_{X\mid Z}\right) } }\sqrt{\EE{h^2(W) ( \gstar(Z) - g_y(Z))^2}}, 
     \label{eq:DRbound_v2}
     \end{eqnarray}
      where the second equality holds since $( \gstar(Z) - g_y(Z))^2 \in \sigalg(Z)$, and the last equality holds due to the notation $P_x = P_{X\mid Z}$ and the law of total expectation. Combining \eqref{eq:DRf_bound} and \eqref{eq:recall_notation} with \eqref{eq:DRbound_v2} and recalling  $w_\mu(X,Z) = \frac{ (\mu(X,Z) - \Ec{\mu(X,Z)}{Z})^2 }{\EE{ (\mu(X,Z) - \Ec{\mu(X,Z)}{Z})^2 }} =\frac{h^2(W)}{\EE{h^2(W)}} $ yields a different bound on $ f_{ Q_y, Q_{x}}(\mu)$, namely,
     \begin{align}\label{eq:DRDelta_v2}
     \begin{split}
  &    f_{ Q_y, Q_{x}}(\mu) \le f(\mu) + \Delta', \text{   where }\\
    \Delta'& ~= \sqrt{\EE{ \chi^2\left(Q_x \|P_{X\mid Z}\right) } } \sqrt{\EE{w_\mu(X,Z) ( \Ec{Y}{Z}- \Epc{Q_y}{Y}{Z} )^2 }}.
         \end{split}
     \end{align}
     
	Lastly, we apply the Cauchy--Schwarz inequality to $\sqrt{\Epc{P_x}{h^2(W)}{Z}}$ and $|\gstar(Z) - g_y(Z)|\sqrt{\chi^2\left(Q_x\|P_{X\mid Z}\right)}$ in $\mathrm{III}$, producing
	    \begin{eqnarray}\nonumber
     \mathrm{III}
     &=& \EE{ |\gstar(Z) - g_y(Z)| \sqrt{\Epc{P_x}{h^2(W)}{Z}}\sqrt{\chi^2\left(Q_x\|P_{X\mid Z}\right)}}\\ \nonumber
     &\le & \sqrt{\EE{ \Epc{P_x}{h^2(W)}{Z} } } \sqrt{\EE{ \chi^2\left(Q_x\|P_{X\mid Z}\right)( \gstar(Z) - g_y(Z))^2 }} \\ \nonumber
     & = & \sqrt{\EE{ h^2(W)} }  \sqrt{\EE{ \chi^2\left(Q_x\|P_{X\mid Z}\right)( \gstar(Z) - g_y(Z))^2 }} \\
     &=&  \sqrt{\EE{ h^2(W)} } \left(\EE{( \gstar(Z) - g_y(Z))^4}\right)^{1/4}  \left(\EE{\left( \chi^2\left(Q_x\|P_{X\mid Z}\right)\right)^2}\right)^{1/4},
     \label{eq:DRbound_v3}
     \end{eqnarray}
      where the second equality holds due to the notation $P_x = P_{X\mid Z}$ and the law of total expectation, and the last inequality holds by applying the Cauchy--Schwarz inequality again. Combining \eqref{eq:DRf_bound} and \eqref{eq:recall_notation}  with \eqref{eq:DRbound_v3} yields a final different bound on $  f_{ Q_y, Q_{x}}(\mu)  $, namely,
    \begin{align}\label{eq:DRDelta_v3}
     \begin{split}
  &    f_{ Q_y, Q_{x}}(\mu) \le f(\mu) + \Delta'', \text{   where }\\
    \Delta''& ~= \left(\EE{( \Ec{Y}{Z}- \Epc{Q_y}{Y}{Z})^4}\right)^{1/4}  \left(\EE{\left( \chi^2\left(Q_x \|P_{X\mid Z}\right)\right)^2}\right)^{1/4}.
         \end{split}
     \end{align}
     }

%     |\Epc{P}{\mu(W)}{\Xnoj} -\Epc{Q}{\mu(W)}{\Xnoj}|
%     &=&\left|\int \mu(x,Z)(1- \omega(x,Z))d Q_{X\mid Z}(x\mid Z)\right| \\ \nonumber
%     &\le& \sqrt{\Epc{Q_{X\mid Z}}{\mu^2(X,Z)}{\Xnoj}} \sqrt{\int (1-w(x,Z))^2 d Q_{X\mid Z}(x\mid Z)}\\
%     &=&  \sqrt{\Epc{Q_{X\mid Z}}{\mu^2(W)}{\Xnoj}} \sqrt{\chi^2\left(P_{X\mid Z}\|Q_{X\mid Z}\right)},
%     \label{eq:2_lem:robust_gap}
%     where $\delta(x,Z) = \frac{d P_{X\mid Z}(x\mid Z)}{d Q_{X\mid Z}(x\mid Z)}$ and the above inequality is from the Cauchy--Schwarz inequality.
%      the last inequality holds by the Cauchy--Schwarz inequality, and $\Delta$ in the last line denotes the term $
%        \sqrt{\EE{(g_y(Z) - \Ec{\mustar(X, Z)}{Z})^2} \EE{(g_{x}(Z) - \Ec{\mu(X, Z)}{Z})^2}/\EE{\varc{\mu(X,Z)}{Z}} }$.  Thus Lemma \ref{lem:doublerobust} is established.
%        }{I don't think this is valid---by ``assuming'' $\EE{(\mu(X, Z) - g_{x}(Z))^2}=1$, you are effectively redefining $\mu$ and $g_\mu$, but the big-O bound as presented in the main text is in terms of the originally-defined $\mu$ and $g_\mu$. I don't think this is too big a problem, it just means our big-O bound needs to be a bit more complicated with a normalizing term (I guess if we keep the normalizing term, we can drop the big-O notation). I think it is still interpretable as a double robust type of bound, but it needs a bit more explanation in the main text where presented. let's discuss}. Thus Lemma \ref{lem:doublerobust} is established.
    \end{proof}
    }
    
    \subsection{Proofs in Section~\ref{sec:ucb_hardness}}
    \label{app:ucb_hardness}

    \begin{proof}[Proof of Theorem \ref{thm:UCB_hardness}]
        We prove by contradiction. Suppose there exists an upper confidence bound procedure ensuring asymptotic coverage such that \eqref{eq:UCB_trivial} holds, that is, there exists a joint law over $(Y,X,Z)$, denoted by $F_{\infty}\in \calF$ such that 
        \begin{equation}\label{eq:UCB_nontrivial}
            \limsup_{n\rightarrow\infty} \Pp{\infty}{U(D_n)-\Ij^2_{F_\infty} < \Ep{\infty}{\Varpc{\infty}{Y}{X,\Xnoj}}} > \alpha.
        \end{equation}
        where $\mathbb{P}_{\infty},~ \mathbb{E}_{\infty},~\mathrm{Var}_{\infty}$ denote that the data generating distribution for i.i.d. sample $D_n$ is $F_{\infty}$. Note that $\Pp{\infty}{U(D_n)-\Ij^2_{F_\infty} < \Ep{\infty}{\Varpc{\infty}{Y}{X,\Xnoj}}} = \Pp{\infty}{U(D_n) < \Ep{\infty}{\Varpc{\infty}{Y}{\Xnoj}}}$ by the definition of $\Ij^2_{F_\infty}$.
        % denote the $X$'s distribution under $F_{\infty}$ by $F_{\infty}(x)$ and 
        Let $\lambda_{1} = \Ep{\infty}{\Varpc{\infty}{Y}{\Xnoj}} $. When $\lambda_1 =0$, we have $\Ep{\infty}{\Varpc{\infty}{Y}{\Xnoj}}= \Ep{\infty}{\Varpc{\infty}{Y}{X,\Xnoj}} = \Ij^2_{F_{\infty}}=0 $ and immediately show 
        \begin{equation}\nonumber
            \limsup_{n\rightarrow\infty} \Pp{\infty}{U(D_n)-\Ij^2_{F_\infty} < \Ep{\infty}{\Varpc{\infty}{Y}{X,\Xnoj}}}  = \limsup_{n\rightarrow\infty} \Pp{\infty}{U(D_n) < \Ij^2_{F_\infty} } \le \alpha,
        \end{equation} 
        which contradicts \eqref{eq:UCB_nontrivial}. In the following we consider the case where $\lambda_{1} >0$.
        % then we have $\lambda_1>0$, since otherwise $\Ep{\infty}{\Varpc{\infty}{Y}{\Xnoj}}=\Ij^2_{F_{\infty}}=0 $ and $\eqref{eq:UCB_nontrivial}$ does not hold.
        % and $ \lambda_{2} =\Ij(F_{\infty})$ and $\lambda_{2}=(c- \Ij(F_{\infty}) )/2$. 
        Now we construct a sequence of joint laws over $(Y,X,Z)$, denoted by $\{F_{k}\}_{k=1}^{\infty}$, $F_k \in \calF$, such that the conditional distribution of $\eps \mid X,Z$ 
        is the same as that under $F_{\infty}$, where $\eps = Y - \Ec{Y}{X,Z}$, that is, 
        % $(X,Z)$ follows the same distribution as that under $F_{\infty}$ and so does the conditional distribution of $\eps \mid X,Z$, where $\eps = Y - \Ec{Y}{X,Z}$, that is,
        \begin{eqnarray} \label{eq:eps_same}
       \Ppc{k}{\eps}{X,Z} =  \Ppc{\infty}{\eps}{X,Z},  &~~&\forall~ k \ge 1
       %~~ X\sim F_{\infty}(x) ~~\text{under}~~ F_{k},~\forall~ k \ge 1,
       \end{eqnarray}
        and there exist Borel sets $A_{k}\in \mathbb{R}^{p-1}$ satisfying the following:
        % \lz{P xz is continuius and delete A.29}
        \begin{enumerate}[(a)]
            \item $\Pp{k}{\Xnoj \in A_k} = 1/k$;
            \item $\Ppc{k}{Y}{X,Z} =  \Ppc{\infty}{Y}{X,Z}$ when $\Xnoj \notin A_k$;
            % \item $\Ppc{k}{\eps}{X} =  \Ppc{\infty}{\eps}{X}$ when $\Xnoj \in A_k$;
            \item ${\Epc{k}{\mustar_k(X,Z)}{\Xnoj}} = {\Epc{\infty}{\mustar_\infty(X,Z)}{\Xnoj}}$ when $\Xnoj \in A_k$;
            \item ${\Varpc{k}{\mustar_k(X,Z)}{\Xnoj}} = {\Varpc{\infty}{\mustar_\infty(X,Z)}{\Xnoj}} + {k}\left(2\lambda_1 - \Ij^2_{F_{\infty}} \right)$ when $\Xnoj \in A_k$;
        \end{enumerate}%\frac{c+\lambda_1}{2}
        % \begin{eqnarray}\nonumber
        % \Pp{k}{\Xnoj \in A_k} = 1/k;~~
        % \Ppc{k}{Y}{X} =  \Ppc{\infty}{Y}{X} ~~\text{when}~~ \Xnoj \notin A_k  \\ \nonumber 
        %  \Ppc{k}{\eps}{X} =  \Ppc{\infty}{\eps}{X},~~ {\Varpc{k}{\mustar_k(X)}{\Xnoj}} = {\Varpc{\infty}{\mustar_\infty(X)}{\Xnoj}} + {k}\left(\frac{c+\lambda_1}{2} - \Ij^2(F_{\infty}) \right)~~\text{when}~~ \Xnoj \in A_k  
        % \end{eqnarray}
        where $\mathbb{P}_{k}~,\mathbb{E}_k,~ \mathrm{Var}_k$ denote that the data generating distribution for i.i.d. sample $D_n$ is $F_{k}$, and $\mustar_k(X,Z):=\Epc{k}{Y}{X,Z},\mustar_\infty(X,Z):=\Epc{\infty}{Y}{X,Z}$. According to the statement of Theorem \ref{thm:UCB_hardness}, the covariate distribution $P_{X, Z}$ is continuous and fixed. Therefore we have (a) is possible and immediately know
        \begin{eqnarray}\label{eq:X_same}
        \Pp{k}{X,Z} =  \Pp{\infty}{X,Z}, &~~&\forall~ k \ge 1.
       \end{eqnarray}  
        Note here $\Epc{k}{\cdot}{\Xnoj}, \Varpc{k}{\cdot}{\Xnoj}$ are the same as $\Epc{\infty}{\cdot}{\Xnoj}, \Varpc{\infty}{\cdot}{\Xnoj}$ due to \eqref{eq:X_same}. Hence we can calculate $\Ij_{F_k}$ through the following
        \begin{eqnarray} \nonumber
             \Ij^2_{F_k} - \Ij^2_{F_{\infty}} 
             &=& \Ep{\infty}{\indicat{A_k} \left( {\Varpc{\infty}{\mustar_k(X,Z)}{\Xnoj}}-   {\Varpc{\infty}{\mustar_\infty(X,Z)}{\Xnoj}}\right) }\\ \nonumber
             &=& \Ep{\infty}{ \indicat{A_k} {k}\left({2\lambda_1}{} - \Ij^2_{F_{\infty}} \right)} \\  \label{eq:Ij_gap}
             &=& {2\lambda_1}{} - \Ij^2_{F_{\infty}}=:\lambda_2  ,
        \end{eqnarray}
        where the first equality comes from the definition of $\Ij^2_F$, \eqref{eq:X_same} and (b), the second equality holds due to (d) and the third equality holds due to (a). Therefore $ \Ij^2_{F_k} = 2\lambda_1$. We should also check whether $F_k$ belongs to $\calF$. Indeed, we consider the following
        \begin{eqnarray} \nonumber
            \Varp{k}{Y} &=& \Ep{k}{\Varpc{k}{Y}{X,Z}} + \Varp{k}{ \Epc{k}{Y}{X,Z}} \\ \nonumber
                        &=& \Ep{k}{\Varpc{k}{\eps}{X,Z}} + \Varp{k}{\Epc{k}{Y}{\Xnoj}} + \Ij^2_{F_k} \\  \nonumber
                        &=& \Ep{\infty}{\Varpc{k}{\eps}{X,Z}} + \Varp{\infty}{\Epc{k}{Y}{\Xnoj}} + \Ij^2_{F_k} \\ \nonumber
                        &=& \Ep{\infty}{\Varpc{\infty}{\eps}{X,Z}} + \Varp{\infty}{\Epc{\infty}{Y}{\Xnoj}} + \Ij^2_{F_k} \\ \nonumber
                        &=& \Ep{\infty}{\Varpc{\infty}{\eps}{X,Z}} + \Varp{\infty}{\Epc{\infty}{Y}{\Xnoj}} + \Ij^2_{F_\infty} + \lambda_2 \\ \nonumber
                        &=& \Varp{\infty}{Y} + \lambda_2 < \infty,
        \end{eqnarray}
        where the first equality comes from the law of total variance, the second equality holds as a result of the decomposition $Y=\mustar(X,Z) + \eps$ and the equivalent expression of the mMSE gap \eqref{eq:mMSEgap_form2}, the third equality holds due to \eqref{eq:X_same}, the fourth equality holds due to \eqref{eq:eps_same}, (b) and (c), the fifth equality comes from \eqref{eq:Ij_gap}. Thus we verify $F_k \in \calF,~\forall~ k \ge 1$.
        % $$
        % \Ij(F_k)= \Ij(F_{\infty}) + (c-\lambda_{1})/2 < c.
        % $$
        As the upper confidence bound procedure $U$ ensures asymptotic coverage validity and $ \Ij^2_{F_k} = 2\lambda_1$, we have
        \begin{equation}\label{eq:UCB_uniform_cover}
              \Pp{k}{ U(D_n)\ge  {2\lambda_1}{} } \ge 1-\alpha + o_{k}(1)
        \end{equation}
        where the subscript in $o_{k}(1)$ emphasizes that the convergence is with respect to data generating function $F_k$. Remark we only require for fixed $k$, $o_{k}(1) \rightarrow 0$ as $n\rightarrow \infty$. Also notice the following
        \begin{equation}\label{eq:close_in_TV}
            \left|  \Pp{\infty}{ U(D_n)\ge  {2\lambda_1}{} }  -  \Pp{k}{ U(D_n)\ge {2\lambda_1}{} } \right| \le d_{TV}(F_k, F_\infty) \le \frac{1}{k}, ~~~~\forall~ k\ge 1,
        \end{equation}
        where the first inequality comes from the property of total variation distance and the second equality holds as a result of (a), according to the construction of $F_k$. Combining \eqref{eq:UCB_uniform_cover} and \eqref{eq:close_in_TV} yields the following
        \begin{equation}\nonumber
            \Pp{\infty}{U(D_n) \ge  {2\lambda_1}{} } \ge 1 - \alpha -1/k + o_k(1), ~~~~\forall~ k\ge 1. 
        \end{equation}
        First let $n\rightarrow \infty$ then send $k$ to infinity, we obtain
        \begin{equation}\nonumber
            \liminf_{n \rightarrow \infty} \Pp{\infty}{U(D_n) \ge 2\lambda_1} \ge 1- \alpha,
        \end{equation}
        which contradicts 
        \begin{equation}\nonumber
             \limsup_{n \rightarrow \infty} \Pp{\infty}{U(D_n) < \Ep{\infty}{\Varpc{\infty}{Y}{\Xnoj}} =\lambda_1 } > \alpha .
        \end{equation}
        % \begin{eqnarray}\label{eq:dense3}
        %     \mathbb{P}_{k}(X_{\noj}\in A_{k})=1/k\\ \mathrm{Var}_{k}(\mu^{\star}(X)|X_{\noj}) = \mathrm{Var}_{\infty}(\mu^{\star}(X)|X_{\noj})+k(\lambda_{1} - \lambda_{2}) + k(c-\lambda_{1})/2 ~~\text{when}~X_{\noj}\in A_{k}\\
        %     \mathbb{P}_{k}(\epsilon|X)=\mathbb{P}_{\infty}(\epsilon|X)~~\text{when}~X_{\noj}\in A_{k},~~   \mathbb{P}_{k}(Y|X)=\mathbb{P}_{\infty}(Y|X)~~\text{when}~X_{\noj}\notin A_{k}\\
        % \end{eqnarray}
        % hence we have $\pi_{j}(\mu^{\star})=\lambda_{1}+(c-\lambda_{1})/2 < c 
        % $, thus $F_{k}\in \calF$. By the exact coverage validity,
        % \begin{equation}
        %     \mathbb{P}_{k}(B(X,Y)\ge (c+\lambda_{1})/2)\ge 1-\alpha
        % \end{equation}
        % together with 
        % \begin{equation}
        % |\mathbb{P}_{\infty}(B(X,Y)\ge (c+\lambda_{1})/2) -  \mathbb{P}_{k}(B(X,Y)\ge (c+\lambda_{1})/2)|\le d_{TV}(P_{k},P_{\infty})\le 1/k, ~~ \forall~ k \ge 1
        % \end{equation}
        % yields $\mathbb{P}_{\infty}(B(X,Y)\ge (c+\lambda_{1})/2) \ge 1-\alpha -1/k,~\forall ~k\ge 1$. Since $c>\lambda_{1}$, we further obtain $\mathbb{P}_{\infty}(B(X,Y)> \lambda_{1}) \ge 1-\alpha$, which is a contradiction.
    \end{proof}

    \acc{
     \subsection{Proofs in Section~\ref{sec:computation}}
     \begin{proof}[Proof of Theorem \ref{thm:main_general}]
     \label{pf:thm:main_general}
	As in the proof of Theorem~\ref{thm:main}, we immediately have coverage validity when $\mu(X,Z) \in \sigalg(\Xnoj)$. Otherwise, it suffices to show 
	  \begin{equation}\label{eq:general_valid}
	  \PP{\frac{\bar{R}}{ \sqrt{\bar{V}  }} -  \frac{z_{\alpha}s}{\sqrt{n}}  \le \thetamu} \ge 1 - \alpha -o(1).
	  \end{equation}
	  for any given $K > 1$, where the sample mean $(\bar{R},\bar{V})$ and sample covariance matrix $\hat{\Sigma}$ are defined the same way as in Algorithm \ref{alg:MOCK} except that $R_i, V_i$ are replaced by their Monte Carlo estimators $ R_i^K, V_i^K$ as defined below.
	    \begin{align}
	    \label{eq:Rik_ViK_def}
	    \begin{split}
	    R_i^K &:= Y_i\left(\mu(X_i,Z_i) - \frac{1}{K}\sum_{k=1}^{K} \mu(X_i^{(k)} ,Z_i)\right),\\
	    V^{K}_{i} &:= \frac{1}{K-1}\sum_{k=1}^{K}\left( \mu(X_i^{(k)} ,Z_i)-  \frac{1}{K}\sum_{k=1}^{K} \mu(X_i^{(k)} ,Z_i) \right)^2,
	   	\end{split}
	   	\end{align}
	   for any fixed $K>1$. 
	   
	   First we verify 
	   \begin{equation}\label{eq:mc_esti_yes}
	   	\EE{R_i^K} = \EE{Yh(W)}, \quad \EE{V_i^K} = \EE{h^2(W)}.
	   \end{equation} 
	   By the construction of the null samples, $X_i^{(k)}$ satisfy the following properties:
	 \begin{eqnarray}\label{eq:nulls_cond_indep}
	 \{X_i^{(k)}\}_{k=1}^{K} \independent (X_{i},Y_i) \mid \Xinoj,\\
	  \{X_i^{(k)}\}_{k=1}^{K}\mid \Xinoj \iid X_{i} \mid \Xinoj,
	 \label{eq:nulls_cond_ident}
	 \end{eqnarray}
	 thus we have 
	 \begin{align}  \label{eq:mean_estimator}
	 \Ecmid{ \frac{1}{K}\sum_{k=1}^{K}\mu(\Xtil_i^{(k)},Z_i) }{\Xinoj} &= \Ec{\mu(X_i,Z_i)}{\Xinoj},\\ 
	 \Ecmid{ \frac{1}{K-1}\sum_{k=1}^K \left( \mu(\Xtil_i^{(k)},Z_i) -  \frac{1}{K} \sum_{k=1}^K \mu(\Xtil_i^{(k)},Z_i)  \right)^2 }{\Xinoj} &= \Varc{\mu (X_i,Z_i)}{\Xinoj}, \label{eq:var_estimator}
	 \end{align}
	 and further obtain
	 \begin{eqnarray*} \nonumber
	 \EE{R_i^K} 
	 &=& \EE{ Y_i\left( \mu(X_i,Z_i) - \frac{1}{K}\sum_{k=1}^{K}\mu(\Xtil_i^{(k)},Z_i)\right) } \\ \nonumber
	&=& \EE{Y_i \mu(W_i)} - \EE{\Ec{Y_i}{\Xinoj} \Ecmid{\frac{1}{K}\sum_{k=1}^{K}\mu(\Xtil_i^{(k)},Z_i)}{\Xinoj}  }  \\ \nonumber 
	&=& \EE{Y_i \mu(W_i)} - \EE{\Ec{Y_i}{\Xinoj} \Ec{\mu(X_i,Z_i)}{\Xinoj}  } \\ \nonumber
	&=& \EE{Y_i \mu(W_i)} - \EE{Y_i \Ec{\mu(X_i,Z_i)}{\Xinoj}  } = \EE{Yh(W)},
	 \end{eqnarray*}
	where the first equality holds due to \eqref{eq:Rik_ViK_def}, the second equality holds due to \eqref{eq:nulls_cond_indep}, the third equality holds due to \eqref{eq:mean_estimator}, the fourth equality comes from the tower property of total expectation and the last one is by the definition of $h(W)$. Regarding the term $\EE{V_i^K }$, \eqref{eq:var_estimator} and \eqref{eq:h2W} immediately imply $\EE{V_i^K } = \EE{h^2(W)}$. 

	To prove \eqref{eq:general_valid}, we can follow a similar strategy as in the proof of Theorem \ref{thm:main}. Note Appendix \ref{pf:thm:main} considers $4$ different cases then deals with them separately. Essentially we can conduct similar analysis, but to avoid lengthy proof, we focus on the most complicated case where $\Var{Yh(W)}>0$ and $ \Var{\Varc{h(X)}{\Xnoj}}>0$ and omit the derivations for the other three cases.
%	assume the moment condition $\EE{\Varc{Yh(W)}{Z}}=  \EE{\Varc{Y (\mu(X,Z)-\Ec{\mu(X,Z)}{Z})}{\Xnoj}}>0$ and thus focus on this specific case. 
	Under the moment conditions $\EE{Y^4}, \EE{h^4(W)} < \infty$, we have 
	   $\EE{R_i^K } = \EE{Yh(W)} < \infty $ and $\EE{V_i^K } = \EE{h^2(W)} < \infty$.
	   
	   By applying the multivariate central limit theorem and the delta method, we obtain the following asymptotic normality result as in the proof of Theorem \ref{thm:main}: as $n\rightarrow \infty$,
	 \begin{equation}\label{eq:normal_RK}
	     \sqrt{n}\left(\frac{ \frac{1}{n}\sum\nsubp R_i^K }{\sqrt{\frac{1}{n} \sum\nsubp V_i^K  }} -   \thetamu \right) \stackrel{d}{\rightarrow } \gauss{0}{\tilde{\sigma}_0^2},
	 \end{equation}
	 where $\tilde{\sigma}_0^2$ is similarly defined as in \eqref{eq:tsigma0_def} and its positiveness will be proved over the course of derivations from \eqref{eq:recallAB} toward the end of this proof. Due to the law of large numbers and the continuous mapping theorem, we can prove $s \stackrel{p}{\rightarrow} \tilde{\sigma}_0$ as in Appendix \ref{pf:thm:main}. The asymptotic normality and the consistency result only require us to verify the finiteness of $\Sigma_{11} = \Var{R_i^K}, \Sigma_{12} = \Cov{R_i^K}{V_i^K}, \Sigma_{22} = \Var{V_i^K}$. Since $\EE{R_i^K }, \EE{V_i^K } < \infty$ under the stated moment conditions and $\Cov{R_i^K}{V_i^K} \le \sqrt{ \Var{R_i^K}{ \Var{V_i^K}}} $ by the Cauchy--Schwarz inequality, it suffices to prove 
	  \begin{equation}\label{eq:clt_mmc}
	 \EE{|R_i^K|^2} < \infty,  \EE{|V_i^K|^2} < \infty. 
	  \end{equation}
	Denote $\bar{h}_i^K = \frac{1}{K}\sum_{k=1}^{K}h(\Xtil_i^{(k)},Z_i)$ and we rewrite $R_i^K$ and $V_i^K$.
	 \begin{align} \nonumber
	R_i^K &=Y_i\left(\mu(X_i,Z_i) - \frac{1}{K}\sum_{k=1}^{K}\mu(\Xtil_i^{(k)},Z_i)\right) \\ \nonumber
	 &= Y_i\left( \mu(X_i,Z_i) - \Ec{ \mu(X_i,Z_i)}{Z_i} -  \frac{1}{K}\sum_{k=1}^{K}(\mu(\Xtil_i^{(k)},Z_i) - \Ec{\mu(\Xtil_i^{(k)},Z_i)}{Z_i}) \right) \\  \label{eq:RiK_in_h}
	 &=Y_i\left( h(X_i,Z_i) - \frac{1}{K}\sum_{k=1}^{K} h(\Xtil_i^{(k)},Z_i)\right)\\
	 &= Y_i(h(X_i,Z_i) - \bar{h}_i^K)
	  \label{eq:RiK_rewrite}
	  \end{align}
 	where the first equality holds by \eqref{eq:Rik_ViK_def}, the second equality holds by \eqref{eq:mean_estimator} and the third equality holds by the definition of $h(w)$. 
 	\begin{eqnarray} \nonumber
 	V_i^K 
 	&=& \frac{1}{K-1}\sum_{k=1}^{K}\left( \mu(X_i^{(k)} ,Z_i)-  \frac{1}{K}\sum_{k=1}^{K} \mu(X_i^{(k)} ,Z_i) \right)^2 \\ \nonumber
 	&=& \frac{1}{K-1} \sum_{k=1}^{K} \left( h(X_i^{(k)} ,Z_i)-  \frac{1}{K}\sum_{k=1}^{K} h(X_i^{(k)} ,Z_i) \right)^2 \\  \label{eq:ViK_in_h}
 	&=& \frac{1}{K-1} \sum_{k=1}^{K} h^2(X_i^{(k)} ,Z_i) - \frac{K}{K-1} \left( \frac{1}{K}\sum_{k=1}^{K} h(X_i^{(k)} ,Z_i) \right)^2  \\ 
 	&=& \frac{K}{K-1}\left( \frac{1}{K} \sum_{k=1}^{K} h^2(X_i^{(k)} ,Z_i) - (\bar{h}_i^K)^2 \right)
 	 \label{eq:ViK_rewrite}
 	  \end{eqnarray}
 	  where the first equality holds by \eqref{eq:Rik_ViK_def}, the second equality holds due to similar derivations as \eqref{eq:RiK_in_h} and the last two equalities are simply by expanding and rearranging.
 	Now we bound
 	\begin{eqnarray} \nonumber
 	\left(\EE{|R_i^K|^2}\right)^2
 	&=& \left(\EE{Y_i^2 (h(X_i,Z_i) - \bar{h}_i^K)^2} \right)^2\\ \nonumber
 	&\le& \EE{Y^4} \EE{(h(X_i,Z_i) - \bar{h}_i^K
)^4} \\ 
 	&\le & 	\EE{Y^4}\cdot 2^{4-1}\left(
 	\EE{h^4(X_i, Z_i) + \EE{\left( \bar{h}_i^K\right)^4}}
 	\right) \label{eq:RiK2_bound}
 	\end{eqnarray}
	where the first equality holds due to \eqref{eq:RiK_rewrite}, the first inequality holds by the Cauchy--Schwarz inequality, the second inequality comes from the $C_r$ inequality. 
 	Regarding $\EE{|V_i^K|^2}$, we have
	 	 \begin{align}\nonumber
	 	\EE{|V_i^K|^2} 
	 	=~& \EE{\left| \frac{K}{K-1}\left( \frac{1}{K} \sum_{k=1}^{K} h^2(X_i^{(k)} ,Z_i) - (\bar{h}_i^K)^2 \right)  \right|^2} \\ \nonumber
	 	\le ~& \frac{2^{2-1}K^2}{(K-1)^2} \EE{\left(\frac{1}{K} \sum_{k=1}^{K} h^2(X_i^{(k)} ,Z_i)\right)^2} +  \frac{2^{2-1}K^2}{(K-1)^2}  \EE{(\bar{h}_i^K)^4} \\ 
	 	\le ~ & 2^3\left( \EE{\left(\frac{1}{K} \sum_{k=1}^{K} h^2(X_i^{(k)} ,Z_i)\right)^2}  + \EE{(\bar{h}_i^K)^4} \right):= 2^3(\mathrm{II} + \EE{(\bar{h}_i^K)^4}),
	 	\label{eq:ViK2_bound}
	 \end{align}
	 where the first equality holds by \eqref{eq:ViK_rewrite}, the first inequality holds due to the $C_r$ inequality, and the second inequality comes from rearranging and the fact that $K \le 2(K-1)$ (since $K >1$). The term $\mathrm{II}$ and $\EE{(\bar{h}_i^K)^4}$ can be bounded using the same strategy. Below we give the bounding details of $\EE{(\bar{h}_i^K)^4}$ and omit that of $\mathrm{II}$. By the tower property of conditional expectation, we have   
	   \begin{eqnarray}\nonumber
	  \EE{(\bar{h}_i^K)^4} 
	  &=& \EE{\left( \frac{1}{K}\sum_{k=1}^{K}h(\Xtil_i^{(k)},Z_i) \right)^4}\\
	  &= &  \EE{\Ecmid{\left( \frac{1}{K}\sum_{k=1}^{K}h(\Xtil_i^{(k)},Z_i) \right)^4}{Z_i}}. \label{eq:h4_rewrite}
	   \end{eqnarray}
	   To bound $\Ecmid{\left( \frac{1}{K}\sum_{k=1}^{K}h(\Xtil_i^{(k)},Z_i) \right)^4}{Z_i}$, we notice that, conditional on $\Xinoj$, 
	   $\{h(\Xtil_i^{(k)},Z_i)\}_{k=1}^{K}$ are i.i.d. mean zero random variables, hence we can apply the extension of the Bahr--Esseen inequality in \citet{dharmadhikari1969bounds} to obtain 
	\begin{eqnarray}\label{eq:Bahr_h}
	\Ecmid{ \left(
	 \sum_{k=1}^K h(\Xtil_i^{(k)},Z_i)\right)^4 }{\Xinoj} \le c_{4,K}~ \sum_{k=1}^{K} \Ecmid{ h^4(\Xtil_i^{(k)},Z_i)  }{\Xinoj},
	 \end{eqnarray}
	 Note for generic $d \ge 2$ and $n$, the term $c_{d,n}$ is defined as 
	 $$
	 c_{d,n} = n^{d/2 - 1} \frac{d(d-1)}{2}\max\{1, 2^{d-3}\}\left[1 + 2d^{-1} D_{2m}^{(d-2)/2m} \right]
	 $$
	 where the integer $m$ satisfies $2m\le d <2m+2$, and
%	 }{Since we already use $p$ for the covariate dimension, please switch this notation to something else we haven't used or at least only used more narrowly. I think a similar argument appears a few times throughout the appendix---please revise the notation analogously everywhere}
	 $$
	 D_{2m} = \sum_{t=1}^m \frac{t^{2m-1}}{(t-1)!}.
	 $$
	 We then can simply bound $c_{4,K}$ by $C_4 K$ for some universal constant $C_4$ which do not depend on $K$. Therefore, combining \eqref{eq:h4_rewrite} and \eqref{eq:Bahr_h} gives us 
	 \begin{eqnarray} \nonumber
	 \EE{(\bar{h}_i^K)^4}  &\le &\EE{\frac{C_4 K}{K^4}  \sum_{k=1}^{K} \Ecmid{ h^4(\Xtil_i^{(k)},Z_i)  }{\Xinoj} } \\  \nonumber
	 &=& \frac{C_4}{K^2}\EE{\Ecmid{ h^4(X_i,Z_i)  }{\Xinoj}  } =\frac{C_4}{K^2}\EE{h^4(W)}
	 \end{eqnarray} 
	 where the equality holds by \eqref{eq:nulls_cond_ident} and the second equality holds by the tower property of conditional expectation. Since $\EE{h^4(W)}< \infty$, we have $\EE{(\bar{h}_i^K)^4} < \infty$. The finiteness of $\mathrm{II}$ is similarly proved. Due to \eqref{eq:RiK2_bound} and \eqref{eq:ViK2_bound}, we thus establish \eqref{eq:clt_mmc} under the stated moment conditions $\EE{Y^4}, \EE{h^4(W)}< \infty$. Applying Slutsky's theorem to \eqref{eq:normal_RK} and the consistency result that $s \stackrel{p}{\rightarrow} \tilde{\sigma}_0$, we have
	 	 \begin{equation}\nonumber
	     \frac{\sqrt{n}}{s}\left(\frac{ \frac{1}{n}\sum\nsubp R_i^K }{\sqrt{\frac{1}{n} \sum\nsubp V_i^K  }} -   \thetamu \right) \stackrel{d}{\rightarrow } \gauss{0}{1},
	 \end{equation}  
	   which establishes \eqref{eq:general_valid}.
	
	Now we will verify the positiveness of $\tilde{\sigma}_0$ as promised.
	 Recall in the proof of Theorem \ref{thm:main}, the variance term in the asymptotic normality result is also denoted as $\tilde{\sigma}_0^2$ and admits the following expression 
	   \begin{equation}\label{eq:recallAB} 
	    \EE{(R_i-\EE{Yh(W)}  - 0.5~{\EE{Yh(W)}}(V_i - 1) )^2} = \EE{(A+B)^2} > 0
	   \end{equation}
	   according to \eqref{eq:easy_tsigma0}, where $A$ and $B$ are defined in \eqref{eq:thm1_A_Def} and \eqref{eq:thm1_B_Def} and $ \EE{(A+B)^2} > 0 $ 
	  as proved over the course of derivations from \eqref{eq:easy_tsigma0} to the end of the proof of Theorem \ref{thm:main}. In this proof, it is not hard to see $\tilde{\sigma}_0^2$ has a similar form except that $R_i, V_i$ in the above expression are replaced by their Monte Carlo estimators $ R_i^K, V_i^K$, thus giving
	   \begin{eqnarray} \nonumber
	   	\tilde{\sigma}_0^2 
	   	&=& \EE{(R_i^K-\EE{Yh(W)}  - 0.5~{\EE{Yh(W)}}(V_i^K - 1) )^2} \\ \nonumber
	   	&=&  \EE{(Y_i(h(X_i,Z_i) - \bar{h}_i^K)-\EE{Yh(W)}  - 0.5~{\EE{Yh(W)}}(V_i^K - 1) )^2}  \\
	   	&=&  \EE{\left( \mathrm{III}_{1}
	    - \mathrm{III}_{2}  \right)^2 },\label{eq:sigma_tilde_simple}
	   \end{eqnarray}
	   where the second equality holds by \eqref{eq:RiK_rewrite} and rearranging, the terms $\mathrm{III}_{1}, \mathrm{III}_{2} $ in the last equality are defined as:
	  \begin{align}\nonumber
	   \mathrm{III}_{1} ~:=~&     Y_i h(W_i) - \EE{Yh(W)} - 0.5~ \EE{Yh(W)}   ( \Varc{h(W_i)}{\Xinoj} -1)\\ \nonumber
	    \mathrm{III}_{2} ~:=~& Y_i  \bar{h}_i^K +  0.5~ \EE{Yh(W)}(V_i^K -  \Varc{h(W_i)}{\Xinoj}). 
	  \end{align}
	 To bound $\EE{\left( \mathrm{III}_{1}
	    - \mathrm{III}_{2}  \right)^2 }$, we will show $\Ec{\mathrm{III}_{2} }{Y_i, W_i} = 0$.
	 Recall the definition that $\bar{h}_i^K = \frac{1}{K}\sum_{k=1}^{K}h(\Xtil_i^{(k)},Z_i)$, we obtain
	 \begin{eqnarray}\nonumber
	 	\Ec{\bar{h}_i^K }{Y_i, W_i}  = \Ec{\bar{h}_i^K }{Z_i}  = \Ec{h(W_i)}{Z_i} = 0,
	 \end{eqnarray}
	 where the first equality holds due to $W_i =(X_1, Z_i)$ and \eqref{eq:nulls_cond_indep}, the second equality holds by \eqref{eq:nulls_cond_ident}, and the last equality holds due to \eqref{eq:5_lem:max}. Similarly we have 
	 \begin{eqnarray}\nonumber
	 	\Ec{V_i^K }{Y_i, W_i}  = \Ec{V_i^K }{Z_i} = \Varc{h(W_i)}{\Xinoj},
	 \end{eqnarray}
	 due to \eqref{eq:Rik_ViK_def}, \eqref{eq:nulls_cond_indep}, and \eqref{eq:mean_estimator}. Thus we have shown
	     \begin{eqnarray}\label{eq:III2_cancel}
	  \Ec{\mathrm{III}_{2} }{Y_i, W_i} = 0.
	  \end{eqnarray}
	  Applying the tower property of conditional expectation to \eqref{eq:sigma_tilde_simple} then expanding yields the following expression:
	  \begin{eqnarray} \nonumber
	       \tilde{\sigma}_0^2  
	       &=& \EE{ \Ec{\left( \mathrm{III}_{1}^2 + \mathrm{III}_{2}^2 -2 \mathrm{III}_{1}\mathrm{III}_{2} \right)}{Y_i, W_i}}\\ \nonumber 
	       &=&  \EE{ \mathrm{III}_{1}^2 + \Ec{ \mathrm{III}_{2}^2}{Y_i, W_i}  -2 \mathrm{III}_{1} \Ec{\mathrm{III}_{2} }{Y_i, W_i}}\\ \nonumber
	       &=&   \EE{ \mathrm{III}_{1}^2 + \Ec{ \mathrm{III}_{2}^2}{Y_i, W_i} } \\ \label{eq:tsigma_byAB}
	     &\ge&   \EE{ \mathrm{III}_{1}^2  } = \EE{(A+B)^2} , 
	  \end{eqnarray}
	  where the second equality holds since $ \mathrm{III}_{1} \in \sigalg(Y_i, W_i)$, and the third equality comes from \eqref{eq:III2_cancel}. Note in the last line we have $ \mathrm{III}_{1} = A+B$ due to the definitions of $A,B$ in \eqref{eq:thm1_A_Def} and \eqref{eq:thm1_B_Def} and $\EE{(A+B)^2} > 0 $ due to \eqref{eq:recallAB}. Note $\EE{(A+B)^2}$ does not depend on $K$, therefore we establish the positiveness of $\tilde{\sigma}_0$ for any $K>1$.	   	 
%	  Then according to those derivations, we have $ \EE{ \mathrm{III}_{1}^2  } > 0$ under the assumed condition $ \EE{\Varc{Yh(W)}{\Xnoj}} = \EE{\Varc{Y (\mu(X,Z)-\Ec{\mu(X,Z)}{Z})}{\Xnoj}} >0$. The above derivations holds for any $K>1$ and the lower bound $\EE{ \mathrm{III}_{1}^2  }$ for $ \tilde{\sigma}_0^2  $ does not depend on K. These finish showing the positiveness of $\tilde{\sigma}_0$ for any $K>1$.	   
     \end{proof}
     }
	 
    \subsection{Proofs in Section~\ref{sec:accuracy}}
    \begin{proof}[Proof of Theorem \ref{thm:accuracy}]
    \label{pf:thm:accuracy}
    First we write 
    $$
     \Ij - L^{n}_{\alpha}(\mu_n) =   \Ij - f(\mu_n)  + f(\mu_n) -  L^{n}_{\alpha}(\mu_n),
    $$
    where $f(\mu_n)$ is defined as
    % $$
    % f(\mu_n) =: \frac{\EE{(Y - \mu_n \Xk)^2} - \EE{(Y - \mu_n (X,Z))^2}}{\sqrt{2\EE{(\muX - \mu_n\Xk)^2}}}.
    % $$
     $$
    f(\mu_n) =:  \frac{\EE{\covc{\mustar(X,Z)}{\mu_n(X,Z)}{Z}}}{\sqrt{\EE{\varc{\mu_n(X,Z)}{Z}}}}.
    $$
    Then it suffices to separately show 
    \begin{eqnarray} \label{eq:1_thm:accuracy}
    &~&  \Ij - f(\mu_n) = O_p\left(\inf_{\mu'\in S_{\mu_n}}\EE{( \mu'_n(X,Z)-\mustar(X,Z))^2} \right),
    \\ \label{eq:2_thm:accuracy}
    &~& f(\mu_n) - L^{n}_{\alpha}(\mu_n)   = O_{p}\left(n^{-1/2}\right).
    \end{eqnarray}
    % Note that the randomness in \eqref{eq:1_thm:accuracy} only comes from the training data (which is used to obtain $\mu_n$), and the randomness in \eqref{eq:2_thm:accuracy} comes from $\{(X_i,Z_i,Y_i)\}_{i=1}^{n}$ and the samples for fitting $\mu_n$. 
    In the following, we first show \eqref{eq:2_thm:accuracy}. Recall the definitions in Algorithm \ref{alg:MOCK}, when $\mu(X,Z)\in \sigalg(Z)$, we have $  f(\mu_n)= L^{n}_{\alpha}(\mu_n)=0$, hence in the following we focus on the case where $\mu(X,Z)\notin \sigalg(Z)$. Note we have
    %and can be thought as conditioning on $\muhat$. Showing \eqref{eq:2_thm:accuracy} is quite straightforward. 
    
    $$
    % \hat{\theta}_{j}=n^{-1}\sum_{i\in [n]}R_{ij},~
    % s_{j}^2 = (n-1)^{-1}\sum_{i\in [n]}(R_{ij} - \hat{\theta}_{j})^2,~
    L^{n}_{\alpha}(\mu_n) \ge \frac{\bar{R}}{\sqrt{V}} - \frac{z_{\alpha}s}{\sqrt{n}},
%     = \left(  \right):=T - \frac{z_{\alpha}s}{\sqrt{n}},
    $$
    %where $R_{ij} = Y_i(\muhat(X_i) - \Ec{\muhat(X_i)}{\Xinoj,\muhat})/\sqrt{\EE{\Varc{\mu(X)}{\Xnoj,\muhat}}},~ i\in [n]$. 
    then since $f(\mu_n) - 
     L^{n}_{\alpha}(\mu_n) \le s \left(\left|\left( \frac{\bar{R}}{\sqrt{V}} - f(\mu_n) \right) /s\right| + \frac{z_{\alpha}}{\sqrt{n}} \right)  $,
    it suffices to show
    $$
   T:= \frac{ {\bar{R}}/{\sqrt{V}} - f(\mu_n)}{s}  = O_{p}\left(n^{-1/2}\right),~~~ s = O_{p}(1).
    $$
    For given $\mu_n$, showing the above is quite straightforward: in the proof of Theorem \ref{thm:main}, we establish the asymptotic normality of $T$; we also show $s$ converges in probability to $\tilde{\sigma}_0$ (which is the variance of the asymptotic normal distribution, as defined in \eqref{eq:tsigma0_def}). For a sequence of working regression functions $\mu_n$, we need more work and the stated uniform moment conditions.   
%    we need to deal with the following triangular array  
%    \begin{equation}
%    	R_{ni} = Y_i\left(h_n(X_i,Z_i) \right), V_{ni} = \Varc{h_n(X_i, Z_i)}{Z_i}, \quad i \in [n]
%    \end{equation}	
%    where $h_n(X_i, Z_i) = (\mu_n(X_i,Z_i) - \Ec{\mu_n(X_i, Z_i)}{Z_i}$. Recall the proof \ref{pf:thm:main} considers $4$ different cases then deals with them separately. As in the proof of \ref{pf:thm:main_general}, we focus on the most complicated case where $\Var{Yh(W)}>0$ and $ \Var{\Varc{h(X)}{\Xnoj}}>0$ to avoid lengthy proof, thus omit the similar derivations for Cases \myRom{1}, \myRom{2} and \myRom{3}.
    The proof proceeds through verifying the following: note that 
    by definition of bounded in probability, $T = O_{p}\left(n^{-1/2}\right)$ says for any $\eps>0$, there exists $M$ for which 
    $$
    \sup_{n} P( \sqrt{n}|T|> M) \le \eps.
    $$
    {The case that $\mu(X,Z)\in \sigalg(Z)$, i.e., $\EE{\Varc{\mu_n(X,Z)}{Z}}=0$, was dealt with in the first sentence after \eqref{eq:2_thm:accuracy}. Now it suffices to show} for any $\mu_n$ in the function class $\calU :=\{\mu: \EE{ {\mu}^{12}(X,Z)}/(\EE{\Varc{\mu(X,Z)}{Z}})^{6} \le C \} $,
    \begin{equation}\label{eq:bigO_accuracy_pf}
    \sup_{n} \PP{ \sqrt{n}|T|> M} \le \eps,
    \end{equation} 
    and the choice of $M$ (when fixing $\eps$) is uniform over $\mu_n\in \calU$.
    Define the standard Gaussian random variable by $G$.
%     $G_{\mu_n}$ by $G_{\mu_n} \dsim \gauss{0}{\tilde{\sigma}^2(\mu_n)}$, where $\tilde{\sigma}^2(\mu_n)$ denotes the variance $\tilde{\sigma}^2$ with the input of Algorithm 1 being $\mu_n$
     Then we have
    \begin{equation}\label{eq:1_accuracy_pf}
    \PP{ \sqrt{n}|T  | > M}
    \le  
    \PP{ |G| > M} + \Delta,
    \end{equation}
    where $ \Delta$ is defined as
    \begin{equation}\label{eq:2_accuracy_pf}
     \Delta:= \sup_{\mu_n \in \calU} \sup_{M>0}
     \left|
     \PP{ \sqrt{n}|T| > M} -    \PP{ |G| > M} 
     \right|.
     \end{equation}
    %Note that $\calU$ is defined to be the class of $\mu$ which satisfies $\EE{\mu^8(X,Z)}\le c$.
    %    }{do you mean this for an exponent of 12 instead of 4? It seems like you're applying \eqref{eq:element_mmt_bound} with $r=12$ in the rest of this paragraph}
    \acc{Due to \eqref{eq:element_mmt_bound}, $\EE{\mu^{12}(X, Z)} < \infty$ implies $\EE{h^{12}(W)} < \infty$, where $h$ is defined in \eqref{eq:def_hhstar}. In the following proof, we will only assume weaker moment conditions, i.e., $\EE{\Varc{\mu_n(X,Z)}{Z}}=0$ or $\frac{\EE{ {\mu}_n^{12}(X,Z)}}{\EE{\Varc{\mu_n(X,Z)}{Z}}^{6}} \le C$ stated in Theorem \ref{thm:accuracy} is replaced by $\EE{\Varc{\mu_n(X,Z)}{Z}}=0$ or $\frac{\EE{ {h}_n^{12}(X,Z)}}{\EE{\Varc{\mu_n(X,Z)}{Z}}^{6}} \le C$, where $h_n$ is defined accordingly.
     
    In the proof of Theorem \ref{thm:main_rate}, we assume $\EE{h^2(W)}=1$ without loss of generality. This is because we can always scale $h$ by dividing by $\sqrt{\EE{h^2(W)}}$ when the given working regression function satisfies $\mu(X, Z) \notin \sigalg(Z)$. The floodgate inference procedure and results are the same with the corresponding scaled version $\tilde{h}(W)$. And the scaled version still satisfies the finite moment condition $\EE{\tilde{h}^{12}(W)} < \infty$. Now we are dealing with a sequence of working regression functions $\mu_n$. If we scale $h_n$ analogously by dividing it by $\sqrt{\EE{h_n^2(W)}}$, the corresponding function sequence $\{\tilde{h}_n\}$ does not necessarily satisfy the uniform moment condition, i.e., for all $n$, $\EE{\tilde{h}_n^{12}(W)} < C$ for some constant $C$.
    % which does not admit the same scaling. 
    % (since we can always scale $h$ by $\sqrt{\EE{h^2(W)}}$), here we have a sequence of \revision{working regression functions} $\mu_n$ which does not admit the same scaling. 
    But the moment conditions $\ee{Y^{12}}<\infty$ and $\EE{ {h}_n^{12}(X,Z)}/(\EE{\Varc{\mu_n(X,Z)}{Z}})^{6} = \EE{ {h}_n^{12}(W)}/( \sqrt{\EE{h_n^2(W)}})^{12} \le C$ for all $n$ ensure the uniform moment bound after scaling, hence for the following we can assume $\EE{h_n^2(W)}=1$. }
 
    According to the proof of Theorem \ref{thm:main_rate}, we have the following Berry--Esseen bound
    $$
    \sup_{M>0}
     \left|
     \PP{ \sqrt{n}|T| > M} -    \PP{ |G| > M} 
     \right| =  O\left( \frac{1}{\sqrt{n}} \right),
    $$
    which relies on verifying the following:
    \begin{enumerate}[(i)]
	 %\label{eq:moments_finite}
	     \item  $\EE{|U_{01}|^3}$, $\EE{|U_{02}|^3}$, $\EE{|U_{03}|^3}$, $\EE{|U_{04}|^3}$, $\EE{|U_{05}|^3} <\infty$,
	    \item $\tilde{\sigma}_0^2(\mu_n) = H_2(\bs{0}) >0$,
	    \item $\tilde{\sigma}^2(\mu_n) =  \|L(U_{0})\|_2 >0 $.
	 \end{enumerate}
	Note the above terms are defined similarly as in the proof of Theorem \ref{thm:main_rate} except the dependence on $\mu_n$ (but we abbreviate the notation dependence on $\mu_n$ for the random variables). We have $\tilde{\sigma}^2(\mu_n) =1$ due to the derivations after \eqref{eq:L2V_bound_v2} in the proof of Theorem \ref{thm:main_rate}. To show the constant in the above rate of $ \frac{1}{\sqrt{n}}$ is uniformly bounded, we need to prove $\inf_{\mu_n \in \calU}\tilde{\sigma}^2(\mu_n) >0$ and uniformly control the the 3rd moments in the condition \myrom{1}. First notice that
    \begin{eqnarray}\nonumber
    \inf_{\mu_n \in \calU}\tilde{\sigma}^2(\mu_n)
    &\ge&   \inf_{\mu_n \in \calU} \EE{\Varc{Yh_n(W)}{\Xnoj}} \\ \nonumber
    &\ge&   \inf_{\mu_n \in \calU} \EE{\Varc{Yh_n(W)}{X,Z}} \\ \nonumber
    &=&   \inf_{\mu_n \in \calU} \EE{h^2_n(W)\Varc{Y}{X,Z}} \\ \nonumber
    &\ge & \tau >0
    \end{eqnarray}
    where the first inequality holds due to \eqref{eq:L2V_bound_v1}, the second inequality holds as a result of the law of total conditional variance, the last equality holds by the assumption that $\EE{h_n^2(W)}=1$ and the moment lower bound condition $\Var{Y|X,Z} \ge \tau>0$. Assuming $\ee{Y^{12}}<\infty$ and $\EE{ {\mu}_n^{12}(X,Z)}/(\EE{\Varc{\mu_n(X,Z)}{Z}})^{6} \le C$, we can uniformly control the moments $\EE{|U_{01}|^3}$, $\EE{|U_{02}|^3}$, $\EE{|U_{03}|^3}$, $\EE{|U_{04}|^3}$, $\EE{|U_{05}|^3}$, therefore establish the rate of $ \frac{1}{\sqrt{n}}$ in \eqref{eq:2_accuracy_pf}:
    \begin{equation}\nonumber
        \Delta =  O\left( \frac{1}{\sqrt{n}} \right).
    \end{equation}
    %the constant in the above rate of $ \frac{1}{\sqrt{n}}$ is uniformly bounded. 
    Combining this with \eqref{eq:1_accuracy_pf}, we have
    $$
    \sup_{\mu_n \in \calU}\PP{ \sqrt{n}|T| > M}
    \le  
     \PP{ |G| > M}  + \frac{C'}{\sqrt{n}}
    $$
    for some constant $C'$ depending on $C,\tau$ and $\EE{Y^{12}}$. Therefore we obtain \eqref{eq:bigO_accuracy_pf} and the choice of $M$ can be universally chosen over $\mu_n \in \calU$, which finally establishes $T = O_{p}\left(n^{-1/2}\right)$.
    Using similar strategies, we can prove $ s =  O_{p}\left(1 \right) $. Hence we have shown \eqref{eq:2_thm:accuracy}.
%     $$
%     \EE{\hat{\theta}_{j}|\muhat} = \thetamuh,~ s_j|\muhat 
%     \stackrel{a.s.}{\rightarrow} \Varcmid{\hat{\theta}}{\muhat}
%     $$ 
%     where the derivation is exactly the same as in \eqref{eq:3_lem:max}, except that we are conditioning on $\muhat$. Now we have 
%     $$
%     \left|\thetamuh - L_j\right| \le \left|\hat{\theta}_{j} - \thetamuh \right| + \left|\frac{z_{\alpha}s_{j}}{\sqrt{n}} \right|
%     $$
%     As $\hat{\theta}_{j}$ is the sample mean estimator for $\thetamuh$, we can show $\left|\thetamuh - L_j\right|$ is of rate $O_{P}\left(n^{-1/2}\right)$ if $\Varcmid{R_{ij}}{\muhat}$ is finite. The moment conditions $\ee{Y^4},\ec{\muhat^4(X)}{\muhat} < \infty$ imply
%     $$
%     \Varcmid{
%     Y(\muhat(X) - \ec{\muhat(X)}{\Xnoj})}{\muhat}\le C
%     $$ 
%     for some constant C, by using similar derivations as in \eqref{eq:2_thm:main}. Then together with $\ee{\varc{\muhat(X)}{\Xnoj,\muhat}} \ge \tau_0 > 0$, we have
%     $$
%     \Varcmid{R_{ij}}{\muhat} = \frac{\Varcmid{
%     Y(\muhat(X) - \ec{\muhat(X)}{\Xnoj})}{\muhat}}{\ee{\varc{\muhat(X)}{\Xnoj,\muhat}}}\le \frac{C}{\tau_0} < \infty.
%     $$
% 	hence \eqref{eq:2_thm:accuracy} is done. 	
% 	hence $\var{\hat{\theta}_{j}}$ can be uniformly bounded under the assumption $\ee{\varc{\muhat(X)}{\Xnoj}} > c_0$, where $\hat{\theta}_{j}$ is the sample mean estimator of the term $\thetamuh$, as defined in Algorithm \ref{alg:MOCK}. Thus we have $|\thetamuh - L_j|\le |\hat{\theta}_{j} - \thetamuh| + O\left( n^{-1/2} \right)  = O_{p}\left(n^{-1/2} \right)$. 

	Now we proceed to prove \eqref{eq:1_thm:accuracy}, first it can be simplified into the following form due to \eqref{eq:fmu_simplified} and \eqref{eq:6_lem:max},
	\begin{equation} \label{eq:3_thm:accuracy}
	    \Ij - f(\mu_n)
	    = \sqrt{\ee{(\hstar)^2(W)}} - 
        \frac{\EE{h_n(W)\hstar(W)}}{\sqrt{\EE{h_n^2(W)}}} 
	\end{equation}
	%where $\ghat(X) := \ec{\muhat(X)}{\Xnoj,\muhat}$, $h_n(X):=\muhat(X) - \ghat(X)$ and $\gstar(X),\hstar(X)$ 
	where $h_n(W)=\mu_n(W) - \Ec{\mu_n(W)}{Z}$ and $\hstar$ 
	are defined the same way. Remark we have $0/0=0$ by convention for \eqref{eq:3_thm:accuracy}.
	%Without loss of generality ${\Ecmid{\hhat^2(X)}{\muhat}} = \ee{(\hstar)^2(X)}$, since ${\Ecmid{\hhat^2(X)}{\muhat}}>0$ and scaling $\hhat$ by a positive constant will not change the value of \eqref{eq:3_thm:accuracy}. 
% 	Then it suffices to quantify the gap between $\thetamus$ and $\thetamuh$. As in the proof of Theorem \ref{thm:main}, we denote $\gstar:= \gstar (X) = \ec{\mustar(X)}{\Xnoj}$, $\hstar=\mustar - \gstar$ and similarly denote $\ghat,\hhat$ for $\muhat$, 
%     % $\hat{g}=\mathbb{E}(\hat{\mu}(X)|X_{\noj})$, 
%     % With the notations $h,\hat{h}$, 
%     then the gap $|\thetamuh - \thetamus|$  can be simplified as
    We also find it is more convenient to work with $f(\bar{\mu}_n)$ (note $f(\mu_n)=f(\bar{\mu}_n))$, recall that 
    %Note that $\theta(\mu_n)$ remains the same for any $\mu' \in S_{\mu_n}$, we specifically choose the representative 
    the definition of $\bar{\mu}_n$:
    \begin{equation} \nonumber
        \bar{\mu}_n(x,z):= \sqrt{\frac{\Ij}{\EE{h^2_n(W)}}}
        \left(\mu_n(x,z) - \Ec{\mu_n(X,Z)}{Z=z} \right) +  \Ec{\mustar(X,Z)}{Z=z},
    \end{equation}
    and similarly denote $\bar{h}_n(w) =  \bar{\mu}_n(x,z) - \Ec{\bar{\mu}_n(X,Z)}{Z=z}$. When $\mu(X,Z)\in \sigalg(Z)$, we have $ \bar{\mu}_n(x,z) =  \Ec{\mustar(X,Z)}{Z=z}, \bar{h}_n(w) =0 $, thus
     \begin{equation}\label{eq:movi_gap_case1}
      \Ij - f(\mu_n) = \Ij = \frac{ \EE{(\bar{h}_n(W)-\hstar(W))^2}  }{ \sqrt{\ee{(\hstar)^2(W)}} }
    \end{equation}
    Otherwise when $\EE{h^2_n(W)}>0$, we have $ \sqrt{\EE{\bar{\mu}_n^2(W)}} = \Ij$. In this case, we rewrite the right hand side of \eqref{eq:3_thm:accuracy} in terms of $\bar{\mu}_n$ and further simplify it as below,
    $$
    \frac{ 
    \EE{(\bar{h}_n(W)-\hstar(W))^2}
    - \left(
    \sqrt{\EE{\bar{h}_n^2(W)}}
     - \sqrt{\ee{(\hstar)^2(W)}}
    \right)^{2}
    }{2\sqrt{\EE{\bar{h}_n^2(W)}}} 
    = \frac{
    \EE{(\bar{h}_n(W)-\hstar(W))^2}
    }{2\sqrt{\ee{(\hstar)^2(W)}}}
    $$  
    which says that 
    \begin{equation}\label{eq:movi_gap_case2}
         \Ij - f(\mu_n) = \frac{
    \EE{(\bar{h}_n(W)-\hstar(W))^2}
    }{2\sqrt{\ee{(\hstar)^2(W)}}}
    \end{equation}
     Note that $\sqrt{\ee{(\hstar)^2(W)}} = \Ij$ which does not depend on $\mu$, hence it suffices to show 
       \begin{equation} \label{eq:4_thm:accuracy}
      \EE{(\bar{h}_n(W)-\hstar(W))^2} = O_p\left(\inf_{\mu'\in S_{\mu_n}}\EE{( \mu'(X,Z)-\mustar(X,Z))^2} \right).
        \end{equation}
     We prove it by considering two cases:
     \begin{enumerate}[(a)]
         \item $\EE{h_n(W)\hstar(W)}\le 0$,
         \item $\EE{h_n(W)\hstar(W)}> 0$.
     \end{enumerate}
    %For the following derivations, we omit the notation of conditioning on $\mu_n$ without causing confusion. 
    Regarding case (a), we have
     \begin{eqnarray}\nonumber
     \inf_{\mu'\in S_{\mu_n}}\EE{( \mu'(X,Z)-\mustar(X,Z))^2}
     &=& \inf_{c>0, \forall g(z)}\left( 
     \EE{(c h_n(W) - \hstar(W) )^2} + \EE{(g(Z) - \Ec{\mustar(W)}{Z} )^2}\right)  \\ \nonumber
     &=& \inf_{c>0 }
     \EE{(c h_n(W) - \hstar(W) )^2}  \\ \nonumber
     &=& \EE{(\hstar)^2(W)}+ \inf_{c>0 } 
     c^2\EE{ h^2_n(W)} - 2c \EE{h_n(W)\hstar(W)} \\ \nonumber
     &=&  \EE{(\hstar)^2(W)}
     \end{eqnarray}
     where the first equality holds by the definition of $S_{\mu_n}$ and the fact that, for any $g(Z)$,
     $$
     \EE{\hstar(W) g(Z)} =  \EE{g(Z)\Ec{\hstar(W)}{Z}} = 0
     $$
     and similarly $\EE{h_n(W) g(Z)} =0$.
     The second equality holds by choosing $g(z)$ to be $\Ec{\hstar(W)}{Z=z}$. The third equality is simply from expanding and the last equality holds in case (a). Noticing
     $$
     \EE{(\bar{h}_n(W)-\hstar(W))^2} \le 2\left(\EE{\bar{h}_n^2(W)}+ \EE{(\hstar)^2(W)}  \right) = 4\EE{(\hstar)^2(W)}
     $$
     we thus establish \eqref{eq:4_thm:accuracy}. Regarding case (b), we have
    \begin{eqnarray}\nonumber
    \inf_{\mu'\in S_{\mu_n}}\EE{( \mu'(X,Z)-\mustar(X,Z))^2}
     &=& \inf_{c>0 }
     \EE{(c h_n(W) - \hstar(W) )^2}  \\ \nonumber
     &=& \inf_{c>0 }
     \EE{( c h_n(W) - h_{0}(W) + h_0(W)  -  \hstar(W) )^2}  \\ \nonumber
     &=&  \EE{(h_0(W)  -  \hstar(W) )^2} + \inf_{c>0 }
     \EE{( c h_n(W) - h_{0}(W) )^2}   \\ \nonumber
       &=&  \EE{(h_0(W)  -  \hstar(W) )^2} \\  \label{eq:5_thm:accuracy}
       &=& \EE{(\hstar)^2(W)} -  \EE{(h_0(W)  )^2}
      \end{eqnarray}
     where in the second equality, $h_0$ is defined to be 
     $$
     h_0(w) := \frac{ \EE{h_n(W)\hstar(W)}}{{\EE{h_n^2(W)}}} h_n(w).
     $$
    It satisfies the property $\EE{h_n(W)\left(\hstar(W) - h_0(W)  \right)}=0 $ thus the third equality holds. The fourth equality comes from choosing $c$ to be $ \frac{ \EE{h_n(W)\hstar(W)}}{{\EE{h_n^2(W)}}} $, which is positive in case (b). The last equality holds again due to $\EE{h_n(W)\left(\hstar(W) - h_0(W)  \right)}=0$. And we have
     \begin{eqnarray}\nonumber
     \EE{(\bar{h}_n(W)-\hstar(W))^2} 
     &=& 2  \EE{(\hstar)^2(W)} - 2 \EE{\bar{h}_n(W)\hstar(W)} \\ 
     &=& 2  \EE{(\hstar)^2(W)} - 2 \EE{(h_0(W)  )^2}\rho    \label{eq:6_thm:accuracy}
     \end{eqnarray}    
     where $\rho$ denotes the following term and can be further simplified based on the definition of $\bar{h}_n(W)$ and $ h_0(W)$.
      \begin{eqnarray*}
        \rho
        &:=&\frac{\EE{\bar{h}_n(W)\hstar(W)}}{\EE{(h_0(W)  )^2}}  \\
        &=&  \frac{\Ij\sqrt{\EE{h_n^2(W)}}}{\EE{h_n(W)\hstar(W)} }
     \end{eqnarray*}     
     thus we have $\rho > 0$ in case (b) and $\rho \ge 1$ by the Cauchy--Schwarz inequality. Combining this with \eqref{eq:5_thm:accuracy} and \eqref{eq:6_thm:accuracy} yields \eqref{eq:4_thm:accuracy}. Finally we establish the bound in \eqref{eq:accuracy_gap}.

    \end{proof}

    \subsection{Proofs in Section~\ref{sec:class}}
     \begin{proof}[Proof of Lemma \ref{lem:class_max}]\label{pf:lem:class_max}
        We prove this lemma by a small trick, taking advantage of the idea of symmetry. Remember as in \eqref{eq:nulls_cond_indep}, $\Xj$'s null copy $\Xtil$ is constructed such that
         \begin{equation}\label{eq:nulls_prop}
          \Xkj \independent (\Xj, Y) \mid \Xnoj,\quad\text{and}\quad
      \Xkj \mid \Xnoj \deq \Xj \mid \Xnoj. 
         \end{equation}
        We can define the null copy of $\Yk$ by drawing from the conditional distribution of of $Y$ given $Z$, without looking at $(\Xj,Y)$. Remark that introducing $\Yk$ is just for the convenience of proof and does not necessarily mean we need to be able to sample it. Formally it satisfy
        \begin{equation}\label{eq:Ynull_prop}
             \Yk \independent (\Xj, Y) \mid \Xnoj,~~
            \Yk \mid \Xnoj \deq  Y \mid \Xnoj 
        \end{equation}
        More specifically, we ``generate" $\Yk$ conditioning on $\Xk$, following the same conditional distribution as $Y|X,Z$ (It can be verified this will satisfy \eqref{eq:Ynull_prop}). Now by the symmetry argument, we have 
        \begin{equation}\label{eq:1_lem:class_max}
        \EE{\mathbbm{1}_{\{
        Y\cdot [\muXk - \Ec{\muX}{\Xnoj}] < 0
        \}}} = 
        \EE{\mathbbm{1}_{\{
        \Yk\cdot [\muX - \Ec{\muX}{\Xnoj}] < 0
        \}}}.        
        \end{equation}
        Let $W = (X,Z)$ and define $g(\Xnoj):= \Ec{\mu(W)}{\Xnoj},~h(W):=\mu(W) - g(\Xnoj)$ with the associated functions denoted by $g(\xnoj),~h(w)$, we can rewrite $\kappamu/2$ as
        \begin{eqnarray} \nonumber
        f_{\ell_1}(\mu) /2
        &=&  \mathbb{P}\big(Y (\muXk - \Ec{\muX}{\Xnoj}) < 0\big)-\mathbb{P}\big(Y (\muX - \Ec{\muX}{\Xnoj}) < 0\big) \\ \nonumber
         &=& 
        \EE{\mathbbm{1}_{\{
        \Yk\cdot [\mu(W) - \Ec{\mu(W)}{\Xnoj}] < 0
        \}}} - 
        \EE{\mathbbm{1}_{\left\{
        Y\cdot [\mu(W) - \Ec{\mu(W)}{\Xnoj} ] < 0
        \right\}}} \\ \nonumber
        &=&
        \EE{\Ecmid{\left(
        \mathbbm{1}_{\{
        \Yk\cdot [\mu(W) - \Ec{\mu(W)}{\Xnoj}] < 0
        \}} - 
        \mathbbm{1}_{\left\{
        Y\cdot [\mu(W) - \Ec{\mu(W)}{\Xnoj} ] < 0
        \right\}} 
        \right)
        }{W}} \\ \nonumber
        &=&
        \EE{\Ecmid{\left(
        \mathbbm{1}_{\{
        \Yk\cdot h(W) < 0
        \}} - 
        \mathbbm{1}_{\left\{
        Y\cdot h(W) < 0
        \right\}} 
        \right)
        }{W}}          
        \end{eqnarray}
        where the second equality is by \eqref{eq:1_lem:class_max}, the third one comes from the law of total expectation and the fourth one is by the definition of $h(W)$. Now it suffices to consider maximizing the following quantity
        \begin{equation}\label{eq:2_lem:class_max}
            \Ecmid{\left(
        \mathbbm{1}_{\{
        \Yk\cdot h(W) < 0
        \}} - 
        \mathbbm{1}_{\left\{
        Y\cdot h(W) < 0
        \right\}} 
        \right)
        }{W=w}
        \end{equation}
        for each $w=(x,z)$. Due to the property \eqref{eq:Ynull_prop}, we have 
        $$
        \Pc{\Yk=y}{W} = \Pc{\Yk=y}{\Xnoj}= \Pc{Y=y}{\Xnoj}~~~ y \in \{-1,1\}
        $$
        hence we can simplify the conditional expectation of the first indicator function in \eqref{eq:2_lem:class_max} into the following 
        \begin{eqnarray} \nonumber
             \Ec{
        \mathbbm{1}_{\{
        \Yk\cdot h(W) < 0
        \}} 
        }{W=w} 
        &=& 
        \Pc{\Yk = 1, ~h(W) < 0}{W=w} +  \Pc{\Yk = -1, ~h(W) > 0}{W=w}\\
        &=& 
        \Pc{Y=1}{\Xnoj = \xnoj}\indicat{h(w) < 0} + \Pc{Y=-1}{\Xnoj =\xnoj}\indicat{h(w) > 0}\label{eq:3_lem:class_max}
        \end{eqnarray}
        Similarly we have
        \begin{equation}\label{eq:4_lem:class_max}
             \Ec{
        \mathbbm{1}_{\{
        Y\cdot h(W) < 0
        \}} 
        }{W=w} =
        \Pc{Y=1}{W=w}\indicat{h(w) < 0} + \Pc{Y=-1}{W=w}\indicat{h(w) > 0}
        \end{equation}
        when $\Ec{Y}{W=w}> \Ec{Y}{\Xnoj=\xnoj}$, we have 
        $$
        \Pc{Y=1}{W=w} >  \Pc{Y=1}{\Xnoj = \xnoj}, ~\Pc{Y=-1}{W=w} <  \Pc{Y=-1}{\Xnoj = \xnoj},
        $$
        hence in this case, by comparing \eqref{eq:3_lem:class_max} and \eqref{eq:4_lem:class_max} we know $h(w)> 0$ will maximize \eqref{eq:2_lem:class_max} with maximum value 
        \begin{eqnarray}\nonumber
         \Pc{Y=-1}{\Xnoj = \xnoj} - \Pc{Y=-1}{W=w} 
         &=& (1- \Ec{Y}{\Xnoj=\xnoj})/2 - (1- \Ec{Y}{W=w})/2 \\
         &=& (\Ec{Y}{W=w} - \Ec{Y}{\Xnoj=\xnoj})/2
        \end{eqnarray}
        Similarly we can figure out the maximizer of $h(w)$, when $\Ec{Y}{W=w}< \Ec{Y}{\Xnoj=\xnoj}$. Finally we have 
        \begin{equation}\label{eq:class_maximizer1}
        h(w)  \left\{
                \begin{array}{ll}
                  > 0,  ~~\text{when}~\Ec{Y}{W=w}> \Ec{Y}{\Xnoj=\xnoj}\\
                  < 0,  ~~\text{when}~\Ec{Y}{W=w}< \Ec{Y}{\Xnoj=\xnoj} \\
                 \text{can be any choice},  ~~\text{when}~\Ec{Y}{W=w}= \Ec{Y}{\Xnoj=\xnoj}
                \end{array}
              \right.
         \end{equation}
        will maximize \eqref{eq:2_lem:class_max} with the maximum value $|\Ec{Y}{W=w} - \Ec{Y}{\Xnoj=\xnoj}|/2$. Remark the definition of $h(w)  = \mu(w) - g(z) $, we can restate \eqref{eq:class_maximizer1} as 
         \begin{equation}\label{eq:class_maximizer2}
         \left\{
                \begin{array}{ll}
                 \mu(x,z) = \mu(w) > g(\xnoj),  ~~\text{when}~\Ec{Y}{W=w}> \Ec{Y}{\Xnoj=\xnoj}\\
                 \mu(x,z) =  \mu(w) < g(\xnoj),  ~~\text{when}~\Ec{Y}{W=w}< \Ec{Y}{\Xnoj=\xnoj} \\
                 \text{can be any choice},  ~~\text{when}~\Ec{Y}{W=w}= \Ec{Y}{\Xnoj=\xnoj}
                \end{array}
              \right.
         \end{equation} 
         where again $g(\xnoj)=\Ec{\muX}{\Xnoj = \xnoj}$. Apparently, choosing $\mu(x,z)$ to be the true regression function $\mustar(x,z)$ will satisfy \eqref{eq:class_maximizer2}. Hence we show $\kappamu$ is maximized at $\mustar$ with maximum value 
         $$
         \mathbb{E}\left|\condmeanj - \condmean\right|
         $$
         which equals $\Ijc$. Clearly from \eqref{eq:class_maximizer2}, 
         $\mustar(x,z)$ is not the unique maximizer and any function in the set described in the following set can attain the maximum.
        \begin{equation}\label{eq:class_maximizer_all}
    	\{\mu: \mathbb{R}^{p} \rightarrow \mathbb{R} \mid \text{sign}\left(\mu(x,z)-\Ec{\muX}{\Xnoj = \xnoj}\right) =\text{sign}\left(\Ec{Y}{X = x} - \Ec{Y}{\Xnoj =\xnoj}\right)\}.
        \end{equation}
    \end{proof}
       \begin{proof}[Proof of Theorem \ref{thm:class_main}]\label{pf:thm:class_main}
% 		Noticing $\Ljc$ is the asymptotic confidence lower bound for $\kappamu$ based on the normal approximations, where we apply the central limit theorem to i.i.d. random variables $R_{ij}$. 
		According to Algorithm \ref{alg:class_MOCK}, we first denote 
		\begin{eqnarray}
		 &~& U:=\mu(X,Z),~~g(\xnoj):=\ec{\mu(X,Z)}{\Xnoj = \xnoj},\\ \nonumber
        &~& G_{z}(u):=\Pc{U < u}{\Xnoj =\xnoj},~~F_{z}(u):=\Pc{U \le u}{\Xnoj =\xnoj}. \\ \nonumber
		\end{eqnarray}
		thus have the following expression of $R_i$:
		$$
		R_{i} = G_{\Xinoj}(g(Z_i))\indicat{Y_i=1} + (1-F_{\Xinoj}(g(Z_i)))\indicat{Y_i=-1} - \indicat{Y_i(\mu(W_i) -g(Z_i) ) < 0} 
		%~~\text{where}~ g(Z_i) = \Ec{\mu(W_i)}{\Xinoj}
		$$
		%and $ F_{\Xinoj}(\cdot), G_{\Xinoj}(\cdot)$ are defined in \eqref{eq:cdf_notation}.
% 		$$ 
% 		R_{ij}=\frac{1}{K}\sum_{k=1}^{K}\left(\indicat{Y_i(\mu(\Xk_{ij}^{(k)}) -G_{ij})\le 0} \right)- \indicat{Y_i(\mu(W_i)) -G_{ij})\le 0},~~\text{where}
% 		~G_{ij} = \Ec{\mu(W_i)}{\Xinoj})
% 		 $$
        First we prove that $\EE{R_{i}} = \kappamu/2$. Recall the definition of $f_{\ell_1}(\mu)$ in \eqref{eq: kappaj_u},
        $$
        f_{\ell_1}(\mu)/2 = 
        \EE{\mathbbm{1}_{\{
        Y\cdot [\muXk - \Ec{\muX}{\Xnoj}] < 0
        \}}} - 
        \EE{\mathbbm{1}_{\left\{
        Y\cdot [\muX - \Ec{\muX}{\Xnoj} ] < 0
        \right\}}},
        $$
        let $W=(X,Z)$, then it suffices to show the following
        \begin{equation}\label{eq:1_thm:class_main}
        \EE{ G_{\Xnoj}(g(Z))\indicat{Y=1} + (1-F_{\Xnoj}(g(Z)))\indicat{Y=-1}} 
        = \EE{\mathbbm{1}_{\{
        Y\cdot [\muXk - \Ec{\mu(X,Z)}{\Xnoj}] < 0
        \}}}.
        \end{equation}
        By the law of total expectation we can rewrite the right hand side as
        \begin{equation}\nonumber
            \EE{\Ec{\mathbbm{1}_{\{
        Y\cdot [\muXk - \Ec{\muX}{\Xnoj}] < 0
        \}}}{\Xnoj,Y}}.
        \end{equation}
        Due to the property \eqref{eq:nulls_prop}, we have $\Xtil \independent  (Y,Z) \mid\Xnoj$ and $\Xtil \mid \Xnoj \sim X \mid \Xnoj$, which yields
        $$
        \Ec{\mathbbm{1}_{\{
        Y\cdot [\muXk - \Ec{\muX}{\Xnoj}] < 0
        \}}}{\Xnoj=\xnoj,Y=1} = G_{\Xnoj}(g(Z))\indicat{Y=1}.
        $$
        And we can do similar derivations when $Y=-1$. Thus we can prove $\EE{R_{i}} = \kappamu/2$ by showing \eqref{eq:1_thm:class_main}. In light of the deterministic relationship in Lemma \ref{lem:class_max}, we have $\{ \Ljc(\mu) \le \kappamu  \}\subset \{ \Ljc(\mu) \le \Ijc  \}$, hence it suffices to prove
    \begin{equation} 
	\label{eq:MC_cover_restate}
	%\nonumber
	    %\liminf_{n \rightarrow \infty}\mathbb{P}_{{P}}\left(\Ljc(\mu) \le \kappamu \right) 
	    \PP{ \Ljc(\mu) \le \kappamu}
	    \ge 1 - \alpha - O(n^{-1/2}).
	\end{equation}
      Note that $\Var{R_{i}}$ always exist due to the boundedness. When $\Var{R_{i}}=0$, we have $R_{i}=\kappamu/2 = \bar{R}$ and $s=0$, thus $\Ljc(\mu)=  \kappamu$, hence \eqref{eq:MC_cover_restate} trivially holds. Remark this includes the case when $\muX \in \sigalg(\Xnoj)$. Otherwise, applying Lemma \ref{lem:berry_t} to i.i.d. bounded random variables $R_{i}$ will yield \eqref{eq:MC_cover_restate},
    %     \begin{equation} 
	   %  %\label{eq:1_thm:class_main}
    % 	\nonumber
	   % \mathbb{P}_{{P}}\left(\Ljc(\mu) \le    \kappamu\right) \ge
 	  %   1 - \alpha - O(1/\sqrt{n}).
	   % \end{equation}
	    where the constant will depend on $\Var{R_{i}}$. 
     \end{proof}

    \subsection{Proofs in Section~\ref{sec:relax}}
    \begin{proof}[Proof of Theorem \ref{thm:batch_cond}]
    \label{pf:thm:batch_cond}  
    %\begin{enumerate}[(1)]
     When $\calT$ is degenerate or $\mu(X) \in \sigalg(\Xnoj)$, we immediately have $\LjT(\mu)= 0$ according to Algorithm \ref{alg:batch_cond}, which implies the coverage validity. Below we focus on the non-trivial case. 
        %where $\Delta_{mj}$ is well-defined and assumed to satisfy $0<c_{0}< n_{2}\mathrm{Var}(\Delta_{mj}) < c_{1}$. 
        Due to the deterministic relationship 
    $$
    \thetamuT \le \thetamusT \le \thetamus = \Ij,
    $$ 
    it suffices to prove 
   \begin{eqnarray}\label{eq:relax_coverage}
    \Pp{P}{ \LjT(\mu) \le \thetamuT}     
	           \ge 1 - \alpha - o(1).
    \end{eqnarray}
    which can be reduced to establishing certain asymptotic normality based on i.i.d. random variables $R_{m},V_m, m\in [n_{1}]$ whenever the variance of the asymptotic distribution is nonzero. First, we verify that under the stated conditions, all the involving moments are finite, which can be reduced to show
    $$
    \Var{R_m},\Var{V_m}< \infty.
    $$
    For a given $n_2$, it can be further reduced to the following
    \begin{eqnarray}\nonumber
    \Var{Y_i\, (\mu(X_i,Z_i) - \Ec{\mu(X_i,Z_i)}{\bs{Z}_m, \bT_m}}\\ \nonumber
    \Var{ \Varc{\mu(X_i,Z_i)}{\bs{Z}_m,\bT_m}}<\infty. 
    \end{eqnarray}
    Using similar strategies in the proof of Theorem \ref{thm:main}, we can show the above holds under the moment conditions $\EE{Y^4},\EE{\mu^4(X)}<\infty$ by the Cauchy--Schwarz inequality and the tower property of conditional expectation.
    
    Note that in the proof of the main result, i.e. Theorem \ref{thm:main}, we consider four different cases based on whether some variances are zero or not. Here we only pursue the asymptotic coverage validity, then the discussion on those four different cases becomes very straighforward. When both the variances of $R_m, V_m$ are zero, we have $\bar{R}/\bar{V} = \thetamuT,~s^2 =0$, then \eqref{eq:relax_coverage} holds immediately. When $\Var{V_m}=0$, we can simply establish the asymptotic normality by the central limit theorem. Otherwise, delta method can be applied. Here we give the derivation for the most non-trivial case where $ \Var{R_m},\Var{V_m}>0$. Denote random vectors $\{U_m\}_{m=1}^{n_1}  = \{(U_{m1},U_{m2})\}_{m=1}^{n_1} \iid U = (U_1, U_2)$ to be 
    \begin{eqnarray}
        \label{eq:thm_relax_V1V2_def}
	    U_{m1} &=& R_m - \EE{Y_i\, (\mu(X_i,Z_i) - \Ec{\mu(X_i,Z_i)}{\bs{Z}_m, \bT_m}},\\
        U_{m2} &=& V_m - \EE{\Varc{\mu(X_i,Z_i)}{\bs{Z}_m,\bT_m}}
    \end{eqnarray}
    hence we have $\EE{U}=0$. Denote $h^{\calT}(W_i) = \mu(X_i,Z_i) - \Ec{\mu(X_i,Z_i)}{\bs{Z}_m, \bT_m}$, we have the following holds
      \begin{eqnarray} \nonumber
     \thetamuT
        &=& \frac{\EE{\covc{\mustar(X_i,Z_i)}{\mu(X_i,Z_i)}{\bs{Z},\bs{T}}}}{\sqrt{\EE{\varc{\mu(X_i,Z_i)}{\bs{Z},\bs{T}}}}} \\ \nonumber
        &=& \frac{\EE{\covc{\mustar(X_i,Z_i)}{h^{\calT}(W_i)}{\bs{Z},\bs{T}}}}{\sqrt{\EE{ \EE{(h^{\calT}(W_i)^2)}}}} \\ \nonumber
        &=& \frac{\EE{{\mustar(X_i,Z_i)}{h^{\calT}(W_i)}}}{\sqrt{\EE{ \EE{(h^{\calT}(W_i)^2)}}}} \\ \nonumber
        &=& \frac{\EE{Y_ih^{\calT}(W_i)}  }{ \sqrt{\EE{(h^{\calT}(W_i)^2)} }} , 
    \end{eqnarray}
    where the first equality holds by the definition of $\thetamuT$, the second inequality holds by the definition of $h^{\calT}(W_i) $. Regarding the third equality, we make use of the fact $\Ec{h^{\calT}(W_i)}{\bs{Z}_m, \bT_m} = 0$ and the tower property of conditional expectation. The last inequality holds by the tower property of conditional expectation and the fact that $h^{\calT}(W_i)  \in \sigalg(\bs{X}_m,\bs{Z}_m) $.
    Let $T=\bar{R}/\bar{V}$, then $T - \thetamuT$ can be rewritten as 
      $$
	  T - \thetamuT = \frac{\bar{U}_1 +  \EE{Y_ih^{\calT}(W_i)} }{ \sqrt{\bar{U}_2 +  \EE{(h^{\calT}(W_i)^2)} }} - \frac{\EE{Y_ih^{\calT}(W_i)}  }{ \sqrt{\EE{(h^{\calT}(W_i)^2)} }} : = H(\bar{U})
	  $$
	  where $\bar{U} = (\bar{U}_1, \bar{U}_2) = \frac{1}{n_1}\sum\nsubp U_m$ and $H:\mathbb{R}^2 \rightarrow \mathbb{R}$ is defined through the following:
      $$
      H(x) = H(x_1,x_{2}) :=  \frac{x_1 +  \EE{Y_ih^{\calT}(W_i)} }{ \sqrt{x_2 +  \EE{(h^{\calT}(W_i)^2)} }} - \frac{\EE{Y_ih^{\calT}(W_i)}  }{ \sqrt{\EE{(h^{\calT}(W_i)^2)} }} : = H(\bar{U})
      $$
      when $x_{2} > -  \EE{(h^{\calT}(W_i)^2)}$ and is set to be $\frac{\EE{Y_ih^{\calT}(W_i)}}{ \sqrt{\EE{(h^{\calT}(W_i)^2)} }} $ otherwise. Note that the first order derivatives of $H(x)$ exists, by applying the multivariate Delta method to mean zero random vectors $\{(U_{m1},U_{m2})\}_{m=1}^{n_1} $ with the nonlinear function chosen as $H$, we have
       \begin{equation}\nonumber
    	     \sqrt{n_1}(T -   \thetamuT ) \stackrel{d}{\rightarrow} \gauss{0}{\tilde{\sigma}^2}
    	 \end{equation}
      whenever the variance term $\tilde{\sigma}^2$ is nonzero. Exactly following the strategy in the proof of Theorem \ref{thm:main}, we have $\tilde{\sigma}^2>0$ under the case where $ \Var{R_m},\Var{V_m}>0$. Also notice $s^2$ is a consistent estimator of $\tilde{\sigma}^2$, then by the argument of Slutsky's Theorem, \eqref{eq:relax_coverage} is established.   
 \end{proof}

    \section{An example for projection methods}
    \label{sec:comparison}
    % As the name suggests, (linear) projection methods are still seeking sort of "linear" solutions to the variable importance measure problem even though they are not assuming the true model to be linear. However, (linear) projection methods can be shown to give undesirable results in some situations, compared with the floodgate approach (even with linear regression algorithms chosen). Below we 
    Consider covariates $W=(W_1,W_2)$ distributed as $W_1\sim\calN(0,1)$ and $W_2= W_1^2+\calN(0,1)$. Let $Y = W_1^2+\calN(0,1)$, with all the Gaussian random variables independent.
    % \begin{equation*}
    % Y = W_{1}^2+\epsilon,~~~
    % W_{2}= W_{1}^2+G,~~~W_1 \sim \gauss{0}{1},~\epsilon \sim \calN(0,1),~G  \sim \calN(0,1)
    % \end{equation*}
    % where the Gaussian random variables are all independent, and denote $W=(W_1,W_2)$. 
    %Clearly, $W_{1}$ is a important variable for $Y$ and $W_{2}$ is not important. In the context of conditional independence, w
    Then $W_1$ is the only important variable; formally: $W_{1}\centernot{\independent}Y\mid W_{2}$ and $W_{2} \independent Y\mid W_{1}$. 
    % However, projection approaches fail to respect these and give a completely opposite answer. Note that they are targeting the 
    But the projection parameters are $(\EE{W^{\top}W} )^{-1}\EE{WY}=(0,\frac{3}{4})^{\top}$, i.e., zero for the non-null covariate and non-zero for the null covariate. 
    % Floodgate approach (even with $\mu$ fitted from linear models) will still not contradict the conditional independence framework in the sense $\thetamu$ equals $ \EE{\Covc{Y}{W_1}{W_2}} =0$, $ \EE{\Covc{Y}{W_2}{W_1}} =0$
    % %$\mathbb{E}\mathrm{Cov}(Y,X_1|X_{2})=0 \EE{\Covc{Y}{W_1}{W_2}}, \mathbb{E}\mathrm{Cov}(Y,X_2|X_{1})=0$ 
    % (up to some constant). Using linear model as fitting algorithm for floodgate is not reasonable and that is why the non-null random variable $X_{1}$ can not successfully discovered. But we claim even with undesirable regression algorithms which can be completely wrong choices, floodgate will not given wrong answers in terms of the validity of confidence lower bounds.

    \section{Rate results}
    \label{sec:rate}
  	\acc{
    \begin{theorem}[Floodgate validity]\label{thm:main_rate}
    For any given working regression function $\mu:\mathbb{R}^{p} \rightarrow \mathbb{R}$ and i.i.d. data $\{(Y_i,X_i,Z_i)\}_{i=1}^n$,   if $\ee{Y^{12}},~\ee{\mu^{12}(X,Z)} <\infty$, then $L_{n}^{\alpha} (\mu)$ from Algorithm~\ref{alg:MOCK} satisfies 
	\begin{equation} 
	\nonumber
        \mathbb{P}\left(L_{n}^{\alpha} (\mu) \le \Ij \right) \ge 1 - \alpha - Cn^{-1/2}
	\end{equation}
	for some constant $C$ depending only on the moments of $Y$ and $\mu(X, Z)$.
    \end{theorem}
    The proof can be found in Appendix \ref{pf:thm:main_rate}. Establishing the $n^{-1/2}$ rate requires relatively recent Berry--Esseen-type results for the delta method \citep{pinelis2016optimal} and also necessitates the existence of 12th moments.

    \begin{theorem}\label{thm:main_general_rate}
	 Assume the conditions of Theorem~\ref{thm:main_rate} and $\EE{\Varc{Y (\mu(X,Z)-\Ec{\mu(X,Z)}{Z})}{\Xnoj}}>0$. $L_{n,K}^{\alpha} (\mu)$ computed by replacing $R_i$ and $V_i$ with $R_i^K$ and $V_i^K$, respectively, in Algorithm \ref{alg:MOCK} satisfies
	\begin{equation} 
	\nonumber
	\inf_{K>1} 
	\PP{L_{n,K}^{\alpha} (\mu) \le \Ij }  \ge 1 - \alpha - Cn^{-1/2}
	\end{equation}
	for some constant $C$ depending only on the moments of $Y$ and $\mu(X, Z)$.
    \end{theorem}
    The proof can be found in Appendix \ref{pf:thm:main_general_rate}. Note that the additional assumption beyond Theorem~\ref{thm:main_rate} of $\EE{\Varc{Y (\mu(X,Z)-\Ec{\mu(X,Z)}{Z})}{\Xnoj}}>0$ is only needed for $n^{-1/2}$-rate coverage validity \emph{uniformly} over $K>1$, and could be removed for the same result for any fixed $K>1$.
    }
     \subsection{Proofs in Appendix \ref{sec:rate}}
    \subsubsection{Theorem \ref{thm:main_rate}}
    \label{pf:thm:main_rate}
    \begin{proof}[Proof of Theorem \ref{thm:main_rate}]
%    Due to \eqref{eq:element_mmt_bound}, $\EE{\mu^{12}(X, Z)} < \infty$ implies $\EE{h^{12}(W)} < \infty$. In the following proof, we will only assume the weaker moment conditions that $\EE{Y^{12}}, \EE{h^{12}(W)} < \infty$.  Under such moment conditions, we also have $\EE{Yh(W)} \le \sqrt{\EE{Y^2}}\sqrt{\EE{h^2(W)}}$ and $\EE{h^2(W)} < \infty$  since the finiteness of higher moments implies that of lower moments. 
    \acc{Recall in Algorithm \ref{alg:MOCK}, we denote $R_{i} = Y_i\big(\mu(X_i,Z_i)-\Ec{\mu(X_i,Z_i)}{\Xinoj}\big)$ and $V_{i} = \Varc{\mu(X_i,Z_i)}{\Xinoj}$ for each $i\in [n]$, and compute their sample mean $(\bar{R},\bar{V})$ and sample covariance matrix $\hat{\Sigma}$. The LCB is constructed as
    $$
             L_{n}^{\alpha} (\mu)=\max\left\{ \frac{\bar{R}}{ \sqrt{\bar{V}  }} -  \frac{z_{\alpha}s}{\sqrt{n}}   ,0\right\},~~\text{where}~~  s^2 = \frac{ 1 }{ \bar{V} }\left[ \left(\frac{\bar{R} }{2 \bar{V} }\right)^2 \hat{\Sigma}_{22} +  \hat{\Sigma}_{11} - \frac{\bar{R} }{ \bar{V} } \hat{\Sigma}_{12} \right].     
     $$ 
    Following exactly the same discussions as those from the beginning to \eqref{eq:valid_atmu} in the proof of Theorem \ref{thm:main}, we have 
    \begin{itemize}
    	\item Theorem \ref{thm:main_rate} can be proved under the weaker moment conditions that $\EE{Y^{12}}, \EE{h^{12}(W)} < \infty$, which is assumed for the following proof;
    	\item it suffices to prove
    	       \begin{equation}
            \label{eq:unif_valid_atmu}
        	   % \inf_{P\in \calP,~ \mu \in \calU}
        	    \PP{\frac{\bar{R}}{ \sqrt{\bar{V}  }} -  \frac{z_{\alpha}s}{\sqrt{n}} \le \thetamu} \ge 1 - \alpha - C/\sqrt{n}
        	\end{equation}
        	for some constant $C$ when $\EE{\varc{\muX}{Z}}  \neq 0$;
        \item we can assume $\EE{h^2(W)}=1$ without loss of generality.
    \end{itemize}
  	We will utilize Berry--Esseen-type bounds to prove \eqref{eq:unif_valid_atmu}. Now we still consider the following four cases. 
        	\begin{enumerate}[(I)]
        	    \item $\Var{Yh(W)}=0$ and $ \Var{\Varc{h(W)}{\Xnoj}}=0$.
        	    \item $\Var{Yh(W)}>0$ and $ \Var{\Varc{h(W)}{\Xnoj}}=0$.
        	    \item $\Var{Yh(W)}=0$ and $ \Var{\Varc{h(W)}{\Xnoj}}>0$.
        	    \item $\Var{Yh(W)}>0$ and $ \Var{\Varc{h(X)}{\Xnoj}}>0$.
        	\end{enumerate}
        	Note that assuming $\EE{Y^{12}}$ and $\EE{h^{12}(W)} < \infty$ ensures all the above variances exist due to the same bounding strategy as \eqref{eq:element_mmt_bound}.
%        	Notice that, when $\Var{Yh(W)}=0$, we have
%             $
%             R_{i} = {\EE{Yh(W)}}
%             $ for $i \in [n]$, and thus $\bar{R} = {\EE{Yh(W)}},~ \hat{\Sigma}_{11} =\hat{\Sigma}_{12} =0$; when $\Var{\Varc{h(W)}{\Xnoj}}=0$, we have  $
%             V_{i} = \EE{h^2(W)}$ for $i\in [n]$, and thus $\bar{V} = {\EE{h^2(W)}},~ \hat{\Sigma}_{22} =\hat{\Sigma}_{12} =0$.   

             Case (\myRom{1}): \eqref{eq:unif_valid_atmu} holds by the discussion for Case (\myRom{1}) in the proof of Theorem \ref{thm:main}. 
            
             Case (\myRom{2}): due to the derivations for Case (\myRom{2}) in the proof of Theorem \ref{thm:main}, the problem is reduced to showing
             \begin{equation} 
        	    \label{eq:case2_coverage_rate}
                     \PP{   \bar{R} - \frac{z_{\alpha}(\hat{\Sigma}_{11})^{1/2}}{\sqrt{n}} \le  \EE{Yh(W)} } \ge 1 - \alpha - C/\sqrt{n}.
        	    \end{equation}
             As mentioned in the proof of Theorem \ref{thm:main}, $\bar{R}$ is simply the sample mean estimator of the quantity $\EE{Yh(W)}$ and $\hat{\Sigma}_{11}$ is the corresponding sample variance. Therefore, the CLT and Slutsky's theorem immediately establish the asymptotic coverage validity. To prove the $1/\sqrt{n}$ rate in \eqref{eq:case2_coverage_rate}, stronger results are needed.} The classical Berry--Esseen bound serves as the main ingredient, which states that
            \begin{lemma}[Berry--Esseen bound]
        	\label{lem:berry}
        	    There exists a positive constant C, such that for $i.i.d.$ mean zero random variables $X_1,\dots,X_n$ satisfying
        	   \begin{enumerate}[(1)]
        	       \item $\ee{X_{1}^{2}}=\sigma^2 > 0$
        	       \item $\ee{|X_1|^3} = \rho < \infty$
        	   \end{enumerate}
                if we define $F_n(x)$ to be the cumulative distribution function (CDF) of the scaled average ${\sqrt{n}\bar{X}}/{\sigma}$ and denote the CDF of the standard normal distribution by $\Phi(x)$, then we have
        	    \begin{equation} 
        	    \label{eq:1_lem:berry}
        	        \sup_{x\in \mathbb{R}}\left| F_n(x) - \Phi(x) \right| \le \frac{C\rho}{\sigma^3 \sqrt{n}}.
        	    \end{equation}
            \end{lemma}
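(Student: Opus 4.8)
The plan is to give the classical Fourier-analytic proof based on Esseen's smoothing inequality. First I would reduce to the unit-variance case: replacing $X_i$ by $X_i/\sigma$ rescales the second moment to $1$ and the third absolute moment to $\rho/\sigma^3$, while leaving the normalized average $\sqrt{n}\bar{X}/\sigma$ and hence $F_n$ unchanged. So it suffices to prove $\sup_x|F_n(x)-\Phi(x)|\le C\rho/\sqrt{n}$ under $\ee{X_1^2}=1$, writing $\rho=\ee{|X_1|^3}$; note $\rho\ge 1$ by Lyapunov's/Jensen's inequality, a fact I will use at the very end.

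The engine is Esseen's smoothing inequality: for any CDF $F$, the standard normal $\Phi$ (with $\sup_x\Phi'(x)=1/\sqrt{2\pi}$), and any $T>0$,
\[
\sup_x |F(x)-\Phi(x)|\le \frac{1}{\pi}\int_{-T}^{T}\left|\frac{\hat{F}(t)-e^{-t^2/2}}{t}\right|dt+\frac{C_0}{T},
\]
where $\hat F$ is the characteristic function of $F$ and $C_0$ is an absolute constant. I would apply this with $F=F_n$, whose characteristic function is $\varphi(t/\sqrt{n})^n$ for $\varphi(s):=\ee{e^{isX_1}}$, and with the crucial choice $T=c\sqrt{n}/\rho$ for a small absolute constant $c$ fixed later. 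With this $T$ the remainder term is already $C_0/T=(C_0/c)\rho/\sqrt{n}$, of the desired form, so the whole problem reduces to bounding the characteristic-function integral by $O(\rho/\sqrt{n})$.

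For the integral I would Taylor-expand $\varphi$ at $0$: since $\ee{X_1}=0$ and $\ee{X_1^2}=1$, the remainder estimate $|e^{isx}-1-isx+s^2x^2/2|\le |sx|^3/6$ gives $|\varphi(s)-(1-s^2/2)|\le \rho|s|^3/6$. Writing $a=\varphi(t/\sqrt{n})$ and $b=e^{-t^2/(2n)}$, the telescoping bound $|a^n-b^n|\le n|a-b|\max(|a|,|b|)^{n-1}$ reduces matters to (i) $|a-b|\le C(\rho|t|^3 n^{-3/2}+t^4 n^{-2})$, obtained by comparing the Taylor expansions of $\varphi$ and of the exponential, and (ii) an exponential bound $\max(|a|,|b|)^{n-1}\le e^{-t^2/8}$ valid on $|t|\le T$. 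Substituting and integrating gives $\int_{-T}^{T}|t|^{-1}|a^n-b^n|\,dt\le C\rho\, n^{-1/2}\int_{\mathbb{R}}t^2 e^{-t^2/8}dt+C n^{-1}\int_{\mathbb{R}}|t|^3 e^{-t^2/8}dt$; since both Gaussian integrals are finite constants and $n^{-1}\le \rho\,n^{-1/2}$ (because $\rho\ge 1$), this is $O(\rho/\sqrt{n})$. Combined with the smoothing remainder this yields the claim, and undoing the rescaling replaces $\rho$ by $\rho/\sigma^3$.

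The main obstacle is step (ii): establishing $|\varphi(s)|\le e^{-s^2/4}$ (hence clean exponential decay of $|a|^{n-1}$) uniformly over the whole range $|s|=|t|/\sqrt{n}\le c/\rho$. This forces choosing $c$ small enough that the cubic remainder $\rho|s|^3/6$ is dominated by $s^2/4$ there, so that $|\varphi(s)|\le 1-s^2/4\le e^{-s^2/4}$; the constraint $|s|\lesssim 1/\rho$ is exactly why $T$ must scale like $\sqrt{n}/\rho$ and no larger. One must also verify the companion bound on $b$ and confirm the higher-order $t^4 n^{-2}$ contribution is harmless, which is where $\rho\ge\sigma^3$ (here $\rho\ge 1$) does the final bookkeeping to absorb the $n^{-1}$ term into the target rate $\rho/\sqrt{n}$.
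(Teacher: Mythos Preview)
Your proposal is correct and follows the standard Fourier-analytic route via Esseen's smoothing inequality; the reduction to unit variance, the Taylor expansion of $\varphi$, the telescoping bound on $|a^n-b^n|$, the choice $T\asymp\sqrt{n}/\rho$, and the use of $\rho\ge 1$ to absorb the $n^{-1}$ term are all handled properly.

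The paper, however, does not prove this lemma at all: it simply states the classical Berry--Esseen bound as a known ingredient and moves on, just as it cites \citet{bentkus1996berry} for the Student-$t$ version in the next lemma. So there is nothing to compare on the level of argument; you have supplied a full textbook proof where the paper only quotes the result.
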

            
            Since $\sigma$ in the above result is generally unknown and usually replaced by the sample variance $s_{\sigma}^2 = \frac{1}{n}\sum_{i=1}^n(X_i - \bar{X})^2$, we need the following lemma, which is proved in \citet{bentkus1996berry}.
           \begin{lemma}[Berry--Esseen bound for Student's statistic]
            \label{lem:berry_t}
        	Under the same conditions as in Lemma \ref{lem:berry}, if we redefine $F_n(x)$ to be the cumulative distribution function (CDF) of the Student t-statistic ${\sqrt{n}\bar{X}}/{s_{\sigma}}$, then we have the following Berry--Esseen bound
        	\begin{equation} 
        	    \label{eq:1_lem:berry_t}
        	        \sup_{x\in \mathbb{R}}\left| F_n(x) - \Phi(x) \right| \le \frac{C'\rho}{\sigma^3 \sqrt{n}}.
             \end{equation}        	
        \end{lemma}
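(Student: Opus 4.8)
The plan is to prove Lemma~\ref{lem:berry_t} by reducing it to the already-cited un-studentized Berry--Esseen bound of Lemma~\ref{lem:berry} together with the corresponding bound for self-normalized sums, so that essentially no new probabilistic machinery is needed; this is the route taken in \citet{bentkus1996berry}, which I would simply invoke, but the reduction is short enough to sketch. Write $S_n=\sum_{i=1}^n X_i$ and $V_n^2=\sum_{i=1}^n X_i^2$. The algebraic identity $s_\sigma^2=V_n^2/n-S_n^2/n^2$ gives, on the event $S_n\neq 0$, the exact relation $T_n^{-2}=(S_n/V_n)^{-2}-n^{-1}$, and since $\mathrm{sign}(T_n)=\mathrm{sign}(S_n)=\mathrm{sign}(S_n/V_n)$, the Student statistic $T_n$ is a strictly increasing function of the self-normalized sum $U_n:=S_n/V_n\in(-\sqrt n,\sqrt n)$; concretely $\{T_n\le x\}=\{U_n\le x/\sqrt{1+x^2/n}\}$ for every $x$. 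Hence, by the triangle inequality, $\sup_x|\PP{T_n\le x}-\Phi(x)|$ is at most $\sup_y|\PP{U_n\le y}-\Phi(y)|$ plus $\sup_x|\Phi(x/\sqrt{1+x^2/n})-\Phi(x)|$. The second term is $O(n^{-1})$ after splitting the supremum at $|x|\asymp\sqrt n$ (for moderate $|x|$ one uses $|x-x/\sqrt{1+x^2/n}|\le|x|^3/(2n)$ against the boundedness of $\phi$, and for large $|x|$ one uses that both $\Phi$ and the law of $U_n$ have exponentially small tails there), and the first term is a Berry--Esseen bound for a self-normalized sum.

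For that self-normalized-sum bound I would follow the classical argument: for $x>0$ write $\PP{U_n\ge x}=\PP{S_n\ge xV_n}$, linearize $V_n$ about $\sqrt n\,\sigma$ by setting $V_n^2/n=\sigma^2+\bar W_n$ with $\bar W_n=n^{-1}\sum_i(X_i^2-\sigma^2)$ mean zero, and control the joint law of the bivariate average $(n^{-1}S_n,\bar W_n)$. Truncating the summands at level $\asymp\sigma\sqrt n$ makes the truncation error $O(\rho/(\sigma^3\sqrt n))$ by the third-moment hypothesis; Esseen's smoothing inequality then converts a bound on the bivariate characteristic function into one on the distribution function, and a first-order Edgeworth expansion shows the leading correction is odd and of size $\asymp\rho/(\sigma^3\sqrt n)$. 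Since $s_\sigma$ is only defined when $s_\sigma>0$, one finally notes $\PP{s_\sigma=0}=\PP{X_1=\cdots=X_n}$ is smaller than any power of $n$ when $\sigma^2>0$, so it does not affect the rate. The universality of the constant $C'$ is inherited from the universality of the constants in Lemma~\ref{lem:berry} and in the self-normalized bound.

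The main obstacle in a self-contained proof would be the control of the self-normalized sum in the second paragraph: the dependence between the numerator $S_n$ and the random normalizer $V_n$, together with the non-smoothness of the event $\{S_n\ge xV_n\}$, is exactly what makes the sharp $\rho/(\sigma^3\sqrt n)$ rate delicate, and it is the reason the paper relies on \citet{bentkus1996berry} rather than reproving it from scratch. Everything else --- the algebraic reduction, the quantile comparison, and discarding the degenerate event --- is elementary and distribution-free, which is why the bound holds with a single universal constant.
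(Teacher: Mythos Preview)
Your proposal is correct, and in fact you have done more than the paper does: the paper gives no proof of this lemma at all, simply stating that it ``is proved in \citet{bentkus1996berry}'' and then invoking it as a black box. Your sketch---the monotone algebraic reduction from $T_n$ to the self-normalized sum $U_n=S_n/V_n$, the quantile comparison $\sup_x|\Phi(x/\sqrt{1+x^2/n})-\Phi(x)|=O(n^{-1})$, and the appeal to a self-normalized Berry--Esseen bound for $U_n$---is exactly the route taken in \citet{bentkus1996berry}, so you are aligned with the paper's cited source. You also correctly flag that the genuinely nontrivial step is the sharp $\rho/(\sigma^3\sqrt n)$ bound for the self-normalized sum, which is precisely why the paper outsources this lemma rather than proving it.
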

        
         To apply Lemma \ref{lem:berry_t}, since we are in Case (\myRom{2}) where $\Var{\Varc{h(W)}{\Xnoj}}=0$ and $\Var{Yh(W)}>0$, it suffices to verify the finiteness of the term ``$\rho$" in our context:
          \begin{eqnarray} \nonumber
            \rho 
            &=& {\EE{\left|Yh(W) - \EE{Yh(W)}  \right|^3 }} \\
            &\le&  \nonumber
             { 2^{3-1}\left( \EE{Y^3h^3(W)}
             +  |\EE{Yh(W)} |^3
             \right)
             } 
             < \infty
            \end{eqnarray}
            \acc{where the equality holds since we assume $\EE{h^2(W)}=1$ and the inequality comes from the $C_r$ inequality.} For the last inequality, using the Cauchy--Schwarz inequality and the fact that higher moments dominate lower moments, we obtain the finiteness when assuming $\EE{Y^6}, \EE{h^6(W)}<\infty$, which holds under the assumed moment conditions. Now by applying the Berry--Esseen bound in Lemma \ref{lem:berry_t} with $\bar{X}=\bar{R} - \EE{Y h(W)}$ and $s_{\sigma}^2= \hat{\Sigma}_{11}$, we obtain \eqref{eq:case2_coverage_rate}.

        \acc{Case (\myRom{3}): due to \eqref{eq:fact1}, we have 
            $$
            \frac{\bar{R}}{ \sqrt{\bar{V}  }} -  \frac{z_{\alpha}s}{\sqrt{n}} =\frac{\EE{Yh(W)}}{ \sqrt{\bar{V}  }} -  \frac{z_{\alpha}s}{\sqrt{n}},~~\text{where}~~ s^2 = \frac{1}{\bar{V}}\left(\frac{\EE{Yh(W)}  }{2\bar{V}}\right)^2 \hat{\Sigma}_{22}.
             $$}
        	Note $\frac{\EE{Yh(W)}}{ \sqrt{\bar{V} }} $ is a nonlinear function of the moment estimators,
        % 	By applying CLT to i.i.d. random variables $Y_{i}h(X_i)/\sqrt{\ee{h^2(W_i)}}$ with $h(X_i) = \mu(W_i) - \ec{\mu(W_i)}{\Xinoj}$ together with a Slutsky's theorem to deal with the replacement of $\sqrt{\ee{h^2(W_i)}}$ by $\sqrt{\bar{\sigma}_j }$ , 
        	so the following asymptotic normality result is a direct consequence of the multivariate delta method,
    	 \begin{equation}\nonumber
    	     \sqrt{n}\left(\frac{\EE{Yh(W)}}{ \sqrt{\bar{V} }}-   \thetamu \right) \stackrel{d}{\rightarrow} \gauss{0}{\tilde{\sigma}_0^2},
    	 \end{equation}
          {where $\tilde{\sigma}_0^2 = H_2(\bs{0})$ will be specified later (see the definition of $H_2(x)$ in \eqref{eq:H2_def})} and $s^2$ in ${L}_n^{\alpha}(\mu)$ is a consistent estimator of it. To establish the rate $1/\sqrt{n}$, the classical Berry--Esseen result needs to be extended for nonlinear statistics. Note that Case (\myRom{4}) involves a nonlinear statistic too, and is a bit more complicated. Hence we focus on Case (\myRom{4}) and omit the very similar proof for Case (\myRom{3}). 
          
          Case (\myRom{4}):
             \acc{Denote $T:=\left(\frac{\bar{R}}{ \sqrt{\bar{V}}} - f(\mu)\right)/s$.} Under specific moment conditions, we will establish the Berry--Esseen-type bound below:
	         \begin{equation}\label{eq:T_BE_bound}
	          \sup_{t \in \mathbb{R}} \left|
    	     \PP{\sqrt{n} T\le t }  -  \Phi(t)
    	     \right|  = O\left(\frac{1}{\sqrt{n}}\right)
    	    \end{equation}
	     where $\Phi(t)$ denotes the CDF of the standard normal distribution.
        %   To establish the rate $1/\sqrt{n}$ in \eqref{eq:case3_coverage}, Berry--Esseen type bounds are needed. In the case where $\Var{\Varc{h(X)}{\Xnoj}}=0$, we have $ \sigma_{ij} =\EE{h^2(X)} $ for $i \in[n]$ thus $\sqrt{\bar{\sigma}_j}=\sqrt{\EE{h^2(X)}} $ and $ \tilde{\sigma} = \sqrt{\Var{Yh(W)}/\EE{h^2(X)}}$, then exactly following the same idea as proving \eqref{eq:case2_?coverage} and a classical Berry--Esseen result in Lemma \ref{lem:berry} would be sufficient. The details are thus omitted. Below we focus on the case where both 
        %   $\Var{Yh(W)}>0$ and $\Var{\Varc{h(X)}{\Xnoj}}>0$, and the classical Berry--Esseen result needs to be extended for nonlinear statistics. 
%         	\begin{enumerate}[(a)]
% 	        \item Under specific moment conditions, we will establish a Berry--Esseen-type bound for the nonlinear statistic $T$ with the usual rate:
% 	         \begin{equation}\label{eq:T_BE_bound}
% 	          \sup_{t \in \mathbb{R}} \left|
%     	     \PP{\sqrt{n}(T - \thetamu)\le t\tilde{\sigma} }  -  \Phi(t)
%     	     \right|  = O\left(\frac{1}{\sqrt{n}}\right)
%     	    \end{equation}
% 	     where $\Phi(t)$ denotes the CDF of the standard normal distribution.
% 	     \item By verifying that $s$ satisfies Lemma~\ref{lem:generic_cover_rate}'s consistency rate assumption and combining it with the above Berry--Esseen bound, we apply Lemma \ref{lem:generic_cover_rate} to establish \eqref{eq:unif_valid_atmu}.
% 	  \end{enumerate}

 The proof relies on a careful analysis of nonlinear statistics. We take advantage of the results in a recent paper \citep{pinelis2016optimal} that establishes Berry--Esseen bounds with rate $1/\sqrt{n}$ for the multivariate delta method when the function applied to the sample mean estimator satisfies certain smoothness conditions. And the constants in the rate depend on the distribution only through several moments. Specifically, consider $U,U_1,\dots,U_n$ to be i.i.d. random vectors on a set $\mathcal{X}$ and a functional $H: \mathcal{X} \rightarrow \mathbb{R}$ which satisfies the following smoothness condition: 
	  \begin{cond}\label{cond:f_smoothness}
	   There exists $\varepsilon, M_{\varepsilon} >0$ and a continuous linear functional $L:\mathcal{X} \rightarrow \mathbb{R}$ such that   
	 \begin{equation}\label{eq:f_smoothness}
	 |H(x) - L(x)| \le   M_{\varepsilon} \|x\|^2 ~~~\text{for all }x \in  \mathcal{X}  ~\text{with } \|x\|\le \varepsilon 
	 \end{equation}
	  \end{cond}
	  We can think of $L$ as the first-order Taylor expansion of $H$. This smoothness condition basically requires $H$ to be nearly linear around the origin and can be satisfied if its second derivatives are bounded in the small neighbourhood $\{x: \|x\|\le \varepsilon\}$ . Before stating \citet{pinelis2016optimal}'s result (we change their notation to avoid conflicts with the notation in the main text of this paper), define $\bar{U} := \frac{1}{n}\sum\nsubp U_i$ and
	   \begin{eqnarray}\nonumber
	    \tilde{\sigma} :=  \|L(U)\|_2,~~
	     \nu_p := \|U\|_p,~~
	     \varsigma_p := \frac{\|L(U)\|_p }{\tilde{\sigma}}, 
	 \end{eqnarray}
	 where for a given random vector $U=(U_{1},\cdots, U_d) \in \mathbb{R}^d$, $\|U\|_p$ is defined as $\|U\|_p = (\EE{\|U\|^p})^{1/p} $ with $\|u\|^{p}:= \sum_{j=1}^{d}|u_{j}|^p$.
	  \begin{theorem}{\cite[Theorem 2.11]{pinelis2016optimal}}\label{thm:Berry_Esseen_nonlinear}
	  Let $\mathcal{X}$ be a Hilbert space, let $H$ satisfy Condition \ref{cond:f_smoothness} for some $\eps>0$, and assume $\EE{U}=0$, $\tilde{\sigma}>0$ and $\nu_3 < \infty$, then
	  \begin{equation}\label{eq:Berry_Esseen_nonlinear}
	       \sup_{t \in \mathbb{R}} \left|
	     \PP{ \frac{\sqrt{n}H(\bar{U})}{\tilde{\sigma}}\le t }  -  \Phi(t)
	     \right|  \le \frac{C}{\sqrt{n}}
	  \end{equation}
	  where the constant $C$ depends on the distribution of $U$ only through $\tilde{\sigma},\nu_{2},\nu_{3},\varsigma_3$ (it also depends on the smoothness of the functional $H$ through $\eps,M_{\eps}$).
	  \end{theorem}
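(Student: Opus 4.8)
The plan is to split $H(\bar U)$ into its linear part $L(\bar U)$, for which the classical Berry--Esseen bound applies directly, and a quadratic-order remainder whose contribution I will control to order $1/\sqrt n$ on the characteristic-function side. Since $\mathcal{X}$ is a Hilbert space and $L$ is a continuous linear functional, the Riesz representation theorem gives $L(x)=\langle a,x\rangle$ for some $a\in\mathcal{X}$, so that $B_n:=\sqrt{n}\,L(\bar U)/\tilde{\sigma}=\frac{1}{\sqrt{n}}\sum_{i=1}^n \xi_i$ with $\xi_i:=L(U_i)/\tilde{\sigma}$ i.i.d., mean zero (because $\EE{U}=0$ and $L$ is linear), unit variance (by definition of $\tilde{\sigma}=\|L(U)\|_2$), and third absolute moment $\EE{|\xi|^3}=\varsigma_3^3<\infty$. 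Lemma~\ref{lem:berry} then yields $\sup_t|\PP{B_n\le t}-\Phi(t)|\le C\varsigma_3^3/\sqrt{n}$, which already has the advertised rate and moment dependence. All remaining work concerns the remainder $R_n:=\sqrt{n}\,(H(\bar U)-L(\bar U))/\tilde{\sigma}$, and the target is to show $A_n:=B_n+R_n$ matches $\Phi$ up to $O(1/\sqrt n)$.

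First I would record two elementary facts about $R_n$. On the event $G_n:=\{\|\bar U\|\le\eps\}$, Condition~\ref{cond:f_smoothness} gives $|R_n|\le \sqrt{n}\,M_{\eps}\,\|\bar U\|^2/\tilde{\sigma}$; since $\EE{\|\bar U\|^2}=\nu_2^2/n$, the typical size of $R_n$ is $M_{\eps}\nu_2^2/(\tilde{\sigma}\sqrt n)=O(1/\sqrt n)$. On the complement, Chebyshev's inequality applied to $\sqrt{n}\,\bar U$ (which has bounded second moment $\EE{\|\sqrt n\,\bar U\|^2}=\nu_2^2$) gives $\PP{G_n^c}=\PP{\|\sqrt n\,\bar U\|>\eps\sqrt n}\le \nu_2^2/(\eps^2 n)=O(1/n)$, so the off-smoothness event is negligible.

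The main obstacle is that these two facts do \emph{not} combine into the optimal rate through a crude perturbation (Slutsky-type) comparison. Writing $\sup_t|\PP{A_n\le t}-\Phi(t)|\le \sup_t|\PP{B_n\le t}-\Phi(t)|+\delta/\sqrt{2\pi}+\PP{|R_n|>\delta}$ and taking $\delta=c/\sqrt n$, the required tail $\PP{|R_n|>c/\sqrt n}=\PP{\|\sqrt n\,\bar U\|>c'}$ (with $c'\propto\sqrt c$) is only \emph{constant}-order in $n$, because $R_n$ is of exact order $1/\sqrt n$ with heavy enough tails; optimizing over $c$ (using $\nu_2$ to bound the tail by $O(c'^{-2})$) yields only the suboptimal rate $O(n^{-1/4})$. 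I would therefore bypass the perturbation bound and use Esseen's smoothing inequality, controlling the characteristic function $\phi_n(\theta):=\EE{e^{i\theta A_n}}$ directly for $|\theta|\le T$ with $T\asymp\sqrt n$. Writing $\phi_n(\theta)=\EE{e^{i\theta B_n}e^{i\theta R_n}}$ and expanding $e^{i\theta R_n}=1+i\theta R_n+O(\theta^2 R_n^2)$, the dominant correction is $i\theta\,\EE{e^{i\theta B_n}R_n}$. The quadratic form $\|\bar U\|^2=n^{-2}\sum_{i,j}\langle U_i,U_j\rangle$ splits into a diagonal part (contributing a deterministic mean shift of order $1/\sqrt n$, i.e. an Edgeworth-type bias) and an off-diagonal degenerate $U$-statistic part (mean zero, with fluctuations of higher order); bookkeeping these against $e^{i\theta B_n}$ produces a bounded polynomial-in-$\theta$ multiple of $e^{-\theta^2/2}$, which after dividing by $|\theta|$ and integrating over $|\theta|\le T$ contributes $O(1/\sqrt n)$, while the $\theta^2R_n^2$ term and the $G_n^c$ contribution are absorbed by the moment and tail bounds above; the smoothing remainder $C/T$ is $O(1/\sqrt n)$ by $T\asymp\sqrt n$.

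Finally I would assemble the pieces: the linear Berry--Esseen term, the Edgeworth correction, and the smoothing/tail terms are each $O(1/\sqrt n)$ with constants depending on the law of $U$ only through $\tilde{\sigma},\nu_2,\nu_3,\varsigma_3$ (and on $H$ through $\eps,M_{\eps}$), giving \eqref{eq:Berry_Esseen_nonlinear}. The delicate step, and the one I expect to consume most of the effort, is the uniform-in-$\theta$ control of $\EE{e^{i\theta B_n}R_n}$ over the full range $|\theta|\lesssim\sqrt n$: near $|\theta|\asymp\sqrt n$ the phase $\theta R_n$ is of order one, so a first-order Taylor expansion alone does not suffice, and the separation of $\|\bar U\|^2$ into its diagonal and degenerate-$U$-statistic pieces, together with Condition~\ref{cond:f_smoothness}, must be exploited most carefully there.
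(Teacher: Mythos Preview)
The paper does not prove this statement: Theorem~\ref{thm:Berry_Esseen_nonlinear} is quoted verbatim from \citet[Theorem~2.11]{pinelis2016optimal} and used as a black box in the proofs of Theorems~\ref{thm:main} and~\ref{thm:main_general}. There is therefore no ``paper's own proof'' to compare your proposal against.

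As a standalone sketch your plan has the right architecture---linearize, apply classical Berry--Esseen to $L(\bar U)$, and handle the quadratic remainder through Esseen's smoothing inequality rather than a crude perturbation bound (which, as you correctly diagnose, would only give $n^{-1/4}$). This is broadly the strategy Pinelis uses. The step you flag as delicate, controlling $\EE{e^{i\theta B_n}R_n}$ uniformly over $|\theta|\lesssim\sqrt n$, is indeed where the work lies; your diagonal/off-diagonal decomposition of $\|\bar U\|^2$ is the right idea, but turning it into a rigorous bound with the stated moment dependence requires substantially more than what you have written (in particular, careful treatment of the degenerate $U$-statistic part against the oscillatory factor, and an argument that the $O(\theta^2 R_n^2)$ term integrates to $O(1/\sqrt n)$ over the full range). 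If you want to carry this through, consult Pinelis's paper directly rather than reconstructing it.
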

	  Note that the above result is a generalization of the standard Berry--Esseen bound. $\tilde{\sigma}^2$ is the variance term of the asymptotic normal distribution. $\varsigma_3$ is closely related to the term ${\rho}/{\sigma^2}$ in \eqref{eq:1_lem:berry}. The quantities $\tilde{\sigma},\nu_{2},\nu_{3},\varsigma_3$ involved in the constant $C$ only involve up to third moments, which is in accordance with the standard Berry--Esseen bound in Lemmas \ref{lem:berry} and \ref{lem:berry_t}. Note the existence of $\tilde{\sigma},\nu_{2}, \varsigma_{3}$ is implied by $\nu_{3} < \infty$ due to the fact that lower moments can be controlled by higher moments, together with the linearity of the functional $L$. To apply Theorem~\ref{thm:Berry_Esseen_nonlinear} to our problem, we first let $\mathcal{X} = \mathbb{R}^5$ and random vectors $\{U_i\}\nsubp  = \{(U_{i1},U_{i2}, U_{i3}, U_{i4}, U_{i5})\}\nsubp  \iid U_{0} = (U_{01}, U_{02},U_{03}, U_{04}, U_{05})$ to be 
	  	 \begin{eqnarray}
        \label{eq:thm1_V1V2_def}
	     U_{i1}= R_i - \EE{Yh(W)},~~ U_{i2} = V_i - \EE{h^2(W)},\\ \nonumber
	   %  \label{eq:thm1_V3V4_def}
	     U_{i3}= Y_i^2 h^2(W_i)- \EE{Y^2h^2(W)},~~
	     U_{i4}= (\Varc{h(W_i)}{Z_i})^2 - \EE{(\Varc{h(W)}{Z})^2},\\ \nonumber
	     U_{i5}=  R_i\Varc{h(W_i)}{Z_i} - \EE{Yh(W)\Varc{h(W)}{Z}}.
	     \end{eqnarray}
	     Recall the definition $R_{i} = Y_i\big(\mu(X_i,Z_i)-\Ec{\mu(X,Z_i)}{\Xinoj}\big)$ and $V_{i} = \Varc{\mu(X_i,Z_i)}{\Xinoj}$, hence we have $\EE{U_i}=\EE{U_{0}}=\mathbf{0}$. Let $\bar{U} = (\bar{U}_1, \bar{U}_2, \bar{U}_3, \bar{U}_4, \bar{U}_5) = \frac{1}{n}\sum\nsubp U_i \in \mathbb{R}^5$, recall the definition $T =\left(\frac{\bar{R}}{ \sqrt{\bar{V}}} - f(\mu)\right)/s$ where  $s^2 = \frac{ 1 }{ \bar{V} }\left[ \left(\frac{\bar{R} }{2 \bar{V} }\right)^2 \hat{\Sigma}_{22} +  \hat{\Sigma}_{11} - \frac{\bar{R} }{ \bar{V} } \hat{\Sigma}_{12} \right]$, then $T$ can be rewritten as
	     $$
	     T = H(\bar{U}):=\frac{H_1(\bar{U}_1, \bar{U}_2)}{\sqrt{H_2(\bar{U})}},
         $$
         where $H_1(\bar{U}_1, \bar{U}_2)$ and $ H_2(\bar{U})$ are defined as
	   \begin{flalign} \label{eq:H1_def}
% 	  T = H(\bar{U}):=\frac{H_1(\bar{U}_1, \bar{U}_2)}{\sqrt{H_2(\bar{U})}}, \quad \text{where  } 
	  &H_1(\bar{U}_1, \bar{U}_2):=\frac{\bar{U}_1 + \EE{Yh(W)} }{ \sqrt{\bar{U}_2 +  \EE{h^2(W)} }} - \frac{\EE{Yh(W)} }{ \sqrt{\EE{h^2(W)}}}, \\ \nonumber
	  &H_2(\bar{U}):=\frac{1}{ {\bar{U}_2 +  \EE{h^2(W)}}} \Bigg[ 
	  \left( \frac{\bar{U}_1 + \EE{Yh(W)} }{ 2(\bar{U}_2 +  \EE{h^2(W)}) } \right)^2\left(
	  \bar{U}_4 +  \EE{(\Varc{h(W)}{Z})^2} - (  \bar{U}_2 + \EE{h^2(W)} )^2
	  \right)  \\ \nonumber
	  &\quad \quad \quad \quad + \bar{U}_3 +  \EE{Y^2h^2(W)} - (  \bar{U}_1 + \EE{Yh(W)} )^2  \\ 
	  &\quad \quad \quad \quad - \frac{\bar{U}_1 + \EE{Yh(W)} }{\bar{U}_2 +  \EE{h^2(W)} }\left(
	   \bar{U}_5 + \EE{Yh(W)\Varc{h(W)}{Z}} - (\bar{U}_1 + \EE{Yh(W)})(  \bar{U}_2 + \EE{h^2(W)} )
	   \right)
	   \Bigg].  \label{eq:H2_def}
% 	 \left.  \right].
	  \end{flalign}

% 	  where $\bar{U} = (\bar{U}_1, \bar{U}_2) = \frac{1}{n}\sum\nsubp U_i$ and
Note $H(x)= H(x_1,x_2,x_3,x_4,x_5):\mathbb{R}^5 \rightarrow \mathbb{R}$ is defined by replacing the above $\bar{U} = (\bar{U}_1, \bar{U}_2, \bar{U}_3, \bar{U}_4, \bar{U}_5)$ by $x:=(x_1,x_2,x_3,x_4,x_5)$ respectively.
    %   $$
    %   H(x) = H(x_1,x_{2}) :=  \frac{x_{1} +\EE{Yh(W)} }{\sqrt{x_{2} + \EE{h^2(W)}}  }  - \frac{\EE{Yh(W)} }{\sqrt{ \EE{h^2(W)}} }
    %   $$
      When $x_{2} > -  \EE{h^2(W)}$ or $H_2(x) =0 $, $H(x)$ is set to be $0$. If we can verify the conditions for $T = H(\bar{U})$, Theorem \ref{thm:Berry_Esseen_nonlinear} implies
      \begin{equation}\nonumber
            \sup_{t \in \mathbb{R}} \left|
	     \PP{ \sqrt{n}T\le t \tilde{\sigma}}  -  \Phi(t)
	     \right|  \le \frac{C}{\sqrt{n}},
      \end{equation}
      \acc{for some constant $C$, where $\tilde{\sigma} =  \|L(U_0)\|_2> 0$ (we will define $L(x)$ shortly and subsequently show $\tilde{\sigma} =1$). Theorem \ref{thm:Berry_Esseen_nonlinear} says that the constant $C$ above only depends on some universal constants and $\tilde{\sigma},\nu_{2},\nu_{3},\varsigma_3$, which are the moments of $U$ (i.e., the moments of $U_i$). Since $U_i$ (defined in the three lines around \eqref{eq:thm1_V1V2_def}) is a function of $Y_i $ and $h(W_i)$, we will apply the Cauchy--Schwarz inequality to further bound the moments of $U$ by the moments of $Y$ and $h(W) = \mu(X,Z) - \Ec{\mu(X,Z)}{Z}$.} First we need to verify Condition \ref{cond:f_smoothness}, i.e., there exists $\varepsilon, M_{\varepsilon} >0$ and a continuous linear functional $L:\mathbb{R}^5 \rightarrow \mathbb{R}$ such that   
	 \begin{equation}
        \label{eq:my_f_smooth}
	 |H(x) - L(x)| \le   M_{\varepsilon} \|x\|^2 ~~~\text{for all }x \in  \mathbb{R}^5  ~\text{with } \|x\|\le \varepsilon.
	 \end{equation}
    Second, we will show $\tilde{\sigma}$, $\nu_3$, and $\varsigma_3$ are finite under the stated moment conditions.  %\lzmargin{}{say via moments of Y}
    
    Regarding the smoothness condition, consider the first order Taylor expansion of $H$ at zero,
	 $$
	H(\bs{0}) +  \frac{\partial H}{\partial x_1}(\bs{0}) x_1 + \frac{\partial H}{\partial x_2}(\bs{0}) x_2 +  \frac{\partial H}{\partial x_3}(\bs{0}) x_3 + \frac{\partial H}{\partial x_4}(\bs{0}) x_4 +  \frac{\partial H}{\partial x_5}(\bs{0}) x_5.  
% 	 = \frac{1}{\sqrt{\EE{h^2(W)} }}x_1
% 	 -\frac{ \EE{Yh(W)}  }{ 2 (\sqrt{\EE{h^2(W)} })^3}  x_2.
	 $$
	 Note that for $H(\bs{0}) = H_1(\bs{0})/\sqrt{H_2(\bs{0})}$, we have $H_1(\bs{0}) =0$ and $H_2(\bs{0})>0 $ (denote $\tilde{\sigma}_0^2:=H_2(\bs{0}) $ and we will show it is positive over the course of derivations from \eqref{eq:sigma_tilde_onlyU} to \eqref{eq:L2V_bound_v2}. After simplifying the expression of $H_2(\bs{0})$, we give the explicit form of $\tilde{\sigma}_0^2$ below:
	 \begin{equation}
	 \begin{aligned}
	 \label{eq:sigma_tilde_def}
	  \tilde{\sigma}_0^2~=&~  \frac{1}{ \EE{h^2(W)}} \Bigg[ 
	  \left( \frac{ \EE{Yh(W)} }{ 2( \EE{h^2(W)}) } \right)^2
	  \Var{\Varc{h(W)}{\Xnoj}}
	  +  \Var{Yh(W)} \\ 
	  ~~~& -\frac{\EE{Yh(W)} }{  \EE{h^2(W)} } \Cov{Yh(W)}{ \Varc{h(W)}{\Xnoj} }
	   \Bigg].
	 \end{aligned}
	 \end{equation}
	 Using the chain rule of derivatives, we have for $m\in [5]$,
	 $$
	  \frac{\partial H}{\partial x_m}(\bs{0}) =  \frac{\partial H_1}{\partial x_m}(\bs{0})/\sqrt{H_2(\bs{0})} - \frac{H_1(\bs{0})}{2 H_2(\bs{0})^{3/2}  }\cdot \frac{\partial H_2}{\partial x_m}(\bs{0}) = \frac{\partial H_1}{\partial x_m}(\bs{0})/\tilde{\sigma}_0.
	 $$
	 Since $H_1(x_1,x_2)$ only depends on $x_1,x_2$, we only need to evaluate two partial derivatives to compute the first order Taylor expansion of $H$ at zero, \acc{yielding the following linear function
	 \begin{equation} \label{eq:Lfunc_def}
	  \frac{1}{\tilde{\sigma}_0}\left(\frac{1}{\sqrt{\EE{h^2(W)} }}x_1
	 -\frac{ \EE{Yh(W)}  }{ 2 (\sqrt{\EE{h^2(W)} })^3}  x_2\right):=L(x),
	 \end{equation}
	 which is denoted by $L(x) = L(x_1, x_2)$ and satisfies $L(\bs{0})=0$.} Note that when $\eps = {\EE{h^2(W)}}/{2}$, we have
	 $$
	 \min_{\|x\|\le \eps} (x_2 + \EE{h^2(W)}) = \EE{h^2(W)} - \eps >0.
	 $$
	 {Since $H_2(x)$ is continuous around zero and $H_2(\bs{0}) >0$ (which will be shown in the following proof), we can similarly choose $\epsilon$ sufficiently small such that $\min_{\|x\|\le \eps} H_2(x)>0 $. Recall $H(x) = H_1(x)/\sqrt{H_2(x)}$, where $H_1,H_2$ are defined in \eqref{eq:H1_def} and \eqref{eq:H2_def}, so $H(x)$ is continuous on $\{x:\|x\|\le \varepsilon \}$. Furthermore, its second partial derivatives exist and are continuous over the compact set $\{x:\|x\|\le \varepsilon\}$, thus are also bounded, which implies that there exists $M_{\eps} >0$ such that \eqref{eq:my_f_smooth} holds.}

     As for $\tilde{\sigma}$, $\nu_3$, and $\varsigma_3$, we will now establish the following moment bounds: 
	 \acc{
	 \begin{eqnarray}\label{eq:sigmatilde_def}
	    0<\tilde{\sigma} :=  \|L(U_{0})\|_2 <\infty,\qquad \\ \nonumber
	     \nu_2 := \|U_{0}\|_2,\qquad \nu_3 := \|U_{0}\|_{3} < \infty, \\ \label{eq:varsigma_def}
	     \varsigma_3 := \frac{\|L(U_{0})\|_3 }{\tilde{\sigma}} < \infty.\;\;\qquad
	 \end{eqnarray}}
% 	 where for given random vector (variable) $Z=(Z_{1},\cdots, Z_d) \in \mathbb{R}^d$, $\|Z\|_p$ is defined as $\|Z\|_p = (\EE{\|Z\|^p})^{1/p} $ with $\|z\|^{p}:= \sum_{j=1}^{d}|z_{j}|^p$. 
	 Note that $\nu_3^3 = \|U_{0}\|_3^3= \EE{|U_{01}|^3}+ \EE{|U_{02}|^3} + \EE{|U_{03}|^3}+ \EE{|U_{04}|^3} + \EE{|U_{05}|^3} $ and
	 	\begin{align}\nonumber
	\hspace{-.7cm}(\varsigma_3 \tilde{\sigma})^3 = \EE{|L(U_{0})|^3}
	 =~& \frac{1}{\tilde{\sigma}_0^3 }
	 \EE{ \left| \frac{1}{\sqrt{\EE{h^2(W)} }}U_{01}	 -\frac{ \EE{Yh(W)}  }{ 2 (\sqrt{\EE{h^2(W)} })^3}  U_{02} \right|^3}  \\
	 \le ~&  \frac{2^{3-1} }{\tilde{\sigma}_0^3 }
	 \left(
	  \frac{1}{ (\sqrt{\EE{h^2(W)} } ) ^3} \EE{|U_{01}|^3} +
	 \frac{(\EE{Yh(W)} )^3 }{8 (\sqrt{\EE{h^2(W)} } ) ^9}\EE{|U_{02}|^3}
	 \right) \label{eq:L3V_bound}
	 \end{align}
	 \acc{where the equalities hold due to the definitions of $L$ and $\varsigma_3$ in \eqref{eq:Lfunc_def}, \eqref{eq:varsigma_def}, and the inequality holds as a result of the $C_r$ inequality.} Due to the fact that the finiteness of higher moments implies that of lower moments and \eqref{eq:L3V_bound}, we only need to show
	 \begin{enumerate}[(i)]
	 %\label{eq:moments_finite}
	     \item  $\EE{|U_{01}|^3}$, $\EE{|U_{02}|^3}$, $\EE{|U_{03}|^3}$, $\EE{|U_{04}|^3}$, $\EE{|U_{05}|^3} <\infty$,
	    \item $\tilde{\sigma}_0^2 = H_2(\bs{0}) >0$,
	    \item $\tilde{\sigma}^2 =  \|L(U_{0})\|_2 >0 $,
	 \end{enumerate}
% 	 \begin{eqnarray}\label{eq:moments_finite}
% 	  \EE{|V_1|^3}<\infty,~~\EE{|V_2|^3} < \infty~~ \text{and}~~
% 	 \tilde{\sigma}^2:=  \EE{L^2(V)} >0
% 	 \end{eqnarray}
    under the stated moment conditions. For \myrom{3}, actually we will show $\tilde{\sigma}^2=1$.
    
    \acc{Starting with \myrom{1},} we have
	 	\begin{eqnarray}\nonumber
	   \EE{|U_{02}|^3} = \EE{|U_{i2}|^3} 
	   &=&  \EE{\left|V_i - \EE{h^2(W)} \right|^3} \\ \nonumber
	   &\le & 2^{3-1} \left(
	   \EE{\left|\Varc{\mu(W_i)}{\Xinoj}\right|^3} + (  \EE{h^2(W)})^3
	   \right) \\ \nonumber
	    &\le & 2^{3-1} \left(
	   \EE{\Ec{h^6(W_i)}{\Xinoj}} + (  \EE{h^2(W)})^3
	   \right)  < \infty,
	 \end{eqnarray}
	  where the first inequality comes from the $C_r$ inequality, \acc{the second holds by the definition of $h$ and Jensen's inequality}, and the third inequality holds due to the tower property of conditional expectation and $\EE{h^6(W)} <\infty$ under the assumed moment conditions. For the term $\EE{|U_{01}|^3}$, we have
	  \begin{eqnarray} \nonumber
	 \EE{|U_{01}|^3} =  \EE{|U_{i1}|^3} 
	 &=& \EE{\left|R_{i} - \EE{Yh(W)} \right|^3} \\ \nonumber
	 &\le & 2^{3-1} \left(
	 \EE{|Y_i (\mu(W_i)- \Ec{\mu(W_i)}{\Xinoj} )  |^3} + (\EE{Yh(W)} )^3 
	 \right)   \\ \nonumber
	 &= & 2^{3-1} \left(\ee{|Y^3 h^3(W)|}   + (\EE{Yh(W)} )^3  
	 \right) < \infty,
	 \end{eqnarray}
	  \acc{where the first inequality holds due to the $C_r$ inequality and the second inequality holds due to the Cauchy--Schwarz inequality and the assumed moment conditions.} The same approach and inequalities can be used for the other three terms, i.e., we have $\EE{|U_{03}|^3}, \EE{|U_{04}|^3}, \EE{|U_{05}|^3} <\infty$. Note that $U_{03}$, $U_{04}$, and $U_{05}$ involve higher-order polynomials of $Y_ih(W_i)$ and $\Varc{h(W_i)}{Z_i}$ than $ U_{01},U_{02}$, and thus require assuming bounded 12th moments to ensure the boundedness of their third absolute moments, hence the assumptions in Theorem \ref{thm:main_rate} that $\EE{Y^{12}}<\infty$ and $\EE{h^{12}(W)} < \infty$.        
%        According to the definitions, replacing $h(W)$ by the scaled version $h(W)/\sqrt{\EE{h^2(W)}}$ will not change the value of $f(\mu)$, $T$ nor $\tilde{\sigma}^2$, thus we can assume $\EE{h^2(W)}=1$ without loss of generality. 
        
        \acc{
        Regarding \myrom{2} and \myrom{3}: recalling the definitions of $\tilde{\sigma}^2$ and $L$ in \eqref{eq:Lfunc_def}, \eqref{eq:sigmatilde_def}, we have 
%	  \begin{eqnarray}\label{eq:sigma_tilde_def_moments}
%	    \tilde{\sigma}_0^2  \tilde{\sigma}^2  = \tilde{\sigma}_0^2 \|L(U_{0})\|_2= \frac{1}{\EE{h^2(W)} } \EE{\left(
%	     U_{i1} 
%	     -\frac{ \EE{Yh(W)}  }{ 2 \EE{h^2(W)} }U_{i2} 
%	    \right)^2 }.
%	  \end{eqnarray}
%	  The above can be rewritten as
	  \begin{eqnarray} \nonumber
	   \tilde{\sigma}_0^2  \tilde{\sigma}^2  = \tilde{\sigma}_0^2 \|L(U_{0})\|_2
	   &=& \frac{1}{\EE{h^2(W)} } \EE{\left(
	     U_{i1} 
	     -\frac{ \EE{Yh(W)}  }{ 2 \EE{h^2(W)} }U_{i2} 
	    \right)^2 } \\ \label{eq:sigma_tilde_onlyU}
	    &=&  \EE{\left(
	     U_{i1}
	     -\frac{ \EE{Yh(W)}  }{ 2 }  U_{i2}  \right)^2 } \\ \nonumber
	   &=&    \EE{\left(
	    R_{i} - \EE{Yh(W)} 
	     -\frac{ \EE{Yh(W)}  }{ 2 }  ( \Varc{h(W_i)}{\Xinoj} -1)  \right)^2 } \\ \label{eq:sigma_tilde_expansion}
	   &=&   \EE{\left( A
	    + B  \right)^2 },
	  \end{eqnarray}
     where the third equality holds since $\EE{h^2(W)}=1$ as assumed without loss of generality, the fourth one comes from \eqref{eq:thm1_V1V2_def}, and the last one is by rearranging with $A,B$ defined as:
	   \begin{eqnarray} \label{eq:thm1rate_A_Def}
	  A &:=& Y_i h(W_i) - \Ec{Y_i h(W_i)}{\Xinoj},\\  \label{eq:thm1rate_B_Def}
	  B &:=& \Ec{Y_i h(W_i)}{\Xinoj} -  \EE{Yh(W)}  - \frac{ \EE{Yh(W)}  }{ 2 }  ( \Varc{h(W_i)}{\Xinoj} -1).
	   \end{eqnarray}	  
%	  Now we can expand \eqref{eq:sigma_tilde_expansion} as
%	  \begin{eqnarray} \nonumber
%	   \EE{(A+B)^2 }  
%	   &=&  \EE{\Ec{ (A + B)^2}{\Xinoj} }\\ \nonumber
%	   &=&  \EE{ \Ec{ A^2}{\Xinoj} - 2 B~\Ec{A}{\Xinoj} + B^2 }\\ \nonumber
%	   &=&  \EE{ \Ec{ A^2}{\Xinoj} + B^2 } \\ \label{eq:L2V_bound_v1}
%	   &\ge&  \EE{\Varc{Yh(W)}{\Xnoj}},
%	  \end{eqnarray}
%	  where the first equality comes from the tower property of conditional expectation, the second equality holds since $B\in \sigalg(\Xinoj)$ and the third equality holds due to $\Ec{A}{\Xinoj}=0$. \eqref{eq:L2V_bound_v1} gives one lower bound for $\tilde{\sigma}^2$. 
%	  
The above terms $A, B$ have equivalent expressions as the terms $A, B$ defined in the proof of Theorem \ref{thm:main} (see \eqref{eq:thm1_A_Def}, \eqref{eq:thm1_B_Def}). Note $\EE{\left( A + B  \right)^2 } > 0$, as proved over the course of derivations from \eqref{eq:tsigma0_AB} to the end of the proof of Theorem \ref{thm:main}. Due to \eqref{eq:sigma_tilde_expansion}, we then have $\tilde{\sigma}_0^2  \tilde{\sigma}^2 $ in this proof is nonzero, thus finish showing \myrom{2}.  

	Now we will verify $\tilde{\sigma} = 1$. According to  \eqref{eq:sigma_tilde_onlyU}, we equivalently write down
	 	 \begin{eqnarray} \nonumber
        \tilde{\sigma}_0^2  \tilde{\sigma}^2  
	    &=&  \EE{\left(
	     U_{i1}
	     -\frac{ \EE{Yh(W)}  }{ 2 }  U_{i2}  \right)^2 } \\ \nonumber
	    &=& \EE{\left(
	    \left( 1, -\frac{ \EE{Yh(W)}  }{ 2 } \right)(U_{i1},U_{i2})^{\top} \right)^2
	    } \\
	    &=& \bm{a}^{\top}\Sigma_{U} \bm{a}  \label{eq:L2V_bound_v2},
	  \end{eqnarray} %
	  where $\bm{a}^{\top}:= \left(  1, -\frac{ \EE{Yh(W}  }{ 2 } \right)$ and $\Sigma_{U}$ is the covariance matrix for the random vector $U_i$, which can be explicitly written as
	  \begin{equation}\nonumber
	    \Sigma_{U}= \left(
	    \begin{array}{cc}
	   \Var{Yh(W)}      &  \Cov{Yh(W)}{ \Varc{h(W)}{\Xnoj} } \\ 
	   \Cov{Yh(W)}{ \Varc{h(W)}{\Xnoj} }      &  \Var{\Varc{h(W)}{\Xnoj}  }
	    \end{array} 
	    \right). 
	  \end{equation}
	  We immediately have $ \tilde{\sigma}_0^2  \tilde{\sigma}^2 = \bm{a}^{\top}\Sigma_{U} \bm{a} =  H_2(\bs{0}) =   \tilde{\sigma}_0^2 $ due to the expression of $\tilde{\sigma}_0^2$ in \eqref{eq:sigma_tilde_def}, \eqref{eq:L2V_bound_v2} and $\EE{h^2(W)}=1$ as assumed; hence, $ \tilde{\sigma}= 1$.
 
Having verified \myrom{1}, \myrom{2} and \myrom{3}, we thus prove the Berry--Esseen-type bound in \eqref{eq:T_BE_bound},  which completes the proof for case (\myRom{4}). Therefore, the asymptotic coverage validity with a rate of $1/ \sqrt{n}$ for the lower confidence bounds produced by Algorithm \ref{alg:MOCK} has been established.}
     \end{proof}
    \subsubsection{Theorem \ref{thm:main_general_rate}}
     \label{pf:thm:main_general_rate}
    \begin{proof}[Proof of Theorem \ref{thm:main_general_rate}]
	 \acc{Similarly as in the proofs of Theorem~\ref{thm:main} and Theorem~\ref{thm:main_rate}, we immediately have coverage validity when $\mu(X,Z) \in \sigalg(\Xnoj)$. Otherwise, it suffices to show 
	  \begin{equation}\label{eq:thm2_my_cover_rate}
	  \inf_{K>1}\PP{\frac{\bar{R}}{ \sqrt{\bar{V}  }} -  \frac{z_{\alpha}s}{\sqrt{n}}  \le \thetamu} \ge 1 - \alpha - C/\sqrt{n}
	  \end{equation}
	  for some constant $C$, where the sample mean $(\bar{R},\bar{V})$ and sample covariance matrix $\hat{\Sigma}$ are defined the same way as in Algorithm \ref{alg:MOCK} except that $R_i, V_i$ are replaced by their Monte Carlo estimators $ R_i^K, V_i^K$ as defined below}:
	    \begin{align}\nonumber
%	    \label{eq:Rik_ViK_def}
	    \begin{split}
	    R_i^K &= Y_i\left(\mu(X_i,Z_i) - \frac{1}{K}\sum_{k=1}^{K} \mu(X_i^{(k)} ,Z_i)\right),\\
	    V^{K}_{i} &= \frac{1}{K-1}\sum_{k=1}^{K}\left( \mu(X_i^{(k)} ,Z_i)-  \frac{1}{K}\sum_{k=1}^{K} \mu(X_i^{(k)} ,Z_i) \right)^2,
	   	\end{split}
	   	\end{align}
	   \acc{Recall that the proof in Appendix \ref{pf:thm:main} considers $4$ cases then deals with them separately. Essentially we can conduct similar analysis, but to avoid lengthy derivations, we focus on Case \myRom{4}. Note we also make the extra assumption $ \EE{\Varc{Y (\mu(X,Z)-\Ec{\mu(X,Z)}{Z})}{\Xnoj}}>0$ to simplify the proof.

	 In the proof of Theorem \ref{thm:main_general}, we have the following asymptotic normality result:}
	 \begin{equation}\nonumber
	     \sqrt{n}\left(\frac{ \frac{1}{n}\sum\nsubp R_i^K }{\sqrt{\frac{1}{n} \sum\nsubp V_i^K  }} -   \thetamu \right) \stackrel{d}{\rightarrow } \gauss{0}{\tilde{\sigma}_0^2}.
	 \end{equation}
%	 where the variance $\tilde{\sigma}_0^2$ is proved to be positive. 
	 \acc{To establish \eqref{eq:thm2_my_cover_rate}, we follow the proof strategy of Theorem \ref{thm:main_rate}. Specifically, we apply the Berry--Esseen bound for nonlinear statistics (see Theorem \ref{thm:Berry_Esseen_nonlinear} in Appendix \ref{pf:thm:main_rate}).}

	 Again we first introduce some new notations for the following proof: let random vectors $\{U_i\}\nsubp  = \{(U_{i1},U_{i2},U_{i3}, U_{i4}, U_{i5})\}\nsubp \iid U_{0} = (U_{01}, U_{02},U_{03}, U_{04}, U_{05})$ to be 
	  \begin{equation}
        \label{eq:thm2_U1U2_def}
	     U_{i1}= R_i^K - \EE{Yh(W)},~~ U_{i2} = V_i^K - \EE{h^2(W)},
	    \end{equation}
	    \begin{eqnarray}\nonumber
	    U_{i3}= (R_i^K)^2- \EE{ (R_i^K)^2}, ~~
	     U_{i4}= (V_i^K)^2 - \EE{ (V_i^K)^2},~~
	     U_{i5}=  R_i^K  V_i^K- \EE{R_i^K  V_i^K}.
	     \end{eqnarray}
	 \acc{Note by the construction of the null samples, $X_i^{(k)}$ satisfy the two properties in \eqref{eq:nulls_cond_indep} and \eqref{eq:nulls_cond_ident} and we have \eqref{eq:mean_estimator}, \eqref{eq:var_estimator} hold.}
	 \acc{Recall \eqref{eq:mc_esti_yes} in the proof of Theorem \ref{thm:main_general} states $\EE{R_i^K} = \EE{Yh(W)}, \EE{V_i^K} = \EE{h^2(W)}$, hence $ \EE{U_{i1}} =  \EE{U_{i2}} = 0$.} Straightforwardly, $\EE{U_{i3} }= \EE{U_{i4} } = \EE{U_{i5} } = 0$. \acc{Thus we have $\EE{U_{0}} =\bs{0}$.} Now we denote $\bar{U} = (\bar{U}_1, \bar{U}_2, \bar{U}_3, \bar{U}_4, \bar{U}_5) = \frac{1}{n}\sum\nsubp U_i$ and rewrite the following expression,
	  $$
\frac{1}{s} \left(\frac{ \frac{1}{n}\sum\nsubp R_i^K }{\sqrt{\frac{1}{n} \sum\nsubp V_i^K  }} -   \thetamu \right):= H(\bar{U}):=\frac{H_1(\bar{U}_1, \bar{U}_2)}{\sqrt{H_2(\bar{U})}},
	  $$
	  where $s$ is similarly defined as in Algorithm \ref{alg:MOCK} except that $R_i,V_i$ are replaced by $R_i^K, V_i^K$. Here $H(x)= H(x_1,x_2,x_3,x_4,x_5):\mathbb{R}^5 \rightarrow \mathbb{R}$ is the same as in the proof of Theorem \ref{thm:main_rate}. Therefore the smoothness condition, i.e., Condition \eqref{cond:f_smoothness}, holds by the same argument as in Appendix \ref{pf:thm:main_rate}. 
	 The continuous linear functional $L$ is also defined the same way. To apply Theorem \ref{thm:Berry_Esseen_nonlinear}, it remains to verify the following moment bound conditions on $U_0$ and $L(U_0)$,
	 \begin{eqnarray}\nonumber
	     0<\tilde{\sigma} :=   \|L(U_{0})\|_2 <\infty, \\ \nonumber
	     \nu_2 := \|U_{0}\|_2,~ \nu_3 := \|U_{0}\|_{3} < \infty, \\ \nonumber
	     \varsigma_3 := \frac{\|L(U_{0})\|_3 }{\tilde{\sigma}} < \infty.
	 \end{eqnarray}
% 	 where for given random vector (variable) $Z=(Z_{1},\cdots, Z_d) \in \mathbb{R}^d$, $\|Z\|_p$ is defined as $\|Z\|_p = (\EE{\|Z\|^p})^{1/p} $ with $\|z\|^{p}:= \sum_{j=1}^{d}|z_{j}|^p$. 
	 Note that $\nu_3^3 = \|U_{0}\|_3^3= \EE{|U_{01}|^3}+ \EE{|U_{02}|^3} + \EE{|U_{03}|^3}+ \EE{|U_{04}|^3} + \EE{|U_{05}|^3} $ and \acc{we can bound $(\varsigma_3 \tilde{\sigma})^3$ similarly as in the proof of Theorem \ref{thm:main_rate}:}
	\begin{eqnarray}\nonumber
	\hspace{-1cm}(\varsigma_3 \tilde{\sigma})^3 &=& \EE{|L(U_{0})|^3}
	 =
	 \EE{ \left| \frac{1}{\sqrt{\EE{h^2(W)} }}U_{01}	 -\frac{ \EE{Yh(W)}  }{ 2 (\sqrt{\EE{h^2(W)} })^3}  U_{02}   \right|^3}  \\
	 &\le& 
	 2^{3-1} \left(
	 A\frac{1}{ (\sqrt{\EE{h^2(W)} } ) ^3} \EE{|U_{01}|^3} + \frac{(\EE{Yh(W)} )^3 }{8 (\sqrt{\EE{h^2(W)} } ) ^9}\EE{|U_{02}|^3}
	 \right), \label{eq:LV_bound}
	 \end{eqnarray}
	Due to the fact that the finiteness of higher moments implies that of lower moments and \eqref{eq:LV_bound}, we only need to show
	 \begin{enumerate}[(i)]
	   	 \item $\EE{|U_{01}|^3}$, $\EE{|U_{02}|^3}$, $\EE{|U_{03}|^3}$, $\EE{|U_{04}|^3}$, $\EE{|U_{05}|^3} <\infty$
	    \item $\tilde{\sigma}_0^2 = H_2(\bs{0}) >0$
	    \item $\tilde{\sigma}^2 =  \|L(U_{0})\|_2 >0 $
	 \end{enumerate}
	 under the stated moment conditions. \acc{For \myrom{3}, we have $\tilde{\sigma}^2=1$, due to the derivations in the proof of Theorem \ref{thm:main_rate}. Hence we will focus on the first two conditions in the following. Appendix \ref{pf:thm:main_rate} verifies \myrom{1} and \myrom{2} for any given $K > 1$. In this proof, we will actually show}
	 $$
	 \sup_{K>1}\EE{|U_{0j}|^3} < \infty, ~~\forall j \in [5],\quad \inf_{K>1} \tilde{\sigma}_0^2>0.
	$$
	Note the definitions of $U_{0} = (U_{01}, U_{02},U_{03}, U_{04}, U_{05})$ and $\tilde{\sigma}_0^2$ depend on $K$. To simplify notations, we do not make this dependence explicit. \acc{By the definitions in \eqref{eq:thm2_U1U2_def}, we bound $U_{01}, U_{02}$  as below:
	 \begin{eqnarray} \nonumber
	 \EE{|U_{01}|^3} =  \EE{|U_{i1}|^3} 
	 &=& \EE{|R_i^K  - \EE{Yh(W)} |^3} \\ \nonumber
	 &\le & 2^{3-1} \left(
	 \EE{|R_i^K |^3} + (\EE{Yh(W)} )^3 
	 \right),   \\ \nonumber
	   \EE{|U_{02}|^3} = \EE{|U_{i2}|^3} 
	   &=&  \EE{|V_i^K - \EE{h^2(W)} |^3} \\ \nonumber
	   &\le & 2^{3-1} \left(
	   \EE{|V_i^K|^3} + (  \EE{h^2(W)})^3
	   \right),
	 \end{eqnarray}
	 where the inequalities hold due to the $C_r$ inequality. Recalling in the proof of Theorem \ref{thm:main_general}, we show $\EE{|R_i^K|^2} < \infty,  \EE{|V_i^K|^2} < \infty$ under the condition $\EE{Y^4}, \EE{h^4(W)} < \infty$ over the course of derivations from \eqref{eq:clt_mmc} to the end of that proof. The derivations are mainly based on the $C_r$ inequality and the Bahr--Esseen inequality in \citet{dharmadhikari1969bounds}. Using the same bounding strategy, we can show $ \EE{|R_i^K |^3}, \EE{|V_i^K|^3} < \infty$ when assuming $\EE{Y^6}, \EE{h^6(W)} < \infty$. Hence we obtain $\sup_{K>1}\EE{|U_{01}|^3}, \sup_{K>1}\EE{|U_{02}|^3} < \infty$ under the above moment conditions.} 
	  \acc{And nearly identical derivations as in bounding $\EE{|U_{01}|^3}$ and $\EE{|U_{02}|^3}$} suffice to show $\sup_{K>1}\EE{|U_{03}|^3}$, $\sup_{K>1}\EE{|U_{04}|^3}$, $ \sup_{K>1}\EE{|U_{05}|^3} <\infty$ under the stronger moment boundedness conditions $\EE{Y^{12}}<\infty, \EE{h^{12}(W)} < \infty$ stated in Theorem \ref{thm:main_general}. 
	  
	 \acc{Regarding \myrom{2}, we notice that
	 \begin{eqnarray}
	 	\tilde{\sigma}_0^2 \ge \EE{(A+B)^2} \ge  \EE{\Varc{Yh(W)}{\Xnoj}}, 
	 \end{eqnarray} 
	 where the first inequality holds due to \eqref{eq:tsigma_byAB}, $A,B$ are defined as \eqref{eq:thm1_A_Def} and \eqref{eq:thm1_B_Def} in the proof of Theorem \ref{thm:main}, and the second inequality holds by \eqref{eq:L2V_bound_v1}. The above lower bound for $\tilde{\sigma}_0^2$ does not depend on $K$ and implies the positiveness of $\inf_{K>1}\tilde{\sigma}_0$ under the assumed condition $ \EE{\Varc{Yh(W)}{\Xnoj}} = \EE{\Varc{Y (\mu(X,Z)-\Ec{\mu(X,Z)}{Z})}{\Xnoj}} >0$. } 
	 
	 Therefore, we obtain the Berry--Esseen bound for nonlinear statistics by applying Theorem \ref{thm:Berry_Esseen_nonlinear}. Finally we conclude the asymptotic coverage with a rate of $n^{-1/2}$, i.e.,
	\begin{equation} 
	\nonumber
	\inf_{K>1} 
	\PP{L_{n,K}^{\alpha} (\mu) \le \Ij }  \ge 1 - \alpha - C n^{-1/2},
	\end{equation}
	\acc{where the constant $C$ only depends on the moments of $Y$ and $h(X, Z)= \mu(X,Z) - \Ec{\mu(X, Z)}{Z}$.}
	 \end{proof}  
	 
    \section{Applicability of the Model-X assumption}
    \label{sec:modelX}
    \acc{
     Model-X floodgate assumes knowing the distribution of $P_{X|Z}$. This may not always hold in practice, but in some important instances, $P_{X|Z}$ may be (A) known due to experimental randomization, (B) well-modeled a priori due to domain expertise, or (C) accurately estimated from a large unlabeled data set. For example, (A) holds in the high-dimensional experiments of conjoint analysis \citep{luce1964simultaneous,hainmueller2014public}, (B) holds in the study of the microbiome where accurate covariate simulators exist \citep{ren2016sparsedossa}, and a combination of (B) and (C) hold in genomics, where the model-X framework has been repeatedly and successfully applied for controlled variable selection \citep{sesia2019gene,katsevich2019multilayer,sesia2020multi,bates2020causal,sesia2020controlling}.  
    
    We also quantify the robustness of our inferences to this assumption in Appendix~\ref{sec:robustness} and show it can be relaxed to parametric models (Section \ref{sec:relax}), and indeed model-X approaches have shown promising empirical performance in a number of applications in which it is unclear whether any of (A), (B), or (C) hold, such as bacterial classification from spectroscopic data \citep{chia2020interpretable} and single cell regulatory screening \citep{katsevich2020conditional}. 
    }

 \section{Robustness}
\label{sec:robustness}
\acc{To explain how the floodgate idea is not tied to the model-X assumption, a double-robustness type result (Lemma \ref{lem:doublerobust}) is presented in Remark \ref{rk:modelx}. It involves an approximated floodgate functional \eqref{eq:approxfg} and says that the inferential statements are valid as long as either of the models of $X\mid Z$ or $Y\mid Z$ is correctly specified. For ease of exposition, Algorithm \ref{alg:MOCK} and Theorem \ref{thm:main} focus on a particular floodgate procedure which requires knowing $P_{X|Z}$. However, it is still of interest to study the robustness of floodgate (in Algorithm \ref{alg:MOCK}) to misspecification of $P_{X|Z}$. Specifically, we consider the case when the true distribution $P_{X|Z}$ used in floodgate is replaced by an approximation $Q_{X|Z}$. 
%We will find that the inference is also valid when the approximation of $P_{X|Z}$ is better than that of $\Ec{Y}{X,Z}$.
}
%of either $\Ec{Y}{Z}$ and $\Ec{\mu(X, Z)}{Z}$ (for given $\mu$)

%    Recall in Remark \ref{rk:modelx}, we mentioned the floodgate idea is not tied to the model-X assumption: double-robustness type results (Lemma \ref{lem:doublerobust}}) show that the floodgate method is valid when the estimation of $\Ec{Y}{Z}$ and $\Ec{\mu(X, Z)}{Z}$ (for given $\mu$) are accurate combinedly, however, is currently assumed in Algorithm \ref{thm:main} and Theorem \ref{thm:main} for ease of exposition. Therefore, it is still of interest to study the robustness of floodgate to misspecification of $P_{X|Z}$ 
%We now consider what happens when the distribution used in floodgate is not the true $P_{X|Z}$ but an approximation $Q_{X|Z}$. 

Notationally, let $Q = P_{Y|X,Z}\times Q_{X|Z}\times P_{Z}$ (we need not consider misspecification in the distributions of $Z$ or $Y\mid X,Z$ since these are not inputs to floodgate), and let $f^Q$ be an analogue of $f$ with certain expectations replaced by expectations over $Q$ (we will denote such expectations by $\Ep{Q}{\cdot}$); see Equation~\eqref{eq:thetaj_Q} for a formal definition. It is not hard to see that floodgate with input $Q_{X|Z}$ produces an asymptotically-valid LCB for $f^Q(\mu)$, from which we immediately draw the following conclusions. 

First, if $\mu$ does not actually depend on $X$, i.e., $\Varpc{Q}{\mu(X,Z)}{Z}\stackrel{a.s.}{=}0$, then $f^Q(\mu)=0$ regardless of $Q$ and floodgate is trivially asymptotically-valid. Second, when $\mu$ does depend on $X$, floodgate's inference will still be approximately valid as long as $f^Q(\mu)-f(\mu)\approx 0$, and this difference can be bounded by, for instance, the $\chi^2$ divergence between $P_{X|Z}$ and $Q_{X|Z}$. The third, and perhaps most interesting, conclusion is that the gap between $\Ij$ and $f(\mu)$ grants floodgate an \emph{extra} layer of robustness as long as $\Ij-f(\mu)$ is large compared to $f^Q(\mu)-f(\mu)$. Thus even if $Q_{X|Z}$ is a bad approximation of $P_{X|Z}$, floodgate's inference may be saved if $f(\mu)$ is an \emph{even worse} approximation of $\Ij$, and this latter approximation is related to that of $\mu$ for $\mustar$. To make this last relation precise, we quantify $\mu$'s approximation of $\mustar$ by focusing on a particular representative of $S_\mu$: 
for any $\mu:\mathbb{R}^p\rightarrow\mathbb{R}$, 
\begin{equation}\label{eq:M_v}
\bar\mu(x,z) = \sqrt{\frac{\EE{\varc{\mustar(X,Z)}{Z}} }{\EE{\varc{\mu(X,Z)}{Z}}}}\Big(\mu(x,z)-\Ec{\mu(X,Z)}{Z=z}\Big)+\Ec{\mustar(X,Z)}{Z=z},
\end{equation}
where $0/0=0$. We can think of $\bar\mu$ as a generally accurate representative from $S_\mu$, in that it takes $\mu$ and corrects its conditional mean and expected conditional variance to match $\mustar$. Note that $\bar\mu=\mustar$ whenever $\mustar\in S_\mu$, which includes anytime $\Ij=0$. \acc{Since the LCB from floodgate with input $Q_{X|Z}$ is asymptotically-valid for $f^Q(\mu)$ under certain moment conditions and the proof can be done similarly as Theorem \ref{thm:main}, we will focus on quantifying the difference between $f^Q(\mu)$ and $\Ij$ in the following robustness result.
    \begin{theorem}[Floodgate robustness]\label{thm:robust_no_adj}
    For data $\{(Y_i,X_i,Z_i)\}\nsubp$ i.i.d. draws from $P$ satisfying $\ee{Y^{4}}<\infty$, a sequence of working regression functions $\mu_n:\mathbb{R}^p\rightarrow\mathbb{R}$ such that for some $C$ and all $n$ either $\Varpc{Q^{(n)}}{\mu_n(X,Z)}{Z}$ $\stackrel{a.s.}{=}0$ or $\frac{\max\left\{\EE{\mu_n^{4}(X,Z)},\, \Ep{Q^{(n)}}{\mu_n^{4}(X,Z)}\right\}}{\mathbb{E}\Big[\Varpc{Q^{(n)}}{\mu_n(X,Z)}{Z}\Big]^{2}} \le C$, and a sequence of conditional distributions $Q_{X|Z}^{(n)}$,
    %\newline \vspace{-.4cm}\noindent 
%    the output of Algorithm~\ref{alg:MOCK} when $Q_{X|Z}^{(n)}$ is used as input satisfies
%    \begin{equation}
    %\label{eq:robust_no_adj}
%        \mathbb{P}\left(L^{\alpha}_{n}(\mu_n) \le \Ij +  \Delta_n\right)\ge 1-\alpha - O(n^{-1/2}),
%    \end{equation}
%    where
	the difference between $ f^{Q^{(n)}}(\mu)$ and $\Ij$ can be controlled as 
    \begin{equation}\label{eq:Delta_characterize}
    \Delta_n = f^{Q^{(n)}}(\mu_n)-\Ij \le c_1\sqrt{\EE{\chi^2\left(P_{X|Z}\,\|\, Q^{(n)}_{X|Z}\right)}}-c_2\,\EE{(  \bar{\mu}_n(X,Z)-\mustar(X,Z))^2}
    \end{equation}
    for some positive $c_1$ and $c_2$ that depend on $P$, where $\chi^2(\cdot\,\|\,\cdot)$ denotes the $\chi^2$ divergence.
    \end{theorem}
    The proof of Theorem~\ref{thm:robust_no_adj} can be found in Appendix~\ref{pf:thm:robust_no_adj}. Equation~\eqref{eq:Delta_characterize} formalizes that larger MSE of $\bar\mu_n$ actually \emph{improves} robustness, although we remind the reader once again that when $\Ij=0$, the MSE of $\bar\mu_n$ is always zero by construction in Equation~\eqref{eq:M_v}. Given the $n^{-1/2}$-rate half-width lower-bound for floodgate, a sufficient condition for asymptotically-exact coverage is
    \begin{equation}\label{eq:robustcondition}
    \sqrt{\EE{\chi^2\left(P_{X|Z}\,\|\, Q^{(n)}_{X|Z}\right)}} = o\left(n^{-1/2}+\EE{(  \bar{\mu}_n(X,Z)-\mustar(X,Z))^2}\right).
    \end{equation}
    When $Q^{(n)}_{X|Z}$ is a standard parametric estimator based on $N_n$ independent samples, the left-hand side has a $O(N_n^{-1/2})$ rate. Thus if $N_n\gg \min\{n,\EE{(  \bar{\mu}_n(X,Z)-\mustar(X,Z))^2}^{-2}\}$,
    %$n=o(N_n)$ or $n=O(N_n)$ and $\bar\mu_n$'s MSE shrinks at any rate slower than $n^{-1/2}$, 
    then floodgate's coverage will be asymptotically-exact. For certain parametric models for $X\mid Z$, Section~\ref{sec:relax} shows how to modify floodgate to attain asymptotically-exact inference without the need for estimation at all. }
%    We also note in passing that a weaker form of condition \eqref{eq:robustcondition} that replaces the $n^{-1/2}$ with $1$ is sufficient for a weaker guarantee of \emph{asymptotic non-overestimation}, i.e., the property that $\liminf_{n\rightarrow\infty}\mathbb{P}\left(L^{\alpha}_{n}(\mu_n) \le \Ij +  \epsilon\right)\ge 1-\alpha$ for any $\epsilon>0$.
   
    Theorem~\ref{thm:robust_no_adj} treats the sequence $Q^{(n)}_{X|Z}$ as fixed, which of course means $Q^{(n)}_{X|Z}$ can be estimated from any data that is independent of the data floodgate is applied to. This means the same data can be used to estimate $\mu_n$ and $Q^{(n)}_{X|Z}$. For $Q^{(n)}_{X|Z}$ however, this strict separation may not be necessary in practice, and in our simulations we found floodgate to be quite robust to estimating $Q^{(n)}_{X|Z}$ on samples that included those used as input to floodgate; see Section~\ref{sec:simul_robust}.
    
    Another layer of robustness beyond that addressed in this section can be injected by replacing $P_{X|Z}$ in floodgate with $P_{X|Z,T}$ for some random variable $T$. For instance, floodgate's model-X assumption can be formally relaxed to only needing to know a fixed-dimensional model for $P_{X|Z}$ by conditioning on $T$ that is a sufficient statistic for that model; see Section~\ref{sec:relax} for details. More generally, conditioning on $T$ that is a function of $\{(X,Z)\}_{i=1}^n$ may induce some degree of robustness, as conditioning on the order statistics of the $X_i$ can in conditional independence testing \citep{berrett2019permutation}. 
    
     \subsection{Proofs in Appendix~\ref{sec:robustness}}
%      In the true distribution case, the output $L_{\alpha}^n(\mu)$ from Algorithm \ref{alg:MOCK} is a asymptotic lower confidence bound for $\thetamu$. 
      In the case where the conditional distribution of $X$ given $Z$ is specified as $Q_{X\mid Z}$ (in the following, we often denote the true conditional distribution by $P:=P_{X\mid Z}$ and the specified conditional distribution by $Q:=Q_{X \mid Z}$ without causing confusion), the floodgate functional with input $Q_{X \mid Z}$ is denoted by $f^{Q}(\mu)$. Note that $\thetamu$ can be rewritten with explicit subscripts as below (here we use the equivalent expression of $\thetamu$ in \eqref{eq:equiv_fmu} and expand $h(W)$).
    \begin{equation}
        \thetamu= \frac{\Ep{P}{Y\left(\muX -\Epc{P}{\muX}{\Xnoj}\right)}}{\sqrt{\Ep{P_{Z}}{\Varpc{P}{\muX}{\Xnoj}}}}       
    \end{equation}
    Therefore, $f^{Q}(\mu)$ admits the following expression:
%    Clearly, $L_{\alpha}^{n,Q}(\mu)$ is a lower confidence bound for the following quantity:
    \begin{equation}\label{eq:thetaj_Q}
        \thetaQmu:= \frac{\Ep{P}{Y\left(\muX -\Epc{Q}{\muX}{\Xnoj}\right)}}{\sqrt{\Ep{P_{Z}}{\Varpc{Q}{\muX}{\Xnoj}}}}.
    \end{equation}    
    % Note the denominator is estimated only using the null samples from $Q_{X\mid Z}$, instead of using both the real samples (i.e. generated from $P_{X\mid Z}$) and the null samples. 
    Denote $\omega(x,z):=\frac{d P_{X|\Xnoj}(x|\xnoj)}{d Q_{X|\Xnoj}(x|\xnoj)}$. Note that $\omega(x,z)$ is the ratio of conditional densities if we are in the continuous case;  $\omega(x,z)$ is the ratio of conditional probability mass function in discrete case. %We also enforce the support of $Q$ must contain the support of $P$.
    Then we can quantify the difference between $\thetamu$ and $\thetaQmu$ as in Lemma \ref{lem:robust_gap}.
    \begin{lemma} \label{lem:robust_gap}
    Assuming $ \EE{Y^4} < \infty $, consider two joint distributions $P,Q$ over $(X,Z)$, defined as $P(x,z)=P_{X\mid Z}(\xj|\xnoj)P_{Z}(\xnoj),Q(x,z)=Q_{X\mid Z}(\xj|\xnoj)P_{Z}(\xnoj)$. If we denote $\calU$ to be the class of functions $\mu:\mathbb{R}^{p} \rightarrow \mathbb{R}$ satisfying one of the following conditions:
    \begin{itemize}
        \item $\mu(X,Z) \in \sigalg(Z)$;
        \item $\max\{\Ep{P}{ {\mu}^4(X,Z)}, \Ep{Q}{ {\mu}^4(X,Z)} \}/(\Ep{P_{Z}}{\Varpc{Q}{\mu(X,Z)}{Z}})^{2} \le c_0$.
    \end{itemize}
    % \begin{equation}
    %   \Ep{P}{Y^2} \le C_{0},~ \Ep{P}{\mu^4(X)},\Ep{Q}{\mu^4(X)} \le C_1
    % \end{equation}
    for some constants $c_0$, then we have the following bounds
    \begin{equation}\label{eq:robust_gap}
        \Delta(P,Q) := \sup_{\mu \in \calU} |{\theta}^{Q}(\mu)- \thetamu| \le C \sqrt{\Ep{P_{Z}}{\chi^2\left(P_{X\mid Z}\|Q_{X\mid Z}\right)}}
    \end{equation}
    for some constant $C$ only depending on $\EE{Y^4}$ and $c_0$, where the $\chi^2$ divergence between two distributions $P,Q$ on the probability space $\Omega$ is defined as $\chi^2\left(P\|Q\right):=\int_{\Omega}(\frac{d P}{d Q}-1)^2 dQ$.
    \end{lemma}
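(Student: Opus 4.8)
The plan is to handle the two alternatives defining $\calU$ separately. If $\mu(X,Z)\in\sigalg(Z)$, then $h_P(x,z):=\mu(x,z)-\Epc{P}{\mu(X,Z)}{Z=z}$ and $h_Q(x,z):=\mu(x,z)-\Epc{Q}{\mu(X,Z)}{Z=z}$ vanish identically, so every numerator and denominator appearing in $\thetamu$ and $\thetaQmu$ is zero and both equal $0$ by the $0/0=0$ convention; there is nothing to prove. In the other case, since $\thetamu$, $\thetaQmu$ and the moment ratio in the hypothesis are all invariant to rescaling $\mu$ by a positive constant, I rescale so that $\Ep{P_Z}{\Varpc{Q}{\mu(X,Z)}{Z}}=\Ep{Q}{h_Q^2(X,Z)}=1$; then $\Ep{P}{\mu^4(X,Z)}$ and $\Ep{Q}{\mu^4(X,Z)}$ are $\le c_0$, and the $C_r$ and conditional Jensen inequalities give $\Ep{P}{h_P^4},\Ep{P}{h_Q^4},\Ep{Q}{h_Q^4}\le 16c_0$. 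Throughout I write $\omega(x,z)=dP_{X|Z}(x|z)/dQ_{X|Z}(x|z)$ and use the facts $\Epc{Q}{\omega}{Z}=1$, $\Epc{Q}{(\omega-1)^2}{Z=z}=\chi^2(P_{X|Z=z}\,\|\,Q_{X|Z=z})$, and $\Ep{P}{g(X,Z)}=\Ep{Q}{\omega(X,Z)g(X,Z)}$ (legitimate because $P$ and $Q$ share the $Z$-marginal and the conditional law $Y\mid X,Z$). Set $J:=\Ep{P_Z}{\chi^2(P_{X|Z}\,\|\,Q_{X|Z})}$ and $\gstar(z):=\Epc{P}{Y}{Z=z}$.

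The engine is the ratio decomposition $\thetamu-\thetaQmu=(N_P-N_Q)/\sqrt{D_P}+N_Q\big(1/\sqrt{D_P}-1/\sqrt{D_Q}\big)$, where $N_P=\Ep{P}{Yh_P}$, $N_Q=\Ep{P}{Yh_Q}$, $D_P=\Ep{P}{h_P^2}$, $D_Q=\Ep{Q}{h_Q^2}=1$. Two identities drive the estimates. First, writing $\delta(z):=\Epc{P}{\mu(X,Z)}{Z=z}-\Epc{Q}{\mu(X,Z)}{Z=z}=\Epc{P}{h_Q}{Z=z}=\Epc{Q}{(\omega-1)h_Q}{Z=z}$, conditioning followed by a change of measure gives
\[
N_P-N_Q=-\Ep{P}{Y\,\delta(Z)}=-\Ep{P_Z}{\gstar(Z)\,\delta(Z)}=-\Ep{Q}{\gstar(Z)\,(\omega(X,Z)-1)\,h_Q(X,Z)}.
\]
Second, expanding $\mu-\Epc{P}{\mu}{Z}=h_Q-\delta(Z)$ and using $\Epc{P}{h_Q}{Z}=\delta(Z)$ yields $\Varpc{P}{\mu(X,Z)}{Z=z}=\Epc{P}{h_Q^2}{Z=z}-\delta(z)^2$, hence $D_P=\Ep{P}{h_Q^2}-\Ep{P_Z}{\delta^2(Z)}=1+\Ep{Q}{(\omega-1)h_Q^2}-\Ep{P_Z}{\delta^2(Z)}$.

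Now I bound each piece. Cauchy--Schwarz under $Q$ against $\omega-1$ gives $|N_P-N_Q|\le\sqrt{J}\,\big(\Ep{Q}{\gstar(Z)^4}\big)^{1/4}\big(\Ep{Q}{h_Q^4}\big)^{1/4}$, and $\Ep{Q}{\gstar(Z)^4}=\Ep{P_Z}{(\Epc{P}{Y}{Z})^4}\le\EE{Y^4}$ together with $\Ep{Q}{h_Q^4}\le 16c_0$ gives $|N_P-N_Q|\le C_1\sqrt{J}$ with $C_1$ depending only on $\EE{Y^4}$ and $c_0$ --- this is the step that forces the fourth moment of $Y$. The same manoeuvre applied to $\Ep{P_Z}{\delta^2(Z)}=\Ep{Q}{\delta(Z)(\omega-1)h_Q}$, using $\Ep{P_Z}{\delta^4}\le\Ep{P}{h_Q^4}\le 16c_0$, gives $\Ep{P_Z}{\delta^2(Z)}\le 4\sqrt{c_0}\,\sqrt{J}$; and $|\Ep{Q}{(\omega-1)h_Q^2}|\le\sqrt{J}\,(\Ep{Q}{h_Q^4})^{1/2}\le 4\sqrt{c_0}\,\sqrt{J}$, so $|D_P-1|\le 8\sqrt{c_0}\,\sqrt{J}$. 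Finally I split on the size of $J$. If $J\le t_0$ for a threshold $t_0=t_0(c_0)\le 1$ with $8\sqrt{c_0}\sqrt{t_0}\le 1/2$, then $D_P\ge 1/2$, so $|1/\sqrt{D_P}-1/\sqrt{D_Q}|=|1/\sqrt{D_P}-1|\lesssim|D_P-1|$ and $|N_Q|\le\sqrt{\EE{Y^2}}\sqrt{\Ep{P}{h_Q^2}}\le\sqrt{\EE{Y^2}}\sqrt{1+4\sqrt{c_0}}$ is bounded, and plugging into the decomposition gives $|\thetamu-\thetaQmu|\le C\sqrt{J}$. If $J>t_0$, then directly $|\thetamu|\le\sqrt{\EE{Y^2}}\le(\EE{Y^4})^{1/4}$ (Cauchy--Schwarz as in the proof of Lemma~\ref{lem:max}) and $|\thetaQmu|\le\sqrt{\EE{Y^2}}\sqrt{1+4\sqrt{c_0}\sqrt{J}}\le\sqrt{\EE{Y^2}}\big(1+2c_0^{1/4}J^{1/4}\big)$, and since $J>t_0$ makes both $1$ and $J^{1/4}$ at most a constant multiple of $\sqrt{J}$, the triangle inequality again gives $|\thetamu-\thetaQmu|\le C\sqrt{J}$. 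Taking the larger constant proves \eqref{eq:robust_gap}.

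I expect the main obstacle to be obtaining the bound on $N_P-N_Q$ with the correct (merely $\chi^2$, not higher) divergence. The naive route --- factoring $\Ep{P}{Y\delta(Z)}$ as $\sqrt{\Ep{P_Z}{\delta^2(Z)}}\sqrt{\Ep{P_Z}{\gstar(Z)^2}}$ and then bounding $\delta(z)^2\le\chi^2(P_{X|Z=z}\,\|\,Q_{X|Z=z})\,\Varpc{Q}{\mu(X,Z)}{Z=z}$ --- stalls, because averaging over $Z$ leaves $\Ep{P_Z}{\chi^2(P_{X|Z}\,\|\,Q_{X|Z})\,\Varpc{Q}{\mu(X,Z)}{Z}}$, which the averaged divergence $J$ does not control (the conditional divergence and the conditional variance of $\mu$ can be positively correlated). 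The resolution above --- rewriting the numerator difference as the single $Q$-expectation $\Ep{Q}{\gstar(Z)(\omega-1)h_Q}$ and applying Cauchy--Schwarz once against $\omega-1$, keeping $\gstar(Z)h_Q(X,Z)$ together and discharging it through the fourth-moment bounds on $Y$ and on $\mu$ --- is precisely what makes the constant in \eqref{eq:robust_gap} depend on $\EE{Y^4}$ and $c_0$. Everything else (the variance expansion, the two-regime split, checking $D_P$ stays bounded away from $0$ when $J$ is small) is routine bookkeeping once these two identities are in place.
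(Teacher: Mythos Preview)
Your proof is correct. The main departure from the paper is in the ratio decomposition. You split
\[
\thetamu-\thetaQmu=\frac{N_P-N_Q}{\sqrt{D_P}}+N_Q\Big(\frac{1}{\sqrt{D_P}}-\frac{1}{\sqrt{D_Q}}\Big),
\]
which leaves a raw $1/\sqrt{D_P}$ and therefore forces you into the two-regime split on $J$ to keep $D_P$ bounded below. The paper instead uses the elementary inequality
\[
\Big|\frac{a}{\sqrt{b}}-\frac{c}{\sqrt{d}}\Big|\le\frac{|a|}{\sqrt{b}}\cdot\frac{|b-d|}{d}+\frac{|a-c|}{\sqrt{d}},
\]
with $(a,b,c,d)=(N_P,D_P,N_Q,D_Q)$. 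After your normalization $D_Q=1$, the only place $D_P$ appears is inside $|a|/\sqrt{b}=|\thetamu|\le(\EE{Y^4})^{1/4}$, so no lower bound on $D_P$ and no case split are needed. This is the cleaner route and is worth knowing.

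A smaller remark: your closing paragraph overstates the difficulty of the ``naive route.'' The paper in fact bounds $|a-c|$ by first applying Cauchy--Schwarz \emph{conditionally} to get $|\delta(z)|\le\sqrt{\Epc{Q}{\mu^2}{Z=z}}\sqrt{\chi^2(P_{X|Z=z}\|Q_{X|Z=z})}$, and only then applies Cauchy--Schwarz over $Z$ to separate $\sqrt{\chi^2(\cdot|Z)}$ from $\Epc{P}{\mustar}{Z}\sqrt{\Epc{Q}{\mu^2}{Z}}$; the latter factor is controlled by $(\EE{(\mustar)^4}\Ep{Q}{\mu^4})^{1/4}$. This yields exactly $\sqrt{J}$, not a higher divergence. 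The step that really fails is first separating $\gstar(Z)$ from $\delta(Z)$ by Cauchy--Schwarz and \emph{then} trying to bound $\EE{\delta^2}$ --- but that is not what the paper does. Your single-$Q$-expectation rewriting $N_P-N_Q=-\Ep{Q}{\gstar(Z)(\omega-1)h_Q}$ is a perfectly good alternative and arguably more transparent; it and the paper's conditional-then-unconditional argument are two views of the same estimate.
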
 
    When the $X\mid Z$ model is misspecified, the inferential validity will not hold in general, without adjustment on the lower confidence bound. Lemma \ref{lem:robust_gap} gives a quantitative characterization about how much we need to adjust.
        \begin{proof}[Proof of Lemma \ref{lem:robust_gap}]\label{pf:lem:robust_gap}
        % support of Q must contain the support of P
        When the support of $Q$ does not contain the support of $P$, the $\chi^2$ divergence between $P$ and $Q$ is infinite, which immediately proves \eqref{eq:robust_gap}. From now, we work with the case where the support of $Q$ contains the support of $P$.
        % We also enforce the support of $Q$ must contain the support of $P$.
        When $\mu(X,Z) \in \sigalg(Z)$, $\thetamu = \thetaQmu =0$, thus the statement holds. Now we deal with the nontrivial case where $\Ep{P_{Z}}{\Varpc{Q}{\mu(X,Z)}{\Xnoj}}>0 $. Without loss of generality, we assume $\Ep{P_{Z}}{\Varpc{Q}{\mu(X,Z)}{\Xnoj}}=1$ for the following proof (since floodgate is invariate to positive scaling of $\mu$). Then the stated moment conditions on $\mu$ imply 
        \begin{equation}\label{eq:mmt_bound_robust_gap}
        \Ep{P}{ {\mu}^4(X,Z)}, \Ep{Q}{ {\mu}^4(X,Z)} \le c_0.
        \end{equation}
        First we simplify $\thetamu$ and $\thetaQmu$ into
        \begin{eqnarray} \nonumber
             \thetamu &=& \frac{\Ep{P}{\mustar(X,Z)\left(\muX -\Epc{P_{X\mid Z}}{\muX}{\Xnoj}\right)}}{\sqrt{\Ep{P_{Z}}{\Varpc{P_{X\mid Z}}{\muX}{\Xnoj}}}} = \frac{\Ep{P}{\mustar(W)\left(\mu(W) -\Epc{P}{\mu(W)}{\Xnoj}\right)}}{\sqrt{\Ep{P_{Z}}{\Varpc{P}{\mu( W)}{\Xnoj}}}}
            \\ \nonumber
             \thetaQmu &=& \frac{\Ep{P}{\mustar(X,Z)\left(\muX -\Epc{Q_{X\mid Z}}{\muX}{\Xnoj}\right)}}{\sqrt{\Ep{P_{Z}}{\Varpc{Q_{X\mid Z}}{\muX}{\Xnoj}}}} =
             \frac{\Ep{P}{\mustar(W)\left(\mu(W) -\Epc{Q}{\mu(W)}{\Xnoj}\right)}}{\sqrt{\Ep{P_{Z}}{\Varpc{Q}{\mu(W)}{\Xnoj}}}}
        \end{eqnarray}
    due to \eqref{eq:5_lem:max}, where we denote $W=(X,Z)$ (thus $w=(x,z)$). Noticing the following facts
    \begin{equation}\nonumber
    \left|
    \frac{a}{\sqrt{b}} - \frac{c}{\sqrt{d}}
    \right| 
     =  \left| \frac{a\sqrt{d} - c\sqrt{b}}{\sqrt{bd}}\right|\le \frac{a}{\sqrt{bd}} \left|\sqrt{b} - \sqrt{d}\right| + \frac{1}{\sqrt{d}}\left|a-c\right| \le \frac{a}{\sqrt{b}}\cdot \frac{1}{d}\left|{b} - {d}\right| + \frac{1}{\sqrt{d}}\left|a-c\right|,
    \end{equation}
     we let $a,c$ to be the numerators of $\thetamu$ and $\thetaQmu$ respectively and $\sqrt{b},\sqrt{d}$ to be their denominators. Before dealing with $|b-d|$ and $|c-d|$, we have the following bounds on the terms $a/\sqrt{b}$ and $1/d$.
    \begin{equation}
        a/\sqrt{b} = \thetamu \le \Ij  \le (\Ep{P}{Y^4})^{1/4}, ~~ 1/d = 1/\Ep{P_{Z}}{\Varpc{Q}{\muX}{\Xnoj}} = 1, 
    \end{equation}
    where the first equality is due to Lemma \ref{lem:max} and the second one is by applying Jensen's inequality ($\Ep{P_{Z}}{\Varpc{P}{\condmean}{\Xnoj}}\le \Ep{P_{Z}}{\Epc{P}{(\condmean)^2}{\Xnoj}}\le \EE{Y^2} \le\sqrt{\EE{Y^4}} $). The equality holds by assumption. Now it suffices to consider bounding $|b-d|$ and $|c-d|$ in terms of the expected $\chi^2$ divergence between $P_{X\mid Z}$ and $Q_{X\mid Z}$. We have the following equations for $|a-c|$:
    \begin{eqnarray}\nonumber 
    \left| a-c \right| 
    & = & \left| 
    \Ep{P}{\mustar(W)\left(\mu(W) -\Epc{P}{\mu(W)}{\Xnoj}\right)} -
    \Ep{P}{\mustar(W)\left(\mu(W)-\Epc{Q}{\mu(W)}{\Xnoj}\right)}    
    \right|\\ \nonumber
    & = & \left| 
    \Ep{P}{\mustar(W)\left(\Epc{P}{\mu(W)}{\Xnoj} -\Epc{Q}{\mu(W)}{\Xnoj} \right) } 
    \right|\\ 
    & = & \left| 
    \Ep{P_{Z}}{\Epc{P}{\mustar(W)}{\Xnoj} \left(\Epc{P}{\mu(W)}{\Xnoj} -\Epc{Q}{\mu(W)}{\Xnoj} \right)}
    \right|. \label{eq:1_lem:robust_gap}
    \end{eqnarray}
    Now we rewrite $|\Epc{P}{\mu(W)}{\Xnoj} -\Epc{Q}{\mu(W)}{\Xnoj}|$ in the form of integral then bound it as
    \begin{eqnarray}\nonumber
    |\Epc{P}{\mu(W)}{\Xnoj} -\Epc{Q}{\mu(W)}{\Xnoj}|
     &=&\left|\int \mu(x,Z)(1- \omega(x,Z))d Q_{X\mid Z}(x\mid Z)\right| \\ \nonumber
     &\le& \sqrt{\Epc{Q_{X\mid Z}}{\mu^2(X,Z)}{\Xnoj}} \sqrt{\int (1-w(x,Z))^2 d Q_{X\mid Z}(x\mid Z)}\\
     &=&  \sqrt{\Epc{Q_{X\mid Z}}{\mu^2(W)}{\Xnoj}} \sqrt{\chi^2\left(P_{X\mid Z}\|Q_{X\mid Z}\right)},
     \label{eq:2_lem:robust_gap}
    \end{eqnarray}
     where $\omega(x,Z) = \frac{d P_{X\mid Z}(x\mid Z)}{d Q_{X\mid Z}(x\mid Z)}$ and the above inequality is from the Cauchy--Schwarz inequality. Hence we can plug \eqref{eq:2_lem:robust_gap} into \eqref{eq:1_lem:robust_gap} and further bound $|a-c|$ by
    \begin{align}
         \left| a-c \right|  \nonumber
         \le ~&  \Ep{P_{Z}}{\Epc{P_{X\mid Z}}{\mustar(W)}{\Xnoj} 
         \sqrt{\Epc{Q_{X\mid Z}}{\mu^2(W)}{\Xnoj}} \sqrt{\chi^2\left(P_{X\mid Z}\|Q_{X\mid Z}\right)}
         } \\ 
         \le ~& \sqrt{\Ep{P_{Z}}{
         (\Epc{P_{X\mid Z}}{\mustar(W)}{\Xnoj} )^2 \Epc{Q_{X\mid Z}}{\mu^2(W)}{\Xnoj}
         }} \cdot
         \sqrt{\Ep{P_{Z}}{\chi^2\left(P_{X\mid Z}\|Q_{X\mid Z}\right)}}.
         \label{eq:3_lem:robust_gap}
     \end{align}
    For the first part of the product in \eqref{eq:3_lem:robust_gap}, we can apply the Cauchy--Schwarz inequality and Jensen's inequality and bound it by $(\Ep{P}{(\mustar)^4(W)}\Ep{Q}{\mu^4(W)})^{1/4}$, which is upper bounded by some constant under the stated condition $\EE{Y^4} < \infty$ and $\Ep{Q}{\mu^4(X,Z)} \le c_0$ (from \eqref{eq:mmt_bound_robust_gap}). Regarding $|b-d|$, we have
    \begin{eqnarray}
        |b-d| \nonumber
        & = & 
        \left| 
        \Ep{P_{Z}}{\Varpc{P}{\mu(W)}{\Xnoj}} - \Ep{P_{Z}}{\Varpc{Q}{\muX}{\Xnoj}}
        \right| \\ \nonumber
        &\le &  
        \left| 
        \Ep{P_{Z}}{
        (\Epc{P}{\mu(W)}{\Xnoj})^2 - (\Epc{Q}{\mu(W)}{\Xnoj})^2}
        \right| \\
        &~~& +
        \left| 
        \Ep{P_{Z}}{
        \Epc{P}{\mu^2(W)}{\Xnoj} - \Epc{Q}{\mu^2(W)}{\Xnoj} }
        \right|. 
        \label{eq:4_lem:robust_gap}
    \end{eqnarray}
    Similarly as \eqref{eq:2_lem:robust_gap}, we obtain
    \begin{equation} \nonumber
      \left|  
      \Epc{P}{\mu^2(W)}{\Xnoj} - \Epc{Q}{\mu^2(W)}{\Xnoj} 
      \right| \le \sqrt{\Epc{Q_{X\mid Z}}{\mu^4(W)}{\Xnoj}} \sqrt{\chi^2\left(P_{X\mid Z}\|Q_{X\mid Z}\right)}.
    \end{equation}
    Then under the moment bounds $\Ep{Q}{\mu^4(X,Z)} \le c_0$ in \eqref{eq:mmt_bound_robust_gap}, we show the second term in \eqref{eq:4_lem:robust_gap} is upper bounded by $ \sqrt{c_{0} \Ep{P_{Z}}{\chi^2\left(P_{X\mid Z}\|Q_{X\mid Z}\right)} }$. Regarding the first term in \eqref{eq:4_lem:robust_gap}, we can write 
    \begin{eqnarray}
        \nonumber
         (\Epc{P}{\mu(W)}{\Xnoj})^2 - (\Epc{Q}{\mu
         (W)}{\Xnoj})^2 = 
         \left(\Epc{P}{\mu(W)}{\Xnoj} - \Epc{Q}{\mu(W)}{\Xnoj} \right)
         \left(\Epc{P}{\mu(W)}{\Xnoj} + \Epc{Q}{\mu(W)}{\Xnoj}\right)
    \end{eqnarray}
    then apply similar strategies in deriving \eqref{eq:1_lem:robust_gap} and \eqref{eq:3_lem:robust_gap} to control the above term under $C\sqrt{\Ep{P_{Z}}{\chi^2\left(P_{X\mid Z}\|Q_{X\mid Z}\right)}}$ for some constant $C$. And this will make use of the moment bound conditions $\Ep{P}{\mu^4(X,Z)}$,$\Ep{Q}{\mu^4(X,Z)}\le c_0$ in \eqref{eq:mmt_bound_robust_gap}. Finally we establish
    the bound in \eqref{eq:robust_gap}.
    \end{proof}

\acc{
\begin{proof}[Proof of Theorem \ref{thm:robust_no_adj}]
\label{pf:thm:robust_no_adj}
%    When $\mu_n(X,Z)\in \sigalg(Z)$, we simply have $ L^{\alpha}_{n,Q^{(n)}}(\mu_n) =0$, thus 
%    \begin{equation}
%    \nonumber
%            \mathbb{P}\left(
%            L^{\alpha}_{n,Q^{(n)}}(\mu_n) 
%            \le \Ij
%           \right) = 1 \ge 1-\alpha - O(n^{-1/2}).
%          \end{equation}
% Otherwise we consider the nontrivial case where $\Ep{P_{Z}}{\Varpc{Q^{(n)}}{\mu(X,Z)}{\Xnoj}}>0 $. Similarly as in the proof of Theorem \ref{thm:accuracy}, when assuming $\ee{Y^{12}}<\infty$, $\Var{Y|X,Z} \ge \tau>0$, and a uniform moment condition $\max\left\{\EE{\mu_n^{12}(X,Z)}, \Ep{Q^{(n)}}{\mu_n^{12}(X,Z)}\right\}/(\EE{\Varpc{Q^{(n)}}{\mu_n(X,Z)}{Z}})^{6} \le C$, we have
%    \begin{equation}
%    \nonumber
%            \mathbb{P}\left(
%            L^{\alpha}_{n,Q^{(n)}}(\mu_n) 
%            \le \Ij +  \Delta_n
%           \right)\ge 1-\alpha - O(n^{-1/2}).
%    \end{equation}
%    where $ \Delta_n = f^{Q^{(n)}}(\mu_n) - \Ij$. Note that the constant in the rate of $n ^{-1/2}$ depends on $\tau$ and $C$. It is worth mentioning that when the specified conditional distribution is $Q^{(n)}$, in the proof of establishing the coverage rate of $n ^{-1/2}$, verifying the moment conditions \myrom{1} (described in the part immediately after \eqref{eq:L3V_bound}) is done similarly as the derivations in the part between \eqref{eq:L3V_bound} and \eqref{eq:2_thm:main}. These actually involve the term $\EE{\mu_n^{12}(X,Z)}$, in addition to $\Ep{Q^{(n)}}{\mu_n^{12}(X,Z)}$.
%    
    First notice that $\Delta_n$ can be decomposed into two parts:
     \begin{equation} \label{eq:Delta_gap}
    \Delta_n = f^{Q^{(n)}}(\mu_n) - \Ij  = (f^{Q^{(n)}}(\mu_n) - f(\mu_n)) - ( \Ij - f(\mu_n)  ).
     \end{equation}
     In the following, we will deal with $f^{Q^{(n)}}(\mu_n) - f(\mu_n)$ and $\Ij - f(\mu_n)$ separately.
    Applying Lemma~\ref{lem:robust_gap} to $P$, $Q^{(n)}$ and $\mu_n$ under the stated conditions gives
    \begin{equation} \label{eq:thetamu_Q_gap}
    (f^{Q^{(n)}}(\mu_n) - f(\mu_n)) \le c_1 \sqrt{\EE{\chi^2\left(P_{X|Z}\,\|\, Q^{(n)}_{X|Z}\right)}}
     \end{equation}
     for some constant $c_1$ only depending on $\EE{Y^4}$ and $c_0$.
%    for some constant depending on $\EE{Y^{12}}$ and $C$. 
    Regarding the term $  \Ij - f(\mu_n)  $, we recall the derivations in the proof of Theorem~\ref{thm:accuracy}, specifically \eqref{eq:movi_gap_case1} and \eqref{eq:movi_gap_case2}, then obtain    \begin{equation}\label{eq:thetamu_mustar_gap}
     \Ij - f(\mu_n)   \ge  \frac{
    \EE{(\bar{h}_n(W)-\hstar(W))^2}
    }{2\Ij} = \frac{
    \EE{(\bar{\mu}_n(W)-\mustar(W))^2}
    }{2\Ij},
     \end{equation}
     where the equality holds by the definition of $\hstar$, $\bar{\mu}_n$ and $\bar{h}_n$.
    Combining \eqref{eq:Delta_gap}, \eqref{eq:thetamu_Q_gap} and \eqref{eq:thetamu_mustar_gap} yields \eqref{eq:Delta_characterize}.
    % we have $\Delta_n = O(\delta_X^{(n)}-\delta_Y^{(n)})$ where $\delta_X^{(n)}$ and $\delta_Y^{(n)}$ are characterized as in \eqref{eq:Delta_characterize}.
    \end{proof}}

   \acc{
    \section{Details of extending the mMSE gap}
    \label{sec:iff_movi}
    \subsection{Taking the supremum over transformations}
    \label{sec:max_mmse}
        Drawing inspiration from the maximum correlation coefficient \citep{hirschfeld1935connection}, taking the supremum of the mMSE gap over transformations of $Y$ leads to other desirable properties. For a set $\mathcal{G}$ of functions $g$ mapping $Y$ to its sample space, let $\Ij_{\mathcal{G}} = \sup_{g\in\mathcal{G}} \Ij_{\text{sf}}(g(Y))$, where $\Ij_{\text{sf}}(g(Y))$ denotes the scale-free version of the mMSE gap when $Y$ is replaced by $g(Y)$. Then for any fixed function $g\in\mathcal{G}$, floodgate's LCB for $\Ij_{\text{sf}}(g(Y))$ is also an asymptotically valid LCB for $\Ij_{\mathcal{G}}$. And like $\mu$, $g$ can be chosen based on an independent split of the data to make the gap between $\Ij_{\text{sf}}(g(Y))$ and $\Ij_{\mathcal{G}}$ as small as possible. If $\mathcal{G}$ forms a group, then it is immediate that $\Ij_{\mathcal{G}}$ takes the same value when $g(Y)$ is used as the response, for any $g\in\mathcal{G}$, i.e., $\Ij_{\mathcal{G}}$ is invariant to any transformation $g\in\mathcal{G}$ of $Y$. For instance, we might choose $\mathcal{G}$ to be the group of all strictly monotone functions, or of all bijections. Regardless of whether $\mathcal{G}$ is a group or not, if it is large enough that it contains all bounded continuous functions then, by the Portmanteau Theorem, $\Ij_{\mathcal{G}}$ will be zero if \emph{and only if} $\jnull$. That is, for sufficiently large $\mathcal{G}$, $\Ij_{\mathcal{G}}$ satisfies the key property of the MOVI in \citet{azadkia2019simple} and floodgate provides asymptotically valid inference for it. A natural choice\footnote{We are grateful to an anonymous reviewer for suggesting this choice.} of $\mathcal{G}$ satisfying such property is $\{\indicat{y \le t}: t \in \mathbb{R} \}$ as    $$
    	\Ij_{\mathcal{G}} = \sup_{t \in \mathbb{R}} \frac{{\EE{\Varc{\Ec{\indicat{Y \le t}}{X,Z}}{\Xnoj}}}}{\Var{\indicat{Y \le t}}}.
    	$$
    	The above quantity is related to the measure of conditional dependence in \citet{azadkia2019simple} as both involve ${\EE{\Varc{\Ec{\indicat{Y \le t}}{X,Z}}{\Xnoj}}}$.
        \acc{	 
    \subsection{Extending via the RKHS framework}
    \label{sec:kernelfg} 
     
     A reviewer pointed out a very interesting work \citep{huang2020kernel} which came out after our arXiv preprint. To handle $X, Y, Z$ from general topological spaces, \citet{huang2020kernel} proposes the kernel partial correlation coefficient (KPC) to measure conditional dependence and provides consistent estimation methods. \citet{huang2020kernel} mentioned the numerator of KPC with a linear kernel equals to the mMSE gap considered in our paper. 
     %The reviewer also expressed interest in seeing whether the floodgate idea generalizes to the KPC measure. Therefore, 
     In this section, we discuss how to extend the floodgate inferential approach via reproducing kernel Hilbert spaces (RKHS) to apply to the KPC. For ease of exposition, we focus on the numerator of KPC and call it the average kernel maximum mean discrepancy (AKMMD). Note that the AKMMD with a characteristic kernel will be zero if \emph{and only if} $\jnull$}.     
     
%     with a ver question: whether the floodgate idea generalizes to the measures proposed in that paper. 
%      It considers
%        Using the theory of reproducing kernel Hilbert spaces (RKHSs), we can extend the mMSE gap to a class of variable importance measures, which will be called as average kernel maximum mean discrepancy (AKMMD). Such measures are able to handle data from general metric spaces. Accordingly, we have kernel floodgate procedures to make inference on the AKMMD.see Appendix \ref{sec:kernelfg} for details.
%        
%        
%        
Recall the equivalent expression of the mMSE gap in \eqref{eq:mMSEgap_mmd} 
$$
\Ij^2  = \EE{(\condmean - \condmeanj)^2},
$$
where $\condmean$ can be viewed as the kernel embedding of $P_{Y\mid X, Z}$ under a special linear kernel. Then $\Ij^2$ essentially quantifies the distance between $P_{Y\mid X, Z}$ and $P_{Y\mid Z}$ via the maximum mean discrepancy (MMD). To extend this idea using a general kernel, we introduce some new notations and preliminary concepts about RKHS. Suppose $(Y, X, Z)$ take values in some topological space $\cY \times \cX \times \cZ$ and let $P$ be the joint distribution over $(Y, X, Z)$. The marginal distribution of $Y$ is denoted by $P_{Y}$. Sometimes this subscript is dropped when doing so does not cause confusion. We use the bold $\bmu$ notation for kernel mean embeddings, which should be differentiated from the working regression function in the main text. Denote by $\cH_{\cY}$ an RKHS with kernel $\cK(\cdot, \cdot)$ on the space $\cY$, where $\cK:\cY \times \cY \rightarrow \mathbb{R}$ is a symmetric and positive semidefinite function such that $\cK(\cdot, y)$ is measurable function on $\cY, \forall ~y \in \cY$. The inner product and norm on the RKHS $\cH_{\cY}$ are denoted by $\inner{\cdot}{\cdot}_{\cH_\cY}$ and $\norm{\cdot}_{\cH_\cY}$, with the subscripts often dropped for simplicity. The kernel reproducing property implies that $h(y) = \inner{\cK(\cdot, y)}{h}_{\cH_\cY}$. First we introduce the definitions of the kernel mean embedding and the MMD \citep{deb2020measuring,huang2020kernel}.
%\ljmargin
\bdf[Kernel mean embedding]
Suppose $Y\sim P_{Y}$ and $\Ep{P}{\sqrt{\cK(Y,Y)}}< \infty$. There exists \citep{deb2020measuring,huang2020kernel} a unique $\bmu_{P} \in \cH_\cY$ satisfying 
\beq \nonumber
\inner{\bmu_P}{h}_{\cH_{\cY}} = \Ep{P}{h(Y)}, \quad \text{for all } h \in \cH_{\cY},
\eeq
which is called the kernel mean embedding of $P_{Y}$ into $\cH_{\cY}$.
\edf
\bdf[Maximum mean discrepancy]
We measure the distance between two distributions $P_1, P_2$ via the MMD (with respect to the kernel $\cK(\cdot, \cdot)$), defined as
\beq \nonumber
\mathrm{MMD}(P_1,P_2):= \norm{\bmu_{P_1} - \bmu_{P_2}}_{\cH_\cY}.
\eeq
It also has the following equivalent representation \citep{deb2020measuring,huang2020kernel}:
\beq  \nonumber
\mathrm{MMD}^2(P_1,P_2):= \EE{\cK(U,U')} + \EE{\cK(V,V')} - 2\EE{\cK(U,V)},
\eeq
where $U,U' \iid P_1$, $V,V' \iid P_2$ and $U \independent V$.
\edf
%}{both of these definitions seem to include some nontrivial mathematical statements beyond simply a definition (existence/uniqueness in the first, the equivalent representation in the second). Please add a reference for these results}
%The alternative representation of the squared MMD: 
%\beq
%\mathrm{MMD}^2(P_1,P_2):= \EE{\cK(U,U')} + \EE{\cK(V,V')} - 2\EE{\cK(U,V)}
%\eeq
%where $U,U' \iid P_1$, $V,V' \iid P_2$ and $U \independent V$.
Now we are ready to define the AKMMD.
\bdf[average kernel maximum mean discrepancy]
 The \emph{average kernel maximum mean discrepancy} for variable $X$ is defined as 
\beq
\label{{eq:kernelmgap}}
\cI^2_{\cK} = \EE{\mathrm{MMD}^2(P_{Y\mid X,Z}, P_{Y\mid Z})}
\eeq
whenever all the above expectations exist.
\edf
We also present its alternative expression in terms of the kernel:
\beq \nonumber
\cI^2_{\cK} = \EE{\cK(Y_2,\tilde Y_2)} - \EE{\cK(Y_1,\tilde Y_1)} = \EE{\Ec{\cK(Y_2,\tilde Y_2)}{X,Z}} - \EE{\Ec{\cK(Y_1,\tilde Y_1)}{Z}},
\eeq
where $Y_1, \tilde Y_1,Y_2, \tilde Y_2 $ are defined as below 
\beqa \nonumber
Y_1 \mid X \sim P_{Y|Z},\quad \tilde Y_1 \mid X \sim P_{Y|Z}, \quad \text{and } Y_1 \independent  \tilde Y_1 \mid X,\\ \nonumber
(X,Z) \sim P_{X, Z},\quad Y_2\mid X,Z \sim P_{Y\mid X,Z}, \quad  \tilde Y_2\mid X,Z \sim P_{Y\mid X,Z}, \quad \text{and } Y_2 \independent \tilde Y_2 \mid X,Z.
\eeqa
The floodgate functional constitutes a deterministic lower bound for the mMSE gap for any working regression function $\mu$. As we are now dealing with mean embeddings with a general kernel, we will replace the role of $\mu$ with $Q_{Y\mid X, Z}$, an estimate of the full conditional distribution of $Y\mid X, Z$ (as opposed to just its conditional mean). Let $Q = Q_{Y\mid X, Z} \times P_{X, Z}$ and the associated conditional distribution of $Y$ given $Z$ by $Q_{Y\mid Z}$. For notational simplicity, $Q_{Y\mid X, Z}$ and $Q_{Y\mid Z}$ are both sometimes abbreviated simply as $Q$. Given any non-random conditional distribution $Q_{Y\mid X,Z}$, we consider the kernel floodgate functional
\beq\label{eq:kfg_functional}
        f_{\cK}(Q) := 
        \frac{\EE{\cK(Y,Y_2^{\Q})} - \EE{\cK(Y,Y_1^{\Q})}  }{\sqrt{
        \EE{\cK(Y_2^{\Q},\tilde Y_2^{\Q})}  -  \EE{\cK(Y_2^{\Q},Y_1^{\Q})}  
        }}, 
        %=\frac{\sqrt{2}\EE{Y\left(\muX - \muXk\right)}}{\sqrt{\EE{(\muX - \muXk)^2}}}, 
        %~or~ \fmu := \frac{\eeup{Y - \muXk}{2} - \eeup{Y - \muX}{2}}{\sqrt{2\eeup{\muX - \muXk}{2}}}
        %\fmu := \frac{\bb{E}[Y - \muXk^2 - \bb{E}[Y - \muX]^2}{\sqrt{2 \bb{E}[\muX - \muXk]^2}}
\eeq
where the involved random variables are defined through
\begin{align} 
\label{eq:lotrvs} 
\begin{split}
(X,Z) \sim P_{X, Z},\quad Y\mid X,Z \sim P_{Y\mid X,Z}, \quad Y\mid Z \sim P_{Y\mid Z}\\
Y_2^{\Q},\tilde Y_2^{\Q} \mid X,Z \iid \Q_{Y\mid X,Z}, \quad  Y \independent (Y_2^{\Q},\tilde Y_2^{\Q}) \mid X,Z,  \\
Y_1^{\Q} \mid Z \iid Q_{Y|Z}, \quad  Y_1^{\Q} \independent (X,Y, Y_2^{\Q},\tilde Y_2^{\Q} ) \mid Z. \\
\end{split}
\end{align}
Lemma \ref{lem:max_kernel} shows $f_{\cK}$ tightly satisfies the lower-bounding property, as $f$ does in Lemma \ref{lem:max}. The proof can be found in Appendix \ref{sec:pf:lem:max_kernel}.
\blem
\label{lem:max_kernel}
    % Assuming the existence of $\I$, we have $\fmu\le \I$ hold for any nonrandom function $\mu$ such that $\EE{(\muX - \muXk)^2}$ exists, and the true regression function $\mustar$ attains the maximum i.e. $\fmus = \I$.
    For any $\Q$ such that $f_{\cK}(\Q)$ exists, we have $f_{\cK}(\Q) \le \cI_{\cK}$, with equality when $\Q=P_{Y\mid X, Z}$.
\elem
Therefore, we can provide an LCB for $ \cI_{\cK}$ via a LCB for $f_{\cK}(Q)$ with some choice of $Q$. Since the definition of $f_{\cK}(\Q)$ involves null $Y$ samples such as $Y_2^{\Q},\tilde Y_2^{\Q}, Y_1^{\Q}$, we will follow \eqref{eq:lotrvs} to generate null samples of $Y$ then construct i.i.d. unbiased estimates of the numerator and the denominator of $f_{\cK}(\Q)$ respectively. Based on the CLT and the delta method, we can derive asymptotically valid LCBs for $f_{\cK}(\Q)$. This idea is spelled out in Algorithm \ref{algo:kernerl_fg}. 
%Although in general we can generate the samples $Y_2^{\Q},\tilde Y_2^{\Q}, Y_1^{\Q}$ based on any $Q_{Y\mid X,Z}$ estimated from separate dataset, here we briefly describe a particular way using the idea of nearest neighbors. 
%Denote a separate dateset by $\cD_{tr} = \{X_s,Z_s,Y_s\}_{s=1}^{n_{tr}}$. For each $(X_i, Z_i), i \in [n]$, we define the augmented dataset $\cD_{i} := (X_i, Z_i) \cup \{X_s,Z_s\}_{s=1}^{n_{tr}}$. Consider the nearest neighbor of $(X_i, Z_i)$ among $\cD_{i}$ and denote it as $(X_{N(i)}, Z_{N(i)})$ then let $Y_{i2} = Y_{N(i)}$. 
\begin{algorithm*}[htbp]
\caption{Kernel floodgate}
\label{algo:kernerl_fg}
\begin{algorithmic}[1]
\REQUIRE Data $\{(Y_i,W_i)\}\nsubp$, a chosen kernel $\cK(\cdot, \cdot)$, a estimated conditional distribution of $P_{Y\mid X,Z}$, denoted by $Q_{Y\mid X,Z}$, resampling number $M$, $P_{X\mid Z}$, number of null replicates $K$, and a confidence level $\alpha \in (0,1)$.
\vspace{0.05cm}
\STATE For each $i \in [n]$, draw $\{Y_{2,i}^{(m)}\}_{m=1}^{M}$ from $Q_{Y\mid X,Z}$ given $(X_i,Z_i)$; given $Z_i$, draw i.i.d. null samples $\{\tilde X_i^{(k)}\}_{k=1}^{K}$ from $P_{X\mid Z}$, then draw $\{Y^{(k,m)}_{1,i}\}_{m=1}^{M}$ from $Q_{Y\mid X,Z}$ given $(X_i,\tilde Z_i^{(k)})$ for each $k\in [K]$. Denote $Y_{2,i}^{(m)} = Y^{(0,m)}_{1,i}$ for each $m\in [M]$.
\STATE Compute 
\beqa \nonumber
R_{i} &=& \frac{1}{M}\sum_{m=1}^M\cK\rbr{Y_i, Y_{2,i}^{(m)}} - \frac{1}{KM}\sum_{k=1}^K\sum_{m=1}^M \cK\rbr{Y_i, Y^{(k,m)}_{1,i}} \nline
V_{i} &=& \frac{2}{(K+1)M(M-1)}\sum_{k=0}^K~\sum_{1\le m_1 < m_2 \le M}\cK\rbr{Y_{1,i}^{(k,m_1)}, Y_{1,i}^{(k,m_2)}} \nline
&~~~~-&  \frac{2}{K(K+1)M^2}\sum_{ m_1, m_2 = 1}^M~\sum_{0\le k_1 < k_2 \le K} \cK\rbr{Y_{1,i}^{(k_1,m_1)}, Y_{1,i}^{(k_2,m_2)}}
\eeqa
for each $i\in [n]$, and their sample mean $(\bar{R},\bar{V})$ and sample covariance matrix $\hat{\Sigma}$, and compute $s^2 = \frac{ 1 }{ \bar{V} }\left[ \left(\frac{\bar{R} }{2 \bar{V} }\right)^2 \hat{\Sigma}_{22} +  \hat{\Sigma}_{11} - \frac{\bar{R} }{ \bar{V} } \hat{\Sigma}_{12} \right].$
\ENSURE Lower confidence bound $\LB(\mu)=\max\left\{\frac{\bar{R} }{ \sqrt{\bar{V}}}
- \frac{z_{\alpha}s}{\sqrt{n}},\,0\right\}$, with the convention that $0/0=0$.
\end{algorithmic}
\end{algorithm*}

%Remark we also have
%\beqa \nonumber
%\EE{\cK(Y_2^{\Q},\tilde Y_2^{\Q})} + \EE{\cK(Y_1^{\Q},\tilde Y_1^{\Q})} - 2 \EE{\cK(Y_2^{\Q},Y_1^{\Q})} = \EE{\cK(Y_2^{\Q},\tilde Y_2^{\Q})}  - \EE{\cK(Y_1^{\Q},Y_1^{\Q})}
%\eeqa
%  
    
     \subsection{Proofs in Appendix \ref{sec:kernelfg}}
     \label{sec:pf:lem:max_kernel}
\bpf[Proof of Lemma \ref{lem:max_kernel}]
Recall the form of the kernel floodgate functional in \eqref{eq:kfg_functional}
\beq\nonumber
        f_{\cK}(Q) = 
        \frac{\EE{\cK(Y,Y_2^{\Q})} - \EE{\cK(Y,Y_1^{\Q})}  }{\sqrt{
        \EE{\cK(Y_2^{\Q},\tilde Y_2^{\Q})} - \EE{\cK(Y_2^{\Q},Y_1^{\Q})}  
        }} := \frac{\mathrm{II}_1}{\sqrt{\mathrm{II}_2}}, 
        %=\frac{\sqrt{2}\EE{Y\left(\muX - \muXk\right)}}{\sqrt{\EE{(\muX - \muXk)^2}}}, 
        %~or~ \fmu := \frac{\eeup{Y - \muXk}{2} - \eeup{Y - \muX}{2}}{\sqrt{2\eeup{\muX - \muXk}{2}}}
        %\fmu := \frac{\bb{E}[Y - \muXk^2 - \bb{E}[Y - \muX]^2}{\sqrt{2 \bb{E}[\muX - \muXk]^2}}
\eeq
where $X, Z, Y, Y_2^{\Q}, \tilde Y_2^{\Q}, Y_1^{\Q}$ are defined as
\begin{align}
%\label{eq:lotrvs} 
%\begin{split}
\label{eq:rvxyz}
(X,Z) \sim P_{X, Z},\quad Y\mid X,Z \sim P_{Y\mid X,Z}, \quad Y\mid Z \sim P_{Y\mid Z}\\ \label{eq:rvy2Q}
Y_2^{\Q},\tilde Y_2^{\Q} \mid X,Z \iid \Q_{Y\mid X,Z}, \quad  Y \independent (Y_2^{\Q},\tilde Y_2^{\Q}) \mid X,Z,  \\ \label{eq:rvy1Q}
Y_1^{\Q} \mid Z \iid Q_{Y|Z}, \quad  Y_1^{\Q} \independent (X,Y, Y_2^{\Q},\tilde Y_2^{\Q} ) \mid Z.
%\end{split}
\end{align}
Denote the true conditional distributions $P_{Y\mid X,Z}, P_{Y\mid Z}$ by $F,G$ respectively, the estimated conditional distributions $Q_{Y\mid X,Z},Q_{Y\mid Z}$ by $F_{\myq}, G_{\myq}$ respectively,
% in the joint distribution $P_{Y\mid X,Z}^{\mu} \times P_{X,Z}$ we can define $P_{Y\mid Z}^{\mu}$ accordingly, and it will be denoted by $G_{\mu}$.
 and the kernel mean embeddings of those conditional distributions by $\bmu_{F},\bmu_{G}, \bmu_{F_\myq}, \bmu_{G_\myq}$. First notice  
 $$
 \inner{\bmu_{F}}{\bmu_{F_\myq}}_{\cH_\cY} = \inner{\Ec{\cK(\cdot,Y)}{X,Z}}{\Ec{\cK(\cdot,Y_2^{\Q})}{X,Z}}_{\cH_\cY} = \Ec{\cK(Y,Y_2^{\Q})}{X,Z}
 $$
 by \eqref{eq:rvxyz}, \eqref{eq:rvy2Q} and the definition of the kernel embedding. Similarly, we have the following equalities,
\beqa \label{eq:yy2Q}
\EE{\cK(Y,Y_2^{\Q})} = \EE{\Ec{\cK(Y,Y_2^{\Q})}{X,Z}} = \EE{ \inner{\bmu_{F}}{\bmu_{F_\myq}}_{\cH_\cY}}, 
\\ \label{eq:yy1Q}
\EE{\cK(Y,Y_1^{\Q})} = \EE{\Ec{\cK(Y,Y_1^{\Q})}{Z}} = \EE{ \inner{\bmu_{G}}{\bmu_{G_\myq}}_{\cH_\cY}}, 
\\ \label{eq:y2y2Q}
\EE{\cK(Y_2^{\Q}, \tilde Y_2^{\Q})} = \EE{\Ec{\cK(Y_2^{\Q}, \tilde Y_2^{\Q})}{X,Z}} = \EE{ \inner{\bmu_{F_\myq}}{\bmu_{F_\myq}}_{\cH_\cY}}, 
\\ \label{eq:y2y1Q}
%\EE{\cK(Y_1^{\Q}, \tilde Y_1^{\Q})} = \EE{\Ec{\cK(Y_1^{\Q}, \tilde Y_1^{\Q})}{Z}} = \EE{ \inner{\bmu_{G_\myq}}{\bmu_{G_\myq}}}, \nline
\EE{\cK(Y_2^{\Q},  Y_1^{\Q})} = \EE{\Ec{\cK(Y_2^{\Q},  Y_1^{\Q})}{X,Z}} = \EE{ \inner{\bmu_{F_\myq}}{\bmu_{G_\myq}}_{\cH_\cY}},
\eeqa
where we also apply the law of total expectation. Note that the subscripts for the expectation in the above equations are abbreviated. In addition to the these equalities, our derivation also relies on a key result $\EE{\inner{\bmu_{G}}{\bmu_{F_{\myq}}}} = \EE{\inner{\bmu_{G}}{\bmu_{G_{\myq}}}}$. Consider $\tilde Y$ satisfying $ \tilde Y\mid X,Z \sim P_{Y\mid Z}$, $\tilde Y\independent  Y_2^{\Q} \mid X,Z$, $\tilde Y\independent Y_1^{\Q} \mid Z$, then we prove the key result as below,
\begin{align}
\label{eq:kfgkey}
\begin{split}
\EE{\inner{\bmu_{G}}{\bmu_{F_{\myq}}}_{\cH_\cY}} 
&=~ \Ep{X,Z}{\Ec{\cK(\tilde Y, Y_2^{\Q})}{X,Z}} \\
&=~ \EE{\cK(\tilde Y, Y_2^{\Q})} \\
%&=~ \Ep{X,Z,\tilde Y}{\Ec{\cK(\tilde Y, Y_2^{\Q})}{X,Z,\tilde Y}} \\
%&=~ \Ep{X,Z,\tilde Y}{\Ec{\cK(\delta_{\tilde Y}, Y_2^{\Q})}{X,Z,\tilde Y}} \\
%&=~ \Ep{Z,\tilde Y}{\Epc{X\mid Z,\tilde Y}{\Ec{\cK(\delta_{\tilde Y}, Y_2^{\Q})}{X,Z,\tilde Y}}{Z,\tilde Y}} \\
%&=~ \Ep{Z,\tilde Y}{\Ec{\cK(\delta_{\tilde Y}, Y_1^{\Q})}{Z,\tilde Y}} \\
&=~ \Ep{Z}{\Ec{\cK(\tilde Y, Y_2^{\Q})}{Z}} \\
&=~ \Ep{Z}{\Ec{\cK(\tilde Y, Y_1^{\Q})}{Z}} = \EE{\inner{\bmu_{G}}{\bmu_{G_{\myq}}}_{\cH_\cY}},
\end{split}
\end{align}
where the first and the last equalities hold by the definition of the kernel mean embedding, the second and the third equalities hold by the law of total expectation, and the fourth equality holds by the definitions of $Y_1^{\Q}, Y_2^{\Q}, \tilde Y$.

Therefore we can rewrite the numerator of $f_\cK(Q)$ as
\beqa \nonumber
\mathrm{II}_1 &= &\EE{\cK(Y,Y_2^{\Q})} - \EE{\cK(Y,Y_1^{\Q})} \\ \nonumber
&=& \EE{\inner{\bmu_{F}}{\bmu_{F_\myq}}_{\cH_\cY}} - \EE{\inner{\bmu_{F}}{\bmu_{G_\myq}}_{\cH_\cY}} \nline
&=& \EE{\inner{\bmu_{F}}{\bmu_{F_\myq}-\bmu_{G_\myq}}_{\cH_\cY}} \nline
&=& \EE{\inner{\bmu_{F} - \bmu_{G}}{\bmu_{F_\myq}-\bmu_{G_\myq}}_{\cH_\cY}} + \EE{\inner{\bmu_{G}}{\bmu_{F_\myq}-\bmu_{G_\myq}}_{\cH_\cY}} \nline
&=& \EE{\inner{\bmu_{F} - \bmu_{G}}{\bmu_{F_\myq}-\bmu_{G_\myq}}_{\cH_\cY}} + \EE{\inner{\bmu_{G}}{\bmu_{F_\myq}}_{\cH_\cY}} - \EE{\inner{\bmu_{G}}{\bmu_{G_\myq}}_{\cH_\cY}} \nline
&=& \EE{\inner{\bmu_{F} - \bmu_{G}}{\bmu_{F_\myq}-\bmu_{G_\myq}}_{\cH_\cY}}  \nline
&\le& \EE{\norm{\bmu_{F} - \bmu_{G}}_{\cH_\cY} \norm{\bmu_{F_\myq}-\bmu_{G_\myq}}_{\cH_\cY}}  \\ \label{eq:kfgnum}
&\le& \sqrt{\EE{\norm{\bmu_{F} - \bmu_{G}}^2_{\cH_\cY} }} \sqrt{\EE{\norm{\bmu_{F_\myq}-\bmu_{G_\myq}}_{\cH_\cY}^2}},
\eeqa
where the first line holds due to \eqref{eq:yy2Q} and \eqref{eq:yy1Q}, the second to the fourth equalities hold by rearranging, the fifth equality holds due to \eqref{eq:kfgkey}, the last two inequalities hold by the Cauchy--Schwarz inequality. Regarding the denominator of $f_\cK(Q)$, we rewrite $\mathrm{II}_2$ in terms of the kernel embedding
\beqa \nonumber
\mathrm{II}_2 &=& \EE{\cK(Y_2^{\Q},\tilde Y_2^{\Q})} -  \EE{\cK(Y_2^{\Q},Y_1^{\Q})}  \\ \nonumber
&=& \EE{ \inner{\bmu_{F_\myq}}{\bmu_{F_\myq}}_{\cH_\cY}} - \EE{ \inner{\bmu_{G_\myq}}{\bmu_{F_\myq}}_{\cH_\cY}} \\ \nonumber
&=& \EE{ \inner{\bmu_{F_\myq}}{\bmu_{F_\myq}}_{\cH_\cY}} + \EE{ \inner{\bmu_{G_\myq}}{\bmu_{G_\myq}}_{\cH_\cY}} - 2\EE{ \inner{\bmu_{G_\myq}}{\bmu_{F_\myq}}_{\cH_\cY}} \\ \label{eq:kfgden}
&=& \EE{\norm{\bmu_{F_\myq}-\bmu_{G_\myq}}_{\cH_\cY}^2},
% = \EE{\mathrm{MMD}^2(F_{\mu},G_{\mu})}
\eeqa
where the second equality holds due to \eqref{eq:y2y2Q} and \eqref{eq:y2y1Q} and the third equality holds since $\EE{\inner{\bmu_{G_\myq}}{\bmu_{F_\myq}}_{\cH_\cY}} = \EE{\inner{\bmu_{G_\myq}}{\bmu_{G_\myq}}_{\cH_\cY}}$ can be similarly derived as \eqref{eq:kfgkey}. As $\cI^2_{\cK} = \EE{\mathrm{MMD}^2(P_{Y\mid X,Z}, P_{Y\mid Z})} =  \EE{\norm{\bmu_{F} - \bmu_{G}}_{\cH_\cY}^2 } $, we have $f_\cK(Q) \le \cI_{\cK}$ by combining \eqref{eq:kfg_functional}, \eqref{eq:kfgnum}, and \eqref{eq:kfgden}.
%\beqa \nonumber
%\cI^2 = \EE{\norm{\bmu_{F} - \bmu_{G}}^2 } = \EE{\mathrm{MMD}^2(P_{Y\mid X,Z}, P_{Y\mid Z})} 
%\eeqa
%therefore have $f(\mu)\le $.
\epf
    
	}

     \section{Transporting inference to other covariate distributions}
    \label{sec:transport}
    To present how to perform inference on a target population whose covariate distribution differs from the distribution the study samples are drawn from, let $Q$ denote the target distribution for all the random variables $(Y,X,Z)$, but assume that $Q_{Y|X,Z}=P_{Y|X,Z}$ and that $Q_{X|Z}$ and the likelihood ratio $Q_{Z}/P_Z$ are known (note this last requirement is trivially satisfied if only $X\mid Z$ changes between the study and target distributions, i.e., we know $Q_Z=P_Z$). Overloading notation slightly, let $Q$ and $P$ also denote the real-valued densities of random variables under their respective distributions (so, e.g., $P(Y=y\,|\,Z=z)$ denotes the density of $Y\mid Z=z$ under $P$ evaluated at the value $y$), which we assume to exist. We can now define a weighted analogue of the floodgate functional ~\eqref{eq:thetaj_mu}:
%    \begin{definition}[Normalized excess prediction error after weighting]\label{def:theta_j_w}
    \begin{equation}\label{eq:fw}
        f^{w}(\mu) = \frac{\mathbb{E}_{P}[(Y-{\mu}(\tilde{X},Z))^2 w(X,Z)w_{1}(\tilde{X},Z) -  (Y-{\mu}(X,Z))^2w(X,Z)] }{\sqrt{2\mathbb{E}_{P}[(\mu(X,Z)-{\mu}(\tilde{X},Z))^2 w(X,Z)w_{1}(\tilde{X},Z)]}},
    \end{equation}
    where $w(x,z)=w_{0}(z)w_{1}(x,z)$, $w_{0}(z)=\frac{Q(Z=z)}{P(Z=z)}$, $w_{1}(x,z)=\frac{Q(X=x\,|\,Z=z)}{P(X=x\,|\,Z=z)}$, %$w_{1}(\tilde{X})=\frac{Q(\tilde{X}_{j}|X_{\noj})}{P(\tilde{X}_{j}|X_{\noj})}$ 
    % (without loss of generality, we assume both $P_{X}$ and $Q_{X}$ are continuous).
    % \end{definition}
    and $\tilde{X}\sim P_{X|Z}$ conditionally independently of $Y$ and $X$. The following Lemma certifies that $f^w$ satisfies property (a) of a floodgate functional for $\Ij^2_Q=\mathbb{E}_{Q}\left[
    \mathrm{Var}_{Q}\left(\mathbb{E}_{Q}[Y\,|\,X,Z]\,|\,Z\right)\right]$, the mMSE gap with respect to $Q$.
    
    %Remark that we weight the prediction error evaluated at the original samples and that evaluated at the null samples in a different way. This can be understood well by noticing that when comparing the joint distribution of $(X,\tilde{X},Y)$ under different populations, extra weighting regarding the null samples needs to be considered. We have the following theorem, which guarantee the inferential results via our MOCK approach are transportable from the study population to the target population.
    \begin{lemma}\label{thm:transport}
    If $Q_{Y|X,Z}=P_{Y|X,Z}$, then for any $\mu$ such that $f^w(\mu)$ exists, $f^w(\mu)\le\Ij_Q$, with equality when $\mu=\mustar$.
    % If the target population satisfies the invariance assumption about the conditional model regarding the study population, i.e. $Q(Y|X)=P(Y|X)$, the weighted version of the normalized excess prediction error defined above is maximized at $\mu^{\star}$ with the maximum value
    % \begin{equation}
    % \theta_{j}^{w}(\mu^{\star})=\sqrt{\mathbb{E}_{Q}\left[
    % \mathrm{Var}_{Q}\left(\mathbb{E}_{Q}[Y\,|\,X,Z]\,|\,Z\right)\right]}  
    % \end{equation}
    % In particular, if $X_{j}$ is null under the study population, then we have $\theta_{j}^{w}(\mu^{\star})=0$
    \end{lemma}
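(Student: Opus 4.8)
The plan is to reduce Lemma~\ref{thm:transport} to Lemma~\ref{lem:max} by a change-of-measure argument, then exploit the same correlation/Cauchy--Schwarz chain. First I would rewrite the numerator and denominator of $f^w(\mu)$ as expectations under $Q$ rather than $P$. Using that $w(x,z)=w_0(z)w_1(x,z)=\frac{Q(Z=z)}{P(Z=z)}\cdot\frac{Q(X=x|Z=z)}{P(X=x|Z=z)}$ is exactly the Radon--Nikodym derivative $\frac{dQ_{X,Z}}{dP_{X,Z}}$, and $Q_{Y|X,Z}=P_{Y|X,Z}$, every term of the form $\Ep{P}{\,\cdot\, w(X,Z)}$ becomes $\Ep{Q}{\,\cdot\,}$. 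The slightly more delicate term is the one involving $\tilde X$: there $\tilde X\sim P_{X|Z}$ conditionally independently of $(X,Y)$, and it is multiplied by $w(X,Z)w_1(\tilde X,Z)$. Conditioning on $Z$ first, $w_1(\tilde X,Z)$ reweights the law of $\tilde X\mid Z$ from $P_{X|Z}$ to $Q_{X|Z}$, while $w(X,Z)$ reweights the law of $(X,Z)$ from $P$ to $Q$; after taking these reweightings and using conditional independence, the term involving $\tilde X$ becomes an expectation where $\tilde X\sim Q_{X|Z}$ independently of everything else given $Z$, exactly as in the $Q$-analogue of the floodgate functional.

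Once this is done, $f^w(\mu)$ coincides with $f^Q(\mu)$, the ordinary floodgate functional built from the distribution $Q$ (this uses the identity, analogous to Equation~\eqref{eq:fnum}, that $\Ep{Q}{(Y-\mu(\tilde X,Z))^2-(Y-\mu(X,Z))^2}=2\Ep{Q}{\covc{\mustar_Q(X,Z)}{\mu(X,Z)}{Z}}$ and that $\Ep{Q}{(\mu(X,Z)-\mu(\tilde X,Z))^2}=2\Ep{Q}{\varc{\mu(X,Z)}{Z}}$, valid since $\tilde X\mid Z\deq X\mid Z$ under $Q$ and is conditionally independent of $X$; the response cross terms drop out because $\Ep{Q}{(Y-\mustar_Q(X,Z))(\mu(X,Z)-\Ec{\mu(X,Z)}{Z})}=0$ by the tower property). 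I would then simply invoke Lemma~\ref{lem:max} applied with $Q$ as the data-generating distribution: it gives $f^Q(\mu)\le\Ij_Q$ with equality when $\mu=\mustar_Q=\mustar$ (the two coincide because $Q_{Y|X,Z}=P_{Y|X,Z}$).

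I would organize the write-up as: (i) state that $w(X,Z)$ is the likelihood ratio $dQ_{X,Z}/dP_{X,Z}$ and $Q_{Y|X,Z}=P_{Y|X,Z}$, so for any integrable $\phi$, $\Ep{P}{\phi(Y,X,Z)w(X,Z)}=\Ep{Q}{\phi(Y,X,Z)}$; (ii) handle the $\tilde X$ term by conditioning on $Z$, noting $\Ep{P}{\psi(\tilde X,Z)\,w_1(\tilde X,Z)\mid Z}=\Ep{Q}{\psi(\tilde X,Z)\mid Z}$ when $\tilde X\sim P_{X|Z}$, and combining with (i) and conditional independence of $\tilde X$ and $(X,Y)$ given $Z$; (iii) conclude $f^w(\mu)=f^Q(\mu)$; (iv) apply Lemma~\ref{lem:max} over $Q$. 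The main obstacle is purely bookkeeping: making sure the two separate reweightings (of $(X,Z)$ via $w$ and of $\tilde X$ via $w_1$) compose correctly inside the single expectation in the numerator of~\eqref{eq:fw}, and that the conditional-independence structure of $\tilde X$ is preserved under the change of measure --- i.e., that reweighting does not introduce spurious dependence between $\tilde X$ and $(X,Y)$ given $Z$. This holds because $w$ depends only on $(X,Z)$ and $w_1(\tilde X,Z)$ only on $(\tilde X,Z)$, so the product factorizes given $Z$; I would state this factorization explicitly as the one non-routine step.
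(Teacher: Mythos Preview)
Your proposal is correct and follows essentially the same approach as the paper: the paper observes in one line that $w(X,Z)w_1(\tilde X,Z)$ is exactly the Radon--Nikodym derivative of the joint law of $(Y,X,\tilde X,Z)$ under $Q$ with respect to $P$ (using $Q_{Y|X,Z}=P_{Y|X,Z}$), so $f^w(\mu)$ equals the ordinary floodgate functional computed under $Q$, and then invokes Lemma~\ref{lem:max}. Your write-up spells out the same change-of-measure and factorization more carefully, but the route is identical.
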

The proof is immediate from Lemma~\ref{lem:max} if we notice that the ratio of the joint distribution of $(Y,X,\tilde{X},Z)$ under the two populations equals
    \begin{equation}\label{eq:ratio}
        \frac{Q(Y,X,Z)Q(\Xtil\,|\,Z)}{P(Y,X,Z)P(\Xtil\,|\,Z)}=\frac{Q(Y\,|\,X,Z)}{P(Y\,|\,X,Z)}\frac{Q(X,Z)}{P(X,Z)}\frac{Q(\Xtil\,|\,Z)}{P(\Xtil\,|\,Z)}=w_{1}(\tilde{X},Z)w(X,Z),
    \end{equation}
    where the last equality follows from $P_{Y|X,Z}=Q_{Y|X,Z}$. Floodgate property (b) of $f^w$ can be established in the same way as for $f$ by computing weighted versions of $R_i$ and $V_i$ from Algorithm~\ref{alg:MOCK} according to the weights in Equation~\eqref{eq:fw}, applying the central limit theorem, and combining them with the delta method.

     \section{Algorithm details for inference on the MACM gap}
     \label{app:class_computation}
     \acc{
    Recall the construction of the floodgate functional (\eqref{eq: kappaj_u} in Section \ref{sec:class}):
    \begin{equation}\nonumber
    \kappamu = 
    % 2\EE{\mathbbm{1}_{\{Y\cdot (\muXk - \Ec{\muX}{\Xnoj}) < 0\}} - \mathbbm{1}_{\left\{Y\cdot (\muX - \Ec{\muX}{\Xnoj}) < 0\right\}}},
    2\mathbb{P}\big(Y (\muXk - \Ec{\muX}{\Xnoj}) < 0\big)-2\mathbb{P}\big(Y (\muX - \Ec{\muX}{\Xnoj}) < 0\big).
    \end{equation}
    We can define random variables which are i.i.d. and unbiased for $\kappamu$ then construct CLT-based confidence bounds, as formalized in Algorithm \ref{alg:class_MOCK}.
%    Algorithm~\ref{alg:class_MOCK} formalizes the procedure and Theorem~\ref{thm:class_main} establishes its asymptotic coverage.
\begin{algorithm*}[h]
	\caption{Floodgate for the MACM gap}\label{alg:class_MOCK}
	\begin{algorithmic}
	\REQUIRE  Data $\{(Y_i,X_i,Z_i)\}\nsubp$, $P_{X\mid \Xnoj}$, a \revision{working regression function} $\mu:\mathbb{R}^p\rightarrow\mathbb{R}$, and a confidence level $\alpha \in (0,1)$.
	\STATE Let $U_i = \mu(X_i,Z_i)-\ec{\mu(X_i,Z_i)}{Z_i}$ and % and $\tilde{U}_i = 
	compute
	\vspace{-.2cm}\[ R_i = \left\{\begin{array}{rl} \Pc{U_i<0}{\Xinoj}-\indicat{U_i<0} & \text{ if }\;Y_i=1\\
	\Pc{U_i>0}{\Xinoj}-\indicat{U_i>0} & \text{ if }\;Y_i=-1
	\end{array}\right. \vspace{-.2cm}\]
	for $i\in [n]$,
	and compute its sample mean $\bar{R}$ and sample variance $s^2$.%by $(\bar{R}, s^2)$.
	\RETURN Lower confidence bound $L_{n}^{\alpha} (\mu)=2 \max\left\{\bar{R} - \frac{z_{\alpha}s}{\sqrt{n}},0\right\}$.
	\end{algorithmic}
\end{algorithm*}
	}
    Algorithm \ref{alg:class_MOCK} involves computing the terms $\ec{\mu(X_i,Z_i)}{Z_i}$ and evaluating the CDF of the conditional distribution $\muX\,|\,\Xnoj =\xnoj$ at the value $\ec{\mu(X_i,Z_i)}{Z_i}$, which is not analytically possible in general. Unlike in Section~\ref{sec:computation}, where users can replace $ \Ec{\mu(X,Z)}{\Xnoj}$ and ${\Varc{\mu(X,Z)}{\Xnoj}}$ by their Monte Carlo estimators without it impacting asymptotic normality, we need slightly more assumptions when inferring the MACM gap due to the discontinuous indicator functions in the definition of $\kappamu$. 
    Before stating the required assumptions, we introduce some notation, all of which is specific to a given \revision{working regression function} $\mu$. 
% 	 To establish the validity of the lower confidence bounds with the terms $R_{i}$ coming from Algorithm \ref{alg:class_MOCK_MC}, we make some extra assumptions. Before stating them, we introduce several notations. First recall the definitions in \eqref{eq:cdf_notation} and further denote
	  \begin{eqnarray}\nonumber
    &~& U:=\mu(X,Z),~~g(\xnoj):=\ec{\mu(X,Z)}{\Xnoj = \xnoj},\\ \nonumber
    &~& G_{z}(u):=\Pc{U < u}{\Xnoj =\xnoj},~~F_{z}(u):=\Pc{U \le u}{\Xnoj =\xnoj}. \\ \nonumber
    % \end{eqnarray}
    % \begin{eqnarray}
    &~&
    \varsigma(\xnoj) := \sqrt{\Varc{\muX}{\Xnoj=\xnoj}},\\ 
    % &~& 
    % G_{\xnoj,y}(u):=\Pc{U < u}{\Xnoj =\xnoj,Y=y},~~F_{\xnoj,y}(u):=\Pc{U \le u}{\Xnoj =\xnoj,Y=y}. \\ \nonumber
    &~&
    C_{u,\xnoj,y}: = 
    \frac{\max\{
            \left|G_{\xnoj,y}(u) - G_{\xnoj,y}(g(\xnoj)) \right|,
            \left|F_{\xnoj,y}(u) - F_{\xnoj,y}(g(\xnoj)) \right|\}
            }{\left|u - g(\xnoj)\right|} \label{eq:cdf_notation} 
    \end{eqnarray}
     where $ F_{\xnoj,y}(u)$ is the CDF of $\mu(X,Z)\mid\Xnoj=\xnoj,Y=y$ evaluated at $u$, $G_{\xnoj,y}(u)$ is the limit from the left of the same CDF at $u$, and with the convention for $C_{u,\xnoj,y}$ that $0/0=0$ (so it is well-defined when $u=g(z)$). Now we are ready to state Assumption \ref{asmp:class_smooth}.
    % then we consider the conditional distribution of $Z$ given $\Xnoj,Y$ and denote $\Pc{Z\le z}{\Xnoj=\xnoj,Y=y}$ by $F_{\xnoj,y}(z)$ and its left limit by $G_{\xnoj,y}(z)$.
    \begin{assump}\label{asmp:class_smooth}
    Assume the joint distribution over $(Y,X,Z)$ and the nonrandom function $\mu:\mathbb{R}^{p} \rightarrow \mathbb{R}$ satisfy the following on a set of values of $Y=y,Z=z$ of probability 1:
     %Denote $\calP$ to be the class of the joint distributions over $(X,Y)$ and $\calU$ to be the class of functions $\mu:\mathbb{R}^{p} \rightarrow \mathbb{R}$ satisfying the following:
        \begin{enumerate}[(a)]
            % \item There exists $\delta_{\xnoj,y}>0$, such that for any $z$ satisfying $\frac{|z-g(\xnoj)|}{\sigalg(\xnoj)} \le \delta_{\xnoj,y}$, we have
            % $$
            % \max\{
            % \left|F_{\xnoj,y}(z) - F_{\xnoj,y}(g(\xnoj)) \right|,
            % \left|G_{\xnoj,y}(z) - G_{\xnoj,y}(g(\xnoj)) \right|\}
            % \le C_{z,\xnoj,y} \left|z - g(\xnoj)\right|
            % $$
            % where $C_{z,\xnoj,y}$ is finite and can depend on $z$.
            \item There exists a $\delta_{\xnoj,y}>0$ and finite $C_{\xnoj,y}$ such that
            $$
            C_{u,\xnoj,y} \le C_{\xnoj,y}~~\text{when}~ |u-g(\xnoj)|\le \varsigma(\xnoj) \delta_{\xnoj,y}.
            $$
            %for some finite $C_{\xnoj,y}$.
            \item The above $C_{\xnoj,y}$ and $\delta_{\xnoj,y}$ satisfy
            $$
            \EE{C^2_{\Xnoj,Y}} < \infty,~ %\le C_{0},~
            \EE{\frac{1}{\delta_{\Xnoj,Y}}} < \infty. %\le C_{1}.
            $$ 
            % for some constant $C_{0},C_{1}>0$.
            % \item The above $C_{\xnoj,y}$ and $\delta_{\xnoj,y}$ satisfy
            % $$
            % \EE{C^2_{\Xnoj,Y}}\le C_{0},~
            % \EE{\frac{\exp\{-M_{0}\delta^2_{\Xnoj,Y}\}}{\delta_{\Xnoj,Y}}}\le C_{1}.
            % $$ 
            % for some constant $C_{0},C_{1}>0$ and some large enough number $M_{0}$.
            % $$
            % \EE{C_{\muX,\Xnoj,Y} \mathbbm{1}_{{\left\{\frac{|\muX - g(\Xnoj)|}{\sigalg(\Xnoj)}\le \delta_{\Xnoj,Y}\right\}}}}\le C_{0},~
            % \EE{\frac{\exp\{-M_{0}\delta^2_{\Xnoj,Y}\}}{\delta_{\Xnoj,Y}}}\le C_{1}.
            % $$
            \item 
            $
            \EE{\varsigma^2(\Xnoj)} < \infty, % \le C_{2}
             ~ \EE{\frac{\Ec{|\mu(X,Z) - \Ec{\mu(X,Z)}{\Xnoj}|^3}{\Xnoj}}{\varsigma^3(\Xnoj)}} < \infty. % \le C_{3}.
            $
        \end{enumerate}
        %for some constants $C_{0},C_{1},C_{2},C_{3}>0$.
        %uniformly assumed over $(\calP,\calU)$.
        % Remark that the constants $C_{0},C_{1},C_{2},C_{3}>0$ the large number $M_{0}$ are uniformly assumed over $(\calP,\calU)$.
    \end{assump}
    These assumptions are placed because we have to construct the Monte Carlo estimator of $\Ec{\muX}{\Xnoj}$ then plug it into the discontinuous indicator functions in $\kappamu$. Assumptions \ref{asmp:class_smooth}$(a)$ and \ref{asmp:class_smooth}$(b)$ are smoothness requirements on the the CDF of $\muX\mid\Xnoj,Y$ around $\Ec{\muX}{\Xnoj}$. Assumption \ref{asmp:class_smooth}$(c)$ specifies mild moment bound conditions on $\muX$. To see that they are actually sensible, we consider the example of logistic regression and walk through those assumptions in Appendix \ref{sec:eg_asmp:class_smooth}. 
    
    Assume that we can sample $(M+K)$ copies of $X_i$ from $P_{X_i|Z_i}$ conditionally independently of $X_i$ and $Y_i$, which are denoted by $\{\Xtil_i^{(m)}\}_{m=1}^{M}$, $\{\Xtil_i^{(k)}\}_{k=1}^{K}$, and thus
    replace $g(Z_i)$ (i.e. $\ec{\mu(X_i,Z_i)}{Z_i}$) and $R_i$, respectively, by the sample estimators
    \[
    g^M(Z_i) =  \frac{1}{M}\sum_{m=1}^{M}\mu(\Xtil_i^{(m)},Z_i),~ R_{i}^{M,K}=\frac{1}{K}\sum_{k=1}^{K}\left(\indicat{Y_i(\mu(\Xtil_i^{(k)},Z_i)-   g^M(Z_i) )< 0} \right)- \indicat{Y_i(\mu(X_i,Z_i)) -  g^M(Z_i) )< 0}
    \]
    % Hence we require the CDF of the conditional distribution of $\muX\mid\Xnoj,Y$ to have sort of smoothness around $\Ec{\muX}{\Xnoj}$. 
     \begin{theorem}\label{thm:class_main_MC}
     Under the same setting as in Theorem \ref{thm:class_main}, if either \myrom{1} $\EE{\varc{\mu(X,Z)}{Z}}=0$ or \myrom{2} $\EE{\Varc{\indicat{Y\cdot [\muX - \Ec{\muX}{\Xnoj} ]< 0 }}{\Xnoj,Y}}> 0$ holds together with Assumption \ref{asmp:class_smooth} and
        % Denote $\calP$ to be the class of the joint distributions over $(X,Y)$ and $\calU$ to be the class of functions $\mu:\mathbb{R}^{p} \rightarrow \mathbb{R}$ satisfying either of the following:
        % \begin{itemize}
        %     \item $\mu(X)\in \sigalg(\Xnoj)$.
        %     \item Assumption \ref{asmp:class_smooth} holds and $\EE{\Varc{\indicat{Y\cdot [\muX - \Ec{\muX}{\Xnoj} ]\le 0 }}{\Xnoj,Y}}\ge \tau_{0}$.
        % \end{itemize}
        % where the constants $C_{0},C_{1},C_{2},C_{3}, \tau_{0}>0$ are uniformly assumed over $(\calP,\calU)$.
        % % Under Assumption \ref{asmp:class_smooth} and $\EE{\Varc{\indicat{Y\cdot [\muX - \Ec{\muX}{\Xnoj} ]\le 0 }}{\Xnoj,Y}}\ge \tau_{0}$, 
        % Then if the number of null samples $M$ is large enough such that 
        ${n}/{M}=o(1)$, then $L^{\alpha}_{n,M,K}(\mu)$ computed by replacing $g(Z_i)$ and $R_i$ with $g^{M}(Z_i)$ and $R_i^{M,K}$, respectively, in Algorithm \ref{alg:class_MOCK} satisfies
        \begin{equation} 
    	\label{eq:unif_valid_class}\nonumber
	   % \inf_{P\in \calP,~ \mu \in \calU}\mathbb{P}_{{P}}\left(
	   %  \Ljc \le \Ijc  \right) 
	    \PP{  L^{\alpha}_{n,M,K}(\mu)  \le \Ijc   }
	     \ge 1 - \alpha + o(1).
	    \end{equation}
	    %where $\Ljc$ is the lower confidence bounds with the $R_{ij}$ terms coming from Algorithm \ref{alg:class_MOCK_MC}.
    \end{theorem}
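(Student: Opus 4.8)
The plan is to follow the proof of Theorem~\ref{thm:class_main} but with the conditional quantities in Algorithm~\ref{alg:class_MOCK} replaced by their Monte Carlo surrogates, absorbing the replacement error into a bias term that Assumption~\ref{asmp:class_smooth} forces to be $O(M^{-1/2})=o(n^{-1/2})$ once $n/M=o(1)$. First I would dispose of case \myrom{1}: if $\EE{\varc{\mu(X,Z)}{Z}}=0$ then $\mu(X,Z)\stackrel{a.s.}{=}\ec{\mu(X,Z)}{Z}=:g(Z)$, so every resampled value equals $g(Z_i)$, whence $g^M(Z_i)=g(Z_i)$ a.s.\ and all the indicators inside $R_i^{M,K}$ evaluate at $0$ and vanish; thus $R_i^{M,K}\equiv 0$, $L^{\alpha}_{n,M,K}(\mu)=0\le\Ijc$, and coverage is $1$. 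For case \myrom{2} I would work with the i.i.d.\ (over $i$) variables $R_i^{M,K}\in[-2,2]$ and proceed in three steps: (a) bound the bias $b_n:=\EE{R_i^{M,K}}-\kappamu/2$; (b) show $\Var{R_i^{M,K}}$ is bounded away from $0$ for all large $n$; (c) conclude via a triangular-array CLT and Slutsky, exactly as in Theorem~\ref{thm:class_main} modulo the extra bias $b_n$.

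For step (a), conditioning on $(Z_i,Y_i)$ and on the resample block that forms $g^M(Z_i)$, and using that $\{\Xtil_i^{(k)}\}_k$ are conditionally i.i.d.\ copies of $X_i\mid Z_i$ independent of $(X_i,Y_i)$, one gets (for $Y_i=1$; the $Y_i=-1$ case is symmetric) $\Ecmid{R_i^{M,K}}{Z_i,Y_i,g^M(Z_i)}=G_{Z_i}(g^M(Z_i))-G_{Z_i,Y_i}(g^M(Z_i))$, whereas the exact $R_i$ of Algorithm~\ref{alg:class_MOCK} has conditional mean $G_{Z_i}(g(Z_i))-G_{Z_i,Y_i}(g(Z_i))$ and total mean $\kappamu/2$ (from the proof of Theorem~\ref{thm:class_main}), where $G_{Z_i}$ (resp.\ $G_{Z_i,Y_i}$) denotes the left-limit CDF of $\mu(X,Z)$ given $Z=Z_i$ (resp.\ given $Z=Z_i,Y=Y_i$). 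So $b_n$ is an expectation of increments of $G_Z$ and $G_{Z,Y}$ between $g(Z)$ and $g^M(Z)$. On the event $\{|g^M(Z)-g(Z)|\le\varsigma(Z)\delta_{Z,Y}\}$ each increment is $\le C_{Z,Y}\,|g^M(Z)-g(Z)|$ by Assumption~\ref{asmp:class_smooth}(a) (for the unconditional CDF one writes $G_Z=\sum_y\Pcmid{Y=y}{Z}G_{Z,y}$, so its local Lipschitz constant is bounded by $\Ecmid{C_{Z,Y}}{Z}$), while on the complement $|R_i^{M,K}|\le 2$ together with $\indicat{|g^M-g|>\varsigma(Z)\delta_{Z,Y}}\le |g^M-g|/(\varsigma(Z)\delta_{Z,Y})$ controls the contribution. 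Since $g^M(Z)$ is an $M$-fold sample mean with conditional variance $\varsigma^2(Z)/M$, one has $\Ecmid{|g^M(Z)-g(Z)|}{Z}\le\varsigma(Z)M^{-1/2}$; pairing this with Cauchy--Schwarz, the law of total expectation, and the finiteness of $\EE{C_{Z,Y}^2}$, $\EE{1/\delta_{Z,Y}}$, $\EE{\varsigma^2(Z)}$ from Assumption~\ref{asmp:class_smooth}(b)--(c) gives $|b_n|=O(M^{-1/2})$, hence $\sqrt n\,|b_n|=O((n/M)^{1/2})=o(1)$.

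For step (b), the law of total variance (conditioning on $Z_i,Y_i,g^M(Z_i)$ and the $K$-sample block, so that only $X_i$ remains random) gives $\Var{R_i^{M,K}}\ge\EE{\Varcmid{\indicat{Y(\mu(X,Z)-g^M(Z))<0}}{Z,Y,g^M}}$; since $g^M(Z)\to g(Z)$ in probability and Assumption~\ref{asmp:class_smooth}(a) makes the relevant conditional CDF continuous at $g(Z)$, dominated convergence sends this lower bound to $\EE{\Varcmid{\indicat{Y(\mu(X,Z)-g(Z))<0}}{Z,Y}}$, which is positive by the hypothesis of case~\myrom{2}; hence $\liminf_n\Var{R_i^{M,K}}>0$. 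In step (c), boundedness of the $R_i^{M,K}$ and this variance lower bound make the Lindeberg condition automatic, so $\sqrt n(\bar R-\EE{R^{M,K}})/\sqrt{\Var{R^{M,K}}}\Rightarrow\gauss{0}{1}$; the sample variance satisfies $s^2/\Var{R^{M,K}}\to1$ in probability, so $T_n:=\sqrt n(\bar R-\EE{R^{M,K}})/s\Rightarrow\gauss{0}{1}$. Finally, $L^{\alpha}_{n,M,K}(\mu)\le\Ijc$ iff $\bar R-z_\alpha s/\sqrt n\le\Ijc/2$ (using $\Ijc\ge0$), and since $\Ijc\ge\kappamu=2\EE{R^{M,K}}-2b_n$ with $\sqrt n\,b_n=o(1)$ and $1/s=O_p(1)$, this probability is $\ge\PP{T_n\le z_\alpha-o_p(1)}\to1-\alpha$, giving $\PP{L^{\alpha}_{n,M,K}(\mu)\le\Ijc}\ge1-\alpha+o(1)$. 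The main obstacle is step~(a): exploiting the one-sided smoothness in Assumption~\ref{asmp:class_smooth} to push the error of plugging a Monte Carlo estimate of $g$ into a discontinuous indicator down to $O(M^{-1/2})$, including the careful handling of the tail event where the smoothness window is exceeded; the dominated-convergence argument behind the variance lower bound in step~(b) is a secondary point requiring care.
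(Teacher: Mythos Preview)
Your proposal is correct and follows the same three-step skeleton as the paper's proof (bias bound $|\mathbb{E}[R^{M,K}]-\kappamu/2|=O(M^{-1/2})$, variance lower bound, then a CLT), but with two elementary shortcuts worth noting. In step~(a), the paper controls the tail event $B^c=\{|g^M(Z)-g(Z)|>\varsigma(Z)\delta_{Z,Y}\}$ by a conditional Berry--Esseen bound plus Mill's ratio, which is precisely where the third-moment part of Assumption~\ref{asmp:class_smooth}(c) enters; your Markov-type bound $\indicat{B^c}\le |g^M-g|/(\varsigma\delta_{Z,Y})$ combined with $\Ecmid{|g^M-g|}{Z}\le \varsigma(Z)M^{-1/2}$ reaches the same $O(M^{-1/2})$ with less machinery and without using that moment condition. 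In step~(b) the paper actually proves a rate $|\sigma_M^2-\sigma_0^2|=O(M^{-1/2})$ by recycling the step-(a) argument, whereas your dominated-convergence argument (justified via the local Lipschitz property in Assumption~\ref{asmp:class_smooth}(a)) is shorter and adequate for the $o(1)$ statement. The cost of these simplifications is that they do not immediately deliver the sharper $O(n^{-1/2})$ miscoverage rate the paper also establishes under the stronger hypothesis $n^2/M=O(1)$---for which the Berry--Esseen route is the natural tool---but for the theorem as stated your argument is entirely sufficient.
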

     The proof can be found in Appendix~\ref{pf:thm:class_main_MC}. Intuitively when we construct a lot more null samples to estimate the term $g(Z_i)$, our inferential validity improves. Formally, when ${n^2}/{M}=O(1)$, we can improve the asymptotic miscoverage to $O(n^{-1/2})$.
     Note that we only place a rate assumption on $M$ (but put no requirement on $K$).

      \subsection{Illustration of assumption \ref{asmp:class_smooth}}
    \label{sec:eg_asmp:class_smooth}
    We consider the joint distribution over $W$ to be $p$-dimensional multivariate Gaussian with $X=W_j, Z = W_{\noj}$ for some $1\le j \le p$, and $Y$ follows a generalized linear model with logistic link. That is,
    \begin{equation}
        \nonumber
       W  \sim \gauss{\bm{0}}{\bm{\Sigma}},~~
        \mustar(W) = 2 \Pc{Y=1}{W} -1, ~~~\text{where}~\Pc{Y=1}{W}  = \frac{\exp{(W\beta^{\star})}}{1 + \exp{(W\beta^{\star}})}, ~\beta^{\star} \in \mathbb{R}^{p}.
    \end{equation}
    Choosing logistic regression as the fitting algorithm, we have $U:=\muX$ takes the following form
    \begin{equation}
        \nonumber
        U:= \mu(W) = \frac{2\exp{(W\beta)}}{1 + \exp{(W\beta)}} - 1
    \end{equation}
    where $\beta \in \mathbb{R}^{p}$ is the fitted regression coefficient vector and $\beta_{j}\ne0$ whenever $\EE{\varc{\mu(X,Z)}{Z}}>0$. Conditional on $\Xnoj$, $U$ follows a logit-normal distribution (defined as the logistic function transformation of normal random variable) up to constant shift and scaling. Note that the probability density function (PDF) of logit-normal distribution with parameters $a,\sigma$ is
    \begin{equation}
        h_{\text{logit}}(u)=\frac{1}{\sigma \sqrt{2\pi}}\exp{\left(-\frac{(\text{logit}(u) - a)^2}{2\sigma^2}\right)}\frac{1}{u(1-u)},~~ u \in (0,1)
    \end{equation}
    where $\text{logit}(u)=\log(u/(1-u))$ is the logit function. 
    Note $h_{\text{logit}}(u)$ is bounded over its support. Regarding the PDF of $U \mid \Xnoj=\xnoj,Y=1$, which is denoted as $h_{\xnoj,1}(u)$, we first notice the following expression
    \begin{equation}\label{eq:1_eg_asmp:class_smooth}
            h(\xj \mid \Xnoj =\xnoj,Y=1) = \frac{h(\xj \mid \Xnoj=\xnoj) \Pc{Y=1}{W=w} }{\int h(\xj\mid \Xnoj =\xnoj) \Pc{Y=1}{W=w} d\xj}
    \end{equation}
    where  $w_j =x, w_{\noj} = z$, $ h(\xj \mid \Xnoj =\xnoj,Y=1)$ and $
            h(\xj \mid \Xnoj =\xnoj,Y=1) $ denote the density functions of $X\mid Z=z, Y=1$ and $X \mid Z=z$.
    Since $\text{logit}(z)$ is one-to-one mapping, we have $f_{\xnoj,1}(z)$ (up to constant shift and scaling) takes the form similar to \eqref{eq:1_eg_asmp:class_smooth}
    \begin{equation}
           h_{\xnoj,1}(u) = \frac{h_{\text{logit}}(u) \Pc{Y=1}{W=w} }{\int h_{\text{logit}}(u) \Pc{Y=1}{W=w } d\xj}
    \end{equation}
    where $w=(x,z)=\mu^{-1}(u)$, and we denote the PDF of $ U \mid \Xnoj = \xnoj$ as $h_{\text{logit}}(u)$ without causing confusion (the parameters of $h_{\text{logit}}(u)$ depend on $\xnoj,\beta$). Therefore we can show $h_{\xnoj,1}(z)$ is bounded (similarly for $h_{\xnoj,-1}(z)$).
    
    The boundedness of $h_{\xnoj,y}(u)$ implies that the corresponding CDF $F_{\xnoj,y}$ ($F_{\xnoj,y} = G_{\xnoj,y}$ in this case) satisfies a Lipschitz condition over its support. Hence $\delta_{\xnoj,y}$ can be chosen to be greater than some positive constant uniformly, so that $\EE{\frac{1}{\delta_{\Xnoj,Y}}} < \infty$ holds. Though the Lipschitz constant does depend on $\xnoj,\beta$, it is easy to verify $ \EE{C^2_{\Xnoj,Y}} < \infty$, thus assumption (b) holds. And assumption (c) is just a regular moment condition.
    
     \subsection{Proofs in Appendix \ref{app:class_computation}}
     \begin{proof}[Proof of Theorem \ref{thm:class_main_MC}]\label{pf:thm:class_main_MC}
    Similar to the proof of Theorem \ref{thm:class_main}, it suffices to deal with the case where $\muX \notin \sigalg(\Xnoj)$ and prove
    \begin{equation} 
	\label{eq:1_thm:class_main_MC}
	   % \inf_{P\in \calP,~ \mu \in \calU}\mathbb{P}_{{P}}\left(\kappamu  \ge \Ljc \right) 
	   \PP{  L^{\alpha}_{n,M,K}(\mu) \le \kappamu  }
	    \ge 1 - \alpha + o(1). % - O(1/\sqrt{n}).
	\end{equation}    
    Note that in Algorithm \ref{alg:class_MOCK}, $\EE{R_i} = f_{\ell_1}(\mu)/2$. But when $g(Z_i)$ (i.e., $\ec{\mu(X_i,Z_i)}{Z_i}$) and $R_i$ are replaced by $g^{M}(Z_i)$ and $R_i^{M,K}$, respectively, in Algorithm \ref{alg:class_MOCK}, we do not have $\EE{R_{i}^{M,K}}$ equal to $\kappamu/2$ anymore. Note that $f_{\ell_1}(\mu)/2$ equals the following
    \begin{equation}\label{eq:2_thm:class_main_MC}
    f_{\ell_1}(\mu)/2 = 
    \EE{\mathbbm{1}_{\{
    Y\cdot [\muXk - \Ec{\muX}{\Xnoj}] < 0
    \}}} - 
    \EE{\mathbbm{1}_{\left\{
    Y\cdot [\muX - \Ec{\muX}{\Xnoj} ] < 0
    \right\}}},
    \end{equation} 
    and $R_i^{M,K}$ is defined as
   \begin{equation}\label{eq:Ri_MK_thm:class_main_MC}
    R_{i}^{M,K}=\frac{1}{K}\sum_{k=1}^{K}\left(\indicat{Y_i(\mu(\Xtil_{i}^{(k)}, Z_i)- g^{M}(Z_i)) < 0} \right)- \indicat{Y_i(\mu(X_i, Z_i)) - g^{M}(Z_i) ) < 0}
    \end{equation}
    Remark the value of $\EE{R_{i}^{M,K}}$ does not depend on $K$, hence we simplify the notation into $R_{i}^{M}$ without causing confusion. Actually we can show as $M\rightarrow \infty$, $\EE{R_{i}^{M}}\rightarrow \kappamu/2$.  Indeed, we need to show $\sqrt{n}|\EE{R_{i}^{M}} - \kappamu/2| = o(1)$ in order to prove \eqref{eq:1_thm:class_main_MC}. Also remark that in Section~\ref{sec:class}, it is mentioned that under a stronger condition $n^2/M = O(1)$ (which will imply $\sqrt{n}|\EE{R_{i}^{M}} - \kappamu/2| = O(1/\sqrt{n})$), we can additionally establish a rate for $n^{-1/2}$ for the asymptotic coverage validity in Theorem~\ref{thm:class_main_MC}. 
    % Under the stated moment conditions, we can similarly establish the convergence by central limit theorem
    % $$
    % \frac{\sqrt{n}(\bar{R} - |\EE{R_{i}})}{s} \stackrel{d}{\rightarrow} \calN(0,1)
    % $$
    % and uniformly bound the Kolmogorov distance between the original CDF and the normal CDF with a rate $O(1/\sqrt{n})$. Hence in order to establish \eqref{eq:1_thm:class_main_MC}, it suffices to prove $\sqrt{n}(\EE{R_{i}} -\kappamu/2 )=o(1)$. %uniformly over $(\calP,\calU)$. 
    In either cases, it is reduced to prove 
      \begin{equation}\label{eq:kappamu_gap_unif}
    %\inf_{P\in \calP,~ \mu \in \calU}
    \left|
    \EE{R_{i}^M} - \frac{\kappamu}{2}
    \right| = O\left(\frac{1}{\sqrt{M}}\right)
    \end{equation}
    %According to Algorithm \ref{alg:class_MOCK_MC}, 
    First we ignore the $i$ subscripts and get rid of the average over $K$ null samples in the definition of $R_{i}^{M,K}$, then $\EE{R_{i}^M}$ can be simplified into
    \begin{equation}
        \EE{ \indicat{Y(\mu\Xk- g^{M}(Z)) < 0} - \indicat{Y(\mu(X,Z) -g^{M}(Z))< 0}}
    \end{equation}
    where $g^{M}(Z) = \frac{1}{M}\sum_{m=1}^{M}\mu(\Xtil^{(m)}, Z)$. To bound $\left|
    \EE{R_{i}^M} -\kappamu/2
    \right|$, we consider the two terms in \eqref{eq:2_thm:class_main_MC} and separately bound
     \begin{eqnarray}\nonumber
     \mathrm{II}_1
     &:=&
     \left|\EE{ \indicat{Y(\mu(\Xtil, Z) -  g^{M}(Z) ) < 0} - \mathbbm{1}_{\{
    Y\cdot [\muXk - \Ec{\muX}{\Xnoj}] < 0}} \right|, \\ \nonumber
    \mathrm{II}_2
    &:=&
    \left|\EE{ \indicat{Y(\mu(X,Z) - g^{M}(Z) < 0}   -\mathbbm{1}_{\left\{
    Y\cdot [\muX - \Ec{\muX}{\Xnoj} ] < 0
    \right\}}} \right|.
    \end{eqnarray}
    Starting from the second term above, we rewrite it as
    \begin{align}\nonumber
     \mathrm{II}_2  
     &=~
     \left|\EE{ \Ec{\indicat{Y(\mu(X,Z) - g^{M}(Z)) < 0}   -\mathbbm{1}_{\left\{
    Y\cdot [\muX - \Ec{\muX}{\Xnoj} ] < 0
    \right\}}}{\Xnoj,Y,  \{\Xtil^{(m)}\}_{m=1}^M }} \right|
    \\\nonumber
    &\le~ \left|\EE{ \indicat{Y=1}
    \Ec{\indicat{\mu(X,Z) <  g^{M}(Z) }   -\mathbbm{1}_{\left\{
    \muX  < \Ec{\muX}{\Xnoj}
    \right\}}}{\Xnoj,Y, \{\Xtil^{(m)}\}_{m=1}^M  }}\right| \\ \nonumber
    &~~
    +\left|\EE{\indicat{Y=-1}
    \Ec{
    \indicat{\mu(X,Z) >  g^{M}(Z) }   -\mathbbm{1}_{\left\{
    \muX  > \Ec{\muX}{\Xnoj}
    \right\}}
    }{\Xnoj,Y, \{\Xtil^{(m)}\}_{m=1}^M } } \right| \\ \nonumber
    &\le~ \EE{
    \max\{
            \left|G_{\Xnoj,Y}( g^{M}(Z)) - G_{\Xnoj,Y}(g(\Xnoj)) \right|,
            \left|F_{\Xnoj,Y}(  g^{M}(Z) ) - F_{\Xnoj,Y}(g(\Xnoj)) \right|\}
    }\\  \label{eq:3_thm:class_main_MC}
    &:=~ \EE{
    A}
    \end{align}
    where the first equality is by the law of total expectation, the first and the second inequality are simply expanding and rearranging. By construction, $\mu(\Xtil^{(m)},Z),m\in[M]$ are i.i.d. random variables conditioning on $\Xnoj,Y$, then by central limit theorem we have
    $$
    \frac{\sqrt{M}( g^{M}(Z) - g(\Xnoj))}{\varsigma(\Xnoj)} \stackrel{d}{\rightarrow} \gauss{0}{1} 
    $$
    conditioning on $\Xnoj,Y$. Further we obtain the following from the Berry--Esseen bound i.e. Lemma \ref{lem:berry}:
    \begin{equation}\label{eq:4_thm:class_main_MC}
        \left|\Pcmid{\left|\frac{\sqrt{M}|g^{M}(Z) - g(\Xnoj)|}{\varsigma(\Xnoj)}\right|> \sqrt{M}\delta_{\Xnoj,Y}}{\Xnoj,Y} - \widebar{\Phi}(\left|\sqrt{M}\delta_{\Xnoj,Y}\right|) \right| \le \frac{C}{\sqrt{M}}\cdot \frac{\Ec{|\mu^3(X,Z)|}{\Xnoj}}{\varsigma^3(\Xnoj)}
    \end{equation}
        for any $\delta_{\Xnoj,Y}$  when conditioning on $\Xnoj,Y$, where $\widebar{\Phi}(x) = 1 - \Phi(x)$ and $C$ is some constant which does not depend on the distribution of $(Y,X,Z)$. Regarding \eqref{eq:3_thm:class_main_MC}, by considering the event $B:=\{| g^{M}(Z)  - g(\Xnoj)|/\varsigma(\Xnoj) \le  \delta_{\Xnoj,Y}\}$, we can decompose \eqref{eq:3_thm:class_main_MC} into
    \begin{equation}
    \EE{A} =
    \EE{A\indicat{B}} +   \EE{A\indicat{B^{c}}} 
    \end{equation}
    For the first term, we have
    \begin{eqnarray}
    \EE{A\indicat{B}} \nonumber
    &\le&
     \EE{C_{ g^{M}(Z) ,\Xnoj,Y}\left| g^{M}(Z) -  g(\Xnoj)\right|\indicat{B}}  \\ \nonumber
    &=& \EE{\Ecmid{C_{ g^{M}(Z) ,\Xnoj,Y}\left|g^{M}(Z)-  g(\Xnoj)\right|\indicat{B}}{\Xnoj,Y}}\\ \nonumber
    &\le& \EE{C_{\Xnoj,Y}\Ecmid{\left|g^{M}(Z) -  g(\Xnoj)\right|}{\Xnoj,Y}}\\ \label{eq:5_thm:class_main_MC}
    &\le& \EE{C_{\Xnoj,Y}\sqrt{\Ecmid{\left|g^{M}(Z) -  g(\Xnoj)\right|^2}{\Xnoj,Y}}}
    \end{eqnarray}
    where the first inequality is by the definition of $C_{u,\xnoj,y}$, the first equality is from the law of total expectation, the second inequality holds by (a) in Assumption \ref{asmp:class_smooth} and the last inequality holds due to the Cauchy--Schwarz inequality. Remember we have $g^{M}(Z) = \frac{1}{M}\sum_{m=1}^{M}\mu(\Xtil^{(m)},Z)$ where $\mu(\Xtil^{(m)},Z ),m\in [M]$ are i.i.d. random variables with mean $g(\Xnoj)$ when conditioning on $\Xnoj,Y$, hence \eqref{eq:5_thm:class_main_MC} equals 
    \begin{equation} \nonumber
    \EE{C_{\Xnoj,Y}\sqrt{\frac{\varsigma^2(\Xnoj)}{M}}}
    \le \frac{1}{\sqrt{M}}\sqrt{\EE{C^2_{\Xnoj,Y}}}\sqrt{\EE{{\varsigma^2(\Xnoj)}}}  =  O\left(\frac{1}{\sqrt{M}}\right)
    %\le \frac{1}{\sqrt{M}}\sqrt{C_{0}C_{2}}
    % &\le& \frac{1}{\sqrt{M}}\sqrt{\EE{C^2_{\Xnoj,Y}}}\sqrt{\EE{\frac{\Ec{|\mu^3(X)|}{\Xnoj}}{\sigma^3(\Xnoj)}}}\\ \nonumber
    \end{equation}
    where the first inequality is from the Cauchy--Schwarz inequality and the second one holds by (b) and (c) in Assumption \ref{asmp:class_smooth}. Now we have showed 
     \begin{equation} \label{eq:A1_B_bound}
    \EE{A\indicat{B}}= O\left(\frac{1}{\sqrt{M}}\right), 
    \end{equation} % \sqrt{C_{0}C_{2}}/\sqrt{M}$,
    it suffices to prove the same rate for $\EE{A\indicat{B^{c}}}$:
    \begin{eqnarray} \nonumber
    \EE{A\indicat{B^{c}}} 
    &\le& 2~ \PP{B^{c}} \\ \ \nonumber
    &=& 2~ \EE{\Pc{B^{c}}{\Xnoj}} \\ \nonumber
    &=& 2~ \EE{\Pc{\sqrt{M}| g^{M}(Z) - g(\Xnoj)|/\varsigma(\Xnoj) >  \sqrt{M}\delta_{\Xnoj,Y}}{\Xnoj}} \\ \nonumber
    &\le& 2 \EE{
    \widebar{\Phi}(\left|\sqrt{M}\delta_{\Xnoj,Y}\right|) + \frac{C}{\sqrt{M}}\cdot {\frac{\Ec{|\mu^3(X,Z)|}{\Xnoj}}{\varsigma^3(\Xnoj)}}    } \\ \nonumber
    &\le& 2 \EE{
    \frac{2}{\sqrt{2\pi }}\frac{\exp\{-M\delta^2_{\Xnoj,Y}\}}{\sqrt{M}\delta_{\Xnoj,Y}} + \frac{C}{\sqrt{M}}\cdot {\frac{\Ec{|\mu^3(X,Z)|}{\Xnoj}}{\varsigma^3(\Xnoj)}} } 
    \end{eqnarray}
    where the first inequality holds since $ F_{\xnoj,y}(u), G_{\xnoj,y}(u)$ are bounded between $0$ and $1$, the first equality  is due to the law of total expectation, the second equality is from the definition of the event B, the second inequality holds due to \eqref{eq:4_thm:class_main_MC} and the last inequality is a result of Mill's Ratio, see Proposition 2.1.2 in \citet{vershynin2018high}. Under (b) and (c) in Assumption \ref{asmp:class_smooth}, the following holds
     \begin{equation} 
    \label{eq:A1_Bc_bound}
     \EE{A\indicat{B^{c}}}  =  O\left(\frac{1}{\sqrt{M}}\right). %\le  \frac{1}{\sqrt{M}}(4C_{1}/\sqrt{2\pi}+2 CC_{3}).
    \end{equation}
    Finally we prove 
    $$
      \left|\EE{ \indicat{Y(\mu(X,Z) -  g^{M}(Z)  ) < 0}   -\mathbbm{1}_{\left\{
    Y\cdot [\muX - \Ec{\muX}{\Xnoj} ] < 0
    \right\}}} \right| =  O\left(\frac{1}{\sqrt{M}}\right).
    %\le \frac{\sqrt{C_{0}C_{2}} + 4C_{1}/\sqrt{2\pi}+2 CC_{3} }{\sqrt{M}}
    $$
    Regarding the term 
    $$
     \mathrm{II}_1 = \left|\EE{ \indicat{Y(\mu\Xk-   g^{M}(Z)  ) < 0} - \mathbbm{1}_{\{
    Y\cdot [\muXk - \Ec{\muX}{\Xnoj}] < 0}} \right|
    $$
    All of the steps are the same except that the CDF (and its limit) of the conditional distribution $\Xj\mid \Xnoj,Y$ are replaced by those of $\Xj\mid \Xnoj$, i.e. $F_{\xnoj}(u)$ and $G_{\xnoj}(u)$ as defined in \eqref{eq:cdf_notation}. Hence it suffices to notice the following derivations for $F_{\xnoj}(u )$:
    \begin{eqnarray} \nonumber
        F_{\xnoj}(u) = \Pc{U \le u}{\Xnoj =\xnoj} &=&\Epc{Y|\Xnoj=\xnoj}{\Pc{U \le u}{\Xnoj =\xnoj,Y}}{\Xnoj=\xnoj}\\ \nonumber
        &=&\Epc{Y|\Xnoj=\xnoj}{F_{\xnoj,Y}(u)}{\Xnoj=\xnoj},
    \end{eqnarray}
    and similarly for $G_{\xnoj}(u)$. Together with the definition of $C_{u,\xnoj,y}$ and (a) in Assumption \ref{asmp:class_smooth}, the above equations yield
    $$
    \max\{
            \left|F_{\xnoj}(u) - F_{\xnoj}(g(\xnoj)) \right|,
            \left|G_{\xnoj}(u) - G_{\xnoj}(g(\xnoj)) \right|\} \le C_{\xnoj,y}|u-g(\xnoj)|
    $$
    over the region $|u-g(\xnoj)|\le \varsigma(\xnoj) \delta_{\xnoj,y}$. Then the other steps follow as those of proving the term $\mathrm{II}_2$. Finally, we obtain a rate of $ O\left(\frac{1}{\sqrt{M}}\right)$ for $\left|
    \EE{R_{i}^M} -\kappamu/2.
    \right|$.% Remark the involving constants are uniform over the class $(\calP,\calU)$. 

    In the following, we prove the stronger version of \eqref{eq:1_thm:class_main_MC}, i.e.,
      \begin{equation} 
	\label{eq:withrate_class_main_MC}
	   % \inf_{P\in \calP,~ \mu \in \calU}\mathbb{P}_{{P}}\left(\kappamu  \ge \Ljc \right) 
	    \PP{  L^{\alpha}_{n,M,K}(\mu) \le \kappamu  }
	    \ge 1 - \alpha - O\left(\frac{1}{\sqrt{n}}\right),
	  \end{equation}
	when assuming $n^2/M =O(1)$.
% 	Similarly as the proof of Theorem \ref{thm:main}, the above holds when applying Lemma~\ref{lem:berry_t} and verifying the moment conditions. Since $R_i$ are bounded by definition, the moment condition can be easily verified using the similar strategy as in the derivations for \eqref{eq:variance_bound_rate}. 
	For this it suffices to establish the following Berry--Esseen bound:
    \begin{equation}\nonumber
       \Delta:= \sup_{t \in \mathbb{R}} \left|
        \PP{\sqrt{n} 
        \left(
        \frac{\bar{R} - \kappamu/2  }{{s}}
        \right) \le t  
        } - \Phi(t)  
        \right| = O\left(\frac{1}{\sqrt{n}}\right),
    \end{equation}
    where $\bar{R}$ and $s$ are defined similarly as in Algorithm~\ref{alg:class_MOCK} except that $g(Z_i)$ and $R_i$ are replaced with $g^{M}(Z_i)$ and $R_i^{M,K}$, respectively. Notice that 
    \begin{eqnarray} \nonumber
       \Delta 
       &=&  \sup_{t \in \mathbb{R}} \left|
        \PP{\sqrt{n} 
        \left(
        \frac{\bar{R} - \EE{R_i^M}}{s }
        \right) \le t + \sqrt{n}\frac{ (\EE{R_i^M} - \kappamu/2)  }{s}
        } - \Phi(t)  
        \right|\\ \nonumber
        &\le &  \sup_{t \in \mathbb{R}} \left|
        \PP{\sqrt{n} 
        \left(
        \frac{\bar{R} - \EE{R_i^M}}{s}
        \right) \le t 
        } - \Phi(t)  \right|  +  
        \sup_{t \in \mathbb{R}} \left| \Phi\left(t + \sqrt{n}\frac{ (\EE{R_i^M} - \kappamu/2)  }{s}\right) -  \Phi(t) 
        \right|\\ \nonumber
        &:=& \Delta_1 + \Delta_2
    \end{eqnarray}
    Since the first derivative of $ \Phi(t)$ is bounded by $1/\sqrt{2\pi}$ over $\mathbb{R}$, we have
     \begin{eqnarray} \nonumber
    \Delta_2 
    &\le& \frac{\sqrt{n}}{\sqrt{2\pi}} \frac{ |\kappamu/2  - \EE{R_i^M}| }{\sqrt{\Var{R_i^M}}}\cdot ({\sqrt{\Var{R_i^M}}}/{s})
     \end{eqnarray}
    by Taylor expansion. Note that as a result of \eqref{eq:kappamu_gap_unif}, we have
	\begin{equation}\label{eq:gap_withrate_class_MC}
	    \sqrt{n}|\EE{R_{i}^M} - \kappamu/2| = O(1/\sqrt{n}).
	\end{equation}
	Then it suffices to prove $\Delta_1 = O(1/\sqrt{n}) $ and $\Var{R_i^M}>0$ (since $s$ is simply the sample mean estimator of $\Var{R_i^M}$ thus consistent).
	%Remark this is non-trivial since $\Var{R_i^M}$ varies with $M$ grows. 
	$\Delta_1 = O(1/\sqrt{n}) $ holds when applying the triangular array version of the Berry--Esseen bound in Lemma~\ref{lem:berry_t} (note that the result is stated in a way such that the bound clearly applies to the triangular array with i.i.d. rows $\{R_i^{M,K}\}\nsubp$ for each $M$). The only thing we need to deal with is to verify the following uniform moment conditions:
	\begin{enumerate}[(i)]
	    \item $\sup_{M,K}\EE{
	    \left|R_i^{M,K} - \EE{R_i^{M,K}}\right|^3} < \infty$,
	    \item $\inf_{M,K}\Var{R_i^{M,K}} > 0$.
	\end{enumerate}
	where we go back to the original notation $R_i^{M,K}$ from the simplified one $R_i^{M}$ since the above moments do depend on both $M$ and $K$. Since $R_i^{M,K}$ is always bounded, \myrom{1} holds. Regarding \myrom{2}, notice that we have the following
	\begin{align}\nonumber
		    &~\Var{R_{i}^{M,K}} \\ \nonumber
		    &=
		    \EE{\Varc{R_{i}^{M,K}}{\Xinoj,Y_i, \{\Xtil_i^{(m)}\}_{m=1}^M }} +\Var{\Ec{R_{i}^{M,K}}{\Xinoj,Y_i, \{\Xtil_i^{(m)}\}_{m=1}^M  }} \\ \nonumber
		    &\ge
		     \EE{\Varc{R_{i}^{M,K}}{\Xinoj,Y_i, \{\Xtil_i^{(m)}\}_{m=1}^M }} \\ \nonumber
		    &=
		    \EE{\Varcmid{ \frac{1}{K}\sum_{k=1}^{K}\left(\indicat{Y_i(\mu(\Xtil_{i}^{(k)}, Z_i)- g^{M}(Z_i)) < 0} \right)- \indicat{Y_i(\mu(X_i, Z_i)) - g^{M}(Z_i) ) < 0} }{\Xinoj, Y_i, \{\Xtil_i^{(m)}\}_{m=1}^M  }}\\
		    &\ge
		    \EE{\Varcmid{ \indicat{Y_i(\mu(X_i, Z_i)) - g^{M}(Z_i) ) < 0} }{\Xinoj,Y_i, \{\Xtil_i^{(m)}\}_{m=1}^M  }}:=	\sigma^2_M \label{eq:VarRi_MK_thm:class_main}
		  %  &\ge&
		  %  \tau_0
		\end{align}
		where the first equality is due to the law of total expectation, the second equality is by the definition of $R_{i}^{M,K}$, the second inequality holds since $ \{\Xtil_{i}^{(k)}\}_{k=1}^{K} \independent X_i\mid \Xinoj,Y_i, \{\Xtil_i^{(m)}\}_{m=1}^M $ due to the construction of $ \{\Xtil_{i}^{(k)} \}_{k=1}^{K} $ and the variance of first term is non-negative. Before dealing with \eqref{eq:VarRi_MK_thm:class_main}, notice the stated condition
		$$
		\sigma^2_0 := \EE{\Varcmid{ \indicat{Y_i(\mu(X_i, Z_i)) - g(Z_i) ) < 0} }{\Xinoj,Y_i }} >0
		$$
		Thus to establish \myrom{2}, it suffices to show $ 	\sigma^2_M \rightarrow  \sigma^2_0$ as $M\rightarrow \infty$. Recall the derivations in \eqref{eq:3_thm:class_main_MC} for bounding the term $\mathrm{II}_2$, we can similarly bound $	|\sigma^2_M - \sigma^2_0|$ by the following quantity:
		\begin{eqnarray}  \nonumber
			|\sigma^2_M - \sigma^2_0 |
			&\le& \EE{ 3  \max\{
            \left|G_{\Xnoj,Y}( g^{M}(Z)) - G_{\Xnoj,Y}(g(\Xnoj)) \right|,
            \left|F_{\Xnoj,Y}(  g^{M}(Z) ) - F_{\Xnoj,Y}(g(\Xnoj)) \right|\}
               }\\ \nonumber
              &=& 3\EE{A} = 3 (\EE{A\indicat{B}} +   \EE{A\indicat{B^{c}}} ) =  O\left(\frac{1}{\sqrt{M}}\right).
		\end{eqnarray}
		where the last equality holds due to the results \eqref{eq:A1_B_bound} and \eqref{eq:A1_Bc_bound} from previous derivations for the term $\mathrm{II}_2$. Finally we conclude \eqref{eq:withrate_class_main_MC}, which immediately implies a weaker version of the result, i.e.the statement of Theorem \ref{thm:class_main_MC}.
% 		$ F_{\Xinoj}(g_{ij})\indicat{Y_i=1} + (1-G_{\Xinoj}(g_{ij})\indicat{Y_i=-1} \in \sigalg(\Xnoj,Y)$ and the final inequality is by the moment assumption.
    % \eqref{eq:Ri_MK_thm:class_main_MC}
    %We then conclude \eqref{eq:kappamu_gap_unif} and complete the proof of Theorem \ref{thm:class_main_MC}.
    \end{proof}
    
	\section{Co-sufficient floodgate details}
	\label{sec:relax_details}
	\acc{
	The strategy described in Section \ref{sec:relax} is formalized in Algorithm \ref{alg:batch_cond} (under the simplifying assumption that the number of batches, $n_2$, evenly divides the sample size $n$). 
%	We call this procedure \emph{co-sufficient} floodgate
      \begin{algorithm*}[h!]
			\caption{Co-sufficient floodgate} \label{alg:batch_cond}
		\begin{algorithmic}[1]
			\REQUIRE The inputs of Algorithm \ref{alg:MOCK}, a sufficient statistic functional $\calT$, and a batch size $n_{2}$.
			%a batching rule (i.e. $n=n_{1} n_{2}$, with $n_{1}$ batches, each of size $n_{2}$).
			\STATE Let $n_1=n/n_2$ and for $m\in[n_1]$, denote $(\bX_m,\bZ_m) = \{X_i,Z_i\}_{i=(m-1)n_2+1}^{mn_2}$, and let $\bT_m=\calT(\bX_m,\bZ_m)$.
      	            \STATE For $m\in[n_1]$, compute
      	          \begin{align*}
      	          R_m &= \frac{1}{n_2}\sum_{i=(m-1)n_2+1}^{mn_2} Y_i\, (\mu(X_i,Z_i) - \Ec{\mu(X_i,Z_i)}{\bs{Z}_m, \bT_m} ),\\
      	          V_m &= \frac{1}{n_2}{\sum_{i=(m-1)n_2+1}^{mn_2} \Varc{\mu(X_i,Z_i)}{\bs{Z}_m,\bT_m}},
      	          \end{align*}
		  their sample mean $(\bar{R},\bar{V})$, their sample covariance matrix $\hat{\Sigma}$, and $s^2 = \frac{ 1 }{ \bar{V} }\left[ \left(\frac{\bar{R} }{2 \bar{V} }\right)^2 \hat{\Sigma}_{22} +  \hat{\Sigma}_{11} - \frac{\bar{R} }{ \bar{V} } \hat{\Sigma}_{12} \right].$
		    \RETURN Lower confidence bound $\LjT (\mu)=\max\left\{\frac{\bar{R} }{ \sqrt{\bar{V}}}
        - \frac{z_{\alpha}s}{\sqrt{n_1}},\,0\right\}$, with the convention that $0/0=0$. 
		\end{algorithmic}
	\end{algorithm*}
	}
	\subsection{Monte Carlo analogue of co-sufficient floodgate}
	\label{sec:mccf}
	Similarly as in Section~\ref{sec:method}, when the conditional expectations in Algorithm~\ref{alg:batch_cond} do not have closed-form expressions, Monte Carlo provides a general approach: within each batch, we can sample $K$ copies $\tilde{\bs{X}}_m^{(k)}$ of $\bs{X}_m$ from the conditional distribution $\bs{X}_{m}\,|\,\bs{Z}_{m},\bT_{m}$, conditionally independently of $\bs{X}_m$ and $\bs{y}$ and thus
    replace $R_m$ and $V_m$, respectively, by the sample estimators
    \begin{align*}
    (R_m^K,   V^{K}_{m}) &=  \frac{1}{n_2}
    \left(
      \sum_{i=(m-1)n_2+1}^{mn_2} Y_i\left(\mu(X_i,Z_i) - \frac{1}{K}\sum_{k=1}^{K} \mu(\Xtil_i^{(k)} ,Z_i)\right)\right.
    , \\
    &\hspace{1.4cm}\left.\sum_{i=(m-1)n_2+1}^{mn_2} \frac{1}{K-1}\sum_{k=1}^{K}\left( \mu(X_i^{(k)} ,Z_i)-  \frac{1}{K}\sum_{k=1}^{K} \mu(\Xtil_i^{(k)} ,Z_i) \right)^2
    \right)
    % \sum_{i\in\calB_m} Y_i\left(\mu(X_i,Z_i) - \frac{1}{K}\sum_{k=1}^{K} \mu(\Xtil_i^{(k)} ,Z_i)\right),\\ 
    % V^{K}_{m} =  \frac{1}{n_2(K-1)}\sum_{i\in\calB_m} \sum_{k=1}^{K}\left( \mu(X_i^{(k)} ,Z_i)-  \frac{1}{K}\sum_{k=1}^{K} \mu(\Xtil_i^{(k)} ,Z_i) \right)^2. 
    \end{align*}
    % where $\bigcup_{m=1}^{n_{1}}\calB_{m}=[n]$
			 %       with $|\calB_{m}|=n_{2}$.
			        %and denote $( \bs{X}_m, \bs{Z}_m) = \{(X_i,Z_i)\}_{i \in \calB_m}$.
			        %$ \bs{X}_m = \{X_{i}\}_{i \in \calB_m}$,  $ \bs{Z}_m = \{Z_i\}_{i \in \calB_m}$. 
    We defer to future work a proof of validity of the Monte Carlo analogue of co-sufficient floodgate following similar techniques as Theorem~\ref{thm:main_general}.

	          \subsection{Proofs in Appendix~\ref{sec:relax_details}}
       \begin{lemma}\label{lem:gap_term_general}
         Under the moment conditions $\EE{\mu^2(X,Z)},\EE{(\mustar)^2(X,Z)} <\infty$, we can quantify the gap between $\thetamu$ and $\thetamuT$ as below.
	           \begin{equation}\label{eq:gap_term_general}
	                \thetamu - \thetamuT = O\left(\max\{\mathrm{II}(\mu), \mathrm{II}(\mustar)\}\right)
	           \end{equation}
	           where $ \mathrm{II}(\mu) =  \Ep{ \bs{Z}}{ \Varp{ \Tcondlaw}{ \Ec{\mu(X_i,Z_i) }{\condedT}} }$.
	   \end{lemma}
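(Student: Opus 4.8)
The plan is to compare $\thetamu$ and $\thetamuT$ through the tower of nested conditionings $\sigalg(\bs{Z})\subseteq\sigalg(\condedT)\subseteq\sigalg(\bs{X},\bs{Z})$, using that sufficiency makes the middle projection computable. First I would put both functionals in residual form, exactly as in \eqref{eq:fmu_simplified} and in the proof of Theorem~\ref{thm:batch_cond}:
\[
\thetamu=\frac{\EE{\hstar(W_i)\,h(W_i)}}{\sqrt{\EE{h^2(W_i)}}},\qquad \thetamuT=\frac{\EE{h^{\star,\calT}(W_i)\,h^{\calT}(W_i)}}{\sqrt{\EE{\bigl(h^{\calT}(W_i)\bigr)^2}}},
\]
where $h(W_i)=\mu(X_i,Z_i)-\Ec{\mu(X_i,Z_i)}{Z_i}$, $\hstar(W_i)=\mustar(X_i,Z_i)-\Ec{\mustar(X_i,Z_i)}{Z_i}$, $h^{\calT}(W_i)=\mu(X_i,Z_i)-\Ec{\mu(X_i,Z_i)}{\condedT}$, and $h^{\star,\calT}(W_i)=\mustar(X_i,Z_i)-\Ec{\mustar(X_i,Z_i)}{\condedT}$; all four are in $L_2$ under $\EE{\mu^2(X,Z)},\EE{(\mustar)^2(X,Z)}<\infty$. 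Writing $\Delta_\mu:=\Ec{\mu(X_i,Z_i)}{\condedT}-\Ec{\mu(X_i,Z_i)}{Z_i}$ and $\Delta_{\mustar}:=\Ec{\mustar(X_i,Z_i)}{\condedT}-\Ec{\mustar(X_i,Z_i)}{Z_i}$, the i.i.d.\ structure and the tower property give $\Ec{\mu(X_i,Z_i)}{Z_i}=\Ec{\Ec{\mu(X_i,Z_i)}{\condedT}}{\bs{Z}}$, so $\Delta_\mu$ equals $\Ec{\mu(X_i,Z_i)}{\condedT}$ minus its conditional mean given $\bs{Z}$, whence $\EE{\Delta_\mu^2}=\Ep{\bs{Z}}{\Varp{\Tcondlaw}{\Ec{\mu(X_i,Z_i)}{\condedT}}}=\mathrm{II}(\mu)$, and likewise $\EE{\Delta_{\mustar}^2}=\mathrm{II}(\mustar)$.

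Next I would record the two orthogonal decompositions $h(W_i)=\Delta_\mu+h^{\calT}(W_i)$ and $\hstar(W_i)=\Delta_{\mustar}+h^{\star,\calT}(W_i)$, which are $L_2$-orthogonal because $\Delta_\mu,\Delta_{\mustar}\in\sigalg(\condedT)$ while $\Ec{h^{\calT}(W_i)}{\condedT}=\Ec{h^{\star,\calT}(W_i)}{\condedT}=0$; the same observation annihilates the mixed cross terms $\EE{\Delta_{\mustar}h^{\calT}(W_i)}$ and $\EE{h^{\star,\calT}(W_i)\Delta_\mu}$. Expanding the two bilinear forms along these decompositions then yields the clean identities
\[
\EE{\hstar(W_i)h(W_i)}-\EE{h^{\star,\calT}(W_i)h^{\calT}(W_i)}=\EE{\Delta_{\mustar}\Delta_\mu},\qquad \EE{h^2(W_i)}=\EE{\bigl(h^{\calT}(W_i)\bigr)^2}+\mathrm{II}(\mu),
\]
so that, by Cauchy--Schwarz, the two numerators differ by at most $\sqrt{\mathrm{II}(\mu)\,\mathrm{II}(\mustar)}$ and the two squared denominators differ by exactly $\mathrm{II}(\mu)$.

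Finally I would combine these through the elementary estimate already used in the proof of Lemma~\ref{lem:robust_gap}: writing $\thetamu=a/\sqrt{b}$ and $\thetamuT=c/\sqrt{d}$ one has $|\thetamu-\thetamuT|\le (|a|/\sqrt{b})\,d^{-1}|b-d|+d^{-1/2}|a-c|$, where $|a|/\sqrt{b}=|\thetamu|\le\Ij$ (Cauchy--Schwarz together with \eqref{eq:6_lem:max}, cf.\ Lemma~\ref{lem:max}), $|b-d|=\mathrm{II}(\mu)$, $|a-c|\le\sqrt{\mathrm{II}(\mu)\mathrm{II}(\mustar)}$, and $\sqrt{\mathrm{II}(\mu)\mathrm{II}(\mustar)}\le\max\{\mathrm{II}(\mu),\mathrm{II}(\mustar)\}$. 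The one genuinely delicate point -- the step I expect to be the main obstacle -- is that $d^{-1}=\EE{(h^{\calT}(W_i))^2}^{-1}$ is not controlled a priori, so I would dispatch it by a case split on the \emph{fixed} quantity $b=\EE{\Varc{\mu(X,Z)}{Z}}$. If $b=0$, then $\mu$ is a.s.\ a function of $Z$, all residuals and $\mathrm{II}(\mu)$ vanish, and $\thetamu=\thetamuT=0$. If $b>0$ and $\mathrm{II}(\mu)\le b/2$, then $d=b-\mathrm{II}(\mu)\ge b/2$ is bounded below and the displayed estimate gives $|\thetamu-\thetamuT|\le C_1\max\{\mathrm{II}(\mu),\mathrm{II}(\mustar)\}$ with $C_1$ depending only on $b$ and $\Ij$. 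If $b>0$ but $\mathrm{II}(\mu)>b/2$, then $\max\{\mathrm{II}(\mu),\mathrm{II}(\mustar)\}\ge b/2$, and since both functionals lie in $[-\Ij,\Ij]$ (by Lemma~\ref{lem:max} and the monotonicity of conditional variance noted before $\thetamusT\le\thetamus=\Ij$) the crude bound $|\thetamu-\thetamuT|\le 2\Ij\le (4\Ij/b)\max\{\mathrm{II}(\mu),\mathrm{II}(\mustar)\}$ closes the case. Taking $C=\max\{C_1,4\Ij/b\}$ proves \eqref{eq:gap_term_general}; the lemma then feeds directly into Propositions~\ref{prop:gap_rate_gaussian} and~\ref{prop:gap_rate_DMC}, where the remaining task is to bound $\mathrm{II}(\mu)$ and $\mathrm{II}(\mustar)$ for the multivariate Gaussian and discrete Markov chain models, respectively.
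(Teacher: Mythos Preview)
Your argument is correct and follows essentially the same route as the paper's: both proofs put $\thetamu$ and $\thetamuT$ in residual form, identify the numerator gap as $\EE{\Delta_{\mustar}\Delta_\mu}$ (bounded by $\sqrt{\mathrm{II}(\mu)\mathrm{II}(\mustar)}$ via Cauchy--Schwarz) and the squared-denominator gap as exactly $\mathrm{II}(\mu)$ (by the law of total variance), then combine through an elementary estimate for a difference of ratios.

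The one place you diverge is in controlling the ``small denominator'' $d$. You handle it by a case split on whether $\mathrm{II}(\mu)\le b/2$, falling back to the crude $|\thetamu-\thetamuT|\le 2\Ij$ bound otherwise. The paper instead uses the algebraic form $|a/b-c/d|\le |a-c|/b + (c/d)\,|b-d|/b$ (with $b,d$ the unsquared denominators), which keeps the ratio $c/d=\thetamuT$ intact and bounds it directly by $\thetamusT\le\Ij$, so no case split is needed. Your approach is slightly heavier but entirely sound; the paper's rearrangement is a bit slicker and avoids having the implied constant blow up as $b\downarrow 0$ in the second case (though for the big-$O$ statement with $b$ fixed this is immaterial).
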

	   When this lemma is used in the proof of Proposition \ref{prop:gap_rate_gaussian} and \ref{prop:gap_rate_DMC}, the natural sufficient statistic and $\thetamuT$ are actually defined based on the batch $\calB_m$ whose sample size is $n_{2}$. We do not carry these in the above notation, but use generic $(\bs{X},\bs{Z})$ instead, where $(\bs{X},\bs{Z})= \{(X_i,Z_i)\}\nsubp$.
	   \begin{proof}[Proof of Lemma~\ref{lem:gap_term_general}]\label{pf:lem:gap_term_general}
% 	  {\color{red}{
% 	  Recall the derivations for Lemma \ref{lem:max} in Appendix \ref{pf:lem:max}, we can simplify $\thetamu$ into the following}}
% 	  \begin{equation}\nonumber
% 	     \thetamu =  \frac{\EE{\hstar(W_i) h(W_i)}}{\sqrt{\EE{h^2(W_i)}}},
% 	  \end{equation}
% 	  where $\hstar(W_i):=\mustar(W_i) - \Ec{\mustar(W_i) }{Z_i}$ and $h(W_i)$ is similarly defined. Analogous simplifications for $\thetamuT$ yields 

% 	  where $h^{\calT}(X_i):= \mustar(W_i)- \Ec{\mustar(W_i) }{ \condedT}$. The derivation basically makes use of the following facts: 
% 	  \begin{itemize}
% 	    \item $\tilde{X}_{ij}^{\calT}\mid \condedT \deq X_{i} \mid \condedT$
% 	    \item $\tilde{X}_{ij}^{\calT}\independent (X_i,Y_i) \mid \condedT$
% 	    \item $\EE{\eps(X_i,Y_i)(
% 	     \mu(X_{i})-{\mu}(\tilde{X}_{i}^{\calT}))  }=0$
% 	    \item $\Ec{h^{\calT}(X_i)}{\bs{Z}}=0$.
% 	  \end{itemize}
	  Recall the definition of $\thetamu$ and $ \thetamuT$,
	    \begin{eqnarray}\label{eq:f_mu}
	   \thetamu 
	    &=& 
        \frac{\EE{\covc{\mustar(X,Z)}{\muX}{Z}}}{\sqrt{\EE{\varc{\muX}{Z}}}}, \\ \label{eq:fT_mu}
        \thetamuT
        &=&
        \frac{\EE{\covc{\mustar(X_i,Z_i)}{\mu(X_i,Z_i)}{\bs{Z},\bs{T}}}}{\sqrt{\EE{\varc{\mu(X_i,Z_i)}{\bs{Z},\bs{T}}}}}, 
        % \frac{
        % \EE{(Y_i - \mu(\XikT,Z_i))^2} -  \EE{(Y_i - \mu(X_i,Z_i))^2} 
        % }{
        % \sqrt{2 \EE{(\mu(X_i,Z_i) - \mu(\XikT,Z_i))^2 }}
        % }
        % {\theta}_{j}^{\calT}(\mu) :=  \frac{\mathbb{E}[|Y_{i}-{\mu}(\tilde{X}_{i}^{\calT})|^2 -  |Y_{i}-{\mu}(X_{i})|^2] }{\sqrt{2 \mathbb{E}|\mu(X_{i})-{\mu}(\tilde{X}_{i}^{\calT})|^2}} 
        \end{eqnarray}
        then denote $W_i = (X_i,Z_i),~ h(W_i):=\mu(W_i) - \Ec{\mu(W_i) }{Z_i}, h^{\calT}(W_i):= \mustar(W_i)- \Ec{\mustar(W_i) }{ \condedT}$ and assume $\EE{h^2(W_i)} = 1$ without loss of generality. First notice a simple fact $|\frac{a}{b} - \frac{c}{d}|
	  = \frac{|ad -bc|}{bd} = \frac{|ad -cd +cd -bc|}{bd}
	  \le \frac{|a-c|}{b} + \frac{c|b-d|}{bd}$ for $a,b,c,d>0$, then let the numerator and denominator of $ \thetamu $ in \eqref{eq:f_mu} to be $a,b$ respectively (similarly denote $c,d$ for $\thetamuT$ in \eqref{eq:fT_mu}).
% 	  $$ 
% 	   \thetamu  = \frac{\EE{\hstar(X_i) h(X_i)}}{\sqrt{\EE{h^2(W_i)}}} :=\frac{a}{b}, ~~ \thetamuT = \frac{\EE{\hstar(X_i) h^{\calT}(X_i)}}{\sqrt{\EE{(h^{\calT})^2(W_i)}}}:= \frac{c}{d}.
% 	  $$
	  And we have
	  $$
	  \max\{\frac{1}{b},~\frac{c}{bd}\} \le 1 + \thetamuT \le 1+ \thetamusT \le 1 + \thetamus \le 1 + \EE{(\mustar)^2(X,Z)} < \infty,
	  $$
	  hence it suffices to bound $|a-c|$ and $|b-d|$. First we have the following

	   \begin{eqnarray}
     a-c 
     &=& \EE{\Covc{ \mustar(W_i) }{\mu(W_i)}{\bs{Z}  }} -    
     \EE{\Covc{\mustar(W_i)}{\mu(W_i)}{\condedT }}\\ \nonumber
     &=& \EE{\Covc{ \Ec{\mustar(W_i) }{\condedT}
     }{ \Ec{\mu(W_i) }{\condedT}}{\bs{Z} } } \\ \nonumber
     &=& \Ep{\bs{Z}}{\Covp{\Tcondlaw}{
     \Ec{\mustar(W_i) }{\condedT}}{
     \Ec{\mu(W_i) }{\condedT}}}.
    %  &=& \Ep{\condedT}{\Covc{\hstar(X_i) }{h(X_i) -  h^{\calT}(X_i) }{ \condedT} } + \\ \nonumber
    %  &~~& \Cov{ \Ec{ \hstar(X_i)}{\condedT}
    %  }{\Ec{h(X_i) -  h^{\calT}(X_i)}{\condedT} } \\ \nonumber
    %  & =& \Ep{\condedT}{\Covc{\hstar(X_i) }{h(X_i)  }{ \condedT} } \\ \nonumber
    %  &\le &  \sqrt{\Ep{\condedT}{\Varc{\hstar(X_i) }{ \condedT}}}   \sqrt{\Ep{\condedT}{\Varc{h(X_i) }{ \condedT}}}  
     \end{eqnarray}
     where the first equality holds due to the independence among $i.i.d.$ samples $ (\bs{X},\bs{Z})= \{(X_i,Z_i)\}\nsubp$.
     For the second equality, we apply the law of total covariance to the covariance term $\Covc{ \mustar(W_i) }{\mu(W_i)}{\bs{Z}  }$ then cancel out the second term of the first line, leading to the term in the second line. Finally we spell out the randomness of the expectation and covariance through explicit subscripts in the last inequality. They by applying Cauchy--Schwarz inequality, we obtain
	 \begin{equation} \label{eq:|a-c|}
	     	|a-c| \le \sqrt{\Ep{\bs{Z}}{\Varp{\Tcondlaw}{ \Ec{\mustar(W_i) }{\condedT}}}} \sqrt{\Ep{\bs{Z}}{\Varp{\Tcondlaw}{ \Ec{\mu(W_i) }{\condedT}}}} 
	  \end{equation}
	 Regarding the term $|b-d|$, we have
	 \begin{eqnarray}\label{eq:|b-d|}
	    |b - d | \nonumber
	    &=& \left| \sqrt{\EE{h^2(W_i)}} -  \sqrt{\EE{(h^{\calT})^2(W_i)}} \right| \\ \nonumber
	    &=& \frac{\left| \EE{h^2(W_i)} - \EE{(h^{\calT})^2(W_i)} \right|
	    }{ \sqrt{\EE{h^2(W_i)}} +  \sqrt{ \EE{(h^{\calT})^2(W_i)}}}  \\ \nonumber
	    &\le & \frac{\left| \EE{h^2(W_i)} - \EE{(h^{\calT})^2(W_i)} \right|
	    }{ \sqrt{\EE{h^2(W_i)}}}  \\ \nonumber
	    &\le& \EE{ \Varc{\mu(W_i)}{\bs{Z}
	    }} - \EE{ \Varc{\mu(W_i)}{\condedT
	    }}  \\
	    &=& \Ep{ \bs{Z}}{ \Varp{ \Tcondlaw}{ \Ec{\mu(W_i) }{\condedT}} }
	 \end{eqnarray}
	 where we use the assumption $\EE{h^2(W_i)}=1$ and the definition of $h, h^{\calT}$ in the second inequality. The last equality holds as a result of applying the law of total variance to the variance term $\Varc{\mu(W_i)}{\bs{Z}}$ then getting the second term of line $4$ cancelled out. Finally, combining \eqref{eq:|a-c|} and \eqref{eq:|b-d|} establishes the bound in \eqref{eq:gap_term_general}.
	   \end{proof}
	   	   \subsubsection{Proposition~\ref{prop:gap_rate_gaussian}}
	  \begin{proof}[Proof of Proposition~\ref{prop:gap_rate_gaussian}]\label{pf:prop:gap_rate_gaussian}
	  Throughout the proof, the natural sufficient statistic and $\thetamuT$ are defined based on the batch $\calB_m$ whose sample size is $n_{2}$. But we will abbreviate the notation dependence on it for simplicity and use a generic $n$ instead of $n_{2}$ to avoid carrying too many subscripts, without causing any confusion. Now we present a roadmap of this proof.
	  \begin{enumerate}[(i)]
	      \item due to Lemma \ref{lem:gap_term_general}, it suffices to bound the term $\mathrm{II}(\mu)$, $\mathrm{II}(\mustar)$ in \eqref{eq:gap_term_general}.
	      \item we bound $\mathrm{II}(\mu)$, $\mathrm{II}(\mustar)$ with the same strategy. Specifically, we will show 
	      $$
	      \mathrm{II}(\mu) = O\left(\Ep{\Xinoj}{
	      \Ep{F}{\mu^2(W_i)}
	      {\Ec{h_{ii}}{\Xinoj}}
	      } \right)$$
	      and similarly for $\mathrm{II}(\mustar)$ under the stated model, where $F$ denotes the conditional distribution of $X_{i}|\bs{Z}$, and $h_{ii}$ is the $i$th diagonal term of the hat matrix $\bH$, which is defined later. This terminology comes from the fact that we can treat $X_j$ as response variable, $(1,\Xnoj)$ as predictors, the natural sufficient statistic for this low dimensional multivariate Gaussian distribution is equivalent to the OLS estimator.
	      \item Regarding the term $\Ec{h_{ii}}{\Xinoj}$ above, we can carefully bound it by $ 1/(n-1) + \Ec{\bm{\Xi}}{Z_i}$, where $\bm{\Xi}$ is defined in \eqref{eq:Xi_def}.
	      \item Simply expanding $\Ec{\bm{\Xi}}{Z_i}$ into three terms: $\mathrm{III}_1,\mathrm{III}_2,\mathrm{III}_3$, which are defined in \eqref{eq:III_1}, \eqref{eq:III_2} and \eqref{eq:III_2}, we will show $\mathrm{III}_2=0$ and figure out the stochastic representation of $\mathrm{III}_1,\mathrm{III}_3$, which turns out to be related to chi-squared, Wishart and inverse-Wishart random variables.
	      \item Cauchy--Schwarz inequalities together with some properties of those random variables (chi-squared, Wishart and inverse-Wishart) and the stated moment conditions finally gives us the result in \eqref{eq:gap_rate_gaussian}.
	  \end{enumerate}
% 	  Due to the result in Lemma \ref{lem:gap_term_general}, it suffices to bound the following term
% 	  then if the gap terms for the numerators and denominators in $\thetamu$ and $\thetamuT$ can be bounded as $O(1/n)$, we immediately have $\thetamu - \thetamuT\le O(1/n) + \frac{O(1/n)}{1-O(1/n)}\EE{\hstar h} = O(1/n)$. 
% 	  The bounding strategy will be the same for the numerators and denominators. Here we prove the case of numerators and denote them by $\eta_j(\mu)$ and $\eta_{j}^{\calT}(\mu)$ respectively. First, we have the following derivations for the gap term: 
	  %the gap term between the unnormalized excess prediction error
	 Having proved Lemma \ref{lem:gap_term_general}, now we directly start with step \myrom{2}. Notice the following
	  \begin{eqnarray}\nonumber
% 	  \eta_j(\mu) - \eta_{j}^{\calT}(\mu)
%         b-d
% 	  &=& \mathbb{E}\mathrm{Var}_{G}(\mu(X_{i})|\{ Z_i\}) - \mathbb{E}\mathrm{Var}_{F}(\mu(X_{i})|\bs{Z}) \\ \nonumber
      \mathrm{II}(\mu)
      &=&  
    \Ep{ \bs{Z}}{ \Varp{ \Tcondlaw}{ \Ec{\mu(W_i) }{\condedT}}} \\  \nonumber
      &=& \Ep{ \bs{Z}}{ \Ep{ \Tcondlaw}{ 
      ( \Ep{F}{\mu(W_i)} - \Ep{F_{\bT}}{\mu(W_i)}  )^2
      }}\\ \nonumber
% 	  &=& \mathbb{E}_{\bs{Z}}\mathbb{E}_{\bm{T}|\bs{Z}}(\mathbb{E}_{G}\mu(X_{i}) - \mathbb{E}_{F}\mu(X_{i}))^{2} \\ \nonumber
      &=& \Ep{ \bs{Z}}{ \Varp{F}{\mu(W_i)} \Ep{ \Tcondlaw}{ 
      \frac{( \Ep{F}{\mu(W_i)} - \Ep{F_{\bT}}{\mu(W_i)}  )^2}{\Varp{F}{\mu(W_i)} }
      }} \\  
% 	  &=& \mathbb{E}\left\{\mathrm{Var}_{G}(\mu)\mathbb{E}_{\bm{T}|\bs{Z}}\left[
% 	  	\frac{(\mathbb{E}_{G}\mu(X_{i}) - \mathbb{E}_{F}\mu(X_{i}))^{2}}{\mathrm{Var}_{G}(\mu)}
% 	   \right]\right\} \\ \nonumber
% 	  &\le&   \mathbb{E}[\mathrm{Var}_{G}(\mu) \mathbb{E}_{\bm{T}|\bs{Z}}
% 	  	\chi^{2}(F\|G)]
	  &\le&  \Ep{ \bs{Z}}{ \Varp{F}{\mu(W_i)} \min \left \{ \Ep{ \Tcondlaw}{ 
        \chi^2(F_{\bT}\|F)
      },2\right\}
      }	 \label{eq:term_II}
	  \end{eqnarray}
	  where the second equality is just rewriting the conditional variance, with $F$ denoting the conditional distribution $X_{i}|\bs{Z}$ and $F_{\bT}$ denoting the conditional distribution $X_{i}|\bs{Z},\bT$. Here we abbreviate the subscript dependence on $i$ for notation simplicity. The third equality holds since $\Varp{F}{\mu(W_i)} \in \sigalg(\bs{Z})$. Regarding the last inequality, we make use of the variational representation of $\chi^2$-divergence:
	  \begin{equation} \nonumber
		\chi^{2}(P\|Q) = \sup_{\mu} \frac{(\mathbb{E}_{P}(\mu) - \mathbb{E}_{Q}(\mu))^{2}}{\mathrm{Var}_{Q}(\mu)}
	  \end{equation}
	  and the fact that
	  \begin{eqnarray} \nonumber
	  &~~& \Ep{ \Tcondlaw}{ 
      \frac{( \Ep{F}{\mu(W_i)} - \Ep{F_{\bT}}{\mu(W_i)}  )^2}{\Varp{F}{\mu(W_i)} }}\\ \nonumber
       &\le& \frac{
      \Ep{ \Tcondlaw}{  \Ep{F}{\mu^2(W_i)} } + \Ep{ \Tcondlaw}{ \Ep{F_{\bT}}{\mu^2(W_i)}} - 2  \Ep{ \Tcondlaw}{ \Ep{F_{\bT}}{\mu(W_i)}\Ep{F}{\mu(W_i)}
      }
      }{ \Varp{F}{\mu(W_i)} 
      } \\ \nonumber
      &=& \frac{
      \Ep{F}{\mu^2(W_i)} +  \Ep{F}{\mu^2(W_i)} - 2(\Ep{F}{\mu(W_i)})^2 
      }{ \Varp{F}{\mu(W_i)} 
      }  \\ \nonumber
      &= & 
      \frac{
      2\Varp{F}{\mu(W_i)}
      }{ \Varp{F}{\mu(W_i)} 
      }  = 2
	  \end{eqnarray}
	  where the first inequality is from expanding the quadratic term and the fact $ (\Ep{F}{\mu(W_i)})^2 \le\Ep{F}{\mu^2(W_i)}$, $(\Ep{F_{\bT}}{\mu(W_i)})^2 \le\Ep{F_{\bT}}{\mu^2(W_i)}$, the first equality holds as a result of the tower property of conditional expectation and $\Ep{F}{\mu(W_i)} \in \sigalg(\bs{Z})$.
% 	  Below we present a result on the $\chi^{2}$-divergence between two Gaussian random variables. 
% 	  \begin{lemma}\label{lem:chisq_div_gaussians}
% 	  The $\chi^{2}$-divergence between $P:\gauss{a_1}{\sigma_1^2}$ and $Q:\gauss{a_2}{\sigma_2^2}$ equals the following whenever $2\sigma_{2}^{2}>\sigma_{1}^{2}>0$:
% 	  \begin{equation}
% 	  \frac{1}{2}\left[
% 	  \frac{\sigma^2_{2}}{\sigma_1\sqrt{2\sigma_{2}^{2} - \sigma_{1}^{2}}}\exp\left\{\frac{(a_{1}-a_{2})^{2}}{2\sigma_{2}^{2} - \sigma_{1}^{2}}\right\} -1 
% 	  \right]
% 	  \nonumber
% 	  \end{equation}
% 	  \end{lemma}
	   Denote $u_{i} = (1,Z_i)^{\top}$ and the following $n$ by $p$ matrix by $\bU$:
	  \begin{equation} \label{eq:def_W}
	      \bU = \left(
	      \begin{array}{c}
	           u_{1}^{\top}  \\
	           \vdots \\
	           u_{n}^{\top}
	      \end{array}
	      \right) = (\bm{1}, \bZ)
	  \end{equation}
	  Recall that the sufficient statistic (here we ignore the batching index)
	  $$
	  \bT = (\sum_{i\in[n]}X_{i}, \sum_{i\in[n]}X_{i}Z_i)= \bU^{\top} \bX,
	  $$
	  under the stated multivariate Gaussian model, we know $ \bX \mid \bZ \sim \gauss{\bU \gamma }{\sigma^2 \bs{I}_n}$, then the conditional distribution of $(X_{i},\bT)\mid \bZ$ can be specified as below
    \begin{equation} \label{eq:X_T_joint_gaussian}
	  \left(
	  \begin{array}{c}
	       X_{i} \\
	       \bT 
	  \end{array}
	  \right)
	  \sim 
	  \mathcal{N}
	  	  \left(
	  	  \left[
	  \begin{array}{c}
	       (1,Z_i)\gamma \\
	       \bU^{\top}\bU \gamma
	  \end{array}
	  \right],
	  	  \sigma^2 \left[
	  \begin{array}{cc}
	       1 & e_i^{\top} \bU \\
	       \bU^{\top} e_i^{\top} &  \bU^{\top}\bU
	  \end{array}
	  \right]
	  \right)
	 \end{equation}
	  where $e_i \in \mathbb{R}^n$, $(e_1,\cdots, e_{n})$ forms the standard orthogonal basis. Noticing the above joint distribution is multivariate Gaussian, we can immediately derive the conditional distribution as below,
	  $$
	   X_{i} \mid \bs{Z}, \bm{T} \sim \gauss{e_{i}^{\top}\bU(\bU^{\top} \bU)^{-1} \bU^{\top} \bm{X}}{\sigma^{2} (1-   e_i^{\top}\bU(\bU^{\top} \bU)^{-1} \bU^{\top} e_i )}.
	  $$
	  Denote $\bH =  \bU(\bU^{\top} \bU)^{-1} \bU^{\top}$, which is the ``hat" matrix. Now we compactly write down the following two conditional distributions:
	   \begin{eqnarray} \nonumber
	  &F_{\bT}&: X_{i} \mid \bs{Z}, \bm{T} \sim \gauss{e_{i}^{\top}\bH \bm{X}}{\sigma^{2}(1-h_{ii})} \\ \nonumber 
	  &F&: X_{i} \mid \bs{Z} \sim \gauss{(1,Z_i)\gamma}{\sigma^{2}}
	   \end{eqnarray}
	  Note the sufficient statistic $\bT$ is equivalent to 
	  $$
	  \hat{\gamma}^{OLS} = (\bU^{\top} \bU)^{-1} \bU^{\top} \bX 
	  $$
	  whenever $\bU^{\top} \bU$ is nonsingular. Here $\hat{\gamma}^{OLS}$ is the OLS estimator for $\gamma$ (when treating $X$ as response variable, $(1,\Xnoj)$ as predictors). Simply, we have
	  $$
	   \hat{\gamma}^{OLS} \sim \gauss{\gamma }{ \sigma^2 (\bU^{\top} \bU)^{-1} }
	  $$
	   Now we are ready to calculate $\chi^{2}(F_{\bT}\|F)$. First,
	  \begin{eqnarray}\nonumber
	  e_{i}^{\top}\bH \bX - (1,Z_i)\gamma 
	  &=& e_{i}^{\top}\bU \hat{\gamma}^{OLS} - (1,Z_i)\gamma 
	  \\
	  &=& e_{i}^{\top}\bU (\hat{\gamma}^{OLS} - \gamma) \sim \calN(0,\sigma^{2}h_{ii})  \label{eq:mean_diff_dist}
	  \end{eqnarray} 
% 	  where $\hat{\gamma}^{OLS} = (\bm{X}_{\noj}^{\top}\bZ)^{-1}\bZ^{\top}\bX$ is the OLS estimator for $\gamma$.
	  Since $2\sigma^{2}>\sigma^{2}(1-h_{ii})$, applying Lemma \ref{lem:chisq_div_gaussians} yields the following
	  \begin{eqnarray}\nonumber
	  \chi^{2}(F_{\bT}\|F)
	  &= & \frac{1}{2}\left[\frac{1}{\sqrt{1-h^2_{ii}}} \exp{\left\{
	  \frac{(e_{i}^{\top} \bH \bm{X} - (1,Z_i)\gamma)^{2}}{\sigma^{2}(1+h_{ii})}
	  \right\}} -1  \right]\\ \nonumber
	   &\le & \frac{1}{\sqrt{1-h_{ii}}}  \exp{\left\{
	  \frac{(e_{i}^{\top} \bH \bm{X} - (1,Z_i)\gamma)^{2}}{\sigma^{2}(1+h_{ii})}
	  \right\}} -1 \\ 
	  &=& \frac{1}{\sqrt{1-h_{ii}}}
	  \exp{\left\{
	  \frac{h_{ii}G^2}{1+h_{ii}}
	  \right\}} - 1  \label{eq:chisq_upperbound}
	  \end{eqnarray}
	  where $G \sim\calN(0,1)$ is independent from $\bX$ and the last equality holds due to \eqref{eq:mean_diff_dist}. Plugin \eqref{eq:chisq_upperbound} back to \eqref{eq:term_II}, we have
    \begin{eqnarray}
        \mathrm{II}(\mu)  \nonumber
        &\le&  \Ep{ \bs{Z}}{ \Varp{F}{\mu(W_i)} \min \left \{ \Ep{ \Tcondlaw}{ 
        \chi^2(F_{\bT}\|F)
        },2\right\}
        }  \\ \nonumber
        &\le & \Ep{ \bs{Z}}{ \Varp{F}{\mu(W_i)}
        \min\left\{
        \Ep{ \Tcondlaw}{\frac{1}{\sqrt{1-h_{ii}}} 
          \exp{\left\{
	    \frac{h_{ii}G^2}{1+h_{ii}}
	    \right\}} - 1}, 2\right\}
        } 
    \end{eqnarray}
	  Note the moment generating function for $\chi^{2}_{1}$ random variable is $\frac{1}{\sqrt{1-2t}}$ when $t<1/2$. Since the expectation of $\exp{\left\{
	    \frac{h_{ii}G^2}{1+h_{ii}}
	    \right\}}$ does not always exist, we consider two events $E$ and $E^c$ such that conditional on the event $E$, the expectation exists and the probability of event $E^c$ is small. More specifically, define the event $E=\{h_{ii} < \frac{1}{2} \}$, which implies 
        \begin{eqnarray}\label{eq:bound_expected_chisq} \nonumber
	   \Ep{ \Tcondlaw}{ \frac{1}{\sqrt{1-h_{ii}}}
        \exp{\left\{
	    \frac{h_{ii}G^2}{1+h_{ii}}
	  \right\}}} - 1   \nonumber
	  &= & \frac{1}{\sqrt{1-h_{ii}}\sqrt{1 - {2h_{ii}}/{(1+ h_{ii})}}} - 1\\ \nonumber
	  &= & \frac{\sqrt{1+h_{ii}} }{{1-h_{ii}}}-1 \\ \nonumber 
	  &\le& \frac{1+h_{ii}}{1-h_{ii}} -1 
    %   &= & \frac{2h_{ii}}{(\sqrt{1+h_{ii}} + \sqrt{1-h_{ii}}) \sqrt{1-h_{ii}} } 
      \\ \nonumber
	  &\le& 4 h_{ii} 
	  \end{eqnarray}
	  hence we can bound $\mathrm{II}(\mu)$ by the summation of the following two terms:
	   $$
	    \mathrm{II}_{1}:=  \Ep{\bs{Z}}{
	      \Varp{F}{\mu(W_i)} \indicat{E} \cdot 4h_{ii}},~~
	     \mathrm{II}_{2}:=  \Ep{\bs{Z}}{
	      \Varp{F}{\mu(W_i)} \indicat{E^c} \cdot 2
	      }
	   $$
	   Regarding $ \mathrm{II}_{1}$, the following holds:
	  	 \begin{eqnarray}\nonumber
	   %   \mathrm{II}_{1} \preceq
	   %   \Ep{\bs{Z}}{
	   %   \Varp{G}{\mu(W_i)} \indicat{E^c}
	   %   }
	   %   \bb{E}_{\bs{Z}}[\Ep{G}{\mu^2}E^{c}] 
	   %   \le
	   %   \bb{E}_{Z_i}[\bb{E}(\mu^2(X_{i})|Z_i)\bb{E}(h_{ii}|Z_i)]/\varepsilon
	   %   \\
	    \mathrm{II}_{1}  
	     \le \nonumber 4~
	      \Ep{\Xinoj}{
	      \Ep{F}{\mu^2(W_i)}
	      {\Ec{h_{ii}}{\Xinoj}}
	      },
	   %  \bb{E}_{\bs{Z}}[\Varp{G}{\mu}h_{ii}] \le \bb{E}_{Z_i}[\bb{E}(\mu^2(X_{i})|Z_i)\bb{E}(h_{ii}|Z_i)]
	  \end{eqnarray}
	  where we apply the tower property of conditional expectation and $ \Varp{F}{\mu(W_i)} \le  \Ep{F}{\mu^2(W_i)} \in \sigalg(\Xinoj) $
	   % $h_{ii} < 1/2$ and $  \frac{1}{\sqrt{1-2h_{ii}}}-1 \le 8 h_{ii}$,  
	   % for some constant c (for example $\varepsilon = 3/8$ and $c=8$),
% 	    hence we have, conditional on $E$
% 	  \begin{equation}\label{eq:bound_expected_chisq}
% 	   \Ep{ \Tcondlaw}{ 
%         \exp{\left\{
% 	\frac{h_{ii}Z^2}{1+h_{ii}}
% 	  \right\}}} - 1  = \sqrt{}
% 	  \preceq h_{ii} 
% 	  \end{equation}
% 	  And by Markov's inequality, we have
% 	  $$
% 	  \Pc{E^{c}}{\Xinoj} \le \frac{\Ec{h_{ii}}{\Xinoj }}{\varepsilon}
% 	  $$
% 	  And we have $\bb{P}(E^c | Z_i)\le \frac{\bb{E}(h_{ii}|Z_i)}{\varepsilon}$ by Markov's inequality, 
% 	  From \eqref{eq:term_II} and separately consider two cases $E$ and $E^c$, we can bound $\mathrm{II}$ by the following two terms: $\mathrm{II}_{1}$ and $ \mathrm{II}_{2}$.
    Regarding $\mathrm{II}_{2}$, we have
	  \begin{eqnarray} \nonumber
	      \mathrm{II}_{2} 
	      &=&
	      2~\Ep{\bs{Z}}{
	      \Varp{F}{\mu(W_i)} \indicat{E^c}
	      }\\ \nonumber
	   %   \bb{E}_{\bs{Z}}[\Ep{G}{\mu^2}E^{c}] 
          &=&
	      2~\Ep{\bs{Z}}{
	      \Varp{F}{\mu(W_i)} \Ec{ \indicat{E^c}}{\Xinoj}
	      }\\ \nonumber
	      &\le&
	       2~ \Ep{\Xinoj}{
	      \Ep{F}{\mu^2(W_i)} \Pc{h_{ii} \ge \frac{1}{2}}{\Xinoj} 
	      } \\ \nonumber
	      &\le& 4 ~
	      \Ep{\Xinoj}{
	      \Ep{F}{\mu^2(W_i)}
	      {\Ec{h_{ii}}{\Xinoj}}
	      }
	   %   \bb{E}_{Z_i}[\bb{E}(\mu^2(X_{i})|Z_i)\bb{E}(h_{ii}|Z_i)]/\varepsilon
	   %   \\
	  \end{eqnarray}
	  where the second equality comes from the tower property of conditional expectation and $ \Varp{F}{\mu(W_i)} \in \sigalg(\Xinoj) $ and the last inequality holds due to Markov's inequality. Now we can compactly write down the following bound for $\mathrm{II}(\mu)$,
	   \begin{equation} \label{eq:II_result}
	  \mathrm{II}(\mu) \le \mathrm{II}_1 + \mathrm{II}_2 \le 8 ~
	      \Ep{\Xinoj}{
	      \Ep{F}{\mu^2(W_i)}
	      {\Ec{h_{ii}}{\Xinoj}}
	      },
	  \end{equation}
	  Similarly we obtain $\mathrm{II}(\mustar)= O\left( \Ep{\Xinoj}{
	      \Ep{F}{(\mustar)^2(W_i)}
	      {\Ec{h_{ii}}{\Xinoj}}
	      }\right)$. Now we proceed step \myrom{3}, i.e. calculating $\Ec{h_{ii}}{Z_i}$.
% 	  Remark here $\epsilon$ is bounded away from $0$.
% 	  where the first line holds due to \eqref{eq:bound_expected_chisq} and the second line again comes from the tower property of total expectation. 
% 	  hence it suffices to calculate $\Ec{h_{ii}}{Z_i}$. 
	  Notice $h_{ii}$ is the $i$th diagonal term of the ``hat" matrix, which involves $\{w_{i}\}\nsubp$. In order to bound the conditional expectation of $h_{ii}$ given $\Xinoj$ in a sharp way, we carefully expand $h_{ii}$ and try to get $w_{i}$ separated from $\{w_m\}_{m\ne i}$. 
% 	  Denote $w_{i} = (1,Z_i)^{\top}$ and the following $n$ by $p$ matrix by $\bU$:
% 	  \begin{equation}
% 	      \bU = \left(
% 	      \begin{array}{c}
% 	           w_{1}^{\top}  \\
% 	           \vdots \\
% 	           w_{n}^{\top}
% 	      \end{array}
% 	      \right)
% 	  \end{equation}
% 	  then we have $\bH =  \bU(\bU^{\top} \bU)^{-1} \bU^{\top}$. 
    Recall the definition of $\bU = (\bm{1}, \bZ)$ in \eqref{eq:def_W}, we can rewrite 
    $$
    \bU^{\top} \bU = \sum_{m\ne i}u_m u_m^{\top} + u_i u_i^{\top},~~~ \bA:= \sum_{m\ne i}u_m u_m^{\top}
    $$
    Note that $h_{ii}=u_i^{\top} (\bU^{\top} \bU)^{-1}u_i$ since $\bH =\bU (\bU^{\top} \bU)^{-1} \bU^{\top} $, hence we have
	  \begin{equation} \nonumber
	      {h_{ii}} = u_i^{\top}(\bA + u_i u_i^{\top})^{-1} u_i
	  \end{equation}
	  As $n>p$, $\bA$ is almost surely positive definite thus invertible, then applying Sherman--Morrison formula to $\bA$ and $u_i u_i^{\top}$ yields the following
	  \begin{equation} \label{eq:hii_simplebound}
	      h_{ii} = u_i^{\top} \bA^{-1} u_i - \frac{( u_i^{\top} \bA^{-1} u_i )^{2}}{1 +  u_i^{\top} \bA^{-1} u_i } \le  u_i^{\top} \bA^{-1} u_i.
	  \end{equation}
     Since $\bA$ also involves the unit vector $\bm{1}_{n-1}$, it is easier when we first project $\bZ_{\noi}$ on $\bm{1}_{n-1}$ then work with the orthogonal complement.	Bearing this idea in mind, we denote $\bOme = (\bm{1}_{n-1}, \bZ_{\noi})$ which is a $n-1$ by $p$ matrix, then rewrite $\bA$ as
	  \begin{equation} \nonumber
	     \bA = \bOme^{\top}\bOme = \left(
	      \begin{array}{cc}
	        \bm{1}_{n-1}^{\top} \bm{1}_{n-1}  &  \bm{1}_{n-1}^{\top} \bZ_{\noi}\\
	         \bZ_{\noi}^{\top} \bm{1}_{n-1}   & \bZ_{\noi}^{\top} \bZ_{\noi}
	      \end{array}
	      \right)
	  \end{equation}
	  	where $\Ib_{n-1}$ is the $(n-1)$ dimensional identity matrix. Denote
	   \begin{equation} \label{eq:Omega_Gamma}
	   \widebar{\bZ_{\noi}} :=  \frac{1}{n-1}\sum_{m\ne i}Z_{m}=\frac{1}{n-1} \bm{1}_{n-1}^{\top}\bZ_{\noi}~~~~
	        \bUam:= \left(
	        \begin{array}{cc}
	            1 &   - \widebar{\bZ_{\noi}}  \\
	            \bm{0} & \Ib_{n-1}
	        \end{array}
	        \right),
	   \end{equation}
	   	we have 
	    \begin{eqnarray}\nonumber
	   	\bOme \bUam  = (\bm{1}_{n-1}, \bZ_{\noi})\bUam
	   	&=& (\bm{1}_{n-1}, \bZ_{\noi} - \bm{1}_{n-1} \widebar{\bZ_{\noi}})\\ \nonumber
	   	&=& (\bm{1}_{n-1}, (\Ib_{n-1}-\projmat)\bZ_{\noi}).
	    \end{eqnarray}
	   	where $\projmat = \bm{1}_{n-1} \bm{1}_{n-1}^{\top}/(n-1)$ is the projection matrix onto $\bm{1}_{n-1}$. Then we immediately have
	   	\begin{equation}\nonumber
	   	  (\bOme \bUam)^{\top}\bOme \bUam = \left(
	   	  \begin{array}{cc}
	   	     n-1 & \bm{0} \\
	   	    \bm{0}    & \bZ_{\noi}^{\top} (\Ib_{n-1} - \projmat) \bZ_{\noi}
	   	  \end{array}
	   	  \right)
	   	\end{equation}
	    since $\projmat \bm{1}_{n-1} =  \bm{1}_{n-1}, (\Ib_{n-1}-\projmat) \bm{1}_{n-1} = \bm{0}$ and 
	    \begin{equation}
        \label{eq:w_Gamma}
	    u_i^{\top} \bUam = (1,Z_i)  \bUam = (1, Z_i -\widebar{\bZ_{\noi}} ).
	    \end{equation}
	    Combining \eqref{eq:Omega_Gamma} with \eqref{eq:w_Gamma} yields the following 
	    \begin{eqnarray} \nonumber
	        u_i^{\top} \bA^{-1} u_i 
	        &=&   u_i^{\top} (\bOme^{\top} \bOme )^{-1} u_i  \\ \nonumber
	        &=& u_i^{\top}  \bUam  ((\bOme \bUam)^{\top}\bOme \bUam)^{-1} \bUam^{\top} w_i\\ \nonumber 
	        &=&
	        \frac{1}{n-1} + (Z_i -\widebar{\bZ_{\noi}} ) (\bZ^{\top}_{\noi} (\Ib_{n-1}-\projmat) \bZ_{\noi})^{-1}(Z_i -\widebar{\bZ_{\noi}} )^{\top},
	    \end{eqnarray}
	   which together with \eqref{eq:hii_simplebound} implies $\Ec{h_{ii}}{Z_i}  \le
	   \Ec{ u_i^{\top} \bA^{-1} u_i }{Z_i} =
	   1/(n-1) + \Ec{\bm{\Xi}}{Z_i}$, where
        \begin{equation}\label{eq:Xi_def}
                \bs{\Xi} = (Z_i -\widebar{\bZ_{\noi}} ) (\bZ^{\top}_{\noi} (\Ib_{n-1}-\projmat) \bZ_{\noi})^{-1}(Z_i -\widebar{\bZ_{\noi}} )^{\top}. 
         \end{equation}
	    As the problem has been reduced to calculating $\Ec{\bm{\Xi}}{Z_i}$, we arrive at the step \myrom{4} now. Write $(\bZ_i -\widebar{\bZ_{\noi}} ) = (\bZ_i - \bv_0) - (\widebar{\bZ_{\noi}} -  \bv_0)$, where $ \bv_0$ is the mean of Gaussian random variable $\Xnoj$, we can expand $
	       \Ec{\bm{\Xi}}{Z_i} = \mathrm{III}_1 + \mathrm{III}_2 + \mathrm{III}_3 $, where
	   \begin{eqnarray} \label{eq:III_1}
	       \mathrm{III}_1 &=& (\bZ_i - \bv_0)  \Ec{(\bZ_{\noi}^{\top} (\Ib_{n-1}-\projmat)  \bZ_{\noi})^{-1}}{\bZ_i} (\bZ_i - \bv_0)^{\top} \\ \label{eq:III_2}
	       \mathrm{III}_2 &=& -2 (\bZ_i - \bv_0) \Ec{(\bZ_{\noi}^{\top} (\Ib_{n-1}-\projmat)  \bZ_{\noi})^{-1}(\widebar{\bZ_{\noi}} -  \bv_0)^{\top}}{\bm{Z}_{i}} \\
	       \label{eq:III_3}
		  \mathrm{III}_3 &=&   \Ec{(\widebar{\bZ_{\noi}} -  \bv_0)(\bZ_{\noi}^{\top} (\Ib_{n-1}-\projmat)  \bZ_{\noi})^{-1}(\widebar{\bZ_{\noi}} -  \bv_0)^{\top}}{\bm{Z}_{i}} 
	   \end{eqnarray}
	   Below we are going to show $\mathrm{III}_2=0$ and derive $\mathrm{III}_1, \mathrm{III}_3$ carefully.
	   %For the above three terms, we make use of the fact $ $
	   %$\mathrm{III}_3$ is defined similarly for $(\widebar{\bZ_{\noi}} -  \bv_0)$ and $\mathrm{III}_2$ is the cross term.
	   Regarding the term $\mathrm{III}_1$, we exactly write down its stochastic representation. Under the state Gaussian model, we have $\bZ_{\noi}^{\top} \sim \gauss{\bv_0  \bm{1}_{n-1}^{\top} }{\Ib_{n-1}\otimes\bSig_0}$, then $(\bZ_{\noi}^{\top} (\Ib_{n-1}-\projmat)  \bZ_{\noi})^{-1}$ follows an inverse Wishart distribution i.e.
	   \begin{eqnarray}\nonumber
	     %~~ &~& \bX_{-i,\noj}^{\top} (\Ib_{n-1}-\projmat)  \bX_{-i,\noj} \sim \calW_{p-1}(\bSig_0 , n-2)\\
	       %&~& 
	       (\bZ_{\noi}^{\top} (\Ib_{n-1}-\projmat)  \bZ_{\noi})^{-1} \sim  \calW^{-1}_{p-1}(\bSig_0^{-1} , n-2)
	   \end{eqnarray}
	   and $ \bZ_{\noi} \independent \bZ_i $, hence we can calculate 
	   $$
	   \Ec{(\bZ_{\noi}^{\top} (\Ib_{n-1}-\projmat)   \bZ_{\noi})^{-1}}{\bZ_i}= \frac{\bSig_0^{-1}}{n-p-2}.
	   $$
	   Plug in the above equation into \eqref{eq:III_1}, we have
	   \begin{equation}\label{eq:III1_result}
	       \mathrm{III}_1 = (\bZ_i - \bv_0) \bSig_0^{-1} (\bZ_i - \bv_0)^{\top} = \frac{\bm{\Phi}}{n-p-2},~~~~\text{where}~~\bm{\Phi}\sim \chi_{p-1}^2,~ \bm{\Phi} \independent \bZ_{\noi}.
	   \end{equation}
	   	   %Further we obtain $\mathrm{III}_1 = {\bm{\Phi}}/(n-p-2)$, where $\bm{\Phi}\sim \chi_{p-1}^2$ and is independent from $\bZ_{\noi}$. 
	   	  Regarding the term $\mathrm{III}_2 $ in \eqref{eq:III_2}, 
	   %	  we first write down its expression 
	   %\begin{equation}
	   %    \mathrm{III}_2 = -2 (\bZ_i - \bv_0) \Ec{(\bZ_{\noi}^{\top} (\Ib_{n-1}-\projmat)  \bZ_{\noi})^{-1}(\widebar{\bZ_{\noi}} -  \bv_0)^{\top}}{\bm{Z}_{i}}
	   %\end{equation}
	   we first denote $\bZ = \bZ_{\noi} - \bm{1}_{n-1}\bv_0$ and notice
	   \begin{equation}\label{eq:Z_gaussian}
	   	\bZ \sim \gauss{\bs{0}}{\Ib_{n-1} \otimes \bSig_0},
	   ~~~
	    \bm{1}_{n-1}^{\top} \bZ = (n-1) (\widebar{\bZ_{\noi}} -  \bv_0),
	   	 \end{equation}
	   then rewrite $\mathrm{III}_2$ as below
	   $$
	   \mathrm{III}_2 = -2 (\bZ_i - \bv_0) \EE{(( \bZ +  \bm{1}_{n-1}\bv_0)^{\top} (\Ib_{n-1}-\projmat)  (\bZ + \bm{1}_{n-1}\bv_0) )^{-1} \frac{(\bm{1}_{n-1}^{\top} \bZ)^{\top} }{n-1}
	   }
	   $$
	   where we also makes use of the fact that
	   $$
	   (\bZ_{\noi}^{\top} (\Ib_{n-1}-\projmat)  \bZ_{\noi})^{-1}(\widebar{\bZ_{\noi}} -  \bv_0)^{\top} \independent \bm{Z}_{i}
	   $$
	   Noticing that $(\bm{1}_{n-1}\bv_0)^{\top}(\Ib_{n-1}-\projmat)=\bm{0}$, we can simplify further  
	  \begin{equation}\label{eq:III2_simple}
	     \mathrm{III}_2 =  -\frac{2}{n-1} (\bZ_i - \bv_0) \EE{(\bZ^{\top}(\Ib_{n-1}-\projmat) \bZ)^{-1}( \bm{1}_{n-1}^{\top} \bZ)^{\top}}
	  \end{equation}
	  Notice in the above equation, $\bZ^{\top}(\Ib_{n-1}-\projmat)$ is the orthogonal complement of $\bZ^{\top} \bm{1}_{n-1}$, which implies independence under the Gaussian distribution assumption, which we will now use to prove the expectation in \eqref{eq:III2_simple} equals zero.
	 Formally, we first have $(\bZ^{\top}(\Ib_{n-1}-\projmat),\bZ^{\top} \bm{1}_{n-1} )$ are multivariate Gaussian. Introducing the vectorization of matrix and the Kronecker product, we can express in the following way: 
	 $$
	 \text{vec}(\bZ^{\top}(\Ib_{n-1}-\projmat)) =(\Ib_{n-1}-\projmat)\otimes  \Ib_{p-1} \text{vec}(\bZ^{\top}),~~
	 \text{vec}(\bZ^{\top}) = \bs{1}_{n-1} \otimes  \Ib_{p-1} \text{vec}(\bZ^{\top}).
	 $$
	 Now we are ready to calculate the covariance
	 \begin{eqnarray}
	    &~~& \nonumber \Cov{\text{vec}(\bZ^{\top}(\Ib_{n-1}-\projmat))}{\text{vec}(\bZ^{\top} \bm{1}_{n-1})} \\ \nonumber
	    &=&( (\Ib_{n-1}-\projmat)\otimes  \Ib_{p-1}) (\Ib_{n-1} \otimes \bSig_0)  (\bs{1}_{n-1} \otimes  \Ib_{p-1})^{\top}  \\ \nonumber
	    &=& ((\Ib_{n-1}-\projmat)  \Ib_{n-1} \bs{1}_{n-1}) \otimes ( \Ib_{p-1} \bSig_0 \Ib_{p-1}) = \bs{0}
	 \end{eqnarray}
	 where in above equalities we use the fact $\Var{\text{vec}(\bZ^{\top})} = \Ib_{n-1} \otimes \bSig_0$ in \eqref{eq:Z_gaussian} and the mixed-product property of the Kronecker product.
	  Therefore
	  \begin{equation}\label{eq:III2_result}
	       \bZ^{\top}(\Ib_{n-1}-\projmat) \independent \bZ^{\top} \bm{1}_{n-1}   ~~\Longrightarrow~~ \mathrm{III}_2 = 0
	  \end{equation}
	  Regarding the term $\mathrm{III}_{3}$, first denote $\bm{\Psi}_1 = \bZ^{\top} \projmat \bZ $ and $\bm{\Psi}_2 = \bZ^{\top} (\Ib_{n-1}-\projmat) \bZ$, we obtain two independent Wishart random variables i.e.
	  $$
	  \bm{\Psi}_1  \sim \calW_{p-1}(\bSig_0, 1),~~ \bm{\Psi}_2  \sim \calW_{p-1}(\bSig_0,  n-2),~~
	  \bm{\Psi}_1 \independent \bm{\Psi}_2
	  . 
	  $$
	  Then $\mathrm{III}_{3}$ can be calculated as below 
	  \begin{eqnarray} \nonumber
	        \mathrm{III}_{3} 
	        &=& \Ec{(\widebar{\bZ_{\noi}} -  \bv_0)(\bZ_{\noi}^{\top} (\Ib_{n-1}-\projmat)  \bZ_{\noi})^{-1}(\widebar{\bZ_{\noi}} -  \bv_0)^{\top}}{\bm{Z}_{i}}  \\ \nonumber
	        &=& \EE{ \bm{1}_{n-1}^{\top} \bZ (\bZ^{\top}(\Ib_{n-1}-\projmat) \bZ)^{-1} \bZ^{\top} \bm{1}_{n-1}  }/(n-1)^2 \\ \nonumber	        
	        &=& \EE{\Tr\left( \bm{1}_{n-1}^{\top} \bZ (\bZ^{\top}(\Ib_{n-1}-\projmat) \bZ)^{-1} \bZ^{\top} \bm{1}_{n-1}
	        \right)}/(n-1)^2 \\ \nonumber
	        &=& \EE{ \Tr (\bm{\Psi}_1  \bm{\Psi}_2^{-1}   )}/(n-1)\\ \nonumber
	        &=&  \Tr \EE{\bm{\Psi}_1 \bm{\Psi}_2^{-1}}/(n-1) \\ \nonumber
	        &=& \Tr ( \EE{\bm{\Psi}_1} \EE{\bm{\Psi}_2^{-1}})/(n-1) \\ \nonumber
	        &=& \Tr(\bSig_0 \frac{\bSig_0^{-1}}{n-p-2} )/(n-1) \\ \label{eq:III3_result}
	        &=& \frac{p}{(n-1)(n-p-2)} 
	  \end{eqnarray}
	  where the first equality is from \eqref{eq:III_3}, the second equality is similarly obtained as \eqref{eq:III2_simple}, the fourth equality holds
        by the fact $\Tr(AB) = \Tr(BA)$ and the definition of $\bm{\Psi}_1$ and $\bm{\Psi}_2$, the sixth equality holds due to $\bm{\Psi}_1 \independent \bm{\Psi}_2$. So far we have shown $\mathrm{III}_2=0$ and figured out the stochastic representation of $\mathrm{III}_2, \mathrm{III}_3$, which are also further simplified using the properties of Wishart and inverse-Wishart random variables. These bring us to the final stage i.e. step \myrom{5}. Combining \eqref{eq:hii_simplebound}, \eqref{eq:III1_result}, \eqref{eq:III2_result} and \eqref{eq:III3_result}, we finally obtain 
	   \begin{eqnarray} \nonumber
	   \Ec{h_{ii}}{Z_i} 
	   &\le &   \Ec{   u_i^{\top} \bA^{-1} u_i }{Z_i} \\ \nonumber
	   &\le& \frac{1}{n-1} + \Ec{\bm{\Xi}}{Z_i} \\ \nonumber
	    &=& \frac{1}{n-1} +   \mathrm{III}_{1}  +  \mathrm{III}_{2} +  \mathrm{III}_{3}
	    \\ \label{eq:hii_given_ithsample}
	   & \le &
	   \frac{1}{n-1}\cdot \frac{n-2}{n-p-2} + \frac{\bm{\Phi}}{n-p-2}  
	   \end{eqnarray}
	   Recall the bound for $ \mathrm{II}(\mu)$ in \eqref{eq:II_result}, then we apply the Cauchy--Schwarz inequality to $ \Ec{\mu^2(W_i)}{\bZ_i}$ and $\Ec{h_{ii}}{\bZ_i}$, which yields
	   \begin{eqnarray}\nonumber
	       \mathrm{II}(\mu) 
	       &\le & 8 ~
	      \Ep{\Xinoj}{
	      \Ep{F}{\mu^2(W_i)}
	      {\Ec{h_{ii}}{\Xinoj}}
	      } \\ \nonumber
	       &\le& \frac{8(n-2)\EE{\mu^2(W_i)}}{(n-1)(n-p-2)} + \frac{8\sqrt{\EE{\bs{\Phi}^2}}}{n-p-2} \sqrt{\Ep{\bZ_i}{\Ec{\mu^{4}(W_i)}{\bZ_i}}} \\
	       &\le &  \frac{8\sqrt{\EE{\mu^4(X,Z)}}}{n-p-2}\left(
	       1 + \sqrt{\EE{\bs{\Phi}^2}}
	       \right )
	      \end{eqnarray}
	   where in the above equality, $\bm{\Phi}\sim \chi_{p-1}^2$ and is independent from $\bZ_{\noi}$.
	   Since $\EE{\bs{\Phi}^2} \le  p^2$, under the assumption $\EE{\mu^4(X,Z)} < \infty$, we obtain the following bound on $\mathrm{II}(\mu)$,
	   \begin{equation}
	       	   \mathrm{II}(\mu) = O\left(\frac{p}{n-p-2}\right).
	       	   %\le \frac{n-2}{(n-1)(n-p-2)} + \frac{p-1}{n-p-2} \le   \frac{p}{n-p-2}
	   \end{equation}
	   Replacing the $\mu$ function by $\mustar$ and applying the assumption $\EE{(\mustar)^4(X,Z)} < \infty$, we can establish the same rate for $\mathrm{II}(\mustar)$.
	   %the terms $|a-b|,|c-d|$ in \eqref{eq:|a-c|}, \eqref{eq:|b-d|}.
	   %\begin{equation}
	   %   \bX_{m\noj} \sim \gauss{\bv_0}{\bSig_0},~~~ \widebar{\bX}_{-i,\noj} \sim \gauss{\bv_0}{\frac{1}{n-1}\bSig_0}
	   %\end{equation}
    % which is also the rate of $\mathrm{II}_{1}$.
    Shifting back to the $n_{2}$ notation, we finally establish \eqref{eq:gap_rate_gaussian}, i.e.
    $$
     \thetamu - \thetamuT = O\left(\frac{p}{n_{2} - p -2}\right).
    $$
    \end{proof}
     	   \subsubsection{Proposition~\ref{prop:gap_rate_DMC}}
         \begin{proof}[Proof of Proposition~\ref{prop:gap_rate_DMC}]\label{pf:prop:gap_rate_DMC}
     From the proposition statement, we know the sufficient statistic $\bT_m$ and $\thetamuT$ are defined based on the batch $\calB_m$ whose sample size is $n_{2}$. Again, we will abbreviate the notation dependence for simplicity, i.e. use a generic $n$ instead of $n_{2}$, use $\bT$ and $\bs{Z}$ instead of $\bT_{m}$ and $\bZ_{m}$, as we did in the proof of Proposition \ref{prop:gap_rate_gaussian}.
    Following the derivations up to \eqref{eq:term_II} in the proof of Proposition \ref{prop:gap_rate_gaussian}, it suffices to deal with the following term:
    $$
    \Pi(\mu):=\Ep{ \bs{Z}}{ \Varp{F}{\mu(W_i)} \Ep{ \Tcondlaw}{ 
        \chi^2(F^{\bT}\|F)}}.
    $$
    where $F$ denotes the conditional distribution $X_{i}|\bs{Z}$ and $F_{\bT}$ denotes the conditional distribution $X_{i}|\bs{Z},\bT$. Below we will consider quantifying the $\chi^2$ divergence between $F_{\bT}$ and $F$, 
    % First denote $q({k,k_{1},k_{2}}) = \PP{X_j=k|X_{j-1}=k_1, X_{j+1} =k_2}$, then we have
    % \[
    % q({k,k_{1},k_{2}}) = \frac{  \Pi^{(j)}_{k_1,k} \Pi^{(j+1)}_{k,k_2}  }{\sum_{k=1}^{K} \Pi^{(j)}_{k_1,k} \Pi^{(j+1)}_{k,k_2}}.
    % \]
    % Writing down the conditional distribution of $\bm{X}$ given $\bm{X}_{\noj}$ compactly as
    % \begin{equation}
    %     \Pc{ \bm{X}}{\bm{X}_{\noj}} =  %\prod_{k=1}^{K}
    %     \prod_{k,k_{1},k_{2} \in [K]} 
    %     (q({k,k_{1},k_{2}}))^{N(k,k_{1},k_{2})}
    % \end{equation}
    % where $N(k,k_{1},k_{2}) = \sum_{i\in[n]} \indicat{X_{i}=k,X_{i,j-1} =k_{1},X_{i,j+1}=k_{1} }$, we immediately have $\{N(k,k_{1},k_{2})\}_{(k,k_{1},k_{2}\in[K] )}$ is sufficient, which will be denoted by $\bT:= \calT(\bm{X}_j, \bm{X}_{\noj})$.
    % $\bm{T}:= \bm{T}(\bm{X}_j| \bm{X}_{\noj})$. 
    % Given $\{X_{i},Z_i\}$ and 
    Let $k_{1},k_{2}$ be $W_{i,j-1},W_{i,j+1}$ respectively, we can write down the probability mass function of $F_{\bT}$ and $F$:
    \begin{eqnarray}
     F:\Pc{X_{i}}{\bs{Z} } 
    % = \Pc{X_{i}}{X_{i,j-1} =k_{1},X_{i,j+1}=k_{2}} 
    = \prod_{k=1}^{K} (q(k,k_{1},k_{2}))^{\indicat{X_{i}=k,W_{i,j-1} =k_{1},W_{i,j+1}=k_{1}}  } \\
     F_{\bT}: \Pc{X_{i}}{\bs{Z},\bm{T}}= \prod_{k=1}^{K} (\hat{q}(k,k_{1},k_{2}))^{\indicat{X_{i}=k,W_{i,j-1} =k_{1},W_{i,j+1}=k_{1}}  }
    \end{eqnarray}
    % \[
    % F:\Pc{X_{i}}{\bs{Z} } 
    % % = \Pc{X_{i}}{X_{i,j-1} =k_{1},X_{i,j+1}=k_{2}} 
    % = \prod_{k=1}^{K} (q(k,k_{1},k_{2}))^{\indicat{X_{i}=k,X_{i,j-1} =k_{1},X_{i,j+1}=k_{1}}  },
    % \]
    % \[
    % F_{\bT}: \Pc{X_{i}}{\bs{Z},\bm{T}}= \prod_{k=1}^{K} (\hat{q}(k,k_{1},k_{2}))^{\indicat{X_{i}=k,X_{i,j-1} =k_{1},X_{i,j+1}=k_{1}}  }
  % \]
    where $\hat{q}(k,k_{1},k_{2})=  N(k,k_{1},k_{2})/N(:,k_{1},k_{2})$ and  $N(:,k_{1},k_{2}) = \sum_{i=1}^{n}\indicat{W_{i,j-1}=k_{1},W_{i,j+1}=k_{2}}$.
    Recall the definition of $\chi^2$ divergence between two discrete distributions, we have
    \[
        \chi^2(F_{\bT}\|F) = \sum_{k=1}^{K}\frac{(\hat{q}(k,k_{1},k_{2}) - q(k,k_{1},k_{2}) )^2}{q(k,k_{1},k_{2})}~
    \]
    Notice that 
	$$
	\Ep{\bm{T}\mid \bs{Z}}{\hat{q}(k,k_{1},k_{2}) } = q(k,k_{1},k_{2}) ,~~~
	\Varp{\bm{T}\mid \bs{Z}}{ \hat{q}(k,k_{1},k_{2})} =\frac{q(k,k_{1},k_{2})(1-q(k,k_{1},k_{2}))}{ N(:,k_{1},k_{2}) }
	$$
    hence we can calculate the following conditional expectation,
    % Taking the expectation with respect to $\bm{T}$, conditional on $\bs{Z}$, we have
    \begin{eqnarray}\nonumber
    \Ep{\bm{T}\mid \bs{Z}}{ \chi^2(F_{\bT}\|F)}
    &=&  \sum_{k=1}^{K} \Ep{\bm{T}\mid \bs{Z}}{
    \frac{(\hat{q}(k,k_{1},k_{2}) - q(k,k_{1},k_{2}) )^2}{q(k,k_{1},k_{2})}
    } \\ \nonumber
    &=& \sum_{k=1}^{K} \frac{q(k,k_{1},k_{2})(1-  q(k,k_{1},k_{2}))}{
     N(:,k_{1},k_{2})q(k,k_{1},k_{2})}\\
    &=&
    \sum_{k=1}^{K} \frac{K-1}{ N(:,k_{1},k_{2}) } 
    \label{eq:DMC_expected_chisq}
    \end{eqnarray}
    where we use the fact $\sum_{k=1}^{K}q(k,k_{1},k_{2}) = 1$ in the last equality.
% 	since
% 	$$
% 	\Ep{\bm{T}\mid \bs{Z}}{\hat{q}(k,k_{1},k_{2}) } = q(k,k_{1},k_{2}) ,~~~
% 	\Varp{\bm{T}\mid \bs{Z}}{ \hat{q}(k,k_{1},k_{2})} =\frac{q(k,k_{1},k_{2})(1-q(k,k_{1},k_{2}))}{ N(:,k_{1},k_{2}) }
% 	$$
% 	Following the derivations in the proof of \ref{prop:gap_rate_gaussian}, it suffices to deal with the following term    
    Now $\Pi(\mu)$ can be calculated as below.
		\begin{eqnarray}\nonumber
	   % \mathbb{E}[\mathrm{Var}_{G}(\mu) \mathbb{E}_{\bm{T}|\bs{Z}}
	  	% \chi^{2}(F\|G)]  
	  	\Pi(\mu)
	  	&=&
	  	\Ep{ \bs{Z}}{ \Varp{F}{\mu(W_i)}  \Ep{ \Tcondlaw}{ 
         \chi^2(F_{\bT}\|F)}}	\\ \nonumber
	  	&= &\Ep{Z_i}{  \Varp{F}{\mu(W_i)} \Ec{
	  	 \Ep{\bm{T}\mid \bs{Z}}{ \chi^2(F_{\bT}\|F)}
	  	}{Z_i} }\\ \nonumber
	  	&=& 
	  	\Ep{Z_i}{  \Varp{F}{\mu(W_i)}  \Ec{
	  	 \frac{K-1}{N(:,W_{i,j-1},W_{i,j+1})}
	  	}{Z_i} } \\  \label{eq:Pi_simple}
	  	&=& 
	  	\Ep{Z_i}{ \Varp{F}{\mu(W_i)} \Ec{
	  	 \frac{K-1}{1+ N_{n-1}(W_{i,j-1},W_{i,j+1})}}{Z_i}
	  	} 
	\end{eqnarray}
% 		\begin{eqnarray}
% 	    \mathbb{E}[\mathrm{Var}_{G}(\mu) \mathbb{E}_{\bm{T}|\bs{Z}}
% 	  	\chi^{2}(F\|G)] 
% 	  	&\le & \sqrt{\EE{(\Varp{G}{\mu})^2}}\sqrt{\EE{
% 	  	 (\Ep{\bm{T}\mid \bs{Z}}{ \chi^2(F_{\bT}\|F)}})^2 } \\
% 	  	\\
% 	  	&\le& \sqrt{C_{0}} \sqrt{ \Ep{{Z_i}}{\Ec{
% 	  	 (\Ep{\bm{T}\mid \bs{Z}}{ \chi^2(F_{\bT}\|F))^2 }
% 	  	}{Z_i} }}
% 	  	\\ 
% 	  	&\le & 
% 	   \sqrt{C_{0}}K\sqrt{ \Ep{{Z_i}}{\Ec{
% 	  	\frac{1}{(N(:,X_{i,j-1},X_{i,j+1}))^2}
% 	  	}{Z_i} }}
% 	\end{eqnarray}
    where the second equality comes from the tower property of conditional expectation,
    the third equality holds due to \eqref{eq:DMC_expected_chisq} and $k_1 = W_{i,j-1}, k_2 = W_{i,j+1}$. In term of the fourth equality, we simply use the new notation that $N_{n-1}(W_{i,j-1},W_{i,j+1}) =\sum_{m\ne i}^{n} \indicat{W_{m,j-1}= W_{i,j-1},W_{m,j+1}=W_{i,j+1}}$. Due to the independence among $i.i.d.$ samples $\{W_i\}\nsubp$, we have, when conditioning on $\Xinoj =W_{i,\noj}$
    \[
    \indicat{W_{m,j-1}= W_{i,j-1},~W_{m,j+1}=W_{i,j+1}} \stackrel{i.i.d.}{\sim} \text{Bern}(
    q(W_{i,j-1}, W_{i,j+1})
    ),~~~~ m \in [n],~ m\ne i.
    \]
    where $q(W_{i,j-1}, W_{i,j+1}) = \Pc{ W_{j-1} = W_{i,j-1} ,W_{j+1} = W_{i,j+1}}{\Xinoj}$. Given a binomial random variable $B\sim \text{Bin}(n,q)$, we have the following fact by elementary calculus,
    % $$
    % \EE{\frac{1}{(1+B)^2}} \le \EE{\frac{2}{(1+B)(2+B)}} = \frac{1}{(n+1)(n+2)q^2} \cdot (1- (1-q)^{n+2} - q(1-q)^{n+1}).
    % $$ 
    \begin{equation}\label{eq:Binomial_fact}
    \EE{\frac{1}{1+B}} = \frac{1}{(n+1)q} \cdot (1- (1-q)^{n+1}).
    \end{equation} 
    hence we can bound the term $\Pi(\mu)$ as below
    \begin{eqnarray}
    \Pi(\mu)
    &=& \frac{K-1}{n}\Ep{Z_i}{ 
       \Varp{F}{\mu(W_i)} \frac{1- (1-q(W_{i,j-1}, W_{i,j+1}))^n}{q(W_{i,j-1}, W_{i,j+1}) } 
    } \\
    &\le & \frac{K-1}{n} \Ep{Z_i}{ \Varp{F}{\mu(W_i)} } \frac{K^2}{ K^2 \min \{q(k_{1},k_{2})\} } \\
    &\le&  \frac{K^3}{n} \frac{\EE{\mu^2(X,Z)}}{q_{0}}
    % \preceq \frac{K^3}{n}.
    \end{eqnarray}
    where the equality holds as a result of \eqref{eq:Pi_simple} and \eqref{eq:Binomial_fact}. And in the second line, we lower bound $q(W_{i,j-1}, W_{i,j+1})$ by $ \min \{q(k_{1},k_{2})\} $. Assuming $K^2\min\{ 
        \PP{W_{j-1}=k_{1},W_{j+1}=k_{2}}\}_{k_{1},k_{2}\in [K]}
        \} \ge q_{0}>0$ gives us the third line. Then we can establish $ \Pi(\mu)= O\left(\frac{K^3}{n}\right)$ (and similarly for $ \Pi(\mustar)$) under the stated moment condition $\EE{(\mu)^2(X,Z)}, \EE{(\mustar)^2(X,Z)}< \infty$. Finally, making use of the rate result about $\Pi(\mu),\Pi(\mustar)$ and following the same derivation as in Proposition \ref{prop:gap_rate_gaussian}, we have $ \thetamu - \thetamuT  = O\left(\frac{K^3}{n_{2}}\right)$, where we shift back to the $n_{2}$ notation.
            \end{proof}
    \subsubsection{Ancillary lemmas}
    Lemma \ref{lem:chisq_div_gaussians} can be similarly derived as the expression for the R\'{e}nyi divergence between two multivariate Gaussian distributions in Section 2.2.4 of \citet{gil2011renyi}. For completeness, we still present our proof below.
    \begin{lemma}\label{lem:chisq_div_gaussians}
	  The $\chi^{2}$-divergence between $P:\gauss{\bm{a}_1}{\Sigma_1}$ and $Q:\gauss{\bm{a}_2}{\Sigma_2}$ equals the following whenever $2 \Sigma_2 - \Sigma_1 \succ 0$:
	  \begin{equation}\nonumber
	  \frac{|\Sigma_{2}|}{|\Sigma_1|^{\frac{1}{2}}  | 2 \Sigma_2 - \Sigma_1|^{\frac{1}{2}} }\exp\left\{
	  (\bm{a}_1 - \bm{a}_2 )^{\top} ( 2 \Sigma_2 - \Sigma_1 )^{-1} (\bm{a}_1- \bm{a}_2)
	  \right\} -1 .
	  \end{equation}
	  where $\bs{a}_1, \bs{a}_2  \in \mathbb{R}^d$, $\Sigma_1, \Sigma_2 \in \mathbb{R}^{d \times d}$, $\Sigma \succ 0$ means a matrix $\Sigma$ is positive definite and $|\Sigma|$ denotes its determinant.
	  \end{lemma}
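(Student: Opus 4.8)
The plan is to compute the $\chi^{2}$-divergence directly from its definition $\chi^{2}(P\|Q)=\int(dP/dQ-1)^{2}\,dQ=\int \frac{p(x)^{2}}{q(x)}\,dx-1$, where $p,q$ are the two Gaussian densities, and then match the resulting Gaussian integral to the claimed closed form.

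First I would substitute $p(x)=(2\pi)^{-d/2}|\Sigma_{1}|^{-1/2}\exp(-\tfrac12(x-\bm{a}_{1})^{\top}\Sigma_{1}^{-1}(x-\bm{a}_{1}))$ and similarly for $q$, so that $p(x)^{2}/q(x)=(2\pi)^{-d/2}\,\frac{|\Sigma_{2}|^{1/2}}{|\Sigma_{1}|}\exp(Q(x))$ with $Q(x)=-(x-\bm{a}_{1})^{\top}\Sigma_{1}^{-1}(x-\bm{a}_{1})+\tfrac12(x-\bm{a}_{2})^{\top}\Sigma_{2}^{-1}(x-\bm{a}_{2})$. This is a quadratic form whose Hessian is $-2A$ with $A:=\Sigma_{1}^{-1}-\tfrac12\Sigma_{2}^{-1}$. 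Since $A=\tfrac12\Sigma_{1}^{-1}(2\Sigma_{2}-\Sigma_{1})\Sigma_{2}^{-1}$, one has $A\succ0$ if and only if $2\Sigma_{2}-\Sigma_{1}\succ0$ (the Loewner order is reversed under inversion), so the stated hypothesis is precisely the condition under which the integral converges.

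Next I would complete the square, writing $Q(x)=-(x-A^{-1}b)^{\top}A(x-A^{-1}b)+b^{\top}A^{-1}b-c_{0}$ with $b:=\Sigma_{1}^{-1}\bm{a}_{1}-\tfrac12\Sigma_{2}^{-1}\bm{a}_{2}$ and $c_{0}:=\bm{a}_{1}^{\top}\Sigma_{1}^{-1}\bm{a}_{1}-\tfrac12\bm{a}_{2}^{\top}\Sigma_{2}^{-1}\bm{a}_{2}$, and evaluate $\int\exp(-(x-\mu)^{\top}A(x-\mu))\,dx=\pi^{d/2}|A|^{-1/2}$. Using the factorization of $A$, $|A|^{-1/2}=2^{d/2}|\Sigma_{1}|^{1/2}|\Sigma_{2}|^{1/2}|2\Sigma_{2}-\Sigma_{1}|^{-1/2}$, and the prefactor collapses to exactly $|\Sigma_{2}|\big/\big(|\Sigma_{1}|^{1/2}|2\Sigma_{2}-\Sigma_{1}|^{1/2}\big)$. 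It then remains to identify the leftover exponent $b^{\top}A^{-1}b-c_{0}$ with $(\bm{a}_{1}-\bm{a}_{2})^{\top}(2\Sigma_{2}-\Sigma_{1})^{-1}(\bm{a}_{1}-\bm{a}_{2})$; I would do this by writing $b=\Sigma_{1}^{-1}(\bm{a}_{1}-\bm{a}_{2})+A\bm{a}_{2}$, expanding $b^{\top}A^{-1}b$, and cancelling against $c_{0}$, which after elementary manipulation reduces to the matrix identity $R+R(P-R)^{-1}R=(R^{-1}-P^{-1})^{-1}$ with $P=\Sigma_{1}^{-1}$, $R=\tfrac12\Sigma_{2}^{-1}$ (equivalently a Woodbury identity), so that $R^{-1}-P^{-1}=2\Sigma_{2}-\Sigma_{1}$. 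Subtracting $1$ yields the claimed formula, and agreement with the R\'enyi-$2$ divergence expression of \citet{gil2011renyi} provides an independent sanity check.

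The only even mildly delicate point is the linear-algebra bookkeeping in this last step (together with verifying the equivalence $A\succ0\iff2\Sigma_{2}-\Sigma_{1}\succ0$); everything else is a routine multivariate Gaussian integral. Since $P$ and $R$ need not commute, one should be careful to symmetrize, e.g.\ using the identity $P(P-R)^{-1}R=R(P-R)^{-1}P=R+R(P-R)^{-1}R$, which is manifestly symmetric; with this observation no genuine obstacle arises.
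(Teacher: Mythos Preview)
Your proposal is correct and follows essentially the same approach as the paper: write $\chi^{2}(P\|Q)=\int p^{2}/q\,dx-1$, complete the square in the resulting quadratic exponent, evaluate the Gaussian integral, and simplify the determinant and exponent via matrix identities. The paper parametrizes with $\Sigma_{\star}^{-1}=2\Sigma_{1}^{-1}-\Sigma_{2}^{-1}$ (your $2A$) and expands the exponent into three terms before recombining, whereas your decomposition $b=\Sigma_{1}^{-1}(\bm a_{1}-\bm a_{2})+A\bm a_{2}$ together with the Woodbury-type identity is a slightly more streamlined bookkeeping of the same computation; neither difference is substantive.
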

	  \begin{proof}[Proof of Lemma \ref{lem:chisq_div_gaussians}]
	  \label{pf:lem:chisq_div_gaussians}
	  According to the definition of the $\chi^{2}$-divergence, we have
	  \begin{equation}\label{eq:chi_sq_def}
	      \chi^2 (P\|Q) := \int \left( 
	      \frac{dP}{dQ} 
	      \right)^2 dQ -1  = \int \frac{p^2(x)}{q(x)} dx -1 ,
	  \end{equation}
	  where $p(x),q(x)$ are the Gaussian density functions. For multivariate Gaussian random variable with mean $\bs{a} \in \mathbb{R}^d$ and covariance matrix $\Sigma \in \mathbb{R}^{d \times d}$, the density function equals the following
	  \begin{equation}
	      f(x) = \frac{1}{(2\pi)^{\frac{d}{2}} |\Sigma|^{\frac{1}{2}} } \exp\left\{-\frac{1}{2}(x -
	  \bm{a})^{\top} \Sigma^{-1}(x -
	  \bm{a})
	  \right\},~~~ x \in \mathbb{R}^d.
	  \end{equation}
	 Hence we can calculate the $\chi^{2}$-divergence as below,
	 \begin{align} \nonumber
	  \chi^2 (P\|Q) 
	   &=  \frac{|\Sigma_2|^{\frac{1}{2}} }{ |\Sigma_1|}\int_{\mathbb{R}^d} \frac{1}{(2\pi)^{\frac{d}{2}}} \exp \left\{-\frac{1}{2}(x -
	  \bm{a}_1)^{\top}  (2\Sigma_1^{-1})(x -
	  \bm{a}_1) + \frac{1}{2}(x -
	  \bm{a}_2)^{\top}  \Sigma_2^{-1}(x -
	  \bm{a}_2)
	  \right\} dx -1  \\ \label{eq:chi_sq_expand}
	  &:= \frac{|\Sigma_2|^{\frac{1}{2}} }{ |\Sigma_1|}\int_{\mathbb{R}^d} \frac{1}{(2\pi)^{\frac{d}{2}}} \exp \left\{
     \mathrm{II}_1 + \mathrm{II}_2 + 	\mathrm{II}_3
	  \right\} dx -1, 
	 \end{align}
	  where the first equality holds following the definition in \eqref{eq:chi_sq_def} and the second equality comes from expanding the term in the exponent and combining, together with the following new notations:
	\begin{eqnarray} \label{eq:chi_sq_termII1}
	\mathrm{II}_1 &:=& -\frac{1}{2} x^{\top}(2 \Sigma_1^{-1} - \Sigma_2^{-1})x \\  \label{eq:chi_sq_termII2}
	\mathrm{II}_2 &:=& -\frac{1}{2}\cdot (-2 x^{\top}) (2 \Sigma_1^{-1} \bs{a}_1 -  \Sigma_2^{-1} \bs{a}_2   ) \\  \label{eq:chi_sq_termII3}
    \mathrm{II}_3 &:=& -\frac{1}{2}(2\bs{a}_1^{\top} \Sigma_1^{-1} \bs{a}_1  - \bs{a}_2 \Sigma_2^{-1} \bs{a}_2  ) 
	 \end{eqnarray}  
	 Let $\Sigma_{\star}^{-1}= 2 \Sigma_1^{-1} - \Sigma_2^{-1} $, $ \Sigma_{\star}^{-1}\bs{a}_{\star}= 2 \Sigma_1^{-1} \bs{a}_1 -  \Sigma_2^{-1} \bs{a}_2$ (since we assume the positive definiteness of $2 \Sigma_2 - \Sigma_1$, which implies $2 \Sigma_1^{-1} - \Sigma_2^{-1} \succ 0 $, hence $\Sigma_{\star}$ and $\bs{a}_{\star}$ are well-defined), then we have
	 \begin{align} 
	 \label{eq:term1_Sigma_star}
	  (\Sigma_1^{-1}  \Sigma_{\star} \Sigma_2^{-1})^{-1} &= \Sigma_2  \Sigma_{\star}^{-1} \Sigma_1 =  2 \Sigma_2 - \Sigma_1  \\ \label{eq:term2_Sigma_star}
	  2  \Sigma_{\star} \Sigma_1^{-1} - \mathrm{I}_{d} &=  \Sigma_{\star}(2 \Sigma_1^{-1} -  \Sigma_{\star}^{-1} ) =
	  \Sigma_{\star} \Sigma_2^{-1} \\ \nonumber
	  \frac{1}{2} \bs{a}_{\star}^{\top} \Sigma_{\star}^{-1}\bs{a}_{\star} 
	   &= \frac{1}{2} (2 \Sigma_1^{-1} \bs{a}_1 -  \Sigma_2^{-1} \bs{a}_2)^{\top}  \Sigma_{\star} (2 \Sigma_1^{-1} \bs{a}_1 -  \Sigma_2^{-1} \bs{a}_2)  \\ \nonumber
	   &=  2\bs{a}_1^{\top} \Sigma_1^{-1} \Sigma_{\star}  \Sigma_1^{-1}  \bs{a}_1 - 2\bs{a}_1^{\top} \Sigma_1^{-1} \Sigma_{\star}  \Sigma_2^{-1}  \bs{a}_2    + \frac{1}{2}\bs{a}_2^{\top} \Sigma_2^{-1} \Sigma_{\star}  \Sigma_2^{-1}  \bs{a}_2  \\ \label{eq:term3_Sigma_star}
	   &= 2\bs{a}_1^{\top} \Sigma_1^{-1} \Sigma_{\star}  \Sigma_1^{-1}  \bs{a}_1 - 
	   2\bs{a}_1^{\top} ( 2 \Sigma_2 - \Sigma_1 )^{-1}  \bs{a}_2   
	   + \frac{1}{2}\bs{a}_2^{\top} \Sigma_2^{-1} \Sigma_{\star}  \Sigma_2^{-1}  \bs{a}_2 
	 \end{align}
	 where the first and the second line hold by the definition of $\Sigma_{\star}$, the second equality holds since $\Sigma_{\star}^{-1} = \Sigma_{\star}^{-1} \Sigma_{\star} \Sigma_{\star}^{-1} $, the third line is simply from expanding and the last equality comes from \eqref{eq:term1_Sigma_star}. The above equations will be used a lot for the incoming derivations. Now the term in the exponent can be written as 
	 \begin{align} \nonumber
	 &~~ \mathrm{II}_1  + \mathrm{II}_2 + \mathrm{II}_3 \\ \nonumber
	 &=
	  -\frac{1}{2}(x^{\top}  \Sigma_{\star}^{-1} x - 2x^{\top}    \Sigma_{\star}^{-1}\bs{a}_{\star}) + \mathrm{II}_{3}  \\ \nonumber
	 &=  -\frac{1}{2}(x - \bs{a}_{\star})^{\top} \Sigma_{\star}^{-1}(x - \bs{a}_{\star}) + \frac{1}{2}  \bs{a}_{\star}^{\top} \Sigma_{\star}^{-1} \bs{a}_{\star}  -\frac{1}{2}(2\bs{a}_1^{\top} \Sigma_1^{-1} \bs{a}_1  - \bs{a}_2 \Sigma_2^{-1} \bs{a}_2  )   \\ \nonumber
	 &= \lambda(x) + \bs{a}_1^{\top} \Sigma_1^{-1} (2 \Sigma_{\star}  \Sigma_1^{-1} -  \mathrm{I}_{d} ) \bs{a}_1 -
	  2\bs{a}_1^{\top} ( 2 \Sigma_2 - \Sigma_1 )^{-1}  \bs{a}_2  +
	  \frac{1}{2}\bs{a}_2^{\top} \Sigma_2^{-1}  (  \Sigma_{\star}  \Sigma_2^{-1} +   \mathrm{I}_{d})\bs{a}_2\\ \nonumber
	  &= \lambda(x) + \bs{a}_1^{\top} \Sigma_1^{-1} \Sigma_{\star} \Sigma_2^{-1}  \bs{a}_1 - 
	    2\bs{a}_1^{\top} ( 2 \Sigma_2 - \Sigma_1 )^{-1}  \bs{a}_2  +
	   \bs{a}_2^{\top} \Sigma_2^{-1}  \Sigma_{\star} \Sigma_1^{-1}  \bs{a}_2 \\ \nonumber
	  &= \lambda(x) + \bs{a}_1^{\top} ( 2 \Sigma_2 - \Sigma_1)^{-1} \bs{a}_1 - 
	    2\bs{a}_1^{\top} ( 2 \Sigma_2 - \Sigma_1 )^{-1}  \bs{a}_2  +
	   \bs{a}_2^{\top}  ( 2 \Sigma_2 - \Sigma_1)^{-1}  \bs{a}_2 \\ \label{eq:exponent_simple}
	   &= \lambda(x) + ( \bs{a}_1- \bs{a}_2)^{\top}  ( 2 \Sigma_2 - \Sigma_1 )^{-1} ( \bs{a}_1- \bs{a}_2):= \lambda(x) + Q( \bs{a}_1, \bs{a}_2,  \Sigma_1,  \Sigma_2 )
	 \end{align}	 
	 where the first equality holds by the definition of $\Sigma_{\star}$, $\bs{a}_{\star}$ and \eqref{eq:chi_sq_termII1}, \eqref{eq:chi_sq_termII2}, and the second equality holds due to \eqref{eq:chi_sq_termII3}. Regarding the third equality, we denote the term which depends on $x$ by $\lambda(x):= -\frac{1}{2}(x - \bs{a}_{\star})^{\top} \Sigma_{\star}^{-1}(x - \bs{a}_{\star})$. As for the other constant terms in the third line, we simply combine \eqref{eq:term3_Sigma_star} with the expansion of the term $\mathrm{II}_{3}$ and rearrange them into three terms: $ \bs{a}_1^{\top}(\cdot) \bs{a}_1 $, $ \bs{a}_1^{\top}(\cdot) \bs{a}_2 $ and $ \bs{a}_2^{\top}(\cdot) \bs{a}_2 $. The fourth equality holds as a result of applying \eqref{eq:term2_Sigma_star} twice and the last equality is simply from rearranging. Since only the term $\lambda(x)$ depends on $x$, we can simplify the $\chi^{2}$-divergence into the following
	 \begin{eqnarray} \nonumber
	 \chi^2 (P\|Q) 
	   &=&  \frac{|\Sigma_2|^{\frac{1}{2}} }{ |\Sigma_1|} 
	   \exp\left\{ 
	  Q( \bs{a}_1, \bs{a}_2,  \Sigma_1,  \Sigma_2 )
	   \right\}  \int_{\mathbb{R}^d} \frac{1}{(2\pi)^{\frac{d}{2}}} 
	   \exp \left\{
	   \lambda(x)
	  \right\} dx -1 \\ \nonumber
	  &=& \frac{|\Sigma_2|^{\frac{1}{2}} }{ |\Sigma_1|} \exp\left\{
      Q( \bs{a}_1, \bs{a}_2,  \Sigma_1,  \Sigma_2 )
	  \right\} 
	  \int_{\mathbb{R}^d} \frac{|\Sigma_{\star}|^{\frac{1}{2}} }{(2\pi)^{\frac{d}{2}} |\Sigma_{\star}|^{\frac{1}{2}} } 
	   \exp 
	   \left\{
	  \lambda(x)
	  \right\} dx -1 \\ \nonumber
	  &=& \frac{|\Sigma_2|^{\frac{1}{2}} }{ |\Sigma_1|} |\Sigma_{\star}|^{\frac{1}{2}} \exp\left\{
        Q( \bs{a}_1, \bs{a}_2,  \Sigma_1,  \Sigma_2 )
	  \right\}  -1 \\ \nonumber
	   &=& \frac{|\Sigma_2| }{ |\Sigma_1|^{\frac{1}{2}}} | \Sigma_1^{-1}  \Sigma_{\star} \Sigma_2^{-1} |^{\frac{1}{2}}  \exp\left\{
	 Q( \bs{a}_1, \bs{a}_2,  \Sigma_1,  \Sigma_2 )
	   \right\}  -1 \\  \nonumber
	   &=&   \frac{|\Sigma_{2}|}{|\Sigma_1|^{\frac{1}{2}}  | 2 \Sigma_2 - \Sigma_1|^{\frac{1}{2}} } \exp\left\{
	  (\bm{a}_1 - \bm{a}_2 )^{\top} ( 2 \Sigma_2 - \Sigma_1 )^{-1} (\bm{a}_1- \bm{a}_2)
	  \right\} -1 
	 \end{eqnarray}	 
	 where the first equality comes from \eqref{eq:chi_sq_expand} and \eqref{eq:exponent_simple}, the third equality holds due to the definition of $\lambda(x)$ and the fact that $\int f(x)dx =1$, where $f(x)$ is the Gaussian density function with the mean $ \bs{a}_{\star}$ and covariance matrix $\Sigma_{\star}$), the fourth equality holds by making use of the properties of determinant and the last equality holds as a result of \eqref{eq:term1_Sigma_star}. 
	 \end{proof}

\section{Further simulation details}
Source code for conducting floodgate in our simulation studies can be found at \url{https://github.com/LuZhangH/floodgate}.
\subsection{Nonlinear model setup}
    \label{app:sim:nonlinear}
        % Our method essentially makes no assumption about the conditional model except for some moment bounds. It is of great interest how it performs under general nonlinear models. In this section, we consider $X$ follows a Gaussian copula distribution, i.e.
    Consider $W$ which follows a Gaussian copula distribution with $X=W_{j_0}, Z = W_{\text{-}j_0}$ for some $j_0$ ($1\le j_0 \le p$), i.e.,
    \begin{equation}
        W^\text{latent}\sim AR(1),~ W_j = 2\varphi(X_j^{\text{latent}}) -1, ~~\forall~1\le j \le p. 
    \end{equation}
    Hence the marginal distribution for $W_j$ is $\text{Unif}[-1,1]$ (in fact, these are the inputs to the fitting methods we use in floodgate, not the AR(1) latent variables $W^{\text{latent}}$). We consider the following conditional model for $Y$ given $W$, with standard Gaussian noise,
    \begin{equation}\label{eq:nonlinear_mustar}
        \mustar(x,z)=\mustar(w):=\sum_{j \in S^1}g_j(w_{j}) + \sum_{(j,l)\in S^{2}} g_j(w_j)g_l(w_l) + \sum_{(j,l,m)\in S^{3}} g_j(w_j)g_l(w_l)g_m(w_m)
    \end{equation}
    where each function $g_j(x)$ is randomly chosen from the following: 
    \begin{eqnarray}
        ~~~~~~~~\sin(\pi x),\cos(\pi x),\sin(\pi x/2),\cos(\pi x)I(x>0),x\sin(\pi x),x,|x|,x^2, x^3,\exp(x)-1.
    \end{eqnarray}
    $S^{1}$ basically contains the main effect terms, while $S^{2}$ contain the pairs of variables with first order interactions. Tuples of variables involving second order interaction are denoted by $S^{3}$.  For a given amplitude, \eqref{eq:nonlinear_mustar} is scaled by the amplitude value divided by $\sqrt{n}$.
    
    Now we describe the construction of $S^{1},S^{2},S^{3}$. First we randomly pick $30$ variables into $S_{\star}$ and initialize $S_{\text{wl}}=S_{\star}$. $15$ of them will be randomly assigned into $S^{1}$ and removed from $S_{\text{wl}}$. Among these $15$ variables in $S^{1}$, we further choose $10$ variables into $5$ pairs randomly, which will be included in $S^{2}$. Regarding the other pairs in $S^{2}$, each time we randomly pick $2$ variables from $S_{\star}$ with the unscaled weight being $2|S_{\text{wl}}|/|S_{\star}|$ for variables in $S_{\text{wl}}$, $|S_{\star}\setminus S_{\text{wl}}|/|S_{\star}|$ for the others, then add them as a pair into $S^{2}$. Once picked, the variables will be removed from $S_{\text{wl}}$. This process iterates until $|S_{\text{wl}}|\le 5$. Regarding the construction of $S^{3}$, each time we randomly pick $3$ variables from $S_{\star}$ with the unscaled weight being $1.5|S_{\text{wl}}|/|S_{\star}|$ for variables in $S_{\text{wl}}$, $|S_{\star}\setminus S_{\text{wl}}|/|S_{\star}|$ for the others, then add them as a tuple into $S^{3}$. Once picked, the variables will be removed from $S_{\text{wl}}$. This process iterates until $|S_{\text{wl}}|=0$.

    \subsection{Implementation details of fitting algorithms}
     \label{app:sim:algorithms}
    Regarding how to obtain the \revision{working regression function}, there will be four different fitting algorithms for non-binary responses: 
    \begin{itemize}
        \item \textit{LASSO}: We fit a linear model by 10-fold cross-validated LASSO and output a \revision{working regression function}. The subsequent inference step will be quite fast. First, as implied by Algorithm \ref{alg:MOCK}, $L_{n}^{\alpha} (\mu)$ will be set to zero for unselected variables, without any computation. Second, as alluded to in Section~\ref{sec:computation}, we can analytically compute the conditional quantities in Algorithm \ref{alg:MOCK}.
        \item \textit{Ridge}: We again use 10-fold cross-validation to choose the penalty parameter for Ridge regression. It is also fast to perform floodgate on, due to the second point mentioned above.
        \item \textit{SAM}: We consider additive modelling, for example the sparse additive models (SAM) proposed in \citet{ravikumar2009sparse}. As suggested by the name, it carries out sparse penalization and our method will assign $L_{n}^{\alpha} (\mu)=0$ to unselected variables, as in \textit{lasso}.
        \item \textit{Random Forest}: Random forest \citep{breiman2001random} is included as a purely nonlinear machine learning algorithm. While random forest do not generally conduct variable selection, we rank variables based on the heuristic importance measure and use the top $50$ variables to run Algorithm \ref{alg:MOCK} and set $L_{n}^{\alpha} (\mu)=0$ for the remaining ones. Remark this is only for the concern of speed and does not have any negative impact on the inferential validity.
    \end{itemize}
    There are two additional fitting algorithms for binary responses: logistic regression with L1 regularization and L2 regularization, denoted by \textit{Binom\_LASSO} and \textit{Binom\_Ridge} respectively. Both use 10-fold cross-validation to choose the penalty parameter. 
    
    \subsection{Implementation details of ordinary least squares}
    \label{app:sim:ols}
    When the conditional model of $Y\mid X,Z$ is linear, i.e., $\condmean = X\beta + Z\theta$ with $(\beta,\theta)\in \mathbb{R}^{p}$ the coefficients, the mMSE gap for $X$ is closely related to its linear coefficient, formally
    \begin{equation}\nonumber
            \Ij = \left|\beta\right| \sqrt{\ee{\varc{X}{Z}}}.
    \end{equation}
    When the sample size $n$ is greater than the number of variables $p$, ordinary least squares (OLS) can provide valid confidence intervals for $\beta$. 
    % As for the factor $\sqrt{\ee{\varc{\Xj}{\Xnoj}}}$, it is straightforward to scale up the OLS confidence intervals. 
    However, there does not seem to exist a non-conservative way to transform the OLS confidence interval for $\beta$ into a confidence bound for $|\beta|$. So instead, we provide OLS with further oracle information: the sign of $\beta$ (we only compare half-widths of non-null covariates, and hence never construct OLS LCBs when $\beta=0$). In particular, if [LCI, UCI] denotes a standard OLS 2-sided, equal-tailed $1-2\alpha$ confidence interval for $\beta$, then the OLS LCB for $\Ij$ we use is
    \begin{equation}\label{eq:ols_transform}
    % [\text{LCI},\text{UCI}]  \rightarrow 
    \text{LCB}_{\text{OLS}} = \left\{\begin{array}{rl} \text{LCI}\sqrt{\ee{\varc{\Xj}{\Xnoj}}} & \text{ if }\beta>0\\
    -\text{UCI}\sqrt{\ee{\varc{\Xj}{\Xnoj}}} & \text{ if }\beta<0 \end{array}\right.
    % \max\{\text{LCI},-\text{UCI},0\} \sqrt{\ee{\varc{\Xj}{\Xnoj}}},
    \end{equation}
    which guarantees exact $1-\alpha$ coverage of $\Ij$ for any nonzero value of $\beta$. We again emphasize that, in order to construct this interval, OLS uses the oracle information of the sign of $\beta$ (this information is not available to floodgate in our simulations).
    % with probability at least $1-\alpha$ for any $\beta$. It has coverage exactly $1-\alpha$ when $\beta=0$ although its coverage approaches $1-\alpha/2$ as $|\beta|\rightarrow\infty$, but we were not able to come up with a way to use OLS to construct a LCB for $\Ij$ with exactly $1-\alpha$ coverage for all $\beta$. 
    % While our method works when $n<p$, our competitor does require $n>p$, hence we choose $n=1100,~p=1000$ to run the following simulations.

\subsection{Plots deferred from the main paper}
\label{app:sim:cover}
\subsubsection{Effect of sample splitting proportion}
{The corresponding coverage plots of Figure~\ref{fig:split} are given in Figure~\ref{fig:split_cover}. Figures~\ref{fig:split_app} and \ref{fig:split_app_cover} are additional plots with different simulation parameters specified in the captions. Figures~\ref{fig:split_cover} and \ref{fig:split_app_cover} show that in the simulations in Section~\ref{sec:simul_split}, the coverage of floodgate is consistently at or above the nominal 95\% level.}
    \begin{figure}[tb]
        \centering
        \includegraphics[width = 1\linewidth]{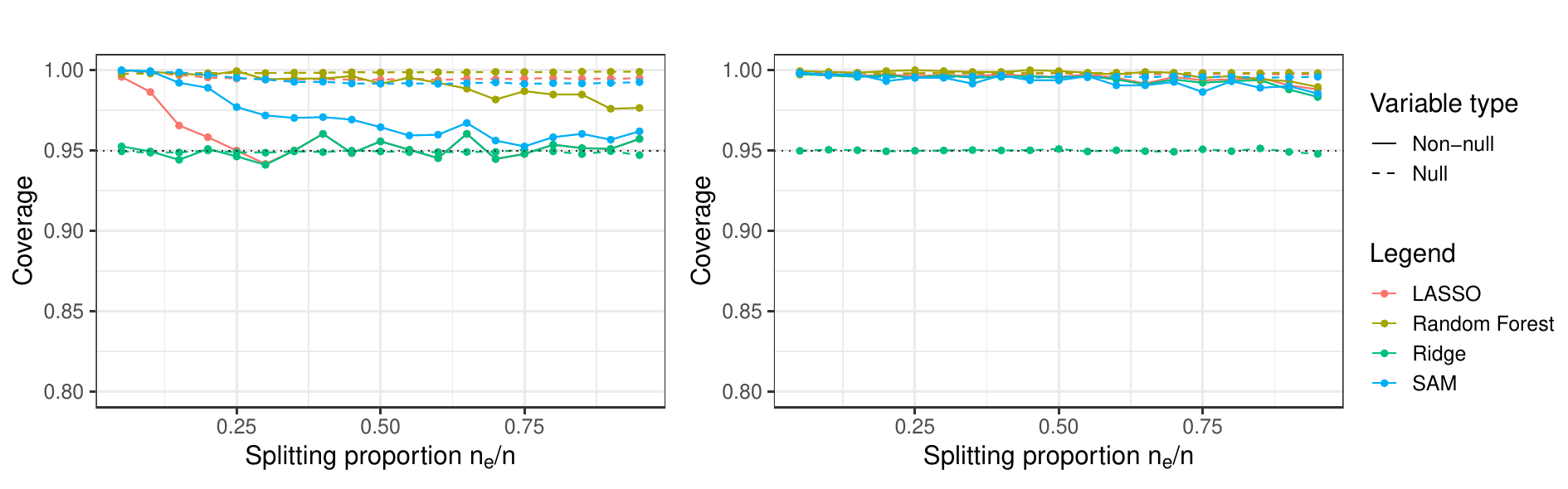}
        \caption{Coverage for the the linear-$\mustar$ (left) and nonlinear-$\mustar$ (right) simulations of Section~\ref{sec:simul_split}. The coefficient amplitude is chosen to be 10 for the left panel and the sample size $n$ equals 3000 in the right panel; see Section~\ref{sec:simul_setup} for remaining details. Standard errors are below 0.007 (left) and 0.003 (right).
        }
        \label{fig:split_cover}
    \end{figure}
    
 \begin{figure}[tb]
        \centering
        \includegraphics[width = 1\linewidth]{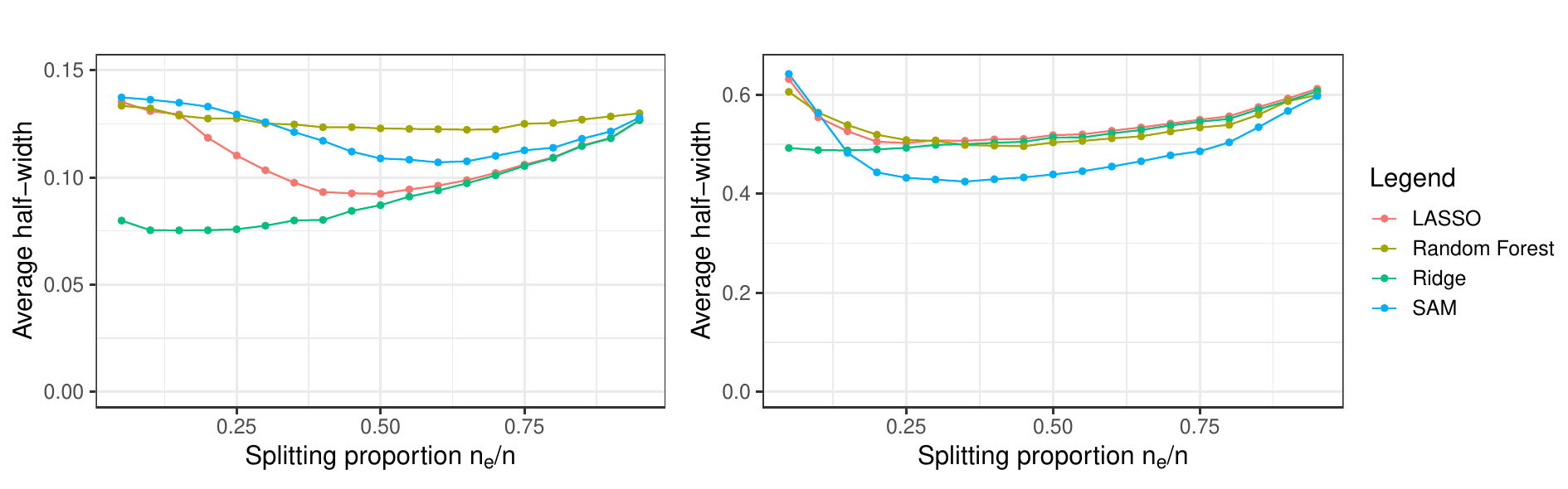}
        \caption{Average half-widths for the linear-$\mustar$ (left) and nonlinear-$\mustar$ (right) simulations of Section~\ref{sec:simul_split}. The coefficient amplitude is chosen to be 5 for the left panel and the sample size $n$ equals 1000 in the right panel; see Section~\ref{sec:simul_setup} for remaining details. Standard errors are below 0.002 (left) and 0.01 (right). }
        %Half width plots: design matrix with $n$ i.i.d. rows follows the Gaussian copula model in  \ref{sec:nonlinear_setup} with the auto-correlation coefficient being $0.3$; $Y|X$ follows the nonlinear model described in \ref{sec:nonlinear_setup} with amplitude being $50$ and $30$ non-null variables; $p = 1000$; number of null copies $K = 500$ for nonlinear fitting algorithms.
        % }
        \label{fig:split_app}
    \end{figure}  
    
     \begin{figure}[tb]
        \centering
        \includegraphics[width = 1\linewidth]{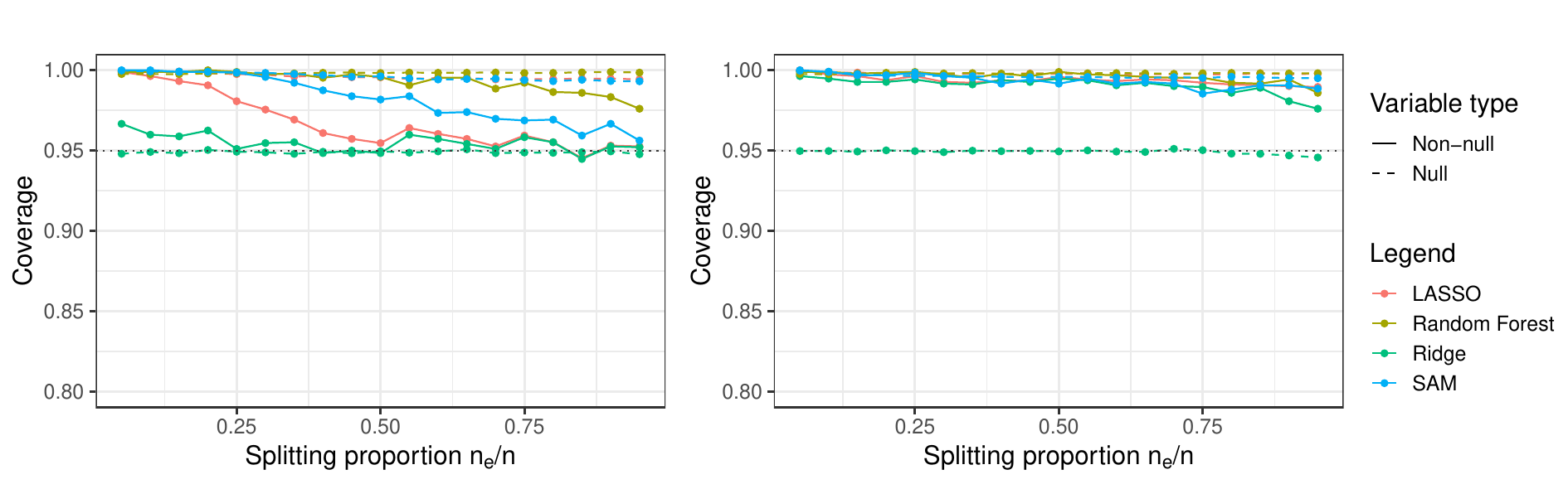}
        \caption{Coverage for the  the linear-$\mustar$ (left) and nonlinear-$\mustar$ (right) simulations of Section~\ref{sec:simul_split}. The coefficient amplitude is chosen to be 5 for the left panel and the sample size $n$ equals 1000 in the right panel; see Section~\ref{sec:simul_setup} for remaining details. Standard errors are below 0.006 (left) and 0.004 (right).}
        %Half width plots: design matrix with $n$ i.i.d. rows follows the Gaussian copula model in  \ref{sec:nonlinear_setup} with the auto-correlation coefficient being $0.3$; $Y|X$ follows the nonlinear model described in \ref{sec:nonlinear_setup} with amplitude being $50$ and $30$ non-null variables; $p = 1000$; number of null copies $K = 500$ for nonlinear fitting algorithms.
        % }
        \label{fig:split_app_cover}
    \end{figure}
\subsubsection{Effect of covariate dimension}

% Figures~\ref{fig:vary_p_cover} and \ref{fig:sine_p_cover} show that in the simulations in Section~\ref{sec:simul_setup}, the coverage of floodgate is consistently at or above the nominal 95\% level.

{The corresponding coverage plots of Figure~\ref{fig:vary_p_comb} are given in Figure~\ref{fig:vary_p_comb_cover}. Figures~\ref{fig:vary_p_app} and \ref{fig:vary_p_app_cover} are additional plots with different simulation parameters specified in the captions. Figures~\ref{fig:vary_p_comb_cover} and \ref{fig:vary_p_app_cover} show that in these simulations, the coverage of floodgate is consistently at or above the nominal 95\% level.}
    \begin{figure}[tb]
        \centering
        \includegraphics[width = 1\linewidth]{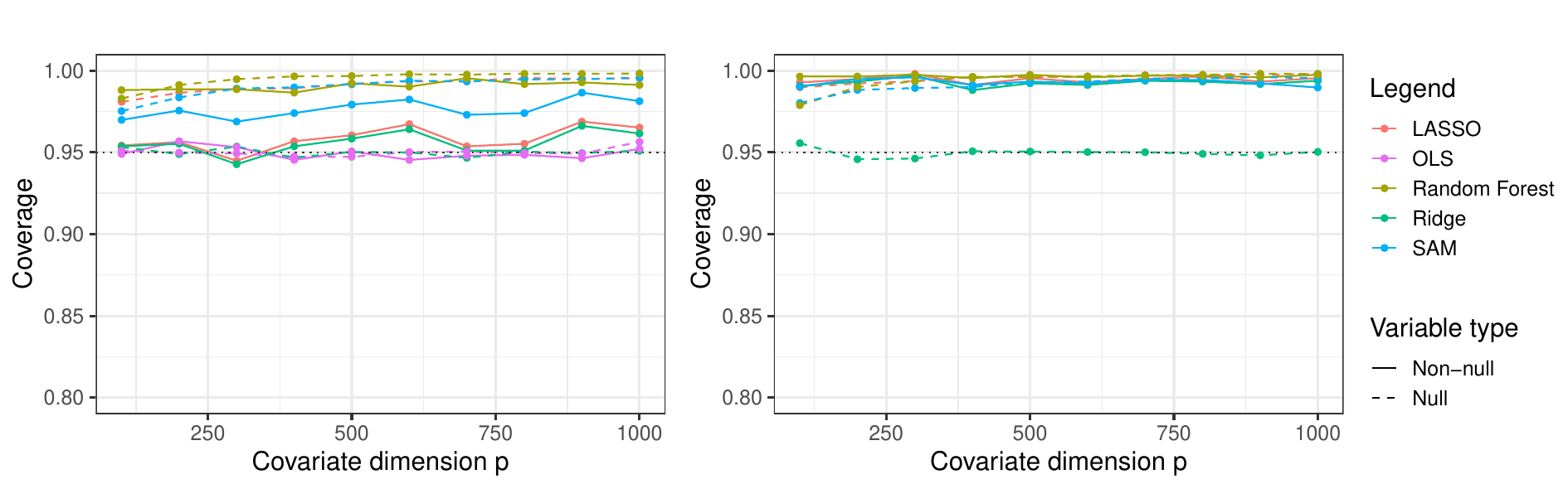}
        \caption{Coverage for the linear-$\mustar$ (left) and nonlinear-$\mustar$ (right) simulations of Section~\ref{sec:simul_vary_p}. OLS is run on the full sample. $p$ is varied on the x-axis; see Section~\ref{sec:simul_setup} for remaining details. Standard errors are below 0.006 (left) and 0.004 (right).}
        % Half width plots: design matrix with $n$ i.i.d. rows from an AR(1) model with autocorrelation $0.3$; $Y|X \sim \calN(X\beta,1)$, where $\beta$ has non-zero entries with random signs and equal absolute values (which will be the amplitude value divided by $\sqrt{n}$); amplitude equals $5$; there are $30$ non-null variables; $n = 1100$; number of null copies $K = 500$ for nonlinear fitting algorithms.
        
        \label{fig:vary_p_comb_cover}
    \end{figure}
    
        \begin{figure}[tb]
        \centering
        \includegraphics[width = 1\linewidth]{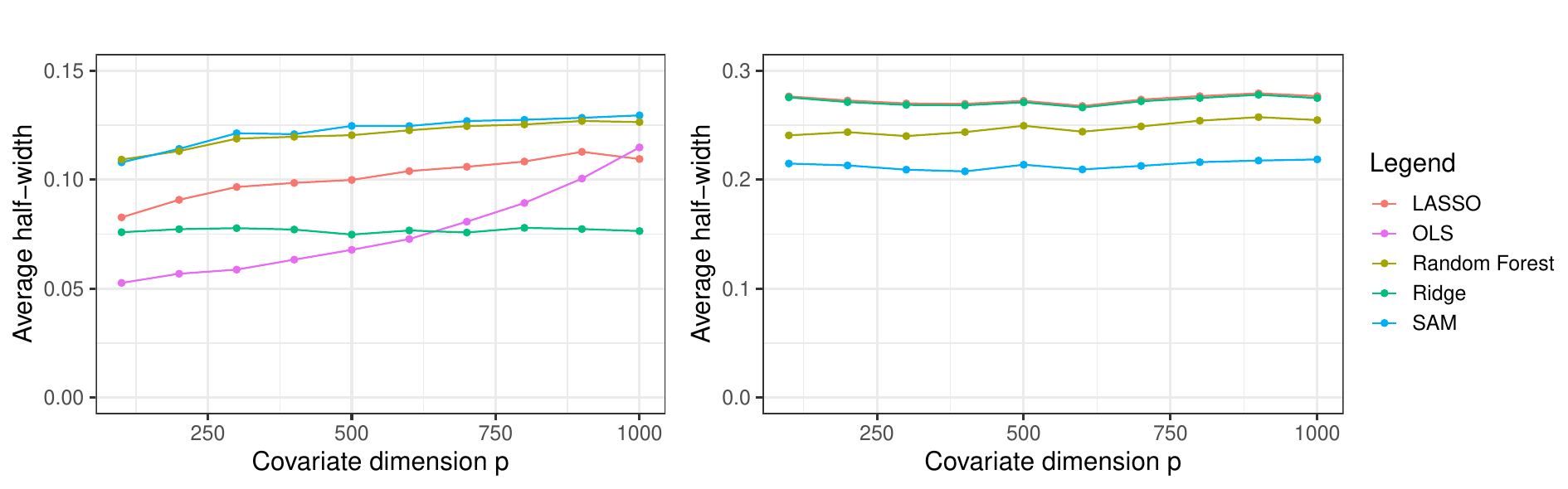}
        \caption{Average half-widths for the linear-$\mustar$ (left) and nonlinear-$\mustar$ (right) simulations of Section~\ref{sec:simul_vary_p}. The splitting proportion is chosen to be 0.25 for the left panel and the sample size n equals 3000 in the right panel. $p$ is varied on the x-axis; see Section~\ref{sec:simul_setup} for remaining details. Standard errors are below 0.002 (left) and 0.005 (right).}
        % Half width plots: design matrix with $n$ i.i.d. rows follows the Gaussian copula model in  \ref{sec:nonlinear_setup} with the auto-correlation coefficient being $0.3$; $Y|X$ follows the nonlinear model described in \ref{sec:nonlinear_setup} with amplitude being $50$ and $30$ non-null variables; number of null copies $K = 500$ for nonlinear fitting algorithms.
        
        \label{fig:vary_p_app}
    \end{figure}  
    
     \begin{figure}[tb]
        \centering
        \includegraphics[width = 1\linewidth]{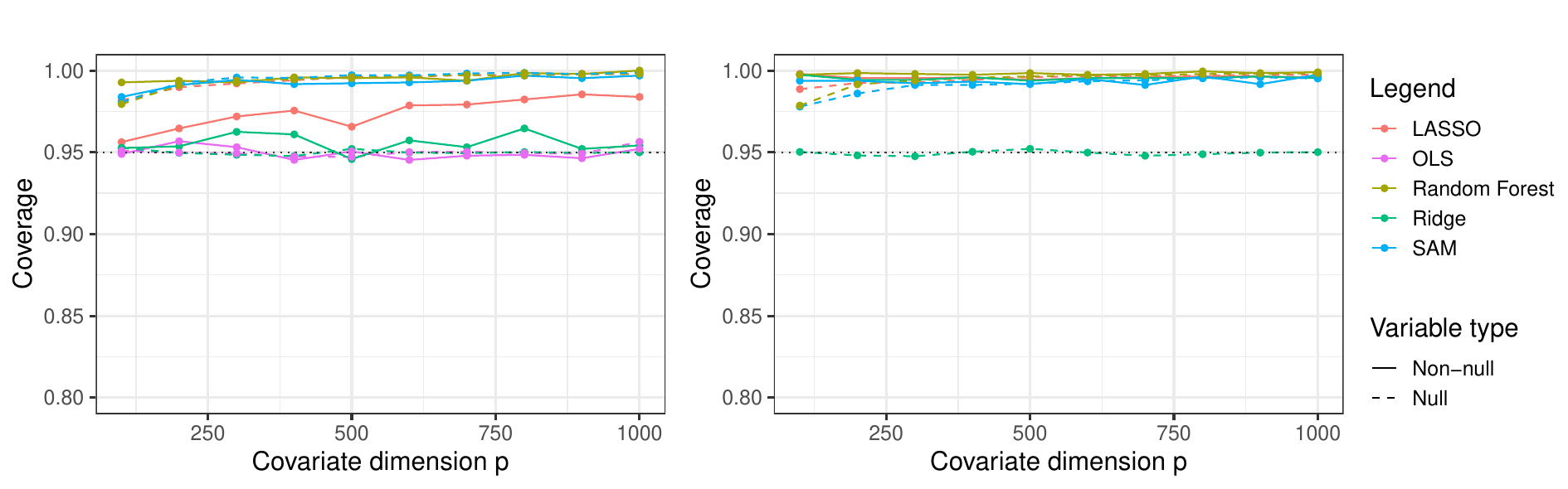}
        \caption{Coverage for the linear-$\mustar$ (left) and nonlinear-$\mustar$ (right) simulations of Section~\ref{sec:simul_vary_p}. The splitting proportion is chosen to be 0.25 for the left panel and the sample size n equals 3000 in the right panel. $p$ is varied on the x-axis; see Section~\ref{sec:simul_setup} for remaining details. Standard errors are below 0.006 (left) and 0.004 (right).}
        % Half width plots: design matrix with $n$ i.i.d. rows follows the Gaussian copula model in  \ref{sec:nonlinear_setup} with the auto-correlation coefficient being $0.3$; $Y|X$ follows the nonlinear model described in \ref{sec:nonlinear_setup} with amplitude being $50$ and $30$ non-null variables; number of null copies $K = 500$ for nonlinear fitting algorithms.
        
        \label{fig:vary_p_app_cover}
    \end{figure}
  
\subsubsection{Comparison with \citet{williamson2020unified}}
% \label{app:}
\begin{figure}
\centering
        \includegraphics[width = 0.54\linewidth]{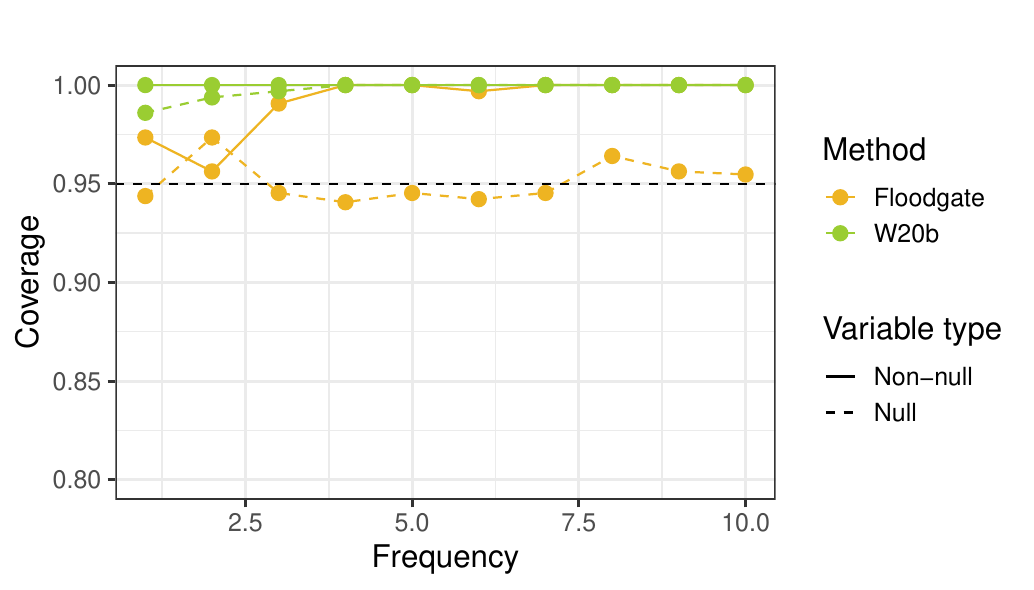}
        % \vspace{-0.5cm}
        \vspace{0.25cm}
        \caption{
        Coverage for floodgate and W20b in the sine function simulation of Section~\ref{sec:comp_vimp}. The frequency $\lambda$ is varied on the x-axis, and the dotted black line in the plot shows the nominal coverage level $1-\alpha$.
        %, while the frequency value $\lambda$ is varied on the x-axis. 
        The results are averaged over 640 independent replicates, and the standard errors are below 0.006.}
             %(left) and 0.01 (right).
            %  \vspace{0.5cm}
        
      \label{fig:comp_freq_cover}
\end{figure}
{
The corresponding coverage plot of Figure \ref{fig:comp_freq} is given in Figure \ref{fig:comp_freq_cover}, where we see both methods have coverages above the nominal level. In addition to the example in Section \ref{sec:comp_vimp}, we also compare floodgate with W20b in the higher-dimensional setting of the left panel of Figure~\ref{fig:vary_p_comb}. 
%It is also interesting to compare both methods when $p$ is 
Due to the computational challenge of running \citet{williamson2020unified}'s method, we only consider the two most efficient algorithms (LASSO and Ridge) among the four described in Appendix \ref{app:sim:algorithms}.
%Looking at the left panel of 
Figure \ref{fig:comp_p} shows W20b to have slightly less consistent coverage than floodgate, but also reinforces the general picture from the lower-dimensional simulation in Section~\ref{sec:comp_vimp} that W20b's LCBs are quite close to zero compared with floodgate's.
}
\begin{figure}[tb]
\centering
\includegraphics[width = 1\linewidth]{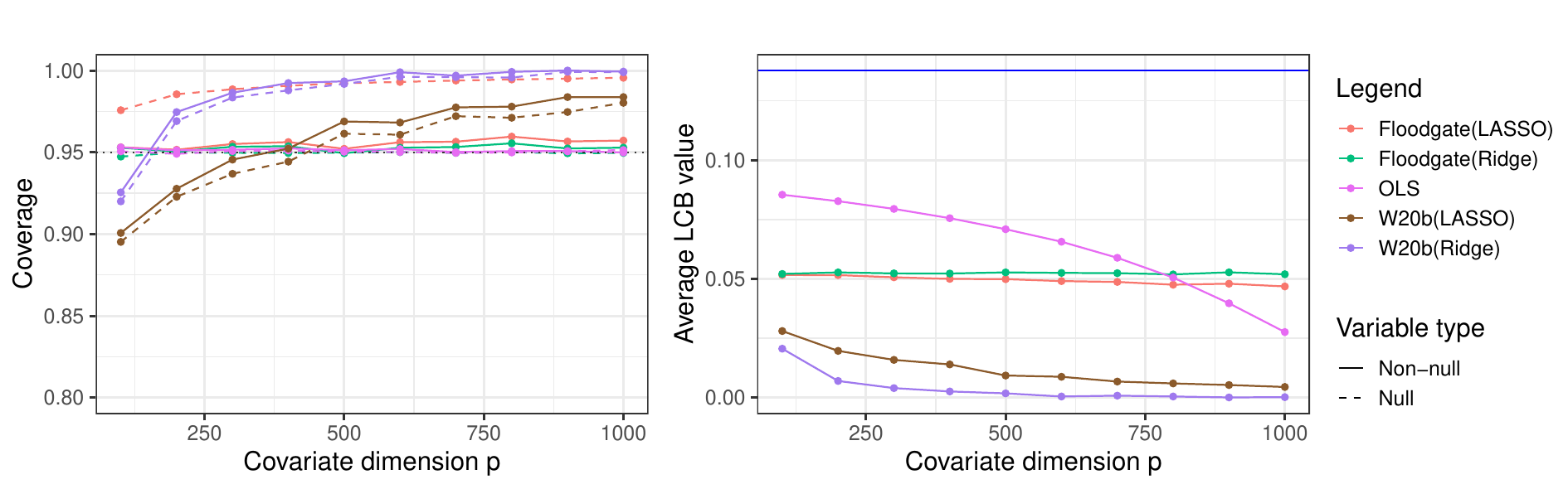}
\caption{Coverage (left) and average LCB values (right) for floodgate, W20b, and OLS (run on the full sample) in the linear-$\mustar$ simulation of Section~\ref{sec:comp_vimp}.
%involving two methods: floodgate and 
%Both methods conduct model fitting using two linear algorithms (LASSO and Ridge). OLS is run on the full sample. 
$p$ is varied on the x-axis, and the solid blue line in the right-hand plot shows the value of $\Ij$; see Section~\ref{sec:simul_setup} for remaining details. The results are averaged over 640 independent replicates, and the standard errors are below 0.012 (left) and 0.004 (right).
}
\label{fig:comp_p}
\end{figure} 

% \subsubsection{Effect of covariate dependence}

% \subsubsection{Effect of sample size}

\subsubsection{Robustness}
\begin{figure}[tb]
\centering
\includegraphics[width =1\linewidth]{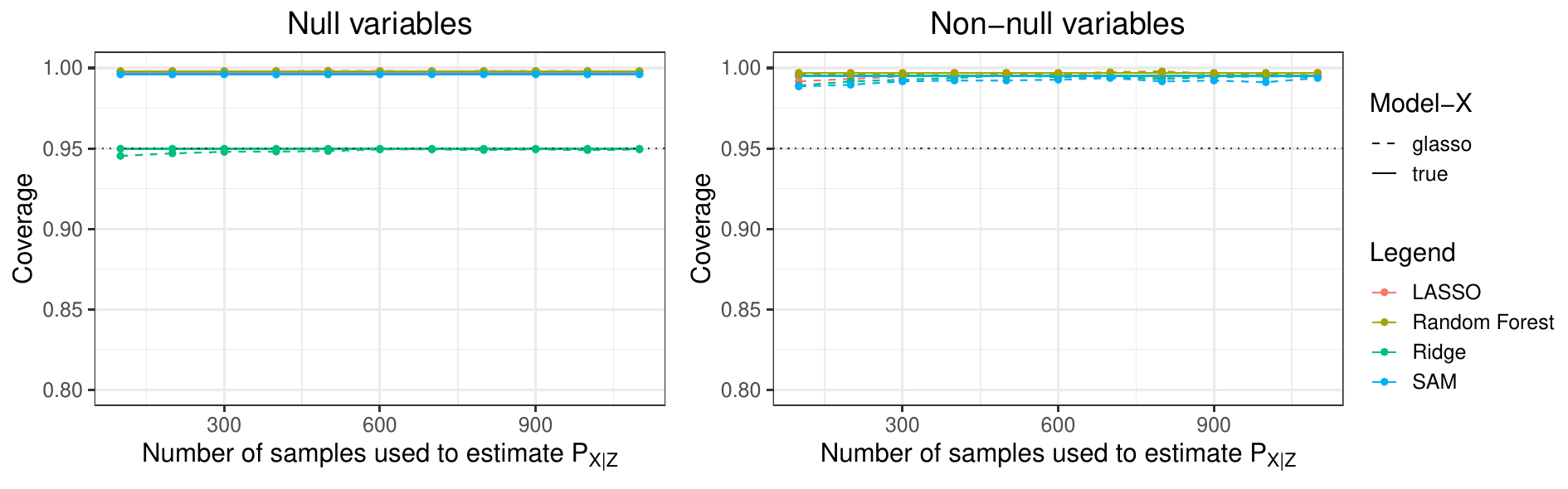}
\caption{Coverage of null (left) and non-null (right) covariates when the covariate distribution is estimated in-sample for the nonlinear-$\mustar$ simulations of Section~\ref{sec:simul_robust}. See Section~\ref{sec:simul_setup} for remaining details. Standard errors are below 0.001 (left) and 0.003 (right).
% Coverage plots for null and non-null variables, with the sample size for model-X part varying. Design matrix with $n$ i.i.d. rows follows the Gaussian copula model in  \ref{sec:nonlinear_setup} with the auto-correlation coefficient being $0.3$; $Y|X$ follows the nonlinear model described in \ref{sec:nonlinear_setup} with amplitude being $50$ and $30$ non-null variables; $n = 1100, p = 1000$; number of null copies $K = 500$ for nonlinear fitting algorithms.
}
\label{fig:robust_nonlinear}
\end{figure}

Figure~\ref{fig:robust_nonlinear} studies the robustness of floodgate for a nonlinear $\mustar$. We see the coverage being rather conservative for the non-null variables, reflecting the coverage-protective gap between $\thetamu$ and $\thetamus=\Ij$.
      \begin{figure}[tb]
        \centering
        \includegraphics[width =1\linewidth]{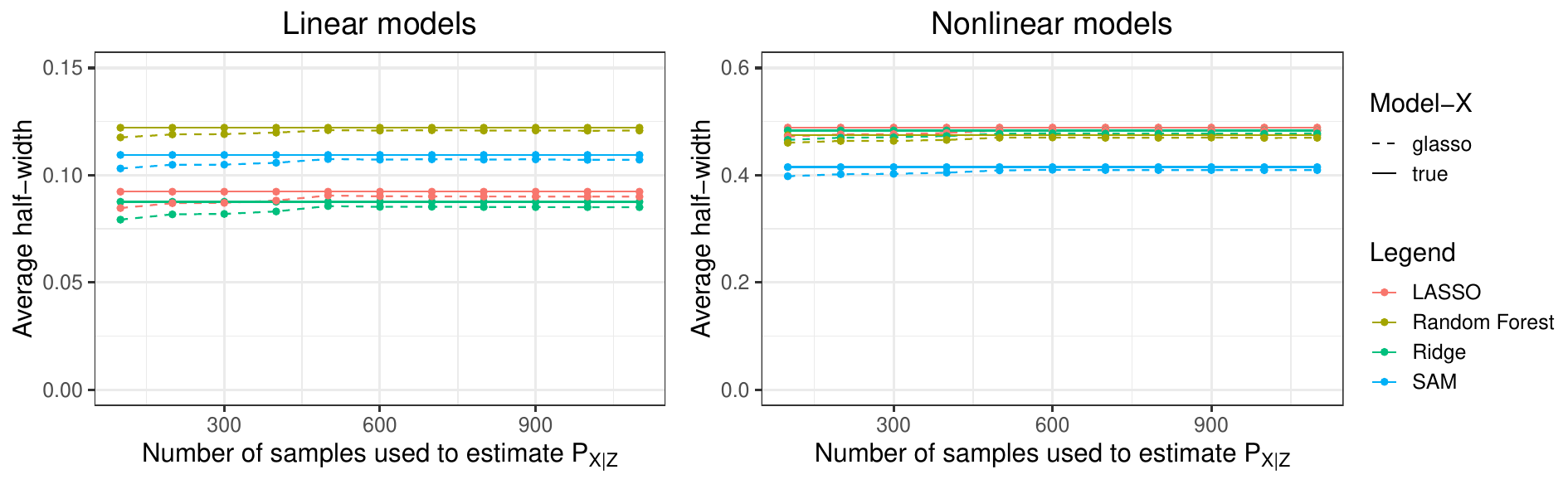}
        \caption{Half-width plot of non-null covariates when the covariate distribution is estimated in-sample for the linear-$\mustar$ (left) and nonlinear-$\mustar$ (right) simulations of Section~\ref{sec:simul_robust}. See Section~\ref{sec:simul_setup} for remaining details. Standard errors are below 0.002 (left) and 0.007 (right).
        % Coverage plots for null and non-null variables, with the sample size for model-X part varying; design matrix with $n$ i.i.d. rows from an AR(1) model with the auto-correlation coefficient being $0.3$; $Y|X \sim \calN(X\beta,1)$, where $\beta$ has non-zero entries with random signs and equal absolute values (which will be the amplitude value divided by $\sqrt{n}$); there are $30$ non-null variables; amplitude equals $5$ ; $n = 1100$; number of null copies $K = 500$ for nonlinear fitting algorithms. Coverage of null (left) and non-null (right) covariates when the covariate distribution is estimated in-sample for the linear-$\mustar$ simulations of Section~\ref{sec:simul_robust}. See Section~\ref{sec:simul_setup} for remaining details.
        }
        \label{fig:robust_app}
    \end{figure}
    {Figure~\ref{fig:robust_app} shows that in the simulations of linear models and nonlinear models, the average half-width of floodgate is robust to estimation error in $P_{X|Z}$.}

\subsubsection{Co-sufficient floodgate}
       {In this section, we demonstrate the performance of co-sufficient floodgate in a linear setting. Figure \ref{fig:relax2} tells a similar story as Figure \ref{fig:sine_relax2} in Section \ref{sec:simul_relax}. 
    %   show that co-sufficient floodgate has satisfying coverage even when the number of batches is small, and has average half-width quite close to the original floodgate procedure which is given the conditional mean of $X\mid Z$ exactly
     Note that despite the linearity of the true model in Figure~\ref{fig:relax2}, the LASSO performs poorly because the true model is quite dense (30 of the 50 covariates are non-null), which also explains why ridge regression performs so well.}
     %In terms of the accuracy performance, its average half-widths (lines with stronger colors) are also close to those of the original floodgate procedure (lines with weaker colors).
    % \label{sec:simul_relax}
    % \begin{figure}[tb]
    %     \centering
    %     \includegraphics[width =1\linewidth]{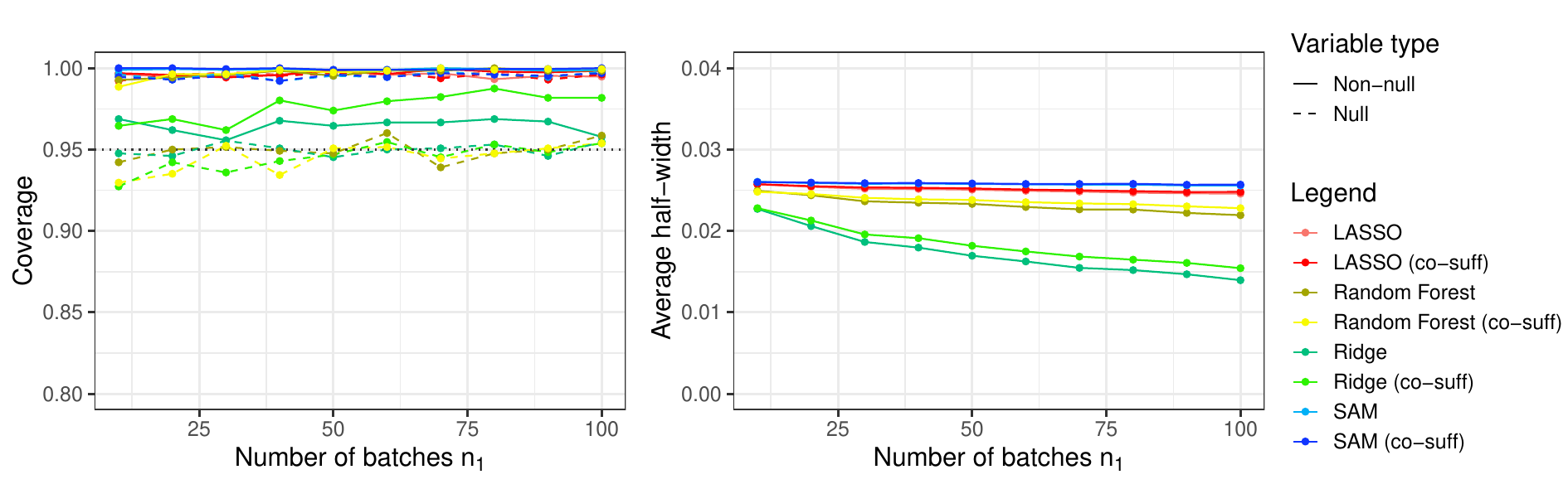}
    %     \caption{Coverage (left) and average half-widths (right) for co-sufficient floodgate and original floodgate in the linear-$\mustar$ simulations. The number of batches $n_{1}$ is varied over the x-axis. See Section~\ref{sec:simul_setup} and \ref{sec:simul_relax} for remaining details. Standard errors are all below 0.008.
    %     % Coverage plots for null and non-null variables, with the sample size for model-X part varying. Design matrix with $n$ i.i.d. rows follows the Gaussian copula model in  \ref{sec:nonlinear_setup} with the auto-correlation coefficient being $0.3$; $Y|X$ follows the nonlinear model described in \ref{sec:nonlinear_setup} with amplitude being $50$ and $30$ non-null variables; $n = 1100, p = 1000$; number of null copies $K = 500$ for nonlinear fitting algorithms.
    %     }
    %     \label{fig:relax}
    % \end{figure}
    
    \begin{figure}[tb]
        \centering
        \includegraphics[width =1\linewidth]{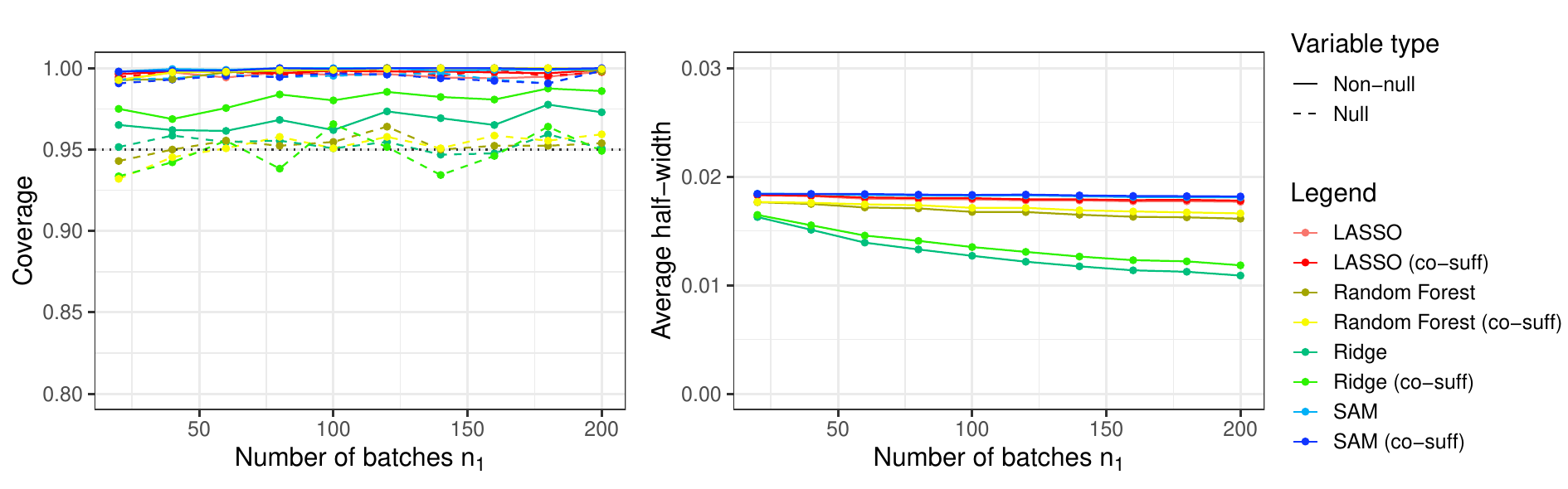}
            \caption{Coverage (left) and average half-widths (right) for co-sufficient floodgate and original floodgate in the linear-$\mustar$ simulations. The number of batches $n_{1}$ is varied over the x-axis. See Section~\ref{sec:simul_setup} and \ref{sec:simul_relax} for remaining details. Standard errors are below 0.008 (left) and 0.001 (right).
        % Coverage plots for null and non-null variables, with the sample size for model-X part varying. Design matrix with $n$ i.i.d. rows follows the Gaussian copula model in  \ref{sec:nonlinear_setup} with the auto-correlation coefficient being $0.3$; $Y|X$ follows the nonlinear model described in \ref{sec:nonlinear_setup} with amplitude being $50$ and $30$ non-null variables; $n = 1100, p = 1000$; number of null copies $K = 500$ for nonlinear fitting algorithms.
        }
        \label{fig:relax2}
    \end{figure}
    
     \subsubsection{Effect of covariate dependence}
    \label{sec:simul_corr}
    {In Figure \ref{fig:corr_comb}, we vary the covariate autocorrelation coefficient and plot the average half-widths of floodgate LCBs of non-null covariates under distributions with the linear (left panel) and the nonlinear (right panel) $\mustar$ described in Section~\ref{sec:simul_setup}, respectively. The left panel of Figure~\ref{fig:corr_comb} also includes a curve for OLS. Since $\Ij$ in a linear model is proportional to $\sqrt{\EE{\Varc{X}{Z}}}$ which varies with the autocorrelation coefficient, we divided the half-widths in Figure~\ref{fig:corr_comb} by this quantity to make it easier to compare values across the x-axis.
    %depends on the coefficient magnitude and the conditional variance of $X\mid Z$, the latter of which varies with the autocorrelation coefficient, we rescaled $\Ij$ in both simulations 
    %so that the estimand remains constant along the x-axis.
    % The coverage never falls below nominal, and these plots can be found in \ljmargin{Appendix~\ref{app:sim:cover}}{fix}. 
    The main takeaway is that the effect of covariate dependence on floodgate is somewhat mild until the dependence gets very large ($>0.5$ correlation). This behavior is intuitive, and indeed we see a parallel trend in the curves for OLS inference in Figure~\ref{fig:corr_comb}.}
    % Coverage plots corresponding to Figure~\ref{fig:corr_comb} and additional plots with different simulation parameters can be found in Appendix~\ref{app:sim:cover}.}
{The corresponding coverage plots of Figure~\ref{fig:corr_comb} are given in Figure~\ref{fig:corr_comb_cover}. Figures~\ref{fig:corr_app} and \ref{fig:corr_app_cover} are additional plots with a different covariate dimension specified in the captions. Figures~\ref{fig:corr_comb_cover} and \ref{fig:corr_app_cover} show that the coverage of floodgate is consistently at or above the nominal 95\% level.}

    \begin{figure}[tb]
        \centering
        \includegraphics[width = 1\linewidth]{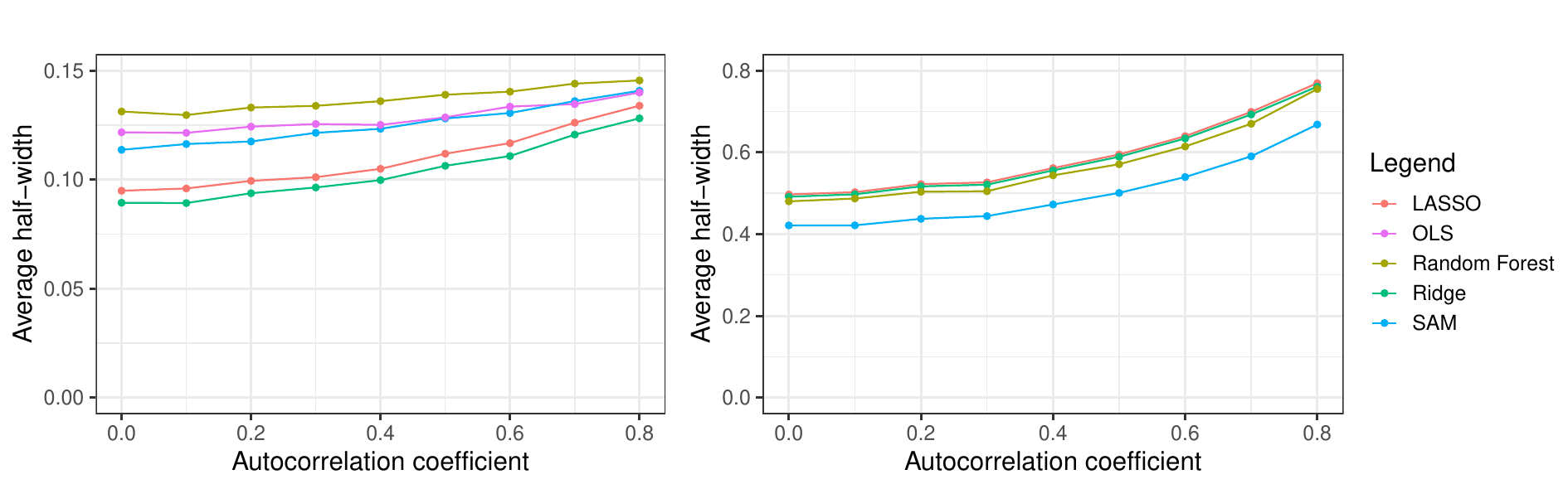}
        \caption{{Average half-widths for the linear-$\mustar$ (left) and nonlinear-$\mustar$ (right) simulations of Section~\ref{sec:simul_corr}. The covariate dimension $p = 1000$ and the covariate autocorrelation coefficient is varied on the x-axis; see Section~\ref{sec:simul_setup} for remaining details. Standard errors are below 0.002 (left) and 0.009 (right).}
        % Half width plots: design matrix with $n$ i.i.d. rows from an AR(1) model; $Y|X \sim \calN(X\beta,1)$, where $\beta$ has non-zero entries with random signs and equal absolute values (which will be the amplitude value divided by $\sqrt{n}$); amplitude equals $5$; there are $30$ non-null variables; $n = 1100$; number of null copies $K = 500$ for nonlinear fitting algorithms.
        }
        \label{fig:corr_comb}
    \end{figure}   

 \begin{figure}[tb]
        \centering
        \includegraphics[width = 1\linewidth]{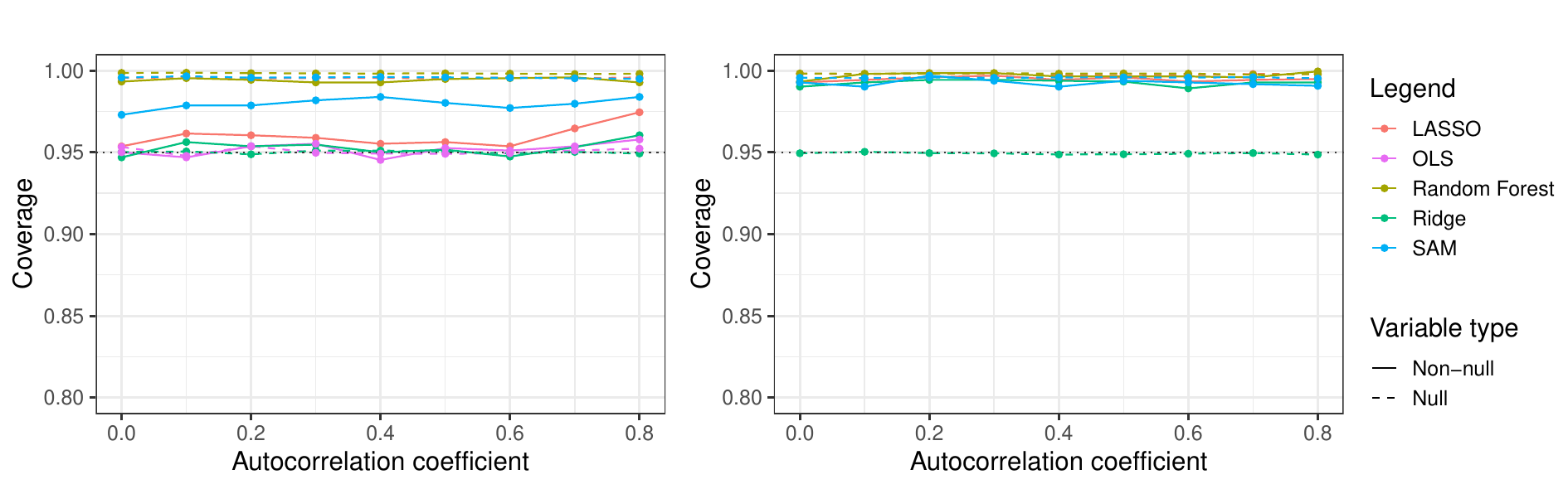}
        \caption{Coverage for the linear-$\mustar$ (left) and nonlinear-$\mustar$ (right) simulations of Section~\ref{sec:simul_corr}. The covariate dimension $p = 1000$ and the covariate autocorrelation coefficient is varied on the x-axis; see Section~\ref{sec:simul_setup} for remaining details. Standard errors are below 0.006 (left) and 0.003 (right).}
        % Half width plots: design matrix with $n$ i.i.d. rows from an AR(1) model; $Y|X \sim \calN(X\beta,1)$, where $\beta$ has non-zero entries with random signs and equal absolute values (which will be the amplitude value divided by $\sqrt{n}$); amplitude equals $5$; there are $30$ non-null variables; $n = 1100$; number of null copies $K = 500$ for nonlinear fitting algorithms.
        
        \label{fig:corr_comb_cover}
    \end{figure}
        \begin{figure}[tb]
        \centering
        \includegraphics[width = 1\linewidth]{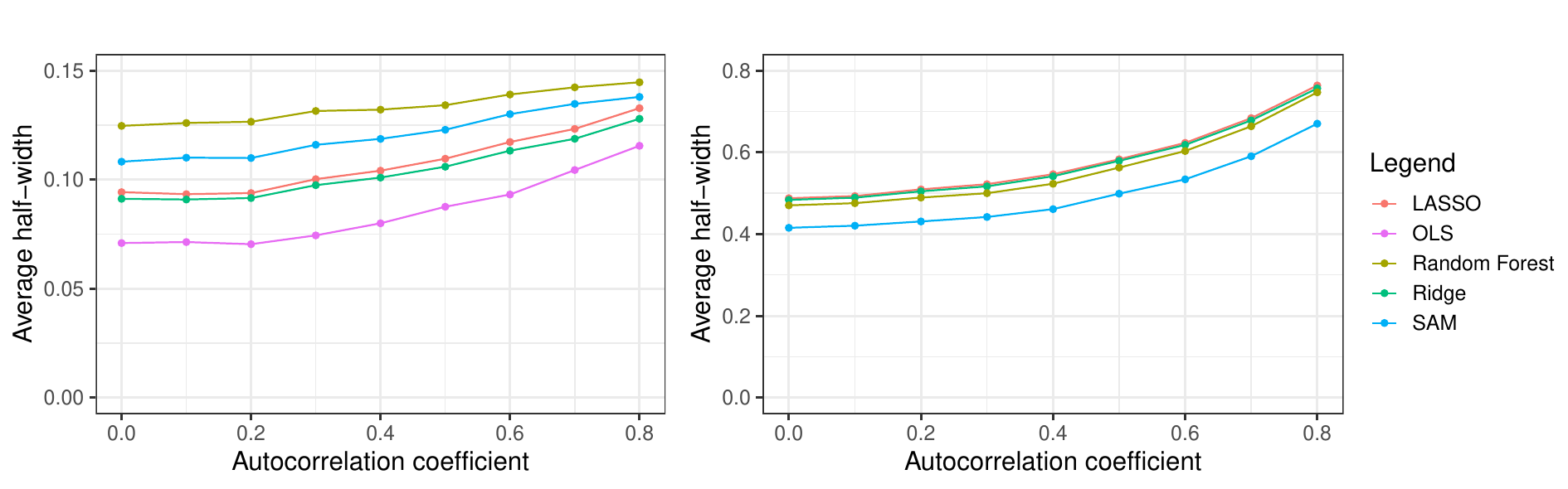}
        \caption{Average half-widths for the linear-$\mustar$ (left) and nonlinear-$\mustar$ (right) simulations of Section~\ref{sec:simul_corr}. The covariate dimension $p = 500$ and the covariate autocorrelation coefficient is varied on the x-axis; see Section~\ref{sec:simul_setup} for remaining details. Standard errors are below 0.002 (left) and 0.01(right).}
        % Half width plots: design matrix with $n$ i.i.d. rows follows the Gaussian copula model in  \ref{sec:nonlinear_setup}; $Y|X$ follows the nonlinear model described in \ref{sec:nonlinear_setup} with amplitude being $50$ and $30$ non-null variables; $n = 1100$; number of null copies $K = 500$ for nonlinear fitting algorithms.
        
        \label{fig:corr_app}
    \end{figure}  
    \begin{figure}[tb]
        \centering
        \includegraphics[width = 1\linewidth]{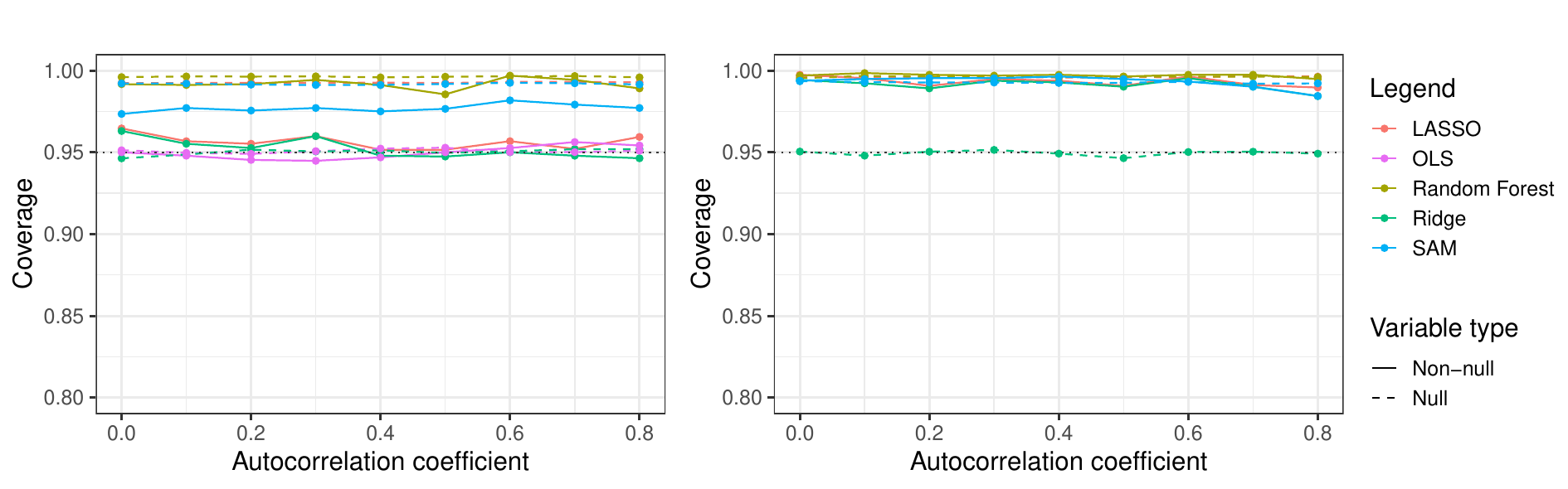}
        \caption{Coverage for the linear-$\mustar$ (left) and nonlinear-$\mustar$ (right) simulations of Section~\ref{sec:simul_corr}. The covariate dimension $p = 500$ and the covariate autocorrelation coefficient is varied on the x-axis; see Section~\ref{sec:simul_setup} for remaining details. Standard errors are below 0.007 (left) and 0.004 (right).}
        % Half width plots: design matrix with $n$ i.i.d. rows follows the Gaussian copula model in  \ref{sec:nonlinear_setup}; $Y|X$ follows the nonlinear model described in \ref{sec:nonlinear_setup} with amplitude being $50$ and $30$ non-null variables; $n = 1100$; number of null copies $K = 500$ for nonlinear fitting algorithms.
        \label{fig:corr_app_cover}
    \end{figure}      
    
    \subsubsection{Effect of sample size}
    In Figures \ref{fig:vary_n_linear} and \ref{fig:vary_n_sine}, we vary the sample size and plot the coverages and average half-widths of floodgate LCBs of non-null covariates under distributions with the linear and the nonlinear $\mustar$ described in Section~\ref{sec:simul_setup}, respectively. 
    % The coverage of null covariates never falls below nominal, and these plots can be found in Appendix~\ref{app:sim:cover}. 
    The main takeaway is that the accuracy of floodgate depends heavily on sample size. Note that in these plots, the signal size is scaled down by the square root of the sample size, so the \emph{selection} problem is roughly getting no easier as the sample size increases, but we still see that floodgate can achieve much more accurate inference for larger sample sizes. 
    % {In additional to the linear setting in Figure \ref{fig:vary_n_linear}, we also conduct simulations for nonlinear $\mustar$ and draw similar conclusions; see Appendix \ref{app:sim:cover} for details.}

    \label{sec:simul_vary_n}
        \begin{figure}[tb]
        \centering
        \includegraphics[width = 1\linewidth]{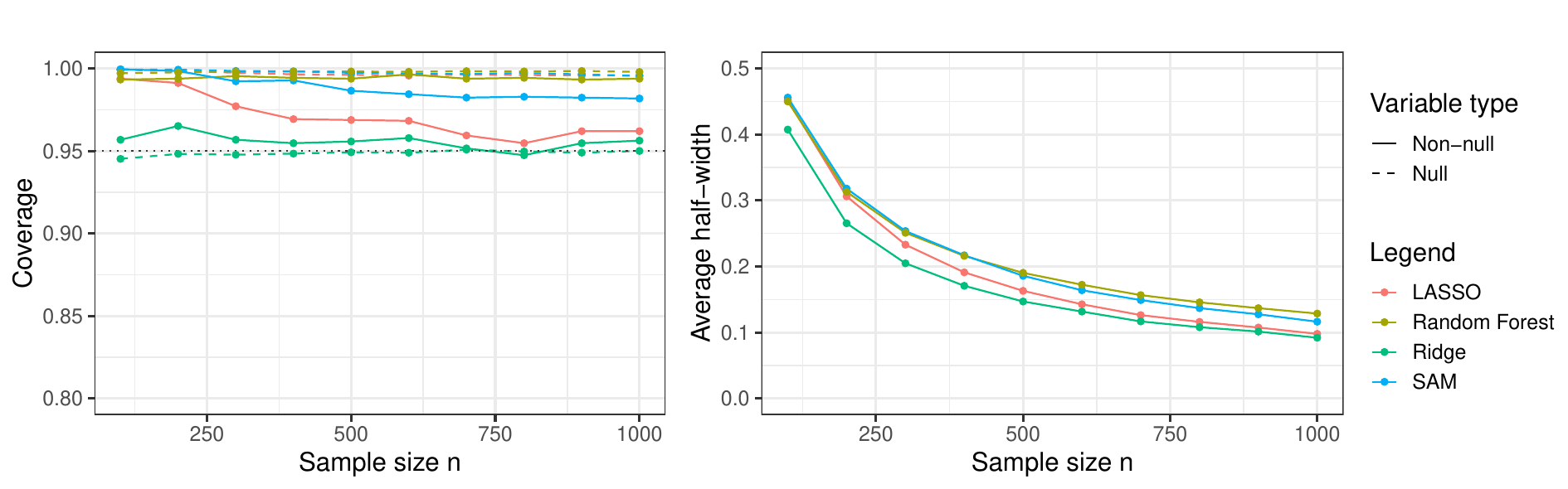}
        \caption{Coverage (left) and average half-widths (right) for the linear-$\mustar$ simulations of Section~\ref{sec:simul_vary_n}. The sample size $n$ is varied on the x-axis; see Section~\ref{sec:simul_setup} for remaining details. Standard errors are below 0.007 (left) and 0.003 (right).
        % Coverage and half width plots: design matrix with $n$ i.i.d. rows from an AR(1) model; $Y|X \sim \calN(X\beta,1)$, where $\beta$ has non-zero entries with random signs and equal absolute values (which will be the amplitude value divided by $\sqrt{n}$); amplitude equals $5$; there are $30$ non-null variables; $p = 1000$; number of null copies $K = 500$ for nonlinear fitting algorithms.
        }
        \label{fig:vary_n_linear}
    \end{figure}  
    % Figure \ref{fig:sine_amp} demonstrates how our method with different fitting algorithms perform when the true conditional model is highly nonlinear and with interaction terms as well. Looking at the precision plot, we find the linear model-based algorithms have inferior performance compared with nonlinear fitting algorithms, which is not surprising. The additive model structure explains a relatively higher precision of \textit{sam} compared with other methods. In the coverage plot, we see all of the methods are always overcoveraging the mMSE gap, regardless of the amplitude value. The explanation is that even as the signals increase, the fitting algorithms can at best hope for the best approximation of the true model in certain function classes, say $\mu^{\star}_{F}$. When the true regression function $\mustar$ is not contained in those classes, there always exists a gap between $\cal{I}_j$ and $\theta_j(\mu^{\star}_{F})$. Hence we see a constant overcoverage pattern in Figure.
    % \subsection{Nonlinear conditional models}
    % \label{sec:simul_nonlinear}

    % \begin{figure}[htbp]
    %     \centering
    %     \includegraphics[width = 1\linewidth]{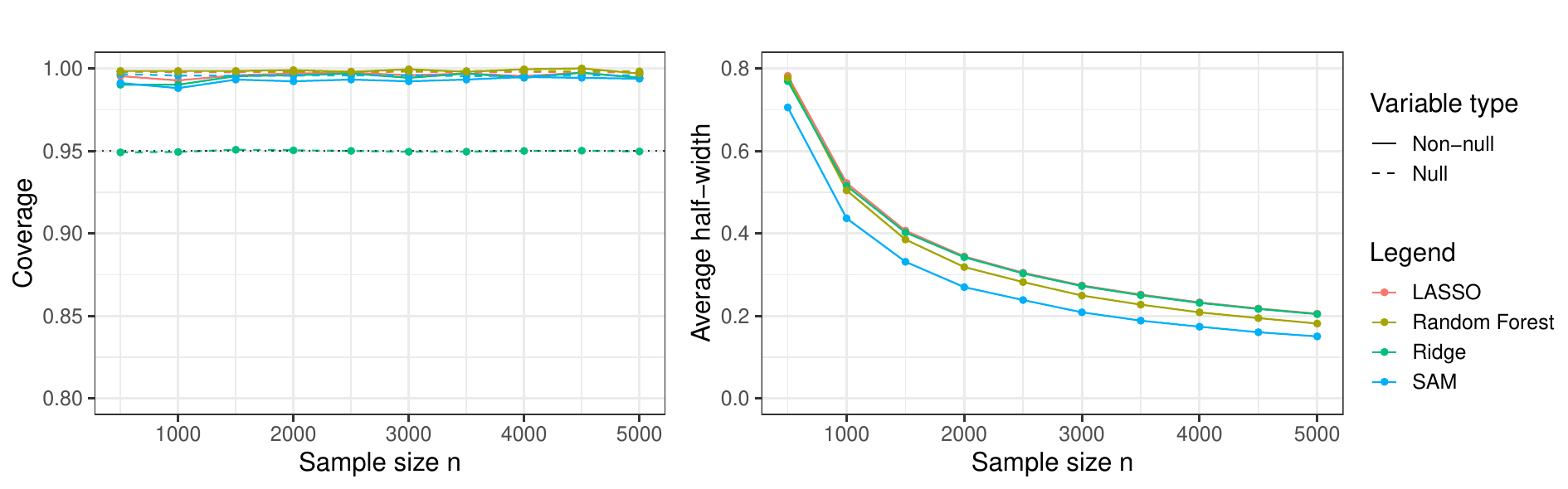}
    %     \caption{Coverage and precision plots for non-null variables, with the coefficient amplitude varying.}
    %     \label{fig:sine_amp}
    % \end{figure} 
% \lz{We also study the effect of sample size for nonlinear $\mustar$. Figure \ref{fig:vary_n_sine} tells similar stories as Figure \ref{fig:vary_n_linear} in Section \ref{sec:simul_vary_n}. }
    \begin{figure}[tb]
        \centering
        \includegraphics[width = 1\linewidth]{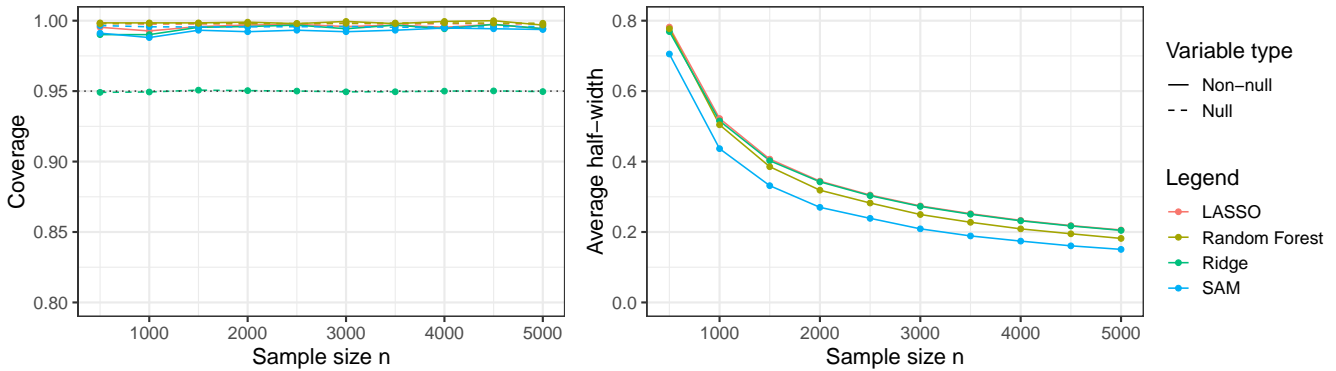}
        \caption{Coverage (left) and average half-widths (right) for the nonlinear-$\mustar$ simulations of Section~\ref{sec:simul_vary_n}. The sample size $n$ is varied on the x-axis; see Section~\ref{sec:simul_setup} for remaining details. Standard errors are below 0.004 (left) and 0.011 (right).
        % Coverage and half width plots: design matrix with $n$ i.i.d. rows follows the Gaussian copula model in  \ref{sec:nonlinear_setup} with the auto-correlation coefficient being $0.3$; $Y|X$ follows the nonlinear model described in \ref{sec:nonlinear_setup} with amplitude being $50$ and $30$ non-null variables; $p = 1000$; number of null copies $K = 500$ for nonlinear fitting algorithms.
        }
        \label{fig:vary_n_sine}
    \end{figure}
\section{Implementation details of genomics application}\label{app:realdata}
As mentioned in Section~\ref{sec:generalizations}, the floodgate approach can be immediately generalized to conduct inference on the importance of a group of variables. This is practically useful in our application to the genomic data, where we group nearby SNPs whose effects are usually found challenging to be distinguished. Specifically, we use the exact same grouping at the same seven resolutions as \citet{sesia2020multi}.

Regarding the genotype modelling, we consider the hidden Markov models (HMM) \citep{scheet2006fast}, as used in \citet{sesia2019gene,sesia2020multi}, which provides a good description of the  linkage disequilibrium (LD) structure. We obtain the fitted HMM parameters from \citet{sesia2020multi} on the UK Biobank data. Since HMM does not offer simple closed form expressions of the conditional quantities in Algorithm ~\ref{alg:MOCK}, we generate null copies of the genotypes and use them for the Monte Carlo analogue of floodgate. Below we simply describe the generating procedure. Under the HMM, we denote the covariates by $W$ (genotypes or haplotypes) and the unobserved hidden states (local ancestries) by $A$, with the joint distribution over $W$ denoted by $P_W$,  the joint distribution over $A$ denoted by $P_A$, which is the latent Markov chain model. For a given contiguous group of variables $g_{j}$, we can sample the null copy of $W_{g_j}$ as follows:
        \begin{enumerate}[(1)]
        \item Marginalize out $W_{g_j}$ and recompute the parameters of the new HMM $P_{\text{-}g_j}$ over $W_{\text{-}g_j}$.
        \item Sample the hidden states $A_{\text{-}g_j}$ by applying the forward-backward algorithm to $W_{\text{-}g_j}$, with the new HMM $P_{\text{-}g_j}$. 
        \item Given $A_{\text{-}g_j}$, sample $A_{g_j}$ according to the latent Markov chain model $P_A$.
        \item Sample $\tilde{W}_{g_j}$ given $A_{g_j}$ according to the emission distribution of the group $g_j$ in the model of $P_W$.
        \end{enumerate}

    To see why the above procedure produces a valid null copy of $W_{g_j}$, consider the following joint distribution, conditioning on $W_{\text{-}g_j}$
$$
P_{\text{joint}}: (W_{g_j}, A_{g_j}, A_{\text{-}g_j})\mid W_{\text{-}g_j}
$$
If we sample $(\tilde{W}_{g_j}, A_{g_j}, A_{\text{-}g_j})$ from the above joint conditional distribution, without looking at $W_{g_j}$ or $Y$, then $\tilde{W}_{g_j} $ has the same conditional distribution as $W_{g_j}$, given $W_{\text{-}g_j}$ and is conditionally independent from $(W_{g_j},Y)$, and thus is a valid null copy of $W_{g_j}$. Regarding how to sample from $P_{\text{joint}}$, we take advantage of the HMM structure and sample $A_{\text{-}g_j}, A_{g_j}, \tilde{W}_{g_j}$ sequentially since 
\begin{eqnarray} \label{eq:sample_A}
A_{g_j}\mid A_{\text{-}g_j}, W_{\text{-}g_j}  \deq  A_{g_j}\mid A_{\text{-}g_j},\\
W_{g_j}\mid A_{g_j}, A_{\text{-}g_j}, W_{\text{-}g_j}  \deq  W_{g_j}\mid A_{g_j}. \label{eq:sample_W}
\end{eqnarray}
 Sampling from $ A_{\text{-}g_j}\mid  W_{\text{-}g_j}$ is feasible since $P_{\text{-}g_j}$ is still a HMM whenever the group $g_j$ is contiguous. Under the HMM with particular parameterization in \citet{scheet2006fast}, the cost of the forward-backward algorithm can be reduced, see \citet{sesia2020multi} for more details. We remark that marginalizing out $W_{g_j}$ only changes the transition structure around the group $g_j$ and the special parameterization over other variables is still beneficial in terms of the computation cost. Sampling of $A_{g_j}$ and $\tilde{W}_{g_j}$ is computationally cheap due to \eqref{eq:sample_A} and \eqref{eq:sample_W}. For a given number of null copies $K$, we will repeat the steps (2)-(4) for $K$ times. But we remark the involving sampling probabilities only have to be computed once.
 
 Regarding the quality control and data prepossessing of the UK Biobank data, we follow the Neale Lab GWAS with application $31063$; details can be found on \url{http://www.nealelab.is/uk-biobank}. A few subjects withdrew consent and are removed from the analysis. Our final data set consisted of $361,128$ unrelated subjects and $591,513$ SNPs along $22$ chromosomes.
 
 For the platelet count phenotype, the analysis by \citet{sesia2020multi} makes several selections over the whole genome at seven different resolution levels. We focus on chromosome $12$ and look at $248$ selected groups from their analysis. For a given group of variables, we generate $K=5$ null copies following the null copy generation procedure described above. 
 
%  In consideration of the robustness to model misspecification, we scale the phenotype variable by subtracting a sample mean estimator $\bar{Y}_{\text{train}}$ which is computed over training samples (used for fitting $\mu$), this will not change our inferential validity due to the independence among samples and the fact that conditioning on the training dataset, $\bar{Y}_{\text{train}}$ can be treated as a constant.

We applied floodgate with a 50-50 data split and fitted $\mu$ to the first half using the cross-validated LASSO as in \citet{sesia2020multi} and included both genotypes (SNPs from chromosomes 1--22) and the non-genetic variables sex, age and squared age. We centered $Y$ by its sample mean from the first half of the data (the half used to fit $\mu$) before applying floodgate. Although this changes nothing in theory, it does improve robustness as small biases in $\mu(X_i,Z_i)-\Ec{\mu(X_i,Z_i)}{Z_i}$ would otherwise get multiplied by $Y_i$'s mean in the computation of $R_i$ in Algorithm~\ref{alg:MOCK}. 

Although our fitting of a linear model in no way changes the validity of floodgate's inference of the completely model-free mMSE gap, it does desensitize the LCB itself to the nonlinearities and interactions that partially motivated $\Ij$ as an object of inference in the first place. Our reasoning is purely pragmatic: as the universe of nonlinearities/interactions is exponentially larger than that of linear models, fitting such models requires either very strong nonlinear/interaction effects or prior knowledge of a curated set of likely nonlinearities/interactions. It is our understanding that nearly all genetic effects, linear and nonlinear/interaction alike, tend to be relatively weak, and the authors are not geneticists by training and thus lack the domain knowledge necessary to leverage the full flexibility of floodgate. Although we were already able to find substantial heritability for many blocks of SNPs with our default choice of the LASSO, it is our sincere hope and expectation that geneticists who specialize in the study of platelet count or similar traits would be able to find even more heritability using floodgate. 
    
    We report LCBs for all blocks simultaneously, although computationally we only actually run floodgate on those selected by \citet{sesia2020multi}. Although their selection used all of the data (including the data we used for floodgate), it does not affect the marginal validity of the LCBs we report, as explained in the last paragraph of Section~\ref{sec:generalizations}.

\end{document}